\DeclareRobustCommand{\SkipTocEntry}[5]{}
\setlist{itemsep=.5\baselineskip,topsep=.5\baselineskip}
\numberwithin{equation}{section}
\theoremstyle{plain}
\newtheorem{theorem}{Theorem}[section]
\newtheorem{thm}[theorem]{Theorem}
\newtheorem{lemma}[theorem]{Lemma}
\newtheorem{prop}[theorem]{Proposition}
\newtheorem{example}[theorem]{Example}
\newtheorem{defn}[theorem]{Definition}
\newtheorem{cor}[theorem]{Corollary}
\newtheorem{rmk}[theorem]{Remark}
\newcommand{\iso}{\cong}
\newcommand{\arr}{\rightarrow}
\newcommand{\incl}{\hookrightarrow}
\newcommand{\R}{\mathbb{R}}
\newcommand{\Z}{\mathbb{Z}}
\newcommand{\eps}{\epsilon}
\newcommand{\vareps}{\varepsilon}
\newcommand{\Id}{\mathbbm{1}}
\newcommand{\mcA}{\mathcal{A}}
\newcommand{\mcB}{\mathcal{B}}
\newcommand{\mcC}{\mathcal{C}}
\newcommand{\mcD}{\mathcal{D}}
\newcommand{\mcF}{\mathcal{F}}
\newcommand{\mcG}{\mathcal{G}}
\newcommand{\mcH}{\mathcal{H}}
\newcommand{\mcI}{\mathcal{I}}
\newcommand{\mcJ}{\mathcal{J}}
\newcommand{\mcN}{\mathcal{N}}
\newcommand{\mcP}{\mathcal{P}}
\newcommand{\mcU}{\mathcal{U}}
\newcommand{\mcW}{\mathcal{W}}
\newcommand{\mfP}{\mathfrak{P}}
\definecolor{fillpaleyellow}{RGB}{255,250,205} 
\definecolor{fillbrownyellow}{RGB}{209,189,148} 
\definecolor{fillyellow}{RGB}{217,191,140}
\definecolor{fillgray}{RGB}{199,204,209}
\DeclareMathOperator{\mult}{mult}
\DeclareMathOperator{\bd}{bd}
\DeclareMathOperator{\sign}{sign}
\DeclareMathOperator{\ch}{ch}
\DeclareMathOperator{\Inv}{Inv}
\DeclareMathOperator{\res}{res}
\DeclareMathOperator{\germ}{germ}
\DeclareMathOperator{\Cycle}{\#Cycle}
\newcommand{\surg}{\Rightarrow}
\title[Tsirelson's problem and an embedding theorem]{Tsirelson's problem and an
embedding theorem for groups arising from non-local games}
\author{William Slofstra}
\address{Institute for Quantum Computing and Department of Pure Mathematics, University of Waterloo, Canada}
\email{weslofst@uwaterloo.ca}
\thanks{Partially supported by NSERC grant number 2018-03968.}
\begin{document}

\begin{abstract}
    Tsirelson's problem asks whether the commuting operator model for two-party
    quantum correlations is equivalent to the tensor-product model. We give a
    negative answer to this question by showing that there are non-local games
    which have perfect commuting-operator strategies, but do not have perfect
    tensor-product strategies. The weak Tsirelson problem, which is known to
    be equivalent to Connes embedding problem, remains open.  

    The examples we construct are instances of (binary) linear system games.
    For such games, previous results state that the existence of perfect
    strategies is controlled by the \emph{solution group} of the linear system.
    Our main result is that every finitely presented group embeds in some
    solution group. As an additional consequence, we show that the problem of
    determining whether a linear system game has a perfect commuting-operator
    strategy is undecidable.
\end{abstract}
\maketitle

\vspace{-1cm}

{\small
\tableofcontents }

\section{Introduction}\label{S:introduction}

In a two-player non-local game, the players, commonly called Alice and Bob, are
physically separated and unable to communicate. They each receive a question
chosen at random from a finite question set, and reply with a response from a
finite answer set. If the joint answers meet a predetermined winning condition
dependent on the joint questions, then Alice and Bob win; otherwise they lose. 
The rules of the game, including the winning condition and distribution on
questions, are completely known to Alice and Bob, and they can arrange in advance
a strategy which will maximize their success probability. However, since they
cannot communicate during the game, they may not be able to play perfectly, i.e.
win with probability one. 

Classically, Alice and Bob's strategy for a non-local game is described by a
local hidden variable model. Bell's famous theorem states that Alice and Bob
can achieve better results than is possible with a local hidden variable model
if they share an entangled quantum state \cite{Be64}. Since Bell's discovery,
non-local games have been heavily studied\footnote{Usually under the name
``Bell tests'' or ``Bell inequalities''. The term ``non-local games'' is
more recent.} in physics, mathematics, and computer science; see
\cite{CHSH69,FC72,Ts80,AGR82,CHTW04,Br08,NPA08,JPPVW10,KV11,KKMTV11,He15} for a
small sample of results. Despite this, a number of foundational questions
remain open, chief among which is Tsirelson's problem: A quantum strategy for a
non-local game can be described as a set of measurement operators on Hilbert
spaces $H_A$ and $H_B$, along with a quantum state in the joint space $H = H_A
\otimes H_B$. We refer to this as the \emph{tensor-product model}. While the
tensor-product model is often the default, there is another choice: a quantum
strategy can be described as a set of measurements and quantum state on a
shared Hilbert space $H$, with the property that Alice's measurement operators
commute with Bob's measurement operators. This \emph{commuting-operator model}
is used, for instance, in algebraic quantum field theory \cite{HK64}. The
observable consequences of a strategy in either model are captured by the
correlation matrix of the strategy. If $H$ is finite-dimensional, every
correlation matrix arising from a commuting-operator strategy can be realized
using a tensor-product strategy. Tsirelson's problem asks whether this is true
for a general Hilbert space. 

This problem has an interesting history. Tsirelson originally stated the
problem in a survey on Bell inequalities \cite{Ts93}, and claimed without proof
that the two models gave rise to the same set of correlation matrices. He later
retracted this claim, and posted the question to a list of open problems in
quantum information theory \cite{Ts06}. Tsirelson and subsequent authors
\cite{NPA08, SW08, JNPPSW11,Fr12,NCPV12,PT15,DP16,PSSTW16} studying Tsirelson's
problem have considered several different sets of tensor-product strategies. 
Paulsen and Todorov \cite{PT15} (see also Dykema and Paulsen \cite{DP16})
observe that these variations lead to a hierarchy of sets of correlation
matrices
\begin{equation}\label{E:hierarchy}
    C_q \subseteq C_{qs} \subseteq C_{qa} \subseteq C_{qc},
\end{equation}
where $C_q$ is the set of correlations arising from tensor-product strategies
on finite-dimensional Hilbert spaces, $C_{qs}$ is the set of correlations
arising from tensor-product strategies (on possibly infinite-dimensional
Hilbert spaces) with a vector state, $C_{qa} = \overline{C_{qs}}$ is the set of
correlations which are limits of correlations in $C_{qs}$, and $C_{qc}$ is the set
of correlations arising from commuting-operator strategies. If we restrict to
non-local games with question sets of size $n$ and answer sets of size $m$,
then all these sets are convex subsets of $\R^{m^2 n^2}$, and prior to this
paper, none of the inclusions (including $C_q \subseteq C_{qc}$) were known to
be strict. Thus for each $t \in \{q,qs,qa\}$ there is a Tsirelson problem asking
whether $C_{qc} = C_{t}$.  Ozawa, building on the work of Junge, Navascu\'es,
Palazuelos, P\'erez-Garc\'ia, Sholz, and Werner \cite{JNPPSW11} and work of
Fritz \cite{Fr12}, has shown that $C_{qc} = C_{qa}$ if and only if Connes'
embedding conjecture is true \cite{Oz13}. At the other end of the hierarchy, if
$C_q$ was equal to $C_{qc}$ then every correlation matrix, whether commuting
operator or tensor-product, would arise from a finite-dimensional Hilbert
space. The ``middle'' version, which asks whether $C_{qc} = C_{qs}$, seems
closest to Tsirelson's original claim. 

The first main result of this paper is that there is a non-local game which can
be played perfectly with a commuting-operator strategy, but which cannot be
played perfectly using a tensor-product strategy with a vector state. Thus we
resolve the middle version of Tsirelson's problem by showing that $C_{qc}
\neq C_{qs}$. This also shows that $C_{qc} \neq C_q$. As a result, this game
is interesting from the perspective of quantum information and computation,
where a non-local game is often regarded as a computational scenario in which
better results can be achieved with entanglement as a resource. From this point
of view, it is natural to ask how much entanglement is needed to play a game
optimally, and in particular, whether every game can be played optimally on a
finite-dimensional Hilbert space.  The game we construct shows that this is not
possible, at least if we allow commuting-operator strategies. Previously-known
examples of this type have involved either quantum questions \cite{LTW13,RV15}
or infinite answer sets \cite{MV14}. 

One reason we would have desired that every game have an optimal strategy
on a finite-dimensional Hilbert space is that it would make it possible to
determine the optimal winning probability of a non-local game over entangled
strategies. At present the only known methods for this task, aside from
brute-force search over strategies, are variants of the Navascu\'es-Pironio-Ac\'in (NPA)
hierarchy \cite{NPA08,DLTW12}. Given a non-local game, the NPA hierarchy provides a
sequence of upper bounds which converge to the optimal winning probability in
the commuting-operator model. However, the hierarchy does not provide, outside
of special cases, a stopping criterion, i.e. a way to tell if the value will
fall below a given threshold. Our second main result is that it is undecidable
to determine if a non-local game can be played perfectly with a
commuting-operator strategy. In particular, this implies that there is no
stopping criterion for the NPA hierarchy which applies to all games.  

The games we consider are binary linear system games, so named because they
arise from linear systems over $\Z_2$. Such games have been studied previously
in \cite{CM14,Ar12,Ji14}. Cleve and Mittal implicitly associate a certain group
to every linear system over $\Z_2$, such that perfect tensor-product strategies
for the game correspond to certain finite-dimensional representations of the
group \cite{CM14}. We call this group, which is analogous to the solution space
of a linear system, the \emph{solution group}. In \cite{CLS16} it is shown that
perfect commuting-operator strategies for a binary linear system game
correspond to certain possibly-infinite-dimensional representations of the
solution group.  Solution groups form an interesting class of groups. They are
finitely presented, but their presentations must satisfy a property which in
\cite{CLS16} is called \emph{local compatibility}: if 
\begin{equation*}
    x_1 \cdots x_n = 1
\end{equation*}
is a relation, where $x_i$, $1 \leq i \leq n$, are not necessarily distinct
generators of the group, then the presentation must also contain the relations
$x_i x_j = x_j x_i$ for all $1 \leq i,j \leq n$. This condition is natural from
the perspective of quantum mechanics, where two observables commute if and only
if the observables correspond to quantities which can be measured (or known)
simultaneously. Group relations of this exact type can be found in
contextuality theorems of Mermin and Peres \cite{Me90,Pe90,Me93}. Local
compatibility is \textit{a priori} a strong constraint on group presentations.
Our primary result, on which our other two results are based, is that any
finitely presented group can be embedded in a solution group. This embedding
theorem allows us to extend results from combinatorial group theory to solution
groups. In particular, the embedding theorem implies the existence of a
non-residually-finite solution group, and this leads directly to the separation
$C_{qc} \neq C_{qs}$. Using non-residually-finite groups to recognize
infinite-dimensional state spaces was previously proposed in \cite{Fr13}.

There have been a number of related results since this paper was first posted.
The proof of the embedding theorem takes up most of this paper, and it's
natural to ask whether there might be a shorter proof.  In a follow-up paper
\cite{Sl17}, we give a shorter and more direct proof of an embedding theorem
for a restricted class of finitely-presented groups which is still sufficient
for our primary applications. This shorter proof allows us to keep control over
the dimensions of approximate representations in the embedding, leading to the
finer separation $C_{qa} \neq C_{qs}$. It is also natural to ask how large the
question and answer sets need to be to separate $C_{qs}$ from $C_{qa}$ or
$C_{qc}$. The group embedding theorems in this paper and \cite{Sl17} give
answer sets of size $8$, but question sets with hundreds of questions. Dykema,
Paulsen, and Prakash have made dramatic progress on this question, showing that
$C_{qa} \neq C_{qs}$ for correlations with only five questions and two answers
\cite{DPP17}. Other related results can be found in \cite{GHJ17, Ha18}.
Also, Coladangelo and Stark have shown that $C_{qs} \neq C_q$ \cite{CS18},
leaving the question of whether $C_{qa} = C_{qc}$ as the last unknown
separation in Equation \eqref{E:hierarchy}. Finally, 
Tobias Fritz has used the existence of a solution group with an undecidable
word problem to show that quantum logic is undecidable \cite{Fr16}.

The rest of this paper is structured as follows. In the next section, we first
recall the basic definitions of quantum correlation sets and non-local games,
and then define binary linear system games and the solution group of a linear system. 
In Section \ref{S:embedding}, we state the embedding theorem for solution
groups, and prove our two main results as corollaries.  The rest of the paper
is concerned with the proof of the embedding theorem.  The main technical tool
used is pictures of groups. We give an overview of pictures tailored to
solution groups in Sections \ref{S:pictures} and \ref{S:hyperpics}; expert
readers will want to skip or briefly review these sections on first reading. In
dealing with pictures, it is more convenient to use hypergraphs instead of
linear systems, and we introduce hypergraphs into our definitions immediately
in Section \ref{S:bcs}. In Section \ref{S:category} we develop a notion of
morphisms between hypergraphs; the concept is similar to graph minors, but
differs from standard notions of hypergraph minors as in \cite{RS10}. The more
technical aspects of the proof are contained in Sections
\ref{S:cycle}-\ref{S:constellation}; we suggest that the reader skim these
sections on first reading and proceed to Section \ref{S:proof}. For a second
reading, Corollary \ref{C:suncycle} might provide a good initial target. 

\addtocontents{toc}{\SkipTocEntry}
\subsection{Acknowledgements}

I thank Richard Cleve, Jason Crann, Zhengfeng Ji, Li Liu, Andrew Marks, Vern
Paulsen, and Vincent Russo for many helpful discussions.  I thank Richard
Cleve, Li Liu, and the anonymous referees for detailed comments on the
manuscript, and Tobias Fritz and Doug Park for helpful suggestions. Any
remaining errors are my own. 

\section{Quantum correlation sets and linear system games}

\subsection{Quantum correlation sets}

In quantum mechanics, a physical system corresponds to a Hilbert space $H$, and
the state of the system is recorded in a unit vector $v \in H$. A measurement
in a system with finite outcome set $[m] := \{1,\ldots,m\}$ can be represented
by a family $\{P_i\}_{i=1}^m$ of self-adjoint projections $P_i$ such that
$\sum_{i=1}^m P_i = \Id$.  Such a family is called a \emph{projection-valued
measure (PVM)}.  The probability of measuring outcome $i$ with PVM
$\{P_i\}_{i=1}^{m}$ and state $v$ is $v\cdot P_i  v$, where $v \cdot w$ is the
inner product in $H$. 

As described in the introduction, there are two natural ways to represent a
bipartite physical system (i.e. a system composed of two physically-separated
subsystems) in quantum mechanics. In the tensor-product formalism, the
subsystems correspond to Hilbert spaces $H_A$ and $H_B$, and the Hilbert space
of the joint system is $H_A \otimes H_B$. The state of the joint system is a
unit vector $v \in H_A \otimes H_B$. If $\{P_a\}_{a=1}^{m_A}$, is a measurement
on $H_A$ and $\{Q_b\}_{b=1}^{m_B}$ is a measurement on $H_B$, then the joint
measurement is represented by $\{P_{a} \otimes Q_{b}\}_{(a,b) \in [m_A] \times
[m_B]}$, and the probability of measuring $(a,b) \in [m_A] \times [m_B]$ is $v
\cdot P_a \otimes Q_b v$. Let $m_A,m_B,n_A,n_B$ be positive integers.  A
function $p \in \R^{[m_A] \times [m_B] \times [n_A] \times [n_B]}$ is a
\emph{tensor-product correlation} if there are Hilbert spaces $H_A$ and $H_B$,
a unit vector $v \in H_A \otimes H_B$, a PVM $\{P_a^x\}_{a=1}^{m_A}$ on $H_A$
for every $x \in [n_A]$, and a PVM $\{Q_b^y\}_{b=1}^{m_B}$ on $H_B$ for every
$y \in [n_B]$, such that 
\begin{equation*}
    p(a,b|x,y) = v \cdot P_a^x \otimes Q_b^y v
\end{equation*}
for every $(a,b,x,y) \in [m_A] \times [m_B] \times [n_A] \times [n_B]$.  In
other words, $p$ is a tensor-product correlation if there is a bipartite system
and state vector, along with a set of $n_A$ (resp. $n_B$) measurements on the
first (resp. second) subsystem, each with $m_A$ (resp. $m_B$) outcomes, such
that $p(a,b|x,y)$ is the probability of measuring outcome $(a,b)$ on
measurements $(x,y)$. A tensor-product correlation $p$ is
\emph{finite-dimensional} if the Hilbert spaces $H_A$ and $H_B$ can be chosen
to be finite-dimensional. The set of tensor-product correlations for a given
tuple $(m_A,m_B,n_A,n_B)$ is denoted by $C_{qs}(m_A,m_B,n_A,n_B)$, and the set
of finite-dimensional correlations is denoted by $C_{q}(m_A,m_B,n_A,n_B)$. The
closure of $C_{qs}(m_A,m_B,n_A,n_B)$ is denoted by
$C_{qa}(m_A,m_B,n_A,n_B)$.\footnote{The letters qa and qs can be read as
``quantum-approximable'' and ``quantum-spatial''. The latter term refers to 
the spatial tensor product of $C^*$-algebras (see below).} 
By a result of Scholz-Werner, $C_{qa}(m_A,m_B,n_A,n_B)$ is also the closure of
$C_{q}(m_A,m_B,n_A,n_B)$ \cite{SW08}. 

The other natural way to represent a bipartite system is via the
commuting-operator formalism. In this framework, each subsystem is represented
by subalgebras $\mcA$ and $\mcB$ of bounded operators on the Hilbert space $H$ of
the joint system, such that $a b = b a$ for all elements $a \in \mcA$ and $b
\in \mcB$. If $\{P_a\}_{a=1}^{m_A}$ and $\{Q_b\}_{b=1}^{m_B}$ are PVMs in
$\mcA$ and $\mcB$ respectively, then $\{P_a Q_b\}_{(a,b)\in [m_A] \times [m_B]}$ is
also a PVM, and the probability of measuring outcome $(a,b)$ on state $v \in H$
is $v \cdot P_a Q_b v$. Let $m_A,m_B,n_A,n_B$ be positive integers. A function
$p \in \R^{[m_A] \times [m_B] \times [n_A] \times [n_B]}$ is a
\emph{commuting-operator correlation} if there is a Hilbert space $H$, 
a unit vector $v \in H$, a PVM $\{P^x_a\}_{a=1}^{m_A}$ on $H_A$ for every
$x \in [n_A]$, and a PVM $\{Q^y_b\}_{b=1}^{m_B}$ on $H_B$ for every $y \in
[n_B]$, such that $P^x_a Q^y_b = Q^y_b P^x_a$ and
\begin{equation*}
    p(a,b|x,y) = v \cdot P^x_a Q^y_b v
\end{equation*}
for all $(a,b,x,y) \in [m_A] \times [m_B] \times [n_A] \times [n_B]$. The set
of commuting operator correlations for a given tuple $(m_A,m_B,n_A,n_B)$ is
denoted by $C_{qc}(m_A,m_B,n_A,n_B)$. It is not hard to see that this set is
closed.

We could also use positive-operator valued measures (POVMs) instead of PVMs to
model measurements in the above definitions. However, any correlation
achievable with POVMs can also be achieved with PVMs. For tensor-product
correlations this follows immediately from the Naimark dilation theorem, while
for commuting-operator correlations this fact can be found in \cite[Section
3]{Fr12} or \cite{PT15}. We stick with PVMs because of the clearer connection
with operator algebras below.

A \emph{two-player non-local game} with question sets $[n_A]$, $[n_B]$ and
answer sets $[m_A]$,$[m_B]$ is specified by a probability distribution $\pi$ on
$[n_A]\times[n_B]$ and a function $V : [m_A] \times [m_B]\times [n_A] \times
[n_B] \arr \{0,1\}$. In the operational interpretation of the game, a referee chooses
a pair of questions $(x,y) \in [n_A] \times [n_B]$ at random according to $\pi$, giving
$x$ to the first player (Alice), and $y$ to the second player (Bob). The
players respond with answers $a \in [m_A]$ and $b \in [m_B]$ respectively. The
players win the game if $V(a,b|x,y)=1$, and lose if $V(a,b|x,y)=0$. The
players' strategy in such a game can be described by the function $p \in
\R^{[m_A] \times [m_B] \times [n_A] \times [n_B]}$, where $p(a,b|x,y)$ is the
probability that the players output $(a,b)$ on inputs $(x,y)$. The winning
probability on a strategy $p$ is given by
\begin{equation*}
    \sum_{x=1}^{n_A} \sum_{y=1}^{n_B} \pi(x,y) \sum_{a=1}^{m_A} \sum_{b=1}^{m_B} V(a,b|x,y)
        p(a,b|x,y).
\end{equation*}
The game has a \emph{perfect tensor-product strategy} (resp. \emph{perfect
commuting-operator strategy}) if there is a strategy in $C_{qs} (m_A, m_B, n_A, n_B)$ (resp.
$C_{qc} (m_A,\allowbreak m_B,\allowbreak n_A,\allowbreak n_B)$) for which the winning probability is $1$. We will show in Section
\ref{S:embedding} that there is a game with a perfect commuting-operator
strategy, but no perfect tensor-product strategy, demonstrating that
$C_{qs}(m_A,m_B,n_A,n_B) \neq C_{qc}(m_A,m_B,n_A,n_B)$ for some
$(m_A,m_B,n_A,n_B)$. 

For simplicity, many authors work only with the correlation sets $C_t(m,n) :=
C_t(m,m,n,n)$, $t \in \{q,qs,qa,qc\}$. We work with games which typically have
$m_A \neq m_B$ and $n_A \neq n_B$. However, by padding the game with irrelevant
questions and answers, we can assume that $m_A = m_B$ and $n_A = n_B$. Indeed,
suppose that $\pi, V$ is the data of game with question sets $[n_A]$,$[n_B]$
and answer sets $[m_A]$,$[m_B]$. Let $m=\max(m_A,m_B)$ and $n=\max(n_A,n_B)$,
extend $\pi$ to a probability distribution $\widetilde{\pi}$ on $[n] \times
[n]$ by setting $\widetilde{\pi}(x,y) = 0$ if $x > n_A$ or $y > n_B$, and
similarly extend $V$ to a function $\widetilde{V} : [m]^2 \times [n]^2 \arr
\{0,1\}$ by setting $\widetilde{V}(a,b|x,y) = 0$ if $(a,b,x,y) \not\in [m_A]
\times [m_B] \times [n_A] \times [n_B]$. Then it is easy to see that the game
determined by $\widetilde{\pi}$, $\widetilde{V}$ has a perfect strategy in
$C_t(m,n)$ if and only if the game determined by $\pi$,$V$ has a perfect
strategy in $C_t(m_A,m_B,n_A,n_B)$. Hence our argument will imply that
$C_{qs}(m,n) \neq C_{qc}(m,n)$ for some $m,n$. 

We also note that it is not necessary for the questions and answers in a
non-local game to be integers: the question and answer sets $[n_A]$, $[n_B]$,
$[m_A]$, $[m_B]$ can be replaced with any finite sets of the same size. 

To finish the section, we note that the different correlation sets can also
be defined using different tensor products of $C^*$-algebras. A PVM
$\{P_a\}_{a=1}^m$ on a Hilbert space $H$ can be thought of as the spectral
projections of the unitary matrix $U = \sum_{a=1}^{m} e^{2\pi(a-1)/m} P_a$, and
in this way PVMs with $m$ outcomes correspond to representations of the
$C^*$-group algebra $C^* \Z_m$. Similarly, collections of $n$ PVMs, each with
$m$ outcomes, correspond to representations of $C^* \Z_m^{*n}$, where
$\Z_m^{*n}$ is the free product of $n$ copies of $\Z_m$. Let $p^x_a$, $a \in
[m_A]$, $x \in [n_A]$ be the $a$th spectral projection in the $x$th factor of
$\Z_m$ in $\mcA = C^* \Z_{m_A}^{*{n_A}}$, and similarly let $q^y_b$, $b \in
[m_B]$, $y \in [n_B]$ be the $b$th spectral projection in the $y$th factor of
$\Z_m$ in $\mcB = C^* \Z_{m_B}^{*n_B}$. Then $C_{qc}(m_A,m_B,n_A,n_B)$ consists
of the functions $p \in \R^{[m_A] \times [m_B] \times [n_A] \times [n_B]}$ of
the form $p(a,b|x,y) = f(p^x_a q^y_b)$ for some state
$f$ on the maximal tensor product $\mcA \otimes_{max} \mcB$, and
$C_{qa}(m_A,m_B,n_A,n_B)$ consists of the functions $p \in \R^{[m_A] \times
[m_B] \times [n_A] \times [n_B]}$ of the form $p(a,b|x,y) = f(p^x_a q^y_b)$ for
some state $f$ on the minimal (also known as the spatial) tensor product $\mcA
\otimes_{min} \mcB$ \cite{SW08,Fr12,JNPPSW11}.  

A well-known result of Kirchberg states that Connes embedding problem is
equivalent to what is known as Kirchberg's conjecture, that $C^* \mcF_n
\otimes_{max} C^* \mcF_n = C^* \mcF_n \otimes_{min} C^* \mcF_n$ for all (or
some) $n \geq 2$, where $\mcF_n$ is the free group on $n$ generators
\cite{Ki93}. In Kirchberg's conjecture, the group $\mcF_n$ can be replaced with
the group $\Z_m^{*n}$ for $m,n \geq 2$, $(m,n) \neq (2,2)$ (see for instance
the surveys in \cite{Fr12,Oz13}). Hence an affirmative answer to the the Connes
embedding problem would imply that $C_{qc}(m,n) = C_{qa}(m,n)$ for all $(m,n)$.
Ozawa has shown that the converse is also true, so $C_{qc}(m,n) = C_{qa}(m,n)$
for all $(m,n)$ if and only if Connes embedding problem is true.

The set $C_{qs}(m_A,m_B,n_A,n_B)$ consists of all functions $p \in \R^{[m_A]
\times [m_B] \times [n_A] \times [n_B]}$ of the form $p(a,b|x,y) = v \cdot
\phi_A(p^x_a) \otimes \phi_B(q^y_b) v$ for any pair of representations $\phi_A$
and $\phi_B$ of $\mcA$ and $\mcB$ on Hilbert spaces $H_A$ and $H_B$
respectively, and unit vector $v \in H_A \otimes H_B$. A state on a tensor-product
$\mcA \otimes \mcB$ of the form $T \mapsto v \cdot (\phi_A \otimes
\phi_B)(T) v$ is called a spatial state; for the minimal tensor product, these
states are dense in the set of all states. Hence separating
$C_{qc}(m_A,m_B,n_A,n_B)$ from $C_{qs}(m_A,m_B,n_A,n_B)$ shows in particular
that there are states on $\mcA \otimes_{max} \mcB$ which do not come from
spatial states. 

\subsection{Linear system games, hypergraphs, and solution groups}\label{S:bcs}

Binary linear system games are based on linear systems $Ax=b$ over $\Z_2$.  It
is convenient to think of linear systems in terms of hypergraphs.  By a
\emph{hypergraph}, we mean a triple $\mcH = (V,E,A)$, where $V = V(\mcH)$ and
$E = E(\mcH)$ are finite sets of \emph{vertices} and \emph{edges} respectively,
and $A \in \Z_{\geq 0}^{V \times E}$ is the \emph{incidence matrix} between $V$
and $E$, so $A_{ve} \geq 0$ is the degree of incidence between edge $e$ and
vertex $v$. We say that $v$ and $e$ are \emph{incident} if $A_{ve} > 0$.  If $v
\in V$, then the degree of $v$ is $|v| = \sum_e A_{ve}$.  Similarly if $e \in
E$ then $|e| = \sum_v A_{ve}$. We say that $\mcH$ is \emph{simple} if $A_{ve}
\leq 1$ for all $v \in V$ and $e \in E$, \emph{$k$-regular} if $|v| = k$ for
all $v \in V$, and a \emph{graph} if $|e| = 2$ for all $e \in E$. 

Note that this definition of hypergraphs allows both isolated vertices and
isolated edges, i.e. vertices (resp. edges) which are incident to no edges
(resp.  vertices). A \emph{$\Z_2$-vertex labelling} of $\mcH$ is a function $b
: V \arr \Z_2$. With these conventions, there is a correspondence between
linear systems $Ax=b$ and simple hypergraphs $\mcH$ with a vertex labelling
$b$.  From this point of view, the edges of a hypergraph correspond to the
variables of a linear system, and the vertices correspond to constraints.
Similarly, pairs $(\mcH,b)$ where $\mcH$ is not necessarily simple correspond
to linear systems $Ax =b$ over $\Z_2$ with a choice of non-negative integer
representatives for the coefficients $A_{ve}$. 

To any $m \times n$ linear system $Ax=b$, we can associate a linear system
non-local game $\mcG$, and a group $\Gamma$ \cite{CM14,CLS16}. In the game
$\mcG$, Alice receives the index $1 \leq x \leq m$ of a row of $A$, and returns
a function $a \in \Z_2^{V_x}$, where $V_x := \{1 \leq j \leq n : A_{ij} \neq
0\}$. Bob receives the index $1 \leq y \leq n$ of a column of $A$, and returns
$b \in \Z_2$. The players win if $\sum_{j \in V_x} a(j) = b_x$, and either $y
\not\in V_x$, or $y \in V_x$ and $a(y) = b$. In other words, Alice outputs
values for every variable in the $x$th equation, Bob outputs a value for the
$y$th variable, and the players win if Alice's answer satisfies the $x$th
equation and their answers are consistent with each other.
The group $\Gamma$ is the focus of attention of this paper, and is defined as
follows.
\begin{defn}\label{D:solutiongroup}
    Let $\mcH = (V,E,I)$ be a (not necessarily simple) hypergraph and let $b$
    be a function $V \arr \Z_2 : v \mapsto b_v$.  The \emph{solution group} $\Gamma =
    \Gamma(\mcH;b)$ associated to $\mcH$ and $b$ is the group generated by
    $\left\{x_e, e \in E\right\} \cup \{J\}$, subject to relations:
    \begin{enumerate}[(1)]
        \item $x_e^2 = 1$ for all $e \in E$ and $J^2 = 1$ (i.e. $\Gamma$ is generated by involutions)
        \item $[x_e, J] = 1$ for all $e \in E$ (i.e. $J$ is central), 
        \item $[x_e,x_{e'}] = 1$ if there is some vertex $v$ incident to both
            $e$ and $e'$, and
        \item \begin{equation*}
            \prod_{e} x_e^{A_{ve}} = J^{b_v} \text{ for all } v \in V.
        \end{equation*}
    \end{enumerate}
    The \emph{null solution group} is the group $\Gamma(\mcH) := \Gamma(\mcH,0)$.
\end{defn}
We call the last two types of relations \emph{commuting relations} and \emph{linear
relations} respectively. The definition of the linear relations assumes that $E$ is 
ordered, but the choice of order is irrelevant because of the commuting
relations. Note that if $v$ and $e$ are incident and $A_{ve}$ is even, then the
linear relations
\begin{equation*}
    \prod_{e'} x_{e'}^{A_{ve'}} = J^{b_v} \text{ and } 
    \prod_{e' \neq e} x_{e'}^{A_{ve'}} = J^{b_v}
\end{equation*}
are equivalent. However, the fact that $A_{ve} > 0$ might still lead to
commuting relations that wouldn't hold otherwise. 
\begin{example}\label{Ex:simple}
    \begin{figure}[tbhp]
        \begin{tikzpicture}[auto,ultra thick,vertex/.style={circle,draw,thin,inner sep=2.5},every node/.style={scale=.8},
                    hyperedge/.style={thin,pattern=north west lines}]

    \node[vertex] (1) at (2,3) {1};
    \node[vertex] (2) at (0,0) {2};
    \node[vertex] (3) at (4,0) {3};

    \draw (1) to [out=135,in=180] (2,4) to [out=0,in=45] (1);
    \draw[hyperedge] (1.-120) to (2.60) to[bend left] (2.0) to[bend left] (3.180) to[bend left] (3.120) to (1.-60) to[bend left] (1.-120);
    \draw[hyperedge] (2) to[out=-160,in=-90] (-1,0) to[out=90,in=160] (2) to[bend right] (2.-160);
    \draw[hyperedge] (3) to[out=-20,in=-90] (5,0) to[out=90,in=20] (3) to[bend left] (3.-20);
    \draw (2) to[bend right] (3);
\end{tikzpicture}
        \caption{The hypergraph from Example \ref{Ex:simple}. Edges with
            degree two are drawn as lines, while edges with degree not equal
            to two are drawn as shaded regions.}
        \label{F:hypergraphex}
    \end{figure}
    Consider the hypergraph $\mcH$ with matrix  
    \begin{equation*}
        A = \begin{pmatrix} 1 & 0 & 0 & 0 & 2 \\
                            1 & 1 & 0 & 1 & 0 \\
                            1 & 0 & 1 & 1 & 0 \\
            \end{pmatrix},
    \end{equation*}
    represented visually in Figure \ref{F:hypergraphex}, and let
    $b = (0,0,1)$.
    Then $\Gamma(\mcH,b)$ is the group generated by $x_1,\ldots,x_5,J$,
    satisfying the relations 
    \begin{equation*}
        J^2=1, x_i^2 = [x_i,J]=1 \text{ for all }1 \leq i \leq 5,
    \end{equation*}
    commuting relations 
    \begin{equation*}
        [x_i,x_j] =1 \text{ for }(i,j) \in \{(1,5),(1,2),(1,4),(2,4),(1,3),(3,4)\},
    \end{equation*}
    and linear relations
    \begin{equation*}
        x_1 x_5^2 = x_1 x_2 x_4 = 1, x_1 x_3 x_4 = J.
    \end{equation*}
    In this example, the first linear relation implies $x_1=1$, so $\Gamma(\mcH,b)$
    is isomorphic to $(\Z_2 * \Z_2) \times \Z_2$.
\end{example}

The one-dimensional representations $\pi$ of $\Gamma(\mcH;b)$ in which $\pi(J) \neq 1$
correspond to the solutions of the linear system $Ax = b$. Higher-dimensional
representations of $\Gamma(\mcH;b)$ with this property can be thought of as
quantum solutions of $Ax=b$. This is justified by the following theorem, which
relates solution groups to non-local games.
\begin{thm}[\cite{CM14, CLS16}]\label{T:games}
    Let $Ax = b$ be a linear system over $\Z_2$, where $A$ is a non-negative
    integral matrix, let $\mcG$ be the associated linear system non-local game, and let
    $\Gamma$ be the corresponding solution group. Then:
    \begin{itemize}
        \item $\mcG$ has a perfect quantum commuting-operator strategy
            if and only if $J \neq 1$ in $\Gamma$. 
        \item $\mcG$ has a perfect quantum tensor-product strategy if and only
            if $\mcG$ has a perfect finite-dimensional quantum strategy, and this
            happens if and only if $\Gamma$ has a finite-dimensional
            representation $\pi$ with $\pi(J) \neq \Id$. 
    \end{itemize}
\end{thm}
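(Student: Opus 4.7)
The plan is to establish a bijective correspondence between perfect strategies and representations $\pi$ of $\Gamma$ with $\pi(J) \neq \Id$. Since $J$ is a central involution, restricting to the $(-1)$-eigenspace of $\pi(J)$ allows us to assume $\pi(J) = -\Id$ throughout, and both bullet points then follow by specializing the correspondence to the appropriate class of representations.

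First I would prove \emph{strategy $\Rightarrow$ representation}, which is model-independent. Given a perfect strategy with vector state $|\psi\rangle$, convert Alice's PVM $\{P_a^v\}_a$ for question $v$ (indexed by assignments $a$ satisfying the $v$th constraint) into $\pm 1$-observables $X_e^v := \sum_a (-1)^{a(e)} P_a^v$, and Bob's PVM $\{Q_c^e\}_c$ into $Y_e := Q_0^e - Q_1^e$. By construction $(X_e^v)^2 = \Id$; the family $\{X_e^v\}_e$ commutes at fixed $v$; and $\prod_e (X_e^v)^{A_{ve}} = (-1)^{b_v}\Id$ because the PVM is supported on satisfying assignments. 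The winning condition $\langle \psi | X_e^v Y_e | \psi \rangle = 1$ forces $X_e^v |\psi\rangle = Y_e |\psi\rangle$, which on the cyclic subspace $\overline{\C[X,Y]|\psi\rangle}$ makes $X_e^v$ independent of $v$; writing the common operator as $X_e$, the map $x_e \mapsto X_e$, $J \mapsto -\Id$ verifies every relation of Definition \ref{D:solutiongroup}, so it defines a representation of $\Gamma$ with $\pi(J) \neq \Id$ witnessed by $|\psi\rangle$.

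For the reverse implications I would split by model. In the \emph{commuting-operator} case, given any representation $\pi$ with $\pi(J) = -\Id$ on $H$, build a strategy by using $\pi$ for Alice's observables and the commuting right-multiplication operators of $\C[\Gamma]/(J+1)$ (equivalently, a commuting copy of $\pi$ living in the commutant) for Bob's. Alice's PVM $\{P_a^v\}$ comes from jointly diagonalizing the commuting family $\{\pi(x_e) : A_{ve} > 0\}$ and is automatically supported on satisfying assignments thanks to $\prod_e \pi(x_e)^{A_{ve}} = (-1)^{b_v}\Id$. For the \emph{tensor-product} case with a finite-dimensional $\pi$, take $H_A = H_B = H$ with maximally entangled state $|\psi\rangle = \frac{1}{\sqrt{\dim H}}\sum_i |i\rangle \otimes |i\rangle$, Alice's observables $\pi(x_e) \otimes \Id$, and Bob's observables $\Id \otimes \pi(x_e)^T$; the standard identity $(M \otimes \Id)|\psi\rangle = (\Id \otimes M^T)|\psi\rangle$ then supplies the winning condition for each $e$.

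The main obstacle is the middle equivalence of the second bullet: showing that an infinite-dimensional perfect tensor-product strategy can always be reduced to a finite-dimensional one. Extracting Alice's observables as above yields a representation of $\Gamma$ on the (a priori infinite-dimensional) subspace of $H_A$ cyclically generated by Alice's reduced state. I would truncate to finite dimension using the Schmidt decomposition of $|\psi\rangle$ together with the fact that the game has only finitely many questions and answers, so that the condition for being a perfect strategy compiles into finitely many polynomial identities among the involutions $X_e$. A compactness/approximation argument, exploiting that each constraint's joint spectral decomposition has only finitely many atoms, then produces a genuine finite-dimensional representation on which $\pi(J) = -\Id$ persists, completing the chain of equivalences.
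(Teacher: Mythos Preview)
The paper does not prove this theorem at all: immediately after the statement it simply records that ``the first part of this theorem is due to \cite{CLS16}, while the second part is due to \cite{CM14}.'' So there is no argument in the paper to compare against, and your write-up is an attempt to reconstruct the cited proofs rather than the paper's own reasoning.

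Judged on its own merits, your sketch is right in spirit for the ``strategy $\Rightarrow$ representation'' direction, but the two reverse constructions are not quite correct as stated. For the commuting-operator direction you cannot start from an \emph{arbitrary} representation $\pi$ on $H$ and conjure ``right-multiplication operators of $\C[\Gamma]/(J+1)$'' commuting with it: a generic $H$ carries no such commuting action. The argument in \cite{CLS16} instead builds the strategy from a specific representation, essentially the GNS construction for the canonical trace on $C^*(\Gamma)/(J+1)$ (equivalently, left and right regular representations on $\ell^2(\Gamma)$), where left and right actions commute by design.

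More seriously, your plan for the step ``perfect tensor-product strategy $\Rightarrow$ finite-dimensional representation'' does not work. A compactness/approximation argument will only produce \emph{approximate} representations, and there is no mechanism in your outline to promote approximate relations back to exact ones (indeed, if that were automatic the whole Tsirelson/Connes circle of problems would collapse). The actual argument in \cite{CM14} is structural rather than approximate: from $X_e \otimes \Id\,|\psi\rangle = \Id \otimes Y_e\,|\psi\rangle$ and the Schmidt decomposition one deduces that each $X_e$ commutes with Alice's reduced density operator $\rho_A$. Since $\rho_A$ is trace-class, its nonzero eigenspaces are finite-dimensional and invariant under all the $X_e$, so restricting to any one of them yields an exact finite-dimensional representation of $\Gamma$ with $\pi(J) = -\Id$.
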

The first part of this theorem is due to \cite{CLS16}, while the second part is
due to \cite{CM14}.

\section{The embedding theorem and consequences}\label{S:embedding}

In light of Theorem \ref{T:games}, we would like to understand the structure
(or lack thereof) of solution groups for linear system games.  Recall that a
presentation $\langle S : R \rangle$ of a group $G$ is a set $S$ and subset $R$
of the free group $\mcF(S)$ generated by $S$, such that $G = \mcF(S) / (R)$,
where $(R)$ is the normal subgroup generated by $R$. A group is finitely
presented if it has a presentation $\langle S:R \rangle$ where both $S$ and $R$
are finite. Our primary result is that understanding solution groups is as hard
as understanding finitely presented groups.
\begin{thm}\label{T:embedding}
    Let $G$ be a finitely presented group, let $J' \in G$ be a central element
    with $(J')^2 = 1$, and let $w_1,\ldots,w_n$, $n \geq 0$ be a sequence of
    elements in $G$ such that $w_i^2 = 1$ for all $1 \leq i \leq n$. Then there
    is a hypergraph $\mcH$, a vertex labelling function $b : V(\mcH) \arr
    \Z_2$, a sequence of edges $e_1,\ldots,e_n$ in $\mcH$, and a
    homomorphism $\phi : G \arr \Gamma(\mcH,b)$  such that $\phi$ is an
    embedding, $\phi(J')=J$, and $\phi(w_i) = x_{e_i}$ for all $1 \leq i \leq n$.
\end{thm}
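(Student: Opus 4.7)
The plan is to construct a hypergraph $\mcH$ by gluing together ``gadget'' sub-hypergraphs --- one for each relator of $G$ --- whose solution group encodes the defining relations of $G$ via solution-group-style relations, and then to verify that the induced map $\phi : G \to \Gamma(\mcH,b)$ is injective using the picture-theoretic machinery of Sections \ref{S:pictures}--\ref{S:constellation}.

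Before construction, I would reduce to a presentation of $G$ adapted to solution groups. Since the generators $x_e$ of a solution group are involutions, the first step is to enlarge $G$ to a finitely presented group $G'$ generated by involutions, with $G \hookrightarrow G'$ and with $J'$ and each $w_i$ sitting among the generators of $G'$. This can be achieved by iteratively decomposing each non-involution generator $g$ of order $n_g$ (possibly infinite) into two involutions via the amalgamated free product $G \ast_{\langle g\rangle} D_{2n_g}$, where $g$ is identified with a product of two canonical involutions in the dihedral group $D_{2n_g}$. After finitely many such steps we may assume $G = \langle h_1,\ldots,h_k : s_1,\ldots,s_l\rangle$ with each $h_i^2 = 1$.

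To build $\mcH$, take one edge $e_i$ per generator $h_i$, and for each relator $s_j = h_{i_1}\cdots h_{i_m}$ attach a gadget sub-hypergraph that realizes the relation $h_{i_1}\cdots h_{i_m} = 1$ via a chain of short solution-group relations with the help of auxiliary edges. The crucial property of each gadget is that the commuting relations forced by local compatibility stay confined to within the gadget and do not impose any unwanted commutation on the $h_i$ outside. The cycle and constellation constructions of Sections \ref{S:cycle}--\ref{S:constellation} are designed to provide exactly such gadgets. Identifying the external edges of each gadget with the corresponding $e_i$ yields $\mcH$ together with a vertex labelling $b$ and a homomorphism $\phi : G \to \Gamma(\mcH,b)$ sending $h_i \mapsto x_{e_i}$, $J' \mapsto J$, and each $w_i$ to the prescribed $x_{e_i}$.

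The main step is proving $\phi$ is injective. Any identity $\phi(w) = 1$ is certified by a planar picture $P$ over the presentation of $\Gamma(\mcH,b)$. The strategy is to use the hypergraph-morphism framework of Section \ref{S:category} and the gadget-local reductions of Sections \ref{S:cycle}--\ref{S:constellation} (culminating in Corollary \ref{C:suncycle}) to simplify $P$ --- eliminating the auxiliary interior structure gadget by gadget --- until every disc of $P$ corresponds to a defining relator $s_j$ of $G$. From such a reduced picture one reads off a derivation of $w = 1$ in $G$. The main obstacle is precisely this reduction: proving that the gadgets are ``picture-aspherical'' in the sense that any sub-picture contained inside a gadget can be removed, so that no spurious relation on the $h_i$ is introduced by the commutators that local compatibility forces inside the gadgets.
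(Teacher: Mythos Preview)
Your outline matches the paper's proof closely: reduce to a presentation by involutions via amalgamation with dihedral groups (Proposition~\ref{P:involutions}), build a hypergraph from one gadget per relator (the wagon wheel $\mcW(\mcI)$ of Section~\ref{S:wagonwheel}), and prove injectivity by simplifying $\Gamma(\mcW,b)$-pictures until they become $G$-pictures (Lemma~\ref{L:region}). Two refinements are worth flagging. First, the dihedral step alone is not enough: one needs the stronger \emph{collegial} condition of Definition~\ref{D:collegial} (in particular, even multiplicities of generators in relators, arranged via Corollary~\ref{C:collegial}), since this is exactly what makes the neighbourhoods $\mcN(\mcC_{ij})$ retracts of $\mcW$ (Lemma~\ref{L:retract2}) and hence the cycles $b$-stellar. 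Second, the picture reduction does not culminate in Corollary~\ref{C:suncycle}, which only handles a single sun; the engine is the constellation theorem (Theorem~\ref{T:constellation}), which coordinates the sun-reductions across all the interacting cycles $\mcB_i$ and $\mcC_{ij}$ simultaneously so that every $\Phi$-cycle becomes a facial copy and Lemma~\ref{L:region} applies.
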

When $J'=1$ and $n=0$, Theorem \ref{T:embedding} states that every finitely
presented group $G$ embeds in some solution group $\Gamma$. The full version of
Theorem \ref{T:embedding} says that if we also specify a list of involutions in
$G$, then we can construct the embedding so that these involutions map to
generators of $\Gamma$, and in particular we can map a specified central involution to
the element $J \in \Gamma$. This last condition is essential for our
applications.

The hypergraph $\mcH$, vertex labelling $b$ and homomorphism
$\phi$ in Theorem \ref{T:embedding} can be explicitly constructed from a
presentation of $G$ and a choice of representatives for $w_1,\ldots,w_n$ and
$J'$. The construction is recursive, in the sense that there is a Turing
machine which, given a presentation for $G$ and words representing
$w_1,\ldots,w_n$, and $J'$ as input, outputs $\mcH$, $b$, and $\phi$ as in the
theorem. In particular, the construction  does not depend on whether the
elements $w_1,\ldots,w_n,J$ are trivial or non-trivial in $G$. This
construction is described in Sections \ref{S:involutions} and
\ref{S:wagonwheel}. Theorem \ref{T:embedding} is proved at the beginning of
Section \ref{S:wagonwheel} via reduction to another embedding theorem. The
proof of this latter embedding theorem (and hence the proof of Theorem
\ref{T:embedding}) is completed in Section \ref{S:proof}.

In the remainder of this section, we prove two consequences of Theorem
\ref{T:embedding}. The first is an answer to Tsirelson's problem.
\begin{cor}\label{C:tsirelson}
    There is a linear system non-local game which has a perfect quantum commuting-operator
    strategy, but does not have a perfect quantum tensor-product strategy. 
\end{cor}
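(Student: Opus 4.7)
The plan is to combine Theorems~\ref{T:games} and~\ref{T:embedding}. By Theorem~\ref{T:games}, it suffices to exhibit a solution group $\Gamma(\mcH,b)$ in which $J \neq 1$, but $\pi(J) = \Id$ for every finite-dimensional representation $\pi$: the associated linear system game then has a perfect commuting-operator strategy (from $J \neq 1$) and no perfect tensor-product strategy (since no finite-dimensional representation separates $J$ from $\Id$).

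To produce such a $\Gamma$, I would look for a finitely presented group $G$ together with a central involution $J' \neq 1$ that is killed by every finite-dimensional representation of $G$, and apply Theorem~\ref{T:embedding} (with $n=0$) to obtain an embedding $\phi : G \incl \Gamma(\mcH,b)$ sending $J' \mapsto J$. Then $J \neq 1$ by injectivity of $\phi$, and any finite-dimensional representation $\pi$ of $\Gamma(\mcH,b)$ with $\pi(J) \neq \Id$ would pull back along $\phi$ to a finite-dimensional representation of $G$ on which $J'$ acts nontrivially, contradicting the defining property of $(G,J')$.

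The existence of such a pair $(G,J')$ is essentially classical, and was suggested in this context by~\cite{Fr13}. By Malcev's theorem, every finitely generated linear group is residually finite, so an element of a finitely generated group is killed by every finite-dimensional representation if and only if it lies in every finite-index normal subgroup (the reverse direction uses the regular representation of finite quotients). Thus it suffices to find a finitely presented $G$ with a central involution $J' \neq 1$ in the intersection of all its finite-index normal subgroups. Deligne's central extension $\widetilde{Sp}(2n,\Z)$ of $Sp(2n,\Z)$ by $\Z$, for $n \geq 2$, supplies this: it is finitely presented, and Deligne showed that every finite-index subgroup contains $2Z$, where $Z \cong \Z$ is the central kernel. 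Quotienting by $\langle c^4 \rangle$, for $c$ a generator of $Z$, yields a finitely presented $G$ in which the image $J'$ of $c^2$ is a central involution, is nontrivial (since $c^2 \notin \langle c^4\rangle$), and is contained in every finite-index normal subgroup (since preimages in $\widetilde{Sp}(2n,\Z)$ contain $c^4$ and hence $c^2$).

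The main obstacle is of course Theorem~\ref{T:embedding} itself, whose proof occupies most of the rest of the paper; granting it, the corollary follows by this routine combination of Malcev's theorem with Deligne's non-residually-finite arithmetic central extension, together with the formal pullback argument in paragraph two.
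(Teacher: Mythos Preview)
Your proposal is correct and follows exactly the same outline as the paper: reduce via Theorem~\ref{T:games} to finding a solution group with $J\neq 1$ but $\pi(J)=\Id$ in every finite-dimensional representation, then use Theorem~\ref{T:embedding} to manufacture such a $\Gamma$ from any finitely presented group $G$ carrying a nontrivial central involution $J'$ that dies in every finite-dimensional representation.

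The only genuine difference is the choice of $(G,J')$. The paper builds $G$ as an HNN extension of $H_0\times\Z_2$, where $H_0$ is Higman's group (which has no nontrivial finite-dimensional representations), arranging $J=[x,a]$ so that triviality of $\pi(a)$ forces $\pi(J)=\Id$. You instead use Deligne's non-residually-finite central extension $\widetilde{Sp}(2n,\Z)$ modulo $\langle c^4\rangle$, together with Malcev's theorem to pass from ``trivial in every finite quotient'' to ``trivial in every finite-dimensional representation''. Both routes are standard; indeed, the paper remarks immediately after its proof that any non-residually-finite group can be used in place of Higman's. Your construction is arguably cleaner in that the central involution is built in from the start rather than forced by an HNN relation, at the cost of invoking a deeper input (Deligne's theorem) than Higman's elementary example.

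One small wording slip: in your last parenthetical, ``preimages contain $c^4$ and hence $c^2$'' is not the right implication. What you need is simply that the preimage of a finite-index subgroup of $G$ is a finite-index subgroup of $\widetilde{Sp}(2n,\Z)$, hence by Deligne contains $2Z=\langle c^2\rangle$, so its image in $G$ contains $J'$.
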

For the proof, we recall the notion of an HNN extension. Let $\alpha : K_0 \arr
K_1$ be an isomorphism between two subgroups $K_0$ and $K_1$ of a group $H =
\langle S : R \rangle$. The HNN extension of $H$ by $\alpha$ is the group with
presentation 
\begin{equation*}
    G = \langle S \cup \{t\} : R \cup \{t k t^{-1} = \alpha(k) : k \in K_0\} \rangle,
\end{equation*}
where $t$ is an indeterminate not in $S$. It is well-known that the natural
map $H \incl G$ is an injection, so $H$ is a subgroup of $G$ (see, e.g., \cite[Section IV.2]{LS77}).
\begin{proof}[Proof of Corollary \ref{C:tsirelson}]
    Suppose that $G$ is a finitely presented group with a central element $J'$
    of order two, such that $\pi(J')=\Id$ for every finite-dimensional
    representation $\pi$ of $G$. By Theorem \ref{T:embedding}, there is an
    embedding of $G$ in a solution group $\Gamma$ which identifies $J'$ with
    $J$. In particular, this implies that $J \neq 1$ in $\Gamma$, so the
    associated linear system non-local game $G$ must have a perfect quantum
    commuting-operator strategy by the first part of Theorem \ref{T:games}. If $\pi$ is
    a finite-dimensional representation of $\Gamma$, then $\pi(J) = \pi|_G(J')
    = \Id$. By the second part of Theorem \ref{T:games}, the associated linear
    system non-local game does not have a perfect quantum tensor-product
    strategy.

    To finish the proof, we construct a group $G$ with the above property.
    Consider Higman's group
    \begin{equation*}
        H_0 = \langle a,b,c,d : a b a^{-1} = b^2, b c b^{-1} = c^2, c d c^{-1} = d^2, d a d^{-1} = a^2 \rangle.
    \end{equation*}
    It is well-known that $H_0$ has no non-trivial linear representations, and
    that the generators $a,b,c,d$ of $H_0$ have infinite order \cite{Hi51,
    Be94}.  Let $H = H_0 \times \Z_2$, and let $J \in H$ denote the generator
    of the $\Z_2$-factor. Let $G$ be the HNN extension of $H$ by the automorphism
    of $\langle a, J \rangle \iso \Z \times \Z_2$ sending $J \mapsto J$ and 
    $a \mapsto aJ$. By the properties of the HNN extension, $H$ is a subgroup of $G$, 
    and in particular $J$ is non-trivial in $G$. Furthermore, we can construct
    a presentation for $G$ from a presentation of $H$ by adding a generator $x$
    and relations $[x,J]=1$ and $[x,a]=J$. The former relation implies that 
    $J$ is central in $G$. Finally, if $\pi$ is a finite-dimensional
    representation of $G$, then $\pi|_{H_0}$ is trivial, and in particular,
    $\pi(a) = \Id$. But this implies that 
    \begin{equation*}
        \pi(J) = \pi([x,a]) = \Id,
    \end{equation*}
    as required. 
\end{proof}
We note that any non-residually-finite group can be used in place of Higman's
group in the above proof. 

It would be interesting to know the smallest linear system for which the
corresponding game can be played perfectly only with commuting-operator
strategies. No effort is made to reduce the size of the linear system in the
proof of Theorem \ref{T:embedding}, and the main construction from Section
\ref{S:wagonwheel} depends on the number of variables and the total length of
the relations in the presentation of $G$. If we naively follow the proof
through for the group in Corollary \ref{C:tsirelson}, we get a linear system
with roughly 600 variables and 450 linear relations. Making some obvious
improvements in Section \ref{S:involutions} can get this down to 400 variables
and 300 relations. 

The second consequence concerns the difficulty of determining whether a
non-local game has a perfect commuting-operator strategy.
\begin{cor}\label{C:undecidable}
    It is undecidable to determine if a binary linear system game has a perfect
    commuting-operator strategy.
\end{cor}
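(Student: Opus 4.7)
The plan is to reduce from the word problem of a finitely presented group, which is undecidable by the Novikov--Boone theorem. Fix a finitely presented group $G_0$ with undecidable word problem. For each word $w$ in the generators of $G_0$, I will construct a finitely presented group $G_w$ containing a central involution $J'_w$ such that $J'_w = 1$ in $G_w$ if and only if $w = 1$ in $G_0$. Since the constructions behind Theorem \ref{T:embedding} are explicit (and hence Turing-computable), the corresponding pair $(\mcH_w, b_w)$ is computable from the presentation of $G_w$, and the resulting embedding $\phi_w : G_w \incl \Gamma(\mcH_w, b_w)$ sends $J'_w$ to $J$. Injectivity then gives $J = 1$ in $\Gamma(\mcH_w, b_w)$ iff $w = 1$ in $G_0$. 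Combined with Theorem \ref{T:games}, this turns a decision procedure for perfect commuting-operator strategies into one for the word problem of $G_0$, a contradiction.

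To build $G_w$, I use a commutator trick to sidestep torsion issues. First, pass from $G_0$ to the free product $\tilde G := G_0 * F_2$, where $F_2 = \langle a, b \rangle$ is free of rank two, and set $\tilde w := [a, w] \in \tilde G$. Standard free-product normal-form arguments show that $\tilde w = 1$ iff $w = 1$, and that $\tilde w$ has infinite order whenever $w \neq 1$ (its cyclically reduced form has syllable length four). Now define
\begin{equation*}
    G_w := \bigl\langle\, \tilde G,\ x,\ J \ :\ J^2 = 1,\ [J, g] = 1 \text{ for every generator } g \text{ of } \tilde G \text{ and for } g = x,\ [x, \tilde w] = J \,\bigr\rangle,
\end{equation*}
a finite presentation in which $J$ is a central involution.

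The forward implication is immediate: $w = 1$ makes $\tilde w = 1$, forcing $J = [x, \tilde w] = 1$. For the converse, suppose $w \neq 1$, so that $\tilde w$ has infinite order in $\tilde G$. Inside $\tilde G \times \Z_2$, the subgroup $\langle \tilde w, J \rangle$ is isomorphic to $\Z \times \Z_2$, and the assignment $\tilde w \mapsto \tilde w J$, $J \mapsto J$ defines an involutive automorphism of this subgroup. Rewriting the relation $[x, \tilde w] = J$ in stable-letter form as $x^{-1} \tilde w x = \tilde w J$ identifies $G_w$ with the HNN extension of $\tilde G \times \Z_2$ along this automorphism. Britton's lemma then yields an embedding $\tilde G \times \Z_2 \incl G_w$, so $J \neq 1$ in $G_w$, as required.

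The main obstacle I anticipate is a careful verification that $\tilde w$ really is nontrivial and of infinite order in $\tilde G$ whenever $w \neq 1$, and that the HNN identification is correct; both rely on routine combinatorial group theory. Beyond that, one needs the construction of $(\mcH_w, b_w)$ from $G_w$ in Sections \ref{S:involutions}--\ref{S:wagonwheel} to be uniformly computable in $w$, but this is already emphasized as an explicit, mechanical procedure in the paper.
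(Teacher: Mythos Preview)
Your argument is correct and follows the same high-level strategy as the paper: reduce from the word problem by building, for each word $w$, a finitely presented group $G_w$ with a central involution that is trivial precisely when $w=1$, realise this via an HNN extension over a copy of $\Z\times\Z_2$, and then feed $G_w$ into the explicit construction behind Theorem~\ref{T:embedding}. The only substantive difference is how you manufacture an element of infinite order that detects $w$. The paper applies Rabin's construction to $K\times\Z$ to obtain a group $H_w$ that collapses entirely when $w=1$, and then uses the $\Z$-generator $z$ as the infinite-order element for the HNN step. Your route---passing to $G_0 * F_2$ and taking $\tilde w=[a,w]$---is more elementary: you avoid Rabin entirely, and the needed facts (nontriviality and infinite order of $\tilde w$ when $w\neq 1$) are immediate from the normal form and torsion theorems for free products. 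Note that the second free generator $b$ is never used, so $G_0 * \Z$ would suffice. Either way, the HNN identification and the appeal to Britton's lemma are identical to the paper's, and the computability of the map $w\mapsto(\mcH_w,b_w)$ is exactly as you say.
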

\begin{proof}
    By Theorem \ref{T:games}, determining if a binary linear system game
    has a perfect strategy is equivalent to determining if $J \neq 1$ in a
    solution group.  Because Theorem \ref{T:embedding} is constructive, this is
    in turn equivalent to the following decision problem: \textit{given a finite group
    presentation $G=\langle S:R \rangle$ and a word $J$ in the generators $S$
    such that $J \in Z(G)$ and $J^2=1$, decide if $J=1$ in $G$}.

    We claim that the word problem for groups can be reduced to this latter
    problem. Indeed, given a finitely presented group $K = \langle S:R \rangle$
    and a word $w \in \mcF(S)$ in the generators of $K$, it is possible to
    recursively construct a finitely presented group $L_w$ with the property
    that $K$ is a subgroup of $L_w$ if $w \neq 1$, and $L_w$ is trivial if $w =
    1$ (see \cite[pg.  190]{LS77}, where this construction is attributed to
    Rabin). The presentation of $L_w$ can be constructed by adding finitely
    many generators and relations to the presentation of $K$. Let $H_w$ be the
    result of 
    applying this construction to the group $K \times \Z$, and let $z$ be the
    generator of $H_w$ corresponding to the generator of the $\Z$-factor in $K \times \Z$. 
    If $w = 1$, then $H_w$ is trival, and $z=1$. If $w \neq 1$, then
    $K \times \Z$ is a subgroup of $H_w$, and the order of $z$ is infinite.
    Finally, construct a group $G_w$ by adding two generators $x$ and $J$ to
    the presentation of $H_w$, along with relations $J^2=[x,J]=1$, $[s,J] = 1$
    for all generators $s$ of $H_w$, and $[x,z] = J$. As in Corollary
    \ref{C:tsirelson}, if $w \neq 1$ then $G_w$ is the HNN extension of
    the group $H_w \times \Z_2$ by the automorphism of the subgroup 
    $\langle z, J \rangle \iso \Z \times \Z_2$ sending $J \mapsto J$ and
    $z \mapsto zJ$. Thus, if $w \neq 1$ then $J \neq 1$ in $G_w$. 
    If $w=1$, then $z=1$ in $H_w$, and consequently $J=[x,z]=1$ in $G_w$.
    This completes the reduction.
\end{proof}
Although not used in either of the above corollaries, Theorem \ref{T:embedding}
also allows us to embed a finitely presented group $G$ in a solution group
$\Gamma$ so that a given set of involutions of $G$ become generators of
$\Gamma$. This can be used to prove that other tasks involving solution groups
are undecidable. For instance, it is undecidable to determine if a generator
$x_e$ of a null solution group is non-trivial. 

\section{Presentations by involutions}\label{S:involutions}

We are interested primarily in groups which (a) have a distinguished central
element of order $\leq$ two, and (b) are generated by involutions.  For clarity
in subsequent sections, we encode these conditions in two formal definitions.
\begin{defn}\label{D:overZ2}
    A \emph{group over $\Z_2$} is a group $G$ with a distinguished
    central element $J = J_{G}$ such that $J^2 = 1$.

    A \emph{morphism $G_1 \arr G_2$ over $\Z_2$} is a group homomorphism sending
    $J_{G_1} \mapsto J_{G_2}$. Similarly, an \emph{embedding over $\Z_2$} is an
    injective morphism over $\Z_2$. 
\end{defn} 
Note that $J$ is allowed to be the identity in this definition. This is so that
we can construct groups over $\Z_2$ by starting with some finite presentation,
picking an element $J' \in \mcF(S)$, and adding relations $(J')^2 = 1$ and
$[J',s]=1$ for all $s \in S$. By allowing $J=1$, we can do this even if $J'$
becomes trivial.

Elements of $\mcF(S)$ are represented by words over $\{s, s^{-1} : s \in S\}$.
Every element $r \in \mcF(S)$ can be represented uniquely as $s_1^{a_1} \cdots
s_n^{a_n}$, where $a_i \in \{\pm 1\}$, and $a_i = a_{i+1}$ whenever $s_i =
s_{i+1}$. A word meeting these conditions is said to be \emph{reduced}. The
number $n$ is the \emph{length} of $r$. The element $r$ is said to be
\emph{cyclically reduced} if, in addition, $s_n = s_1$ implies that $a_n =
a_1$. 

\begin{defn}\label{D:involutions}
    Given a set $S$, let $\mcF_2(S) = \langle S : s^2 = 1, s \in S \rangle$.  A
    \emph{presentation by involutions over $\Z_2$} for a group $G$ is a set of
    generators $S$ and a set of relations $R \subset \mcF_2(S) \times \Z_2$ such
    that $G = \mcF_2(S) \times \Z_2 / (R)$, where $(R)$ is the normal subgroup
    generated by $R$. We denote presentations of this form by $\Inv\langle S : R\rangle$,
    and write $G = \Inv\langle S:R\rangle$ when the meaning is clear. 

    We use $J$ (written in multiplicative notation) to denote the generator of
    the $\Z_2$-factor in $\mcF_2(S) \times \Z_2$. If $G = \Inv\langle S:R \rangle$,
    we can regard $G$ as a group over $\Z_2$ by letting $J = J_G$ be the image
    of $J \in \mcF_2(S) \times \Z_2$ in $G$. 

    Elements of $\mcF_2(S)$ are represented by words over $S$.  Every element
    $r \in \mcF_2(S) \times \Z_2$ can be represented uniquely as $J^a s_1
    \cdots s_n$, where $a \in \Z_2$ and $s_1,\ldots,s_n$ is a sequence in $S$
    with $s_i \neq s_{i+1}$. Again, a word of this form is said to be 
    \emph{reduced}, and $n$ is called the \emph{length} of $r$. If, in
    addition, $s_n \neq s_1$ then we say that $r$ is \emph{cyclically reduced}.
    We say that a set of relations $R$ is \emph{cyclically reduced} if every
    element of $R$ is cyclically reduced. 

    If $R$ is a set of relations, the \emph{symmetrization} of $R$ is the 
    set of relations $R^{sym}$ containing all relations of the form 
    \begin{equation*}
        J^a s_i s_{i+1} \cdots s_n s_1 \cdots s_{i-1}\text{ and }J^a s_{i} s_{i-1}
            \cdots s_1 s_n \cdots s_{i+1}, 1 \leq i \leq n, 
    \end{equation*}
    for every relation $J^a s_1 \cdots s_n$ in $R$. 
\end{defn}
Every group presented by involutions over $\Z_2$ has a presentation
$\Inv\langle S:R \rangle$ where $R$ is cyclically reduced. The presentations
$\Inv\langle S : R \rangle$ and $\Inv \langle S : R^{sym} \rangle$ are
equivalent (i.e.  they define isomorphic groups), and if $R$ is cyclically
reduced then $R^{sym}$ is cyclically reduced. 

By definition, solution groups are examples of groups presented by involutions
over $\Z_2$. Theorem \ref{T:embedding} states that every finitely presented
group over $\Z_2$ embeds (over $\Z_2$) in a solution group.  The first step in
proving Theorem \ref{T:embedding} is showing that every finitely presented
group embeds in a group presented by involutions.
\begin{prop}\label{P:involutions}
    Suppose $(G,J)$ is a group over $\Z_2$ with finite presentation $\langle S : R
    \rangle$, and $J' \in \mcF(S)$ is a representative of $J_G$. Let $T$ be the
    set of indeterminates $\{z_{s1},z_{s2} : s \in S\}$, and choose integers $k_s
    \geq 1$ for all $s \in S$. Finally, let $\phi : \mcF(S) \arr \mcF_2(T)
    \times \Z_2$ be the morphism sending $s \mapsto \left( z_{s1}
    z_{s2}\right)^{k_s}$. Then the induced morphism
    \begin{equation*}
        \phi : G \arr K := \Inv\langle T : R' \rangle, \text{ where } R' := \{\phi(r) : r \in R\} \cup \{J_K\phi(J')\}
    \end{equation*}
    is an embedding over $\Z_2$. 

    Furthermore, if $R \cup \{J'\}$ is cyclically reduced then $R'$ is cyclically reduced. 
\end{prop}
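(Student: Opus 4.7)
The plan is to verify well-definedness and the $\Z_2$-structure of $\phi$ directly; prove injectivity by constructing an overgroup $G' \supseteq G$ together with a map $\bar\psi: K \to G'$ whose precomposition with $\phi$ is the inclusion $G\hookrightarrow G'$; and finally verify the cyclic reducedness claim by word analysis. Well-definedness is immediate: the normal closure $(R')$ contains $\phi(R)$, so $\phi$ descends to a morphism $G \to K$; and the relation $J_K\phi(J') = 1$ combined with $J_K^2 = 1$ gives $\phi(J_G) = \phi(J') = J_K$, so $\phi$ is over $\Z_2$.

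For injectivity, I construct $G' \supseteq G$ containing, for each $s \in S$, involutions $u_{s1}, u_{s2}$ commuting with $J_G$ and satisfying $(u_{s1}u_{s2})^{k_s} = s$. For each $s$, set $H_s := \langle s, J_G\rangle_G$; since $J_G$ is central with $J_G^2 = 1$, $H_s$ is abelian and isomorphic to one of $\Z\times\Z_2$, $\Z$, $\Z_m\times\Z_2$, $\Z_m$, or $\Z_{2m}$, determined by the order of $s$ modulo $\langle J_G\rangle$ in $G$. In each case I take $\tilde D_s$ to be the corresponding (finite or infinite) dihedral group $\langle u_{s1}, u_{s2}\rangle$, or its direct product with a central $\Z_2$ generated by $J$, with relations matching those of $H_s$ under $s\leftrightarrow (u_{s1}u_{s2})^{k_s}$ and $J_G\leftrightarrow J$; the injectivity of $H_s\hookrightarrow\tilde D_s$ is then a routine calculation inside a dihedral group. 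Enumerate $S = \{s_1,\ldots,s_N\}$ and form iteratively $G_0 := G$ and $G_i := G_{i-1} \ast_{H_{s_i}} \tilde D_{s_i}$. At each stage $G\hookrightarrow G_{i-1}$ by induction, so $H_{s_i}$ computed in $G_{i-1}$ coincides with its original computation in $G$; both structural maps $H_{s_i}\hookrightarrow G_{i-1}$ and $H_{s_i}\hookrightarrow\tilde D_{s_i}$ are injective, and the Normal Form Theorem for amalgamated free products yields $G_{i-1}\hookrightarrow G_i$. Set $G' := G_N$.

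Next, define $\psi: \mcF_2(T)\times\Z_2 \to G'$ by $z_{si}\mapsto u_{si}$ and $J_K\mapsto J_G$; the structural relations $z_{si}^2 = J_K^2 = [z_{si}, J_K] = 1$ hold in $G'$ by construction. For each $r\in R$, applying $\psi$ to $\phi(r)$ substitutes $(z_{s1}z_{s2})^{k_s}$ by $s$, producing the image of $r$ in $G\subseteq G'$, which is $1$; likewise $\psi(J_K\phi(J')) = J_G\cdot J_G = 1$. So $\psi$ descends to $\bar\psi: K\to G'$. Since $(\bar\psi\circ\phi)(s) = s$ and $(\bar\psi\circ\phi)(J_G) = J_G$, the composition $\bar\psi\circ\phi$ equals the inclusion $G\hookrightarrow G'$; injectivity of this inclusion forces $\phi$ to be injective.

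For cyclic reducedness, $\phi(s)$ expands to the alternating reduced word $z_{s1}z_{s2}z_{s1}\cdots$ of length $2k_s$ in $\mcF_2(T)$. In the concatenation $\phi(s_1)^{a_1}\cdots\phi(s_n)^{a_n}$ from a reduced $r = s_1^{a_1}\cdots s_n^{a_n}$, a boundary with $s_i\ne s_{i+1}$ uses disjoint letters of $T$, while a boundary with $s_i = s_{i+1}$ forces $a_i = a_{i+1}$ and produces differing interface letters $z_{s_i 1}$ and $z_{s_i 2}$; cyclic reducedness of $r$ handles the wrap-around by the same analysis, and the $J_K$-prefix in $J_K\phi(J')$ is inert with respect to cyclic reducedness by the normal form convention in $\mcF_2(T)\times\Z_2$. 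The main obstacle in this plan is the second paragraph: identifying the case-by-case structure of $H_s$ and choosing the matching $\tilde D_s$ so that the iterated amalgamation is well-defined; everything else is bookkeeping.
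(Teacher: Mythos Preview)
Your proof is correct and follows essentially the same strategy as the paper: both build an overgroup of $G$ by iteratively amalgamating with (finite or infinite) dihedral groups $\tilde D_s$ over the abelian subgroup $\langle s, J_G\rangle$, using the normal form theorem to preserve injectivity at each stage. The only cosmetic difference is that the paper identifies the resulting amalgam $G'$ directly with $K$ via a presentation computation, whereas you instead produce a map $\bar\psi:K\to G'$ and use the retraction $\bar\psi\circ\phi=\mathrm{incl}$; both routes yield the same injectivity conclusion.
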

\begin{proof}
    Let $s_1,\ldots,s_n$ be a list of the elements of $S$, and let $m_i$ be the
    order of $s_i$ in $G$. For convenience, we write $z_{ij}$ in place of
    $z_{s_i,j}$. For each $0 \leq r \leq n$, let 
    \begin{align*}
        K_r := \langle S \cup \{z_{11},z_{12},\ldots,z_{r1},z_{r2},J\} : R & \cup \{ z_{ij}^2 = [z_{ij},J] = 1 : 
            1 \leq i \leq r, j=1,2\} \\ & \cup \{ s_j = \left(z_{i1} z_{i2}\right)^{k_{s_j}} \} 
            \cup \{ J = J' \} \rangle.
    \end{align*}
    The presentation $K_n$ is equivalent to the presentation $\Inv\langle T :
    R' \rangle$ of the group $K$, so we just need to show that the natural map
    $G \arr K_n$ is an inclusion. But this follows from the fact that each $K_i$,
    $1 \leq i \leq n$, is an amalgamated product of $K_{i-1}$ with either a
    dihedral group, or the product of a dihedral group and $\Z_2$. Indeed, 
    suppose that the natural map of $G \arr K_{i-1}$ is an inclusion, and let
    \begin{equation*}
        D_i := \langle z_{i1}, z_{i2} : z_{i1}^2 = z_{i2}^2 = \left(z_{i1} z_{i2}\right)^{k_{s_i} m_i}
                                = 1 \rangle,
    \end{equation*}
    the dihedral group of order $2 k_{s_i} m_i$ (if $m_i$ is infinite, then the
    last relation is omitted, so that $D_i$ is the infinite dihedral group).
    If $J_G \not\in \langle s_i \rangle$, then $\langle s_i, J_G \rangle
    \iso \Z_{m_i} \times \Z_2$, and $K_i$ is the amalgamated product of
    $K_{i-1}$ with $D_i \times \Z_2$ over $\langle s_i, J_G \rangle$, where we
    identify $s_i \in G \subseteq K_{i-1}$ with $(z_{i1} z_{i2})^{k_{s_i}}$,
    and $J_G$ with the generator of $\Z_2$ in $D_i \times \Z_2$. If, on the
    other hand, $J_G \in \langle s_i \rangle$, then $J_G = s_i^a$, where
    $a = 0$ or $m_i/2$. In both cases, $(z_{i1} z_{i2})^{k_{s_i} a}$ is
    central in $D_i$, so $K_i$ is the amalgamated product of $K_{i-1}$ with
    $D_i$ over $\langle s_i \rangle$, where we again identify $s_i$ with
    $(z_{i1} z_{i2})^{k_{s_i}}$ (it's necessary to check that $z_{i1}$ and
    $z_{i2}$ commute with $s_i^a$ because of the relations $[z_{ij},J]$ in
    $K_i$).  It follows that the natural map $G \arr K_i$ is an inclusion.
    Since $G \iso K_0$, we ultimately conclude that the natural map $G \arr
    K_n$ is an inclusion as desired. 

    Finally, it is easy to see that if $r \in R$ is cyclically reduced, then
    $\phi(r)$ is cyclically reduced. If $J'$ is cyclically reduced, then $J
    \phi(J')$ is also cyclically reduced. 
\end{proof}

\begin{defn}\label{D:evenodd}
    A relation $r = J^{a} s_1 \cdots s_n \in \mcF_2(S) \times \Z_2$ is
    \emph{odd} (resp. \emph{even}) if $a$ is odd (resp. even). Equivalently,
    a relation is odd (resp. even) if it is of the form $r' = J$ (resp. $r'=1$)
    for some $r' \in \mcF_2(S)$. 

    The \emph{even part} of the relation $r = J^a s_1 \cdots s_n$ is
    $r^+ = s_1 \cdots s_n$. If $\Inv\langle S : R \rangle$ is a presentation by involutions
    over $\Z_2$, then the corresponding \emph{even presentation} over $\Z_2$ is
    $\Inv\langle S : R^+\rangle$, where $R^+ = \{ r^+ : r \in R\}$. 
    
    Similarly, if $G$ is any group over $\Z_2$, then the \emph{even quotient}
    is $G^+ := G/(J_G) \times \Z_2$. The group $G^+$ is regarded as a group
    over $\Z_2$ with $J_{G^+}$ equal to the generator of the $\Z_2$ factor.
\end{defn}
It is easy to see that if $G = \Inv\langle S : R\rangle$, then $G^+ = \Inv\langle S : R^+\rangle$. For
instance, the null solution group $\Gamma(\mcH) = \Gamma(\mcH,0)$ of a
hypergraph $\mcH$ is the even quotient $\Gamma(\mcH,b)^+$ of the solution group
$\Gamma(\mcH,b)$ for any $b$. 

\begin{defn}
    Let $J^a s_1 \cdots s_n$ be a reduced word for an element $r \in \mcF_2(S)
    \times \Z_2$. The multiplicity of $s \in S$ in $r$ is
    \begin{equation*}
        \mult(s;r) := |\{ 1 \leq i \leq n : s_i = s\}|.
    \end{equation*}
    We say that $s \neq t \in S$ are \emph{adjacent in $r$} if either $\{s,t\} =
    \{s_i,s_{i+1}\}$ for some $i=1,\ldots,n-1$ or $\{s,t\} = \{s_1,s_n\}$. 
\end{defn}
The reason that we introduce numbers $k_s$ in Proposition \ref{P:involutions}
is that, for the proof of Theorem \ref{T:embedding}, we would like to work with
relations $r$ where $\mult(s;r)$ is even for all $s$. In fact, we will be
able to handle slightly more general presentations, which we now define.

\begin{defn}\label{D:collegial}
    We say that a presentation $\Inv\langle S:R \rangle$ by involutions over $\Z_2$ is
    \emph{collegial} if 
    \begin{enumerate}[(a)]
        \item the presentation is finite and cyclically reduced,
        \item $R \cap \{1,J\} = R \cap S = \emptyset$, and
        \item if $\mult(s;r_0)$ is odd for some $r_0 \in R$, and $t$
            is adjacent to $s$ in some $r_1 \in R$, then $\mult(t;r')$ is
            even for all $r' \in R$.
    \end{enumerate}
\end{defn}
\begin{rmk}\label{R:atleastthree}
    Note that if $\Inv\langle S:R \rangle$ is collegial, then every relation $r \in R$ must
    have length at least four, i.e. $r = J^a s_1 \cdots s_n$ where $n \geq 4$. 
    This is because relations of length zero and one are explicitly excluded by
    condition (b), relations of the form $s^2$, $s t s$, $s^2 t$, or $t s^2$
    are not cyclically reduced, and relations $s t$ and $s t r$, where $s,t,r$
    are distinct, do not satisfy condition (c).
\end{rmk}
\begin{cor}\label{C:collegial}
    Let $G$ be a finitely presented group over $\Z_2$, with a sequence of
    elements $w_1,\ldots,w_n \in G$ such that $w_i^2 = 1$ for all $1 \leq i\leq
    n$. Then there is collegial presentation $\Inv\langle S:R \rangle$, and an
    embedding $\phi : G \arr K := \Inv\langle S:R\rangle$ over $\Z_2$ such that
    $\phi(w_i) \in S \subset K$ for all $1 \leq i \leq n$.
\end{cor}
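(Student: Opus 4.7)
The plan is to adapt Proposition~\ref{P:involutions} so that the specified involutions $w_1, \ldots, w_n$ appear directly as generators of the resulting presentation, rather than being eliminated by the dihedral substitution $s \mapsto (z_{s1} z_{s2})^{k_s}$ that the Proposition applies to every generator of $G$.

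Concretely, I would choose a cyclically reduced finite presentation $\langle S_0 : R_0 \rangle$ of $G$, together with cyclically reduced representatives $J' \in \mcF(S_0)$ for $J_G$ and $\tilde w_i \in \mcF(S_0)$ for each $w_i$. Enlarge the presentation by adding fresh generators $t_1, \ldots, t_n$ and relations $t_i \tilde w_i^{-1}$; the result still presents $G$ under $t_i \mapsto w_i$. Now carry out the iterative amalgamation construction from the proof of Proposition~\ref{P:involutions}, but \emph{skip} the dihedral amalgamation step for each $t_i$. This skip is legitimate because each $t_i$ is already an involution, so the subgroup $\langle t_i, J_G \rangle$ is already of the form $\Z_2$ or $\Z_2 \times \Z_2$, and leaving $t_i$ in place does not obstruct the telescoping amalgamation argument that gives the embedding. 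The output is a group $K$ presented by involutions over $\Z_2$ on the generating set
\begin{equation*}
    T = \{z_{s1}, z_{s2} : s \in S_0\} \cup \{t_1, \ldots, t_n\},
\end{equation*}
together with an embedding $\phi : G \arr K$ over $\Z_2$ satisfying $\phi(s) = (z_{s1} z_{s2})^{k_s}$ for $s \in S_0$ and $\phi(t_i) = t_i$; in particular $\phi(w_i) = t_i \in T$, which is the desired condition $\phi(w_i) \in S$.

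Choose $k_s = 2$ (or any fixed even integer $\geq 2$) for every $s \in S_0$. The presentation of $K$ then has three kinds of relations: $\phi(r)$ for $r \in R_0$, the defining relations $t_i \phi(\tilde w_i^{-1})$, and $J_K \phi(J')$. Condition~(a) of collegiality follows because $\phi$ preserves cyclic reducedness: consecutive blocks $\phi(s^{\pm 1})$ for different $s$ use disjoint generator sets, each block starts and ends with different $z_{s,j}$'s, and $t_i$ is disjoint from every $z_{s,j}$. Condition~(b) follows by a length count: provided $\tilde w_i$ and $J'$ are non-trivial in $\mcF(S_0)$, every relation has length at least $\min(2k_s,\, 1+2k_s) \geq 3$ in $\mcF_2(T)$, so no relation is $1$, $J_K$, or a single generator.

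Condition~(c) is the crux. The even choice of $k_s$ forces every $z_{s, j}$ to have multiplicity $k_s \cdot |r|_s$, hence even, in every $\phi(r) \in R$. Therefore the only generators that can have odd multiplicity in some relation of $R$ are the $t_i$'s; and each $t_i$ appears in exactly one relation, namely $t_i \phi(\tilde w_i^{-1})$, with multiplicity one. The two cyclic neighbours of $t_i$ in that relation are the first and last letters of $\phi(\tilde w_i^{-1})$, both of the form $z_{s, j}$, which are even. The key point — and the main potential obstacle — is that \emph{no two distinct $t_i, t_j$ can ever appear adjacent in any relation}, which is exactly what the setup ensures: $R_0$ involves no $t_i$'s at all, and each $\tilde w_i$ is a word in $\mcF(S_0)$. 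Hence condition~(c) is satisfied. The degenerate cases $w_i = 1$ or $J_G = 1$, in which the natural reduced representatives are empty and would produce forbidden length-one relations, are handled by choosing $\tilde w_i$ or $J'$ to be any non-trivial element of the kernel of $\mcF(S_0) \arr G$, after first performing (if $G$ happens to be free) a harmless Tietze extension of $\langle S_0 : R_0 \rangle$ such as adding a new generator $u$ and relation $u s^{-1}$ for some $s \in S_0$, so that the kernel becomes non-trivial.
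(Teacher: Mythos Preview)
Your proof is correct and lands on essentially the same construction as the paper: the final presentation you produce---generators $\{z_{s1},z_{s2} : s \in S_0\}\cup\{t_i\}$ with relations $\phi(r)$ for $r\in R_0$, $J_K\phi(J')$, and $t_i\phi(\tilde w_i^{-1})$---is identical (up to renaming $t_i\leftrightarrow \overline{w}_i$) to the paper's.

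The organizational difference is that the paper applies Proposition~\ref{P:involutions} as a black box to $\langle S_0:R_0\rangle$ and \emph{afterwards} adds the fresh involutions $\overline{w}_i$ via the relation $\overline{w}_i\phi(w_i')$, observing that this is a Tietze extension and hence presents the same group. Your approach instead inserts the $t_i$'s \emph{before} and then re-runs the amalgamation argument of Proposition~\ref{P:involutions} while skipping those generators. Both are fine, but the paper's order is slightly cleaner: it avoids revisiting the internals of the amalgamation proof and makes it immediate that the extended presentation defines the same group (no need to argue separately that $t_i^2=1$ and $[t_i,J]=1$ already hold). Your handling of the degenerate cases $w_i=1$, $J_G=1$ via a nontrivial kernel element (after a Tietze extension if $G$ is free) also works; the paper's variant is to arrange that $J_G$ is itself a generator and, if some $w_i=1$, to add a dummy generator $z$ with relation $z=1$. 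One small omission: you should also exclude the trivial relation from $R_0$ (i.e.\ assume $1\notin R_0$), otherwise $\phi(1)=1$ would violate condition~(b); this is harmless to arrange.
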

\begin{proof}
    We can find a cyclically reduced presentation $\langle S_0 : R_0 \rangle$
    for $G$ in which $J_G$ is a generator, $1 \not\in R_0$, and each $w_i$ has
    a representative $w_i' \in \mcF(S_0)\setminus\{1\}$ (it is always possible
    to find such a presentation, since if necessary we can add an extra
    generator $z$, along with the relation $z=1$, and use this as a
    representative of the identity). In particular, this gives us a
    presentation where $J_G$ is represented by a cyclically reduced
    non-identity element of $\mcF(S_0)$, namely itself. 

    Applying Proposition \ref{P:involutions} to this presentation with $k_s =
    2$ (or any other even number) for all $s \in S_0$ gives us an embedding
    $\phi$ of $G$ in a finite presentation $\Inv\langle T:R'\rangle$, where $R'$ is
    cyclically reduced. Since all $k_s$'s are even, $\mult(t;\phi(r))$ is even
    for every $r \in \mcF(S_0)$ and $t \in T$, and every relation in $R'$ has
    length $\geq 4$. We conclude that $\Inv\langle T:R' \rangle$ is collegial. 

    Now let
    \begin{equation*}
        S = T \cup \{\overline{w}_1,\ldots,\overline{w}_n\}, 
    \end{equation*} 
    where $\overline{w}_1,\ldots, \overline{w}_n$ are new indeterminates, and set 
    \begin{equation*} 
        R = R' \cup \{\overline{w}_i \phi(w'_i) : 1 \leq i \leq n\},
    \end{equation*}
    where $\phi : \mcF(S_0) \arr \mcF_2(T) \times \Z_2$ as in Proposition
    \ref{P:involutions}.  Since $\overline{w}_i$ does not appear in
    $\phi(w_i')$, the relation $r = \overline{w}_i \phi(w_i')$ is cyclically
    reduced. Furthermore, none of the $\overline{w}_i$'s are adjacent, and
    $\mult(s;r)$ is even for all $s \in T$ and $r \in R$, so $\Inv\langle S:R\rangle$ is
    collegial. But $\Inv\langle S:R\rangle$ is plainly equivalent to $\Inv\langle T:R'\rangle$, so the
    corollary follows.
\end{proof}
Note that the proof of Corollary \ref{C:collegial} is constructive, in the
sense that there is a Turing machine which outputs $S$, $R$, and $\phi$ (as
specified by a set of representatives in $\mcF_2(S)$ for the images through
$\phi$ of the generators of $G$) given a finite presentation for $G$ and a
set of representatives for $w_1,\ldots,w_n$. 

\section{The wagon wheel embedding}\label{S:wagonwheel}

Using Corollary \ref{C:collegial}, the proof of Theorem \ref{T:embedding}
reduces to the following:
\begin{theorem}\label{T:invembedding}
    Let $G$ be a group with a collegial presentation $\mcI = \Inv\langle S:R\rangle$. Then
    there is a hypergraph $\mcW := \mcW(\mcI)$ and vertex labelling $b := b(\mcI)$
    such that $S \subset E(\mcW)$, and the resulting map 
    \begin{equation*}
        \mcF(S) \times \Z_2 \arr \Gamma(\mcW,b) : s \mapsto x_s
    \end{equation*} 
    descends to an embedding $G \incl \Gamma(\mcW,b)$ over $\Z_2$.  
\end{theorem}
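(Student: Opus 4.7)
The plan is to construct $\mcW = \mcW(\mcI)$ by building a wagon-wheel gadget $\mcH_r$ for each relation $r \in R$ and gluing the gadgets along their common rim edges, which are the generators in $S$. For a cyclically reduced relation $r = J^{a_r} s_1 \cdots s_n$ (of length $n \geq 4$ by Remark \ref{R:atleastthree}), the gadget $\mcH_r$ should be a hypergraph whose external edges are the rim generators $s_1, \ldots, s_n$, and whose internal spoke and hub edges, together with appropriate vertex labels, are arranged so that (i) the product $x_{s_1} \cdots x_{s_n}$ is forced to equal $J^{a_r}$ in $\Gamma(\mcH_r, b)$ as a consequence of the internal linear relations at the hub and spoke vertices, and (ii) the only commuting relations imposed on the rim generators are those forced by the cyclic adjacency pattern of $r$ itself. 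The collegiality hypotheses are used to make the gluing legal: condition (b) guarantees the wheels always have legal size, and condition (c) ensures that when a rim edge $s$ is shared between two wheels $\mcH_{r_0}$ and $\mcH_{r_1}$, the parity constraints on its multiplicities prevent any unintended commutation between $s$ and a generator that does not already commute with it in $G$.

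Setting $\mcW = \bigcup_r \mcH_r$ and taking $b = b(\mcI)$ to be the labelling that concentrates the twist $J^{a_r}$ at a distinguished hub vertex of each $\mcH_r$, every relation $r \in R$ becomes a consequence of the defining relations of $\Gamma(\mcW, b)$. Combined with the automatic involution relations on rim generators and the centrality of $J$, this shows that $s \mapsto x_s$, $J \mapsto J$ descends to a well-defined morphism $\phi : G \to \Gamma(\mcW, b)$ over $\Z_2$.

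The essential content of the theorem is injectivity of $\phi$. I plan to prove this using the pictures framework of Sections \ref{S:pictures}--\ref{S:hyperpics}: a word $w \in \mcF_2(S) \times \Z_2$ is trivial in $\Gamma(\mcW, b)$ if and only if it bounds a picture $P$ over the solution-group presentation. The strategy is to show that any such $P$ can be reduced, via the combinatorial moves developed in Sections \ref{S:cycle}--\ref{S:constellation}, to a picture built entirely out of wagon-wheel subpictures, each of which reads off as a single relation $r \in R$ at its rim. Once this reduction is achieved, $w$ appears as a product of conjugates of elements of $R$, hence $w = 1$ in $G$.

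The main obstacle is exactly this reduction step: one must rule out configurations in which spokes and hub regions from different wagon wheels interact through shared rim edges to produce genuine new relations between the $x_s$, $s \in S$, beyond those already present in $G$. This is the purpose of the cycle and constellation analysis of Sections \ref{S:cycle}--\ref{S:constellation}, which supplies parity and cycle invariants that track how interior features of a picture can be excised, and where the collegiality conditions are precisely what make those invariants vanish on any interior configuration. I expect the intermediate result Corollary \ref{C:suncycle} (flagged in the introduction) to identify the simplest removable interior configuration, with iterated application stripping away all interior structure and yielding the desired reduction to rim-only pictures, completing the proof of injectivity.
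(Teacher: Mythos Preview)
Your outline has the right shape: build the wagon wheel $\mcW(\mcI)$, check that $s\mapsto x_s$ descends to a morphism, and prove injectivity by reducing $\mcW$-pictures to $G$-pictures. This is indeed the paper's approach. Two points, however, need correction.

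First, the main reduction engine is not Corollary~\ref{C:suncycle} iterated, but the constellation theorem (Theorem~\ref{T:constellation}). Corollary~\ref{C:suncycle} is a worked example for a single sun; the wagon wheel contains many interlocking cycles (the inner cycles $\mcB_i$ and the pentagon cycles $\mcC_{ij}$), and one must show that the full collection $\Phi=\{\mcB_i\}\cup\{\mcC_{ij}\}$ is a $b$-constellation. This is where collegiality actually enters, and not in the way you describe. Condition~(c) is used, via Lemma~\ref{L:retract2}, to show that $\mcN(\mcC_{ij})$ is a retract of $\mcW$ whenever $\mult(s_{ij};r_{i'})$ is even for all $i'$; this makes $\mcC_{ij}$ stellar. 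For the finitely many $\mcC_{ij}$ that fail this, collegiality guarantees their two neighbouring pentagons $\mcC_{i,j\pm 1}$ \emph{are} stellar, so $\mcC_{ij}$ still satisfies clause~(a.ii) of Definition~\ref{D:constellation}. Your description of collegiality as ``preventing unintended commutation'' is too vague to produce this structure.

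Second, and more seriously, you are missing the two-stage structure of the injectivity argument. The constellation theorem has hypothesis~\ref{P2}: \emph{either $b=0$ or $\mcP$ is closed}. For a general word $w\in\mcF_2(S)$ with $\phi(w)=1$, the witnessing $\mcW$-picture has boundary and $b\neq 0$, so Theorem~\ref{T:constellation} does not apply directly. The paper therefore first proves injectivity of the even-quotient map $\phi^+:G^+\to\Gamma(\mcW,0)$ (where $b=0$ and the theorem applies with boundary), and then uses the commutative square relating $\phi$ and $\phi^+$ to reduce injectivity of $\phi$ to the single question of whether $\phi(J_G)=1$ forces $J_G=1$. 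In that last case the picture is closed, so~\ref{P2} holds again. Without this splitting, your plan to ``reduce any such $P$'' cannot be carried out.
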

\begin{proof}[Proof of Theorem \ref{T:embedding} using Theorem \ref{T:invembedding}]
    Let $G$ be a group over $\Z_2$ with elements $w_1,\ldots,w_n$ such that
    $w_i^{2} = 1$ for $i=1,\ldots,n$. By Corollary \ref{C:collegial}, there is
    a collegial presentation $\mcI := \Inv\langle S:R \rangle$ and an embedding
    $\phi_1 : G \arr K := \Inv\langle S:R\rangle$ over $\Z_2$ with $\phi_1(w_i)
    \in S$ for all $i=1,\ldots,n$. 

    By Theorem \ref{T:invembedding}, there is an embedding $\phi_2 : K \arr
    \Gamma(\mcW(\mcI),b(\mcI))$ over $\Z_2$ with $\phi_2(s) = x_s$ for all $s
    \in S$. The composition $\phi_2 \circ \phi_1$ satisfies the conditions of
    Theorem \ref{T:embedding}.
\end{proof}
Although we are still very far from being able to prove Theorem
\ref{T:invembedding}, in this section we shall describe the hypergraph
$\mcW(\mcI)$, which we call the \emph{wagon wheel hypergraph} of $\mcI$. The
proof of Theorem \ref{T:invembedding} will be given in Section \ref{S:proof}.

\begin{figure}
    \begin{tikzpicture}[auto,ultra thick,scale=.6,emptynode/.style={inner sep=0},
    every node/.style={scale=.8},
    vertex/.style={circle,draw,thin,inner sep=2.5,scale=.55,fill=white},
    helabel/.style={fill=white,scale=.8}]
    \draw (-10:3) arc [start angle=-10,end angle=150,radius=3];
    \draw[dashed] (150:3) arc [start angle=150,end angle=350,radius=3];
    \node[vertex] (0) at (-10:3) {$3,3$};
    \node[vertex] (1) at (30:3) {$2,3$};
    \node[vertex] (2) at (70:3) {$1,3$};
    \node[vertex] (3) at (110:3) {$0,3$};
    \node[vertex] (4) at (150:3) {$-1,3$};
    \path (0) arc[start angle=-10,end angle=30,radius=3] node[swap,pos=.35] {$d_3$};
    \path (1) arc[start angle=30,end angle=70,radius=3] node[swap,pos=.5] {$d_2$};
    \path (2) arc[start angle=70,end angle=110,radius=3] node[swap,pos=.5] {$d_1$};
    \path (3) arc[start angle=110,end angle=150,radius=3] node[swap,pos=.5] {$d_{0}$};

    \draw (-10:6) arc [start angle=-10,end angle=150,radius=6];
    \draw[dashed] (150:6) arc [start angle=150,end angle=350,radius=6];
    \node[vertex] (5) at (30:6) {$2,2$}
        edge node[swap,pos=.3] {$c_{2}$} (1);
    \node[vertex] (6) at (50:6) {$2,1$};
    \node[vertex] (7) at (70:6) {$1,2$}
        edge node[swap] {$c_{1}$} (2);
    \node[vertex] (8) at (90:6) {$1,1$};
    \node[vertex] (9) at (110:6) {$0,2$}
        edge node[swap,pos=.3] {$c_{0}$} (3);
    \node[vertex] (10) at (130:6) {$0,1$};
    \node[vertex] (11) at (150:6) {$-1,2$}
        edge node[swap] {$c_{-1}$} (4);
    \node[vertex] (12) at (10:6) {$3,1$};
    \node[vertex] (13) at (-10:6) {$3,2$}
        edge node[swap] {$c_3$} (0);
    \path (5) arc[start angle=30, end angle=50,radius=6] node[pos=.4,swap] {$b_2$};
    \path (6) arc[start angle=50, end angle=70,radius=6] node[pos=.58,swap] {$a_2$};
    \path (7) arc[start angle=70, end angle=90,radius=6] node[pos=.58,swap] {$b_1$};
    \path (8) arc[start angle=90, end angle=110,radius=6] node[pos=.32,swap] {$a_1$};
    \path (9) arc[start angle=110, end angle=130,radius=6] node[pos=.35,swap] {$b_{0}$};
    \path (10) arc[start angle=130, end angle=150,radius=6] node[pos=.5,swap] {$a_{0}$};
    \path (13) arc[start angle=-10,end angle=10,radius=6] node[pos=.5,swap] {$b_{3}$};
    \path (12) arc[start angle=10,end angle=30,radius=6] node[pos=.5,swap] {$a_{3}$};

    \path[fill,pattern=north west lines] (12.20) to ++(40:2) to ($(12.0)+(-20:2)$) 
        to (12.0) to [bend right] (12.20);
    \draw[thin] (12.20) to ++(40:2);
    \draw[thin] (12.0) to node[helabel,pos=.35,yshift=4] {$s_{3}$} ++(-20:2);

    \path[fill,pattern=north west lines] (6.60) to ++(80:2) to ($(6.40)+(20:2)$) 
        to (6.40) to [bend right] (6.60);
    \draw[thin] (6.60) to ++(80:2);
    \draw[thin] (6.40) to node[helabel,pos=.55,yshift=2] {$s_{2}$} ++(20:2);

    \path[fill,pattern=north west lines] (8.100) to ++(120:2) to ($(8.80)+(60:2)$) 
        to (8.80) to [bend right] (8.100);
    \draw[thin] (8.100) to ++(120:2);
    \draw[thin] (8.80) to node[helabel,pos=.5,xshift=-5] {$s_{1}$} ++(60:2);

    \path[fill,pattern=north west lines] (10.140) to ++(160:2) to ($(10.120)+(100:2)$) 
        to (10.120) to [bend right] (10.140);
    \draw[thin] (10.140) to ++(160:2);
    \draw[thin] (10.120) to node[helabel,pos=.5,xshift=-5] {$s_{n_i}$} ++(100:2);
\end{tikzpicture}
    \caption{The portion of the wagon wheel hypergraph containing vertices $V_i$
            and all incident edges. To save space, $(i,j,k)$ is written as
            $j,k$, and $s_{ij},a_{ij},\ldots$ are written as $s_j, a_j,\ldots$.}
    \label{F:wagonwheel}
\end{figure}
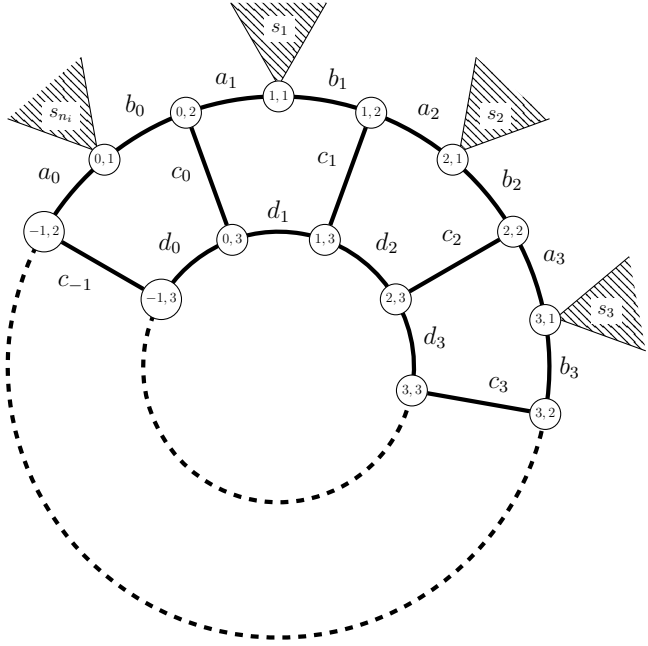
The wagon wheel hypergraph can be defined for any (not necessarily collegial)
presentation $\Inv\langle S,R\rangle$. Let $R = \{r_1,\ldots,r_m\}$, let $n_i$ be the length
of $r_i$, and write $r_i = J^{p_i} s_{i1} \cdots s_{in_i}$, where $s_{ij} \in
S$. The wagon wheel hypergraph is a  simple hypergraph $\mcW$ with vertex set
\begin{equation*}
    V := \{(i,j,k) : 1 \leq i \leq m, j \in \Z_{n_i}, 1 \leq k \leq 3\}, 
\end{equation*}
and edge set
\begin{equation*}
    E := S \sqcup \{a_{ij},b_{ij},c_{ij},d_{ij} : 1 \leq i \leq m, j \in \Z_{n_i}\}.
\end{equation*}
As a result, if $M := \sum_{i=1}^k n_i$, then $\mcW$ has $3M$ vertices and
$4M+|S|$ edges. $\mcW$ has the following incidence relations for every $1 \leq
i \leq m$ and $1 \leq j \leq n_i$:
\begin{itemize}
    \item $s \in S$ is incident with $(i,j,1)$ if and only if $s_{ij} = s$,
    \item $a_{ij}$ is incident with $(i,j-1,2)$ and $(i,j,1)$, 
    \item $b_{ij}$ is incident with $(i,j,1)$ and $(i,j,2)$,
    \item $c_{ij}$ is incident with $(i,j,2)$ and $(i,j,3)$, and
    \item $d_{ij}$ is incident with $(i,j-1,3)$ and $(i,j,3)$.
\end{itemize}
Note that the only edges incident with vertices
\begin{equation*}
    V_i := \{(i,j,k) : j \in \Z_{n_i}, 1 \leq k \leq 3 \}
\end{equation*}
are the edges in 
\begin{equation*}
    E_i := \{a_{ij},b_{ij},c_{ij},d_{ij} : j \in \Z_{n_i}\}
\end{equation*}
and the edges $s_{i1},\ldots,s_{in_i}$. Furthermore, all the edges in $E_i$ are
incident with exactly two vertices, both belonging to $V_i$. The portion
of the hypergraph $\mcW$ incident with $V_i$ is shown in Figure \ref{F:wagonwheel}.
An example of a wagon wheel hypergraph for a small (non-collegial) presentation
is shown in Figure \ref{F:wagonwheelexample}.
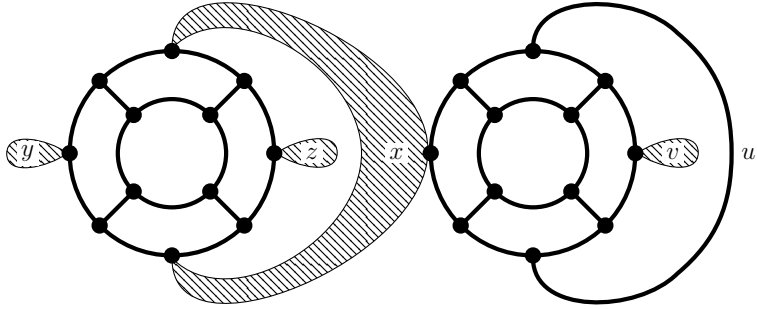
\begin{figure}
    \begin{tikzpicture}[auto,ultra thick,scale=.4,emptynode/.style={inner sep=0},every node/.style={scale=.8}]

    \begin{scope}
        \foreach \x in {1,...,4}
            \node[emptynode] (\x) at (90*\x-45:1.8) {};
        \foreach \x in {5,...,12}
            \node[emptynode] (\x) at (45*\x-225:3.4) {};
        \draw circle [radius=1.8];
        \draw circle [radius=3.4];
        \draw (1) -- (6);
        \draw (2) -- (8);
        \draw (3) -- (10);
        \draw (4) -- (12);

        \foreach \x in {5,9} {
            \draw[fill,pattern=north west lines,thin] (\x.45*\x-215) 
                to [out=45*\x-190,in=45*\x-135] ++(45*\x-225:2)
                to [out=45*\x-315,in=45*\x-260] (\x.45*\x-235)
                to [bend right] (\x.45*\x-215);
        }

        \foreach \x in {1,...,12}
            \draw[fill] (\x) circle [radius=.2];

        \node[right,fill=white,inner sep=2,xshift=12] at (5) {$z$};
        \node[left,fill=white,inner sep=2,xshift=-14] at (9) {$y$};
    \end{scope}

    \node[emptynode] (xt) at (7) {};
    \node[emptynode] (xb) at (11) {};

    \begin{scope}[shift={++(12,0)}]
        \foreach \x in {1,...,4}
            \node[emptynode] (\x) at (90*\x-45:1.8) {};
        \foreach \x in {5,...,12}
            \node[emptynode] (\x) at (45*\x-225:3.4) {};
        \draw circle [radius=1.8];
        \draw circle [radius=3.4];
        \draw (1) -- (6);
        \draw (2) -- (8);
        \draw (3) -- (10);
        \draw (4) -- (12);

        \foreach \x in {5} {
            \draw[fill,pattern=north west lines,thin] (\x.45*\x-215) 
                to [out=45*\x-190,in=45*\x-135] ++(45*\x-225:2)
                to [out=45*\x-315,in=45*\x-260] (\x.45*\x-235)
                to [bend right] (\x.45*\x-215);
        }

        \draw (7) to [out=90,in=135] ($(5)+(1.4,4)$) 
                  to [out=-40, in=90] ($(5)+(3.2,0)$) node[right] {$u$} 
                  to [out=-90, in=40] ($(5)+(1.4,-4)$) 
                  to [out=-135,in=-90] (11);

        \foreach \x in {1,...,12}
            \draw[fill] (\x) circle [radius=.2];

        \node[right,fill=white,inner sep=2,xshift=12] at (5) {$v$};
    \end{scope}

    \node[emptynode] (xr) at (9) {};
    \draw[fill,pattern=north west lines,thin] (xt.90) 
        to [out=90,in=90] (xr.135)
        to [out=-90,in=-90] (xb.-90)
        to [out=-45,in=-90] ($(xr)+(-2.3,0)$)
        to [out=90,in=45] (xt.90);
    \node[left,xshift=-10,fill=white,inner sep=2] at (xr) {$x$};
\end{tikzpicture}
    \caption{An example of the wagon wheel hypergraph $\mcW(\mcI)$ when $\mcI =
    \Inv\langle x,y,z,u,v : xyxz=xuvu=1 \rangle$.}

    \label{F:wagonwheelexample}
\end{figure}

We also need to define the vertex labelling in Theorem \ref{T:invembedding}.
\begin{defn}
    An \emph{$\mcI$-labelling} of $\mcW$ is a vertex labelling $b : V \arr \Z_2$
    such that $|b^{-1}(1) \cap V_i| = p_i \mod 2$ for all $1 \leq i \leq m$.
\end{defn}
For Theorem \ref{T:invembedding}, we can choose any $\mcI$-labelling. This
is because all $\mcI$-labellings are equivalent in the following sense:
\begin{lemma}\label{L:vertex}
    Let $b$ and $b'$ be two $\mcI$-labellings of $\mcW$. Then there is an
    isomorphism $\Gamma(\mcW,b) \arr \Gamma(\mcW,b')$ which sends $x_s \mapsto
    x_s$ for all $s \in S$.
\end{lemma}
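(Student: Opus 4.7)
The plan is to realize the isomorphism as a ``gauge transformation'' of the generators. Specifically, I will seek a function $\epsilon : E(\mcW) \arr \Z_2$ with $\epsilon|_S \equiv 0$ so that the assignment $J \mapsto J$ and $x_e \mapsto J^{\epsilon_e} x_e$ extends to a group homomorphism $\Phi : \Gamma(\mcW,b) \arr \Gamma(\mcW,b')$. Because $J$ is central of order two in the codomain, substituting $J^{\epsilon_e} x_e$ for $x_e$ automatically preserves the involution, centrality, and pairwise commuting relations. The only remaining constraint comes from the linear relations, and since $\mcW$ is simple (so $A_{ve} \in \{0,1\}$) a short calculation using centrality of $J$ shows that $\Phi$ respects the relation at $v$ precisely when
\begin{equation*}
    \sum_{e \ni v} \epsilon_e \;\equiv\; b_v + b'_v \pmod{2}, \qquad v \in V(\mcW).
\end{equation*}
This condition is symmetric in $b$ and $b'$, so the same $\epsilon$ yields a map $\Gamma(\mcW,b') \arr \Gamma(\mcW,b)$ that is a two-sided inverse of $\Phi$ on generators; hence $\Phi$ is an isomorphism, and it fixes every $x_s$ by construction.

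The substantive step is to solve the displayed $\Z_2$-system using only edges in $\bigsqcup_i E_i$. I would do this block by block over the partition $V(\mcW) = \bigsqcup_i V_i$. First, the $\mcI$-labelling hypothesis gives $\sum_{v \in V_i}(b_v + b'_v) \equiv a_i + a_i \equiv 0 \pmod 2$, so on each $V_i$ the target vector has even weight. Second, the subgraph of $\mcW$ spanned by $V_i$ and $E_i$ is connected: from Figure \ref{F:wagonwheel}, the $d_{ij}$-edges form a cycle through the level-three vertices, each $c_{ij}$ attaches a level-two vertex to that cycle, and the $a$- and $b$-edges bring in and interconnect the level-one and level-two vertices. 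For a connected graph, the $\Z_2$-boundary map $\Z_2^{E_i} \arr \Z_2^{V_i}$ surjects onto the even-weight subspace (pair up the odd-target vertices and take the symmetric difference of spanning-tree paths), so a solution $\epsilon|_{E_i}$ exists. Patching these solutions together and extending by zero on $S$ gives the required $\epsilon$.

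The only non-routine ingredient is the connectedness of the subgraph on each $V_i$, which is visibly true from Figure \ref{F:wagonwheel}; I do not anticipate genuine difficulty there. The remaining checks---verifying each relation type under the gauge transformation, exploiting simplicity of $\mcW$ so that only $A_{ve} \in \{0,1\}$ needs to be considered, and assembling the two-sided inverse by reusing $\epsilon$---are mechanical. Thus the proof reduces cleanly to the standard linear-algebra fact about boundary maps of connected graphs, applied one block at a time.
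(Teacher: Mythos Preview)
Your proposal is correct and follows essentially the same approach as the paper: both construct the isomorphism via the gauge transformation $x_e \mapsto J^{\epsilon_e} x_e$ using only edges in $E_i$, relying on the parity condition $\sum_{v\in V_i}(b_v+b'_v)\equiv 0$ coming from the $\mcI$-labelling hypothesis. The paper phrases it as iteratively ``toggling'' one edge at a time and leaves the solvability as ``easy to see,'' whereas you make this explicit by invoking connectedness of the subgraph $(V_i,E_i)$ and the standard fact that the $\Z_2$-boundary map of a connected graph surjects onto the even-weight vectors.
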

\begin{proof}
    Suppose $\mcH$ is a hypergraph with incidence matrix $A(\mcH)$ and vertex
    labelling $b^{(0)}$.  Given $e \in E(\mcH)$, let $b^{(1)}$ be the vertex
    labelling with $b^{(1)}_v = b^{(0)}_v + A(\mcH)_{ve}$ (i.e. we toggle the
    sign of all vertices incident with $e$ according to multiplicity). Then
    there is an isomorphism
    \begin{equation*}
        \Gamma(\mcH,b^{(0)}) \arr \Gamma(\mcH,b^{(1)}) : x_f \mapsto \begin{cases}
            x_f & f \neq e \\
            J x_e & f = e
        \end{cases}.
    \end{equation*}
    For $\mcW$, since $|b^{-1}(1) \cap V_i|$ and $|(b')^{-1}(1) \cap V_i|$ have
    the same parity, it is easy to see that $b|_{V_i}$ can be transformed to
    $b'|_{V_i}$ by toggling signs of vertices incident to edges $e \in E_i$
    as necessary. The lemma follows.
\end{proof} 
Finally, note that it is not hard to write down a Turing machine to construct
the wagon wheel hypergraph $\mcW(\mcI)$ and choose an $\mcI$-labelling $b$
for a given input presentation $\mcI$. Since Corollary \ref{C:collegial} is also
constructive, the proof of Theorem \ref{T:embedding} is also constructive as claimed. 

\section{Pictures for groups generated by involutions}\label{S:pictures}

In this section we give an overview of the main technical tool used in the
proof of Theorem \ref{T:invembedding}: pictures of groups. These pictures,
which are dual to the somewhat better known van Kampen diagrams, are a standard
tool in combinatorial group theory. The purpose of pictures is to encode
derivations of group identities from a set of starting relations; see
\cite{Sh07} for additional background.  Here we introduce a variant adapted to
groups generated by involutions.

\subsection{Pictures as planar graphs} 

By a curve, we shall mean the image of an piecewise (regular) real-analytic
function\footnote{Using real-analytic curves has some advantages in
streamlining the following definitions; for instance, we can use the fact that
two real-analytic curves intersect in at most finitely many points. However,
the arguments in the subsequent sections also work with other formal
definitions of pictures, such as the definition in \cite{Sh07} using smooth
curves and ``fat'' vertices.} $\gamma$ from a closed interval
$[a,b]$ (where $a < b$) to either the plane or the sphere. A curve $\gamma$ is
\emph{simple} if $\gamma(s) \neq \gamma(t)$ for all $a \leq s < t \leq b$,
except possibly when $s=a$ and $t=b$. The points $\gamma(a)$ and $\gamma(b)$
are called the endpoints of the curve. A curve has either one or two endpoints;
if $\gamma(a) =\gamma(b)$ then the curve is said to be \emph{closed}. A
connected region (in the plane or on the sphere) is \emph{simple} if its
boundary is a simple closed curve. 
\begin{defn}\label{D:picture}
    A \emph{picture} is a collection $(V,E,\mcD)$, where
    \begin{enumerate}[(a)]
        \item $\mcD$ is a closed simple region, 
        \item $V$ is a finite collection of points, called \emph{vertices}, in
            $\mcD$,
        \item $E$ is a finite collection of simple curves, called \emph{edges},
            in $\mcD$, and
        \item for all edges $e \in E$ and points $p$ of $e$, 
        \begin{enumerate}[(i)]
            \item if $e$ is not closed and $p$ is an endpoint of $e$, then
                either $p \in V$, or $p$ belongs to the boundary of $\mcD$ and
                is not the endpoint of any other edge; 
            \item if $e$ is closed and $p$ is an endpoint of $e$, 
                then $p$ does not belong to the boundary of $\mcD$; 
            \item if $p$ is not an endpoint of $e$, then $p \not\in V$, and $p$
                does not belong to any other edge or the boundary of $\mcD$. 
        \end{enumerate}
    \end{enumerate}
    If an edge $e$ contains a vertex $v$, then we say that $e$ and $v$ are
    \emph{incident}. If $e$ contains a point of the boundary of $\mcD$, then we
    say that $e$ is \emph{incident with the boundary}.  A picture is
    \emph{closed} if no edges are incident with the boundary of $\mcD$. The
    \emph{size} of a $G$-picture $\mcP$ is the number of vertices in $\mcP$. 
\end{defn}
According to this definition, a picture is a type of planar embedding of a
graph, albeit a graph where we can have multiple edges between vertices, loops 
at a vertex, and even closed loops which are not incident to any vertex.  From
this point of view, the boundary of $\mcD$ can be regarded as a special type of
vertex; if we think of the picture as drawn on a sphere, then this vertex would
naturally be drawn at infinity. An illustration of these two equivalent points
of view is shown in Figure \ref{F:specialvertex}. However, it is more
convenient not to include the boundary of $\mcD$ in the vertex set of a
picture, and we stick with the convention of treating the boundary separately.
In particular, the picture in Figure \ref{F:specialvertex} has size $7$.

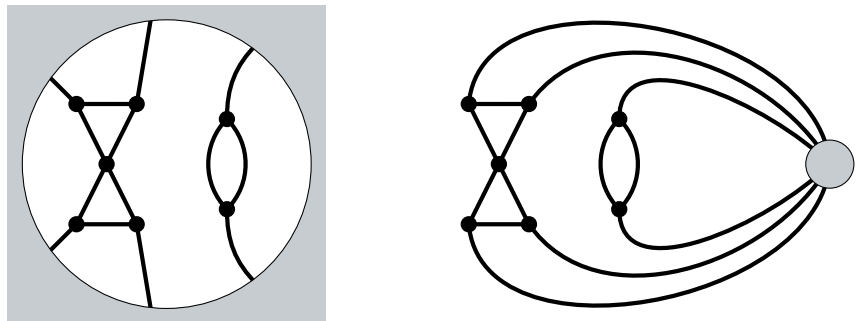
\begin{figure}
    \begin{tikzpicture}[baseline,scale=.4]
    
    \path [fill=fillgray] (-.3,-.3) rectangle (10.3,10.3);
    \draw [fill=white] (5,5) circle [radius=4.8];
    \begin{scope}[ultra thick]
        \clip (5,5) circle [radius=4.8];
        \draw (1,8) -- (2,7) coordinate(A) 
                    -- (3,5) coordinate (B) 
                    -- (2,3) coordinate(C) 
                    -- (4,3) coordinate(D) 
                    -- (B)
                    -- (4,7) coordinate(E) 
                    -- (A);
        \draw (E) -- (4.5,10);
        \draw (D) -- (4.5,0);
        \draw (C) -- (1,2);

        \draw (8,9) to [out=-135, in=90] (7,6.5) coordinate(F)
                    to [out=-135, in=135] (7,3.5) coordinate(G) 
                    to [out=-90, in=135] (8,1);
        \draw (F) to [out=-45, in=45] (G);

        \draw [fill] (A) circle [radius=.2];
        \draw [fill] (B) circle [radius=.2];
        \draw [fill] (C) circle [radius=.2];
        \draw [fill] (D) circle [radius=.2];
        \draw [fill] (E) circle [radius=.2];
        \draw [fill] (F) circle [radius=.2];
        \draw [fill] (G) circle [radius=.2];
    \end{scope}
\end{tikzpicture}
\qquad \qquad
\begin{tikzpicture}[baseline, scale=.4]
    \begin{scope}[ultra thick]
        \draw (2,7) coordinate(A) 
                    -- (3,5) coordinate (B) 
                    -- (2,3) coordinate(C) 
                    -- (4,3) coordinate(D) 
                    -- (B)
                    -- (4,7) coordinate(E) 
                    -- (A);

        \draw (7,6.5) coordinate(F) to [out=-135, in=135] (7,3.5) coordinate(G) 
                                    to [out=45, in=-45] (F);

        \draw [fill] (A) circle [radius=.2];
        \draw [fill] (B) circle [radius=.2];
        \draw [fill] (C) circle [radius=.2];
        \draw [fill] (D) circle [radius=.2];
        \draw [fill] (E) circle [radius=.2];
        \draw [fill] (F) circle [radius=.2];
        \draw [fill] (G) circle [radius=.2];

        \draw (A) to [out=85,in=95] (14,5) coordinate(X);
        \draw (C) to [out=-85,in=-95] (X);
        \draw (E) to [out=55,in=120] (X);
        \draw (D) to [out=-55,in=-120] (X);
        \draw (F) to [out=90,in=135] (X);
        \draw (G) to [out=-90,in=-135] (X);

    \end{scope}

    \path [fill=white] (14,5) circle [radius=.8];
    \draw [fill=fillgray] (14,5) circle [radius=.8];
\end{tikzpicture}
    \caption{A picture embedded in a disk (left) and on the plane with
             the exterior of the disk shrunk down to a special vertex at
             infinity (right).}
    \label{F:specialvertex}
\end{figure}
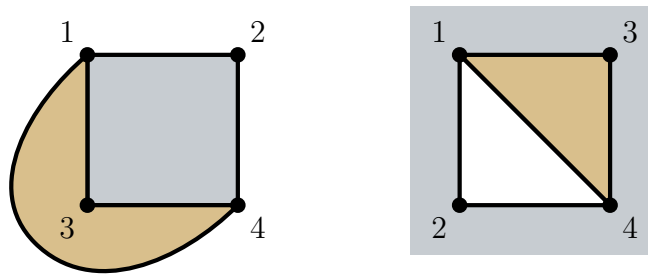
\begin{figure}
    \begin{tikzpicture}[baseline,scale=.5]
    \coordinate (1) at (0,4);
    \coordinate (2) at (4,4);
    \coordinate (3) at (0,0);
    \coordinate (4) at (4,0);
    \coordinate (fake) at (-1.4,-1);

    \begin{scope}[ultra thick]
        \draw [fill=fillgray] (1) -- (2) -- (4) -- (3) -- (1);
        \draw [fill=fillyellow] (4) to [out=-135,in=-45] (fake) to [out=135,in=-140] (1)
            -- (3) -- (4);

        \foreach \x in {1,...,4}
            \draw[fill] (\x) circle [radius=.15];

        \node [above left] at (1) {1};
        \node [above right] at (2) {2};
        \node [below left] at (3) {3};
        \node [below right] at (4) {4};
    \end{scope}
\end{tikzpicture}
\qquad \qquad
\begin{tikzpicture}[baseline,scale=.5]
    \coordinate (1) at (0,4);
    \coordinate (2) at (0,0);
    \coordinate (3) at (4,4);
    \coordinate (4) at (4,0);

    \path [fill=fillgray] (-1.3,-1.3) rectangle (5.3,5.3);
    
    \begin{scope}[ultra thick]
        \draw [fill=fillyellow] (1) -- (3) -- (4) -- (1);
        \draw [fill=white] (1) -- (2) -- (4) -- (1);

        \foreach \x in {1,...,4}
            \draw[fill] (\x) circle [radius=.15];

        \node [above left] at (1) {1};
        \node [below left] at (2) {2};
        \node [above right] at (3) {3};
        \node [below right] at (4) {4};
    \end{scope}
\end{tikzpicture}
    \caption{A closed picture embedded in the sphere seen (up to isotopy) with
             two different choices for the location of the point at infinity.
             Faces are distinguished by different colours.}
    \label{F:closedpic}
\end{figure}

There is one important exception where we want to forget the boundary of
$\mcD$, and that is when the picture is closed. If this happens, we often want
to think of the picture as embedded in the sphere, without the point at
infinity being marked. An example of a closed picture on the sphere, as seen
from two different positions, is given in Figure \ref{F:closedpic}. To handle
this case, we allow $\mcD$ to be the whole sphere, in which case the boundary
is empty. Also note that we consider two pictures equal if they differ up to
isotopy, either in the plane or on the sphere as appropriate. Such isotopies
are allowed to move the boundary of the simple region, as well as the location
of endpoints of edges on the boundary, as long as endpoints are not identified. 

If $\mcP$ is a picture in $\mcD_0$ and $\mcD$ is a closed simple subregion of
$\mcD_0$, then the portion of $\mcP$ contained in $\mcD$ can be interpreted as
a picture inside $\mcD$.  We do, however, have to make sure that boundary edges
of the picture inside $\mcD$ do not have a common endpoint. This leads to two
natural notions of the restriction of $\mcP$ to $\mcD$. 
\begin{defn}
    Let $\mcP$ be a picture in $\mcD_0$, and let $\mcD$ be a closed simple
    region in $\mcD_0$ with interior $\mcD^o$. Given $\eps > 0$, let
    $\mcD^{\eps}$ denote the $\eps$-relaxation of $\mcD$ (the set of points
    which are at most distance $\eps$ from $\mcD$), and let $\mcD^{-\eps}$
    denote the $\eps$-contraction (the set of points inside $\mcD$ which are at
    least distance $\eps$ from the boundary of $\mcD$). The boundary
    of $\mcD^{\pm \eps}$ will be a simple closed curve for small $\eps > 0$.%
    \footnote{In particular, the boundary curves will be piecewise
    real-analytic. When the boundary curve $\gamma$ of $\mcD$ is real-analytic,
    this follows from the tubular neighbourhood theorem: for small enough
    $\eps$, the boundary curves of $D^{\pm \eps}$ will be normal translates of
    $\gamma$. If $\gamma$ is piecewise real-analytic, then the situation
    becomes more complicated at the corners of $\gamma$. When two real-analytic segments of $\gamma$ meet
    at an acute angle, the boundary of $D^{\pm \eps}$ will consist of the
    normal translates of the two segments, cut off at their intersection point.
    For two segments meeting at an obtuse angle, the normal translates of each
    segment will be joined by a portion of the circle of radius $\eps$ centered at the corner.}

    Recall that a curve intersects the boundary of $\mcD$ \emph{transversally}
    if, in every small disk around the intersection point, there are points of
    the curve which lie both on the interior and the exterior of
    $\mcD$.\footnote{Transversality is typically defined in terms of tangent
    lines to the curves, but this more permissive definition is fine for our
    purposes.} We say that $\mcD$ is \emph{transverse} to $\mcP$ if every edge
    which intersects the boundary of $\mcD$ does so transversally, and the
    boundary of $\mcD$ does not contain any vertices of $\mcP$. 

    The \emph{restriction} of $\mcP$ to a transverse region $\mcD$ is the
    picture $\res(\mcP,\mcD)$ with vertex set $V(\mcP) \cap \mcD$, and whose
    edges are the closures of the connected components of $e \cap \mcD^o$, for
    $e \in E(\mcP)$. In other words, edges are cut off at the boundary, and
    edges which intersect the boundary at multiple points may be cut into
    multiple edges.\footnote{Since boundaries and edges are both real-anlytic,
    there will be only finitely many edges in this picture, as required by
    our definition.}

    For a general region $\mcD$, the contraction $\mcD^{-\eps}$ will be 
    transverse to $\mcP$ for small enough $\eps > 0$.
    The restrictions
    $\res(\mcP,\mcD^{-\eps})$ are thus well-defined, and can be identified via
    isotopy with pictures in $\mcD$. These pictures belong to a single isotopy
    class $\res(\mcP,\mcD)$, which we identify as the \emph{restriction} of
    $\mcP$ to $\mcD$.\footnote{This definition also uses the fact that edges
    and boundaries are real-analytic. For smooth curves, more care must be
    taken: the contraction may fail to be transverse to $\mcP$ for arbitrarily
    small $\eps$, and even if we only look at curves transverse to $\mcP$ in
    $\mcD^o$, there can be curves which are arbitrarily close to the boundary
    with arbitrarily many intersections with $\mcP$ (so the isotopy class may
    not be unique).}

    Similarly, the \emph{germ of $\mcD$ in $\mcP$} is the isotopy class
    $\germ(\mcP,\mcD)$ of $\res(\mcP,\mcD^{\eps})$ for small $\eps > 0$. 
\end{defn}
Unless otherwise noted, we assume that subregions are closed. If $\mcD$ is
transverse to $\mcP$, then $\res(\mcP,\mcD)$ and $\germ(\mcP,\mcD)$ agree. An
example of this type of restriction is shown in Figure \ref{F:restrict}. In
general, vertices in the boundary of $\mcD$ will not appear in
$\res(\mcP,\mcD)$, but are preserved, along with all their outgoing edges, in
$\germ(\mcP,\mcD)$.  One way to get a simple region is to take a region $\mcD$
enclosed by a simple cycle in $\mcP$. An example of this type, in which the
germ is different than the restriction, is given in Figure
\ref{F:cyclerestriction}.
\begin{figure}
    \begin{tikzpicture}[scale=.4]
    
    \begin{scope}
        \draw [fill=white] (5,5) circle [radius=4.8];
        \draw [dashed,rotate=30,fill=fillyellow] (B) ++(.5,-1.2) ellipse (3 and 2.1);
        \begin{scope}[ultra thick]
            \clip (5,5) circle [radius=4.8];
            \draw (1,8) -- (2,7) coordinate(A) 
                        -- (3,5) coordinate (B) 
                        -- (2,3) coordinate(C) 
                        -- (4,3) coordinate(D) 
                        -- (B)
                        -- (4,7) coordinate(E) 
                        -- (A);
            \draw (E) -- (4.5,10);
            \draw (D) -- (4.5,0);
            \draw (C) -- (1,2);

            \draw (8,9) to [out=-135, in=90] (7,6.5) coordinate(F)
                        to [out=-135, in=135] (7,3.5) coordinate(G) 
                        to [out=-90, in=135] (8,1);
            \draw (F) to [out=-45, in=45] (G);

            \draw [fill] (A) circle [radius=.2];
            \draw [fill] (B) circle [radius=.2];
            \draw [fill] (C) circle [radius=.2];
            \draw [fill] (D) circle [radius=.2];
            \draw [fill] (E) circle [radius=.2];
            \draw [fill] (F) circle [radius=.2];
            \draw [fill] (G) circle [radius=.2];
        \end{scope}
    \end{scope}
    \node at (12,5) {$\implies$};
 
   \begin{scope}[shift={++(14,0)}]
        \draw [clip] (5,5) circle [radius=4.8];
        \begin{scope}[ultra thick]
            \coordinate (1) at (2,4);
            \coordinate (2) at (5,3);
            \coordinate (3) at (3.5,7);

            \draw (1) -- (2) -- (3) -- cycle;
            \draw (1) -- (0,3);
            \draw (2) -- (7,0);
            \draw (3) -- (5,10);
            \draw (3) -- (2.5,10);

            \draw (10,7) to [out=180,in=90] (7,5) to [out=-90,in=180] (10,3);

            \foreach \x in {1,2,3}
                \draw [fill] (\x) circle [radius=.2];
        \end{scope}
        
    \end{scope}
\end{tikzpicture}
    \caption{Pictures can be restricted to a region homotopic to a disk and
             bounded by a simple closed curve which is transverse to the
            picture.}
    \label{F:restrict}
\end{figure}
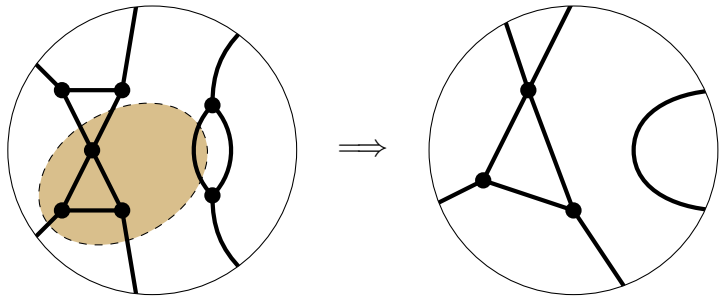
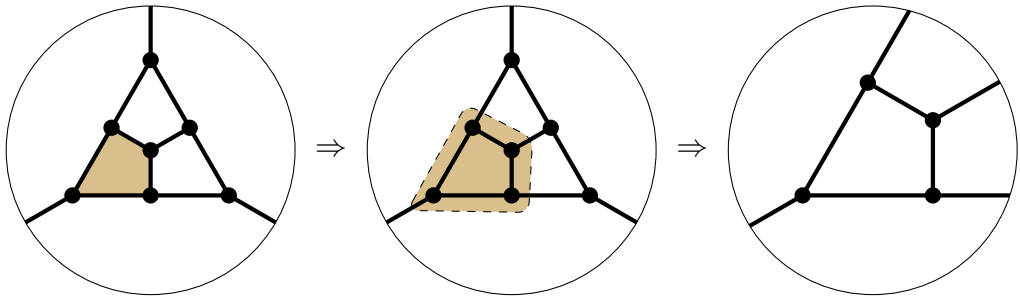
\begin{figure}
    \begin{tikzpicture}[baseline,scale=.4]

    \begin{scope}
        \draw [fill=white] (5,5) circle [radius=4.8];

        \coordinate (1) at (5,5);
        \coordinate (2) at ($(5,5) + (90:3)$);
        \coordinate (3) at ($(5,5) + (-150:3)$);
        \coordinate (4) at ($(5,5) + (-30:3)$);
        \coordinate (5) at ($(5,5) + {sin(30)}*(30:3)$);
        \coordinate (6) at ($(5,5) + {sin(30)}*(150:3)$);
        \coordinate (7) at ($(5,5) + {sin(30)}*(-90:3)$);

        \path[fill=fillyellow] (3) -- (6) -- (1) -- (7) -- cycle;

        \begin{scope}[ultra thick]
            \clip (5,5) circle [radius=4.8];

            \draw (2) -- (3) -- (4) -- cycle;
            \draw (1) -- (5);
            \draw (1) -- (6);
            \draw (1) -- (7);
            \draw (2) -- +(90:2);
            \draw (3) -- +(-150:2);
            \draw (4) -- +(-30:2);

            \foreach \x in {1,...,7}
                \draw [fill] (\x) circle [radius=.2];
        \end{scope}
    \end{scope}

    \begin{scope}[shift={++(12,0)}]
        \draw [fill=white] (5,5) circle [radius=4.8];

        \coordinate (1) at (5,5);
        \coordinate (2) at ($(5,5) + (90:3)$);
        \coordinate (3) at ($(5,5) + (-150:3)$);
        \coordinate (4) at ($(5,5) + (-30:3)$);
        \coordinate (5) at ($(5,5) + {sin(30)}*(30:3)$);
        \coordinate (6) at ($(5,5) + {sin(30)}*(150:3)$);
        \coordinate (7) at ($(5,5) + {sin(30)}*(-90:3)$);

        \draw[dashed,rounded corners,fill=fillyellow] 
                                    ($ (3)+(-150:1) $) 
                                        -- ($ (6)+(105:.8)$) 
                                        -- ($ (1)+(30:.8) $) 
                                        -- ($ (7)+(-45:.8) $) -- cycle;

        \begin{scope}[ultra thick]
            \clip (5,5) circle [radius=4.8];

            \draw (2) -- (3) -- (4) -- cycle;
            \draw (1) -- (5);
            \draw (1) -- (6);
            \draw (1) -- (7);
            \draw (2) -- +(90:2);
            \draw (3) -- +(-150:2);
            \draw (4) -- +(-30:2);

            \foreach \x in {1,...,7}
                \draw [fill] (\x) circle [radius=.2];
        \end{scope}
    \end{scope}

    \begin{scope}[shift={++(24,0)}]
        \draw [fill=white] (5,5) circle [radius=4.8];

        \coordinate (1) at (7,6);
        \coordinate (3) at ($(1) + (-150:5)$);
        \coordinate (6) at ($(1) + {sin(30)}*(150:5)$);
        \coordinate (7) at ($(1) + {sin(30)}*(-90:5)$);

        \begin{scope}[ultra thick]
            \clip (5,5) circle [radius=4.8];

            \draw (6) -- (3) -- (7) -- (1) -- cycle;
            \draw (1) -- +(30:3);
            \draw (6) -- +(60:3);
            \draw (3) -- +(-150:3);
            \draw (7) -- +(0:3);

            \foreach \x in {1,3,6,7}
                \draw [fill] (\x) circle [radius=.2];
        \end{scope}
    \end{scope}

    \node at (11,5) {$\Rightarrow$};
    \node at (23,5) {$\Rightarrow$};
    
\end{tikzpicture}
    \caption{The \emph{germ} of the region enclosed by a simple cycle is
            computed by first taking an $\eps$-relaxation of the region. In
            this case the cycle is facial, so the \emph{restriction} would be empty.}
    \label{F:cyclerestriction}
\end{figure}
\begin{defn}
    Let $\mcP$ be a picture in $\mcD_0$. A \emph{simple cycle} in $\mcP$ is
    a collection of edges whose union is a simple closed curve. 

    A \emph{closed loop} is an edge of $\mcP$ which is not incident to any
    vertex or to the boundary, and thus forms a simple cycle by itself. 

    A \emph{face} of $\mcP$ is an open connected region $\mcD$ of $\mcD_0$
    which does not contain any points of $\mcP$, and such that the boundary of
    $\mcD$ is a union of points of $\mcP$ and points in the boundary of
    $\mcD_0$.
    An \emph{outer face} is a face whose boundary contains points of the
    boundary of $\mcD_0$. 

    A simple cycle is \emph{facial} if it is the boundary of a face.
\end{defn}
Every simple cycle in the disk bounds a unique simple region (the interior of
the cycle), while a simple cycle on the sphere bounds two simple regions. A
face does not have to be simple, but a facial cycle always bounds a simple face
by definition.

\subsection{Groups and labellings of pictures}

Recall that $r^+$ refers to the even part of a relation $r \in \mcF_2(S) \times
\Z_2$ (see Definition \ref{D:evenodd}).
\begin{defn}\label{D:invpictures}
    Let $G = \Inv\langle S : R\rangle$. A \emph{$G$-picture} is a picture $\mcP$ with every
    vertex $v$ labelled by a relation $r(v) \in R$ and every edge $e$ labelled
    by a generator $s(e) \in S$, such that if $e_1,\ldots,e_n$ is the sequence
    of edges incident to $v$, read in counter-clockwise order with multiplicity
    from some starting point, then $s(e_1) s(e_2) \cdots s(e_n) \in
    \{r(v)^+\}^{sym}$.

    The \emph{boundary} of $\mcP$ is the cyclic word $\bd(\mcP) = s(e_1) \cdots
    s(e_n)$ over $S$, where $e_1,\ldots,e_n$ is the list of edges incident with
    the boundary, read in counter-clockwise order around the boundary of the
    disc, with multiplicity. If $\mcP$ is closed then we say that
    $\bd(\mcP)=1$, the empty word.

    Two pictures $\mcP_1$ and $\mcP_2$ are \emph{equivalent} if $\bd(\mcP_1) =
    \bd(\mcP_2)$. 

    The \emph{sign} of a picture $\mcP$ is $\sign(\mcP) = |\{v \in V(\mcP) :
    r(v) \text{ is odd}\}| \mod 2$. 

    If $\mcD$ is a simple region and $\mcP$ is a $G$-picture, then
    $\res(\mcP,\mcD)$ and $\germ(\mcP,\mcD)$ both inherit the structure of a
    $G$-picture from $\mcP$ via restricting the labelling functions. 
\end{defn}
Typically pictures use directed edges to represent inverses of generators, but
this is not necessary for groups generated by involutions. As previously
mentioned, the point of pictures is that they capture relations in the group,
in the following sense:
\begin{prop}[van Kampen lemma]\label{P:vankampen1}
    Let $G = \Inv\langle S : R\rangle$, let $r$ be a word over $S$, and let $a
    \in \Z_2$.  Then $r = J^a$ in $G$ if and only if there is a $G$-picture $\mcP$
    with $\bd(\mcP) = r$ and $\sign(\mcP) = a$. 
\end{prop}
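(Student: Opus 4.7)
The plan is to prove both directions separately, treating the picture as a combinatorial gadget that can be built up or taken apart one vertex at a time.

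For the ``if'' direction, the strategy is induction on $|V(\mcP)|$. In the base case $|V(\mcP)|=0$, I would show $\bd(\mcP)=1$ in $\mcF_2(S)$ (which matches $J^0$ since $\sign(\mcP)=0$) by a subsidiary induction on $|E(\mcP)|$: closed loops contribute nothing to $\bd(\mcP)$ and can be deleted, and any remaining edge $e$ has both endpoints on $\partial\mcD$ and thus splits $\mcD$ into two sub-disks each carrying a vertex-free sub-picture of strictly fewer edges; splicing the two boundary words together leaves a pair of adjacent copies of $s(e)$ which cancel via $s(e)^2=1$. For the inductive step, pick a vertex $v$ labelled by $r(v)=J^{a(v)}s_1\cdots s_n$, encircle it by a small simple curve $\gamma$ transverse to $\mcP$ meeting only the edges incident to $v$, and connect $\gamma$ to $\partial\mcD$ by a simple transverse arc $\alpha$ avoiding all vertices. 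Let $w$ be the word spelled out by the labels of the edges crossed by $\alpha$. Cutting $\mcD$ open along $\alpha\cup\gamma$ and discarding the interior of $\gamma$ yields a new disk $\mcD'$ with a $G$-picture $\mcP'$ satisfying $|V(\mcP')|=|V(\mcP)|-1$ and
\begin{equation*}
\bd(\mcP') = \bd(\mcP)\cdot w^{-1}\cdot r'\cdot w
\end{equation*}
cyclically, where $r'\in\{r(v)^+\}^{\mathrm{sym}}$ is whatever word is read off around $\gamma$ from the cut point. Since $r(v)=1$ in $G$ and $J$ is central and of order two, every element of $\{r(v)^+\}^{\mathrm{sym}}$ equals $J^{a(v)}$ in $G$; combined with $w^{-1}w=1$ this gives $\bd(\mcP')=\bd(\mcP)\cdot J^{a(v)}$ in $G$. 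The inductive hypothesis gives $\bd(\mcP')=J^{\sign(\mcP')}$, and since $\sign(\mcP)=\sign(\mcP')+a(v)\bmod 2$ we conclude $\bd(\mcP)=J^{\sign(\mcP)}$.

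For the ``only if'' direction, suppose $r=J^a$ in $G$. Then $r\cdot J^a$ lies in the normal closure of $R$ in $\mcF_2(S)\times\Z_2$, so
\begin{equation*}
r\cdot J^a = \prod_{i=1}^N u_i r_i^{\eps_i} u_i^{-1}
\end{equation*}
for some $u_i\in\mcF_2(S)\times\Z_2$, $r_i\in R$, and $\eps_i=\pm1$. Writing $u_i=J^{b_i}u_i^+$ and $r_i=J^{a(r_i)}r_i^+$ and using the centrality and order of $J$, this collapses to $r=\prod u_i^+ (r_i^+)^{\eps_i}(u_i^+)^{-1}$ in $\mcF_2(S)$ together with $a\equiv\sum a(r_i)\pmod 2$. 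For each factor I would construct a \emph{lollipop}: a single vertex labelled by $r_i$ whose $|r_i^+|$ incident edges spell out $r_i^+$ (read in the orientation determined by $\eps_i$) except for one which continues as a simple arc of labelled edges spelling $u_i^+$ out to $\partial\mcD$; this lollipop is a $G$-picture with boundary $u_i^+ (r_i^+)^{\eps_i}(u_i^+)^{-1}$ and sign $a(r_i)$. Arranging the $N$ lollipops side by side in a disk gives a picture whose boundary is the concatenation of the lollipop boundaries; this agrees with $r$ as an element of $\mcF_2(S)$, up to involution-cancellations which can be realized by local isotopies that remove bigons, and its sign is $\sum a(r_i)\equiv a$.

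The main obstacle is the surgery step in the $(\Leftarrow)$ direction: one must verify that excising $\gamma$ and opening along $\alpha$ produces a legitimate picture in a disk (not merely in some more complicated region), that the cyclic word read around the resulting hole is genuinely of the form $w^{-1}\cdot r'\cdot w$ with the claimed adjacencies, and that $r'$ belongs to $\{r(v)^+\}^{\mathrm{sym}}$ (the reversal option in symmetrization is essential since going around $\gamma$ counter-clockwise from $\mcD'$'s perspective is clockwise from $v$'s). A secondary but similar issue appears in $(\Rightarrow)$: arranging the lollipops so that the unreduced boundary word reads exactly the given word $r$, rather than a word equal to $r$ only after cancellation in $\mcF_2(S)$, requires either allowing an explicit ``cancellation'' step on the boundary or carefully choosing the planar placement of each lollipop.
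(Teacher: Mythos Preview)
The paper does not actually prove this proposition: it simply remarks that the argument is ``not substantially different than the proof of the original version'' of the van Kampen lemma and omits it. Your proposal is exactly that standard argument, correctly adapted to the involution-over-$\Z_2$ setting (tracking the sign via the parity of odd relations), so there is nothing to compare against and your outline is sound.

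One point of exposition in your $(\Rightarrow)$ direction deserves tightening. Your description of the lollipop --- ``whose $|r_i^+|$ incident edges spell out $r_i^+$ \ldots except for one which continues as a simple arc of labelled edges spelling $u_i^+$'' --- reads as though the stick is a path of edges with intermediate vertices, which would force those vertices to carry relation labels. In the picture formalism (dual to van Kampen diagrams) the correct construction is: the single vertex $v$ labelled $r_i$ has all $|r_i^+|$ edges running straight to $\partial\mcD$, and the conjugator $u_i^+$ is realised by $|u_i^+|$ nested boundary-to-boundary arcs enclosing $v$, labelled by the successive letters of $u_i^+$. Reading around $\partial\mcD$ then gives $u_i^+ \cdot (r_i^+)^{\eps_i} \cdot (u_i^+)^{-1}$ exactly, with the reversal $(u_i^+)^{-1}$ arising because the same arcs are crossed on the way back out. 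This is almost certainly what you intend --- it is the mirror of the cut-along-$\alpha$ step in your $(\Leftarrow)$ argument --- but as written the phrasing is ambiguous.

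Your identification of the two residual technicalities (that the excision in $(\Leftarrow)$ really yields a disk picture with the stated boundary, and that in $(\Rightarrow)$ one must realise the exact word $r$ rather than a word merely equal to it in $\mcF_2(S)$) is accurate, and both are routine to discharge: the first by transversality and the Jordan curve theorem, the second by adding or collapsing small boundary ``fingers'' labelled $s$--$s$ to insert or remove trivial $ss$ pairs.
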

The original version of the van Kampen lemma goes back to \cite{Ka33}. We
give a proof of this version for the convenience of the reader. 
\begin{proof}[Proof of Proposition \ref{P:vankampen1}]
    Given any word $w = s_1 \cdots s_n$ over $S$, we abuse notation slightly
    and let $w^{-1}$ refer to the word $s_n \cdots s_1$. Because $J$ is
    central, it is not hard to see that $r = J^a$ in $G$ if and only if there
    is a sequence $r_0,\ldots,r_n$ of (not necessarily reduced) words over $S$
    such that
    \begin{itemize}
        \item for every $i=1,\ldots,n$, the word $r_i$ can be constructed from
            $r_{i-1}$ by either replacing a subword $w_0$ with $w_1$, where
            $w_0 w_1^{-1}$ is the even part $w^+$ of some relation $w \in R^{sym}$,
            or inserting or deleting $s^2$ for some $s \in S$; 
        \item $a$ is (the parity of) the number of replacements $w_0 \arr w_1$
            in the point above, in which the corresponding relation $w \in R^{sym}$ is odd; and
        \item $r_0 = r$ and $r_n = 1$.
    \end{itemize}
    From any such sequence, it is not hard to construct a $G$-picture $\mcP$
    with $\bd(\mcP) = r$ and $\sign(\mcP)=a$. Indeed, for each $i=1,\ldots,n$
    we can build a $G$-picture $\mcP_i$ in a rectangle with $\bd(\mcP_i) =
    r_{i-1} r_i^{-1}$, where the boundary edges labelled by $r_{i-1}$ are
    connected to the top of the rectangle, and the boundary edges labelled by
    $r_i^{-1}$ are connected to the bottom of the rectangle, using the following
    recipe: 
    \begin{itemize}
        \item If $r_{i-1} = x w_0 y$ and $r_{i} = x w_1 y$, where $x,y$
            are words over $S$ and $w_0 w_1^{-1} \in \{w^+\}^{sym}$ for some $w
            \in R$, then $\mcP_i$ is a $G$-picture with a single vertex
            labelled by $w$. This vertex is connected to the top half of the rectangle
            by edges labelled (from right to left) by $w_0$ and to the bottom
            half by edges labelled by $w_1$. To the right and left of the vertex
            we add edges connecting the top to the bottom of the rectangle, labelled
            by $x$ and $y$ respectively.
        \item If $r_{i-1} = xy$ and $r_i = x s^2 y$ for some $s \in S$, then
            $\mcP_i$ is a picture with edges labelled (from right to left) by
            $x$ and $y$ connecting the top of the rectangle to the bottom, and
            an edge labelled by $s$ with both endpoints incident to the bottom
            of the rectangle.
        \item If $r_{i-1} = x s^2 y$ and $r_i = xy$ for some $s \in S$,
            then $\mcP_i$ is a picture with edges labelled (from right to left)
            by $x$ and $y$ connecting the top of the rectangle to the bottom,
            and an edge labelled by $s$ with both endpoints incident to the
            top of the rectangle.
    \end{itemize}
    Note that $\mcP_n$ has no edges incident to the bottom of the rectangle.
    Putting these pictures together from top to bottom gives a picture $\mcP$
    with $\bd(\mcP)=r$ and $\sign(\mcP)=a$ as desired. 

    \begin{figure}[tbhp]
    \begin{tikzpicture}[scale=.9, auto,every node/.style={scale=.8}]

    \draw [clip] (0,0) -- (0,7) -- (10,7) -- (10,0) -- (0,0);

    \coordinate (t1) at (3,7);
    \coordinate (t2) at (6,7);
    \coordinate (t3) at (9,7);
    \coordinate (p1) at (1.5,6.5);
    \coordinate (1) at (4.5,5.5);
    \coordinate (p2) at (6.5,4.5);
    \coordinate (p3) at (1.5,3.5);
    \coordinate (2) at (8,1.5);
    \coordinate (3) at (5,2.5);
    \coordinate (p4) at (6.5,0.5);
    \begin{scope}[ultra thick]
        \draw (t1) to node[swap,pos=.3] {$c$} (1);
        \draw (t2) to node[pos=.3] {$b$} (1);
        \draw (1) to node[pos=.55,swap] {$f$} (3);
        \draw (3) to [out=70,in=180] (p2) to [out=0,in=110] node [pos=0] {$d$} (2);
        \draw (t3) to node[pos=.6] {$a$} (2);
        \draw (2) to [out=-110,in=0] (p4) to [out=180,in=-70] node [pos=0] {$g$} (3);
        \draw (p1) to [out=-30,in=30] (p3) to [out=150,in=-150] node[pos=.58] {$a$} (p1);
        \foreach \x in {1,2,3}
            \draw [fill] (\x) circle [radius=.1];
    \end{scope}
    \draw[dashed] (0,6) -- (10,6);
    \draw[dashed] (0,5) -- (10,5);
    \draw[dashed] (0,4) -- (10,4);
    \draw[dashed] (0,3) -- (10,3);
    \draw[dashed] (0,2) -- (10,2);
    \draw[dashed] (0,1) -- (10,1);
\end{tikzpicture}
    \caption{A $G$-picture $\mcP$ with $\bd(\mcP)=abc$ and $\sign(\mcP)=1$
        for $G = \Inv\langle a,b,c,d,f,g : adg,bcf,Jdfg \rangle$}
    \label{F:vankampen}
    \end{figure}

    An example of this process with 
    \begin{equation*}
        G = \Inv\langle a,b,c,d,f,g : adg=1,bcf=1,dfg=J\rangle
    \end{equation*}
    is given in Figure \ref{F:vankampen}. The figure
    shows a picture $\mcP$ with $\bd(\mcP)=abc$ and $\sign(\mcP)=1$,
    corresponding to the identity $abc = J$. The picture is constructed by
    concatenating pictures $\mcP_1,\ldots,\mcP_7$ (demarcated by the dashed
    lines) corresponding to the derivation $abc = abca^2 = afa^2 = ad^2fa^2 =
    ad^2f = Jad g = Jg^2 = J$.  As can be seen from the example, the algorithm
    outlined above can create pictures with closed loops, but these are allowed
    in our definition of $G$-pictures.

    Conversely, suppose we are given a $G$-picture with $\bd(\mcP)=r$ and
    $\sign(\mcP)=a$. After isotopy, we can assume that $\mcP$ is a picture in a
    rectangle in the plane, with all boundary edges incident to the top of the
    rectangle. We can also assume that every edge intersects every horizontal
    line through the rectangle in a finite number of points, and intersects
    only finitely many horizontal lines non-transversely. 
    Say that a point on the interior of the diagram is a \emph{critical point}
    if it is a vertex of $\mcP$, or a point on an edge which intersects a
    horizontal line non-transversely.  By moving critical
    points up or down, we can assume that every horizontal line through the
    rectangle hits at most one critical point. We can then cut $\mcP$ along
    horizontal lines into a sequence of pictures $\mcP_1,\ldots,\mcP_n$ each of
    which contains a single critical point. As in the above argument, this
    sequence can be turned into a derivation that $r = J^a$ in $G$.
\end{proof}

\begin{defn}\label{D:surgery}
    Let $\mcD$ be a simple region of a $G$-picture $\mcP_0$, and let $\mcP'=
    \res(\mcP,\mcD)$ or $\germ(\mcP, \mcD)$. If $\mcP'$ is equivalent to a
    picture $\mcP''$, then we can cut out $\mcP'$ and glue in $\mcP''$ in its
    place to get a new picture $\mcP_1$. We refer to the process $\mcP_0 \surg
    \mcP_1$ as \emph{surgery}.

    If $\mcP'$ has size zero,\footnote{Size zero pictures do not have vertices,
    but they can still have edges.} then we call $\mcP_0 \surg \mcP_1$ a
    \emph{null surgery}. We say that $\mcP_0$ and $\mcP_1$ are \emph{equivalent
    via null surgeries} if there is a sequence of null surgeries transforming
    $\mcP_0$ to $\mcP_1$. 
\end{defn}
\begin{figure}[tbhp]
    \begin{tikzpicture}[auto,scale=.4]

    \begin{scope}[shift={++(0,12)},every node/.style={scale=.7}]
        \draw (5,5) circle [radius=4.8];
        \coordinate (t1) at (0,8);
        \coordinate (t2) at (2,10);
        \coordinate (t3) at (4,10);
        \coordinate (t4) at (6,10);
        \coordinate (t5) at (9,10);
        \coordinate (b1) at (2,0);
        \coordinate (b2) at (5,0);
        \coordinate (b3) at (8,0);
        \coordinate (1) at (7,7);
        \coordinate (2) at (5,5);
        \coordinate (3) at (3,3);

        \draw [dashed,rotate=45,fill=fillyellow] (2) ++(1.5,.5) ellipse (2.5 and 1.5);
        \begin{scope}[ultra thick]

            \clip (5,5) circle [radius=4.8];
            \draw (t1) to node [pos=.45,swap] {$s_1$} (3);
            \draw (t2) to node [pos=.3,swap] {$s_2$} (3);
            \draw (t3) to [bend right=10] node [pos=.15,swap] {$s_3$} (2);
            \draw (t4) to node [pos=.15,swap] {$s_3$} (1);
            \draw (t5) to node [pos=.6] {$s_1$} (1);
            \draw (1) to [bend right=50] node [swap,pos=.25] {$s_1$} (2);
            \draw (1) to [bend left=50] node {$s_3$} (2);
            \draw (2) to node [pos=.3] {$s_1$} (3);
            \draw (3) to node [swap] {$s_2$} (b1);
            \draw (3) to node [pos=.8] {$s_1$} (b2);
            \draw (3) to [bend left] node [pos=.7] {$s_2$} (b3);
            \foreach \x in {1,2,3}
                \draw [fill] (\x) circle [radius=.2];
        \end{scope}
    \end{scope}
    \begin{scope}[shift={++(14,12)},every node/.style={scale=.7}]
        \draw (5,5) circle [radius=4.8];
        \coordinate (t1) at (0,8);
        \coordinate (t2) at (2,10);
        \coordinate (t3) at (4,10);
        \coordinate (t4) at (6,10);
        \coordinate (t5) at (9,10);
        \coordinate (b1) at (2,0);
        \coordinate (b2) at (5,0);
        \coordinate (b3) at (8,0);
        \coordinate (3) at (3,3);

        \begin{scope}[ultra thick]

            \clip (5,5) circle [radius=4.8];
            \draw (t1) to node [pos=.45,swap] {$s_1$} (3);
            \draw (t2) to node [pos=.3,swap] {$s_2$} (3);
            \draw (t3) to [out=-90,in=180] (5,8) node [below] {$s_3$} to [out=0,in=-90] (t4);
            \draw (t5) to node [pos=.25] {$s_1$} (3);
            \draw (3) to node [swap] {$s_2$} (b1);
            \draw (3) to node [pos=.8] {$s_1$} (b2);
            \draw (3) to [bend left] node [pos=.7] {$s_2$} (b3);
            \foreach \x in {1,2,3}
                \draw [fill] (\x) circle [radius=.2];
        \end{scope}
    \end{scope}
    \begin{scope}[shift={++(7,0)},every node/.style={scale=.7}]
        \draw (5,5) circle [radius=4.8];
        \coordinate (t1) at (0,8);
        \coordinate (t2) at (2,10);
        \coordinate (t3) at (4,10);
        \coordinate (t4) at (6,10);
        \coordinate (t5) at (9,10);
        \coordinate (b1) at (2,0);
        \coordinate (b2) at (5,0);
        \coordinate (b3) at (8,0);
        \coordinate (1) at (7,7);
        \coordinate (2) at (5,5);
        \coordinate (3) at (3,3);

        \begin{scope}[ultra thick]

            \clip (5,5) circle [radius=4.8];
            \draw (t1) to node [pos=.45,swap] {$s_1$} (3);
            \draw (t2) to node [pos=.3,swap] {$s_2$} (3);
            \draw (t3) to [bend right=10] node [pos=.15,swap] {$s_3$} (2);
            \draw (t4) to node [pos=.15,swap] {$s_3$} (1);
            \draw (t5) to node [pos=.6] {$s_1$} (1);
            \draw (1) to [bend right=50] node [swap,pos=.25] {$s_1$} (2);
            \draw (1) to [bend left=50] (2);
            \draw (2) to node [pos=.5] {$s_1$} (3);
            \draw (3) to node [swap] {$s_2$} (b1);
            \draw (3) to node [pos=.8] {$s_1$} (b2);
            \draw (3) to [bend left] node [pos=.7] {$s_2$} (b3);
            \foreach \x in {1,2,3}
                \draw [fill] (\x) circle [radius=.2];
        \end{scope}
        \draw [dashed,rotate=45,fill=white] (2) ++(1.5,.5) ellipse (2.5 and 1.5);
    \end{scope}

    \begin{scope}[shift={+(17,0)},scale=.5,every node/.style={scale=.5}]
        \draw (5,5) circle [radius=4.8];
        \coordinate (t1) at (0,6.5);
        \coordinate (t2) at (4,10);
        \coordinate (t3) at (10,10);
        \coordinate (b1) at (0,0);
        \coordinate (1) at (6.5,6.5);
        \coordinate (2) at (3.5,3.5);

        \begin{scope}[ultra thick]
            \clip (5,5) circle [radius=4.8];
            \draw (t1) to node [pos=.2] {$s_3$} (2);
            \draw (t2) to node [swap,pos=.2] {$s_3$} (1);
            \draw (t3) to node [pos=.65] {$s_1$} (1);
            \draw (1) to [bend right=50] node [swap] {$s_1$} (2);
            \draw (1) to [bend left=50] node {$s_3$} (2);
            \draw (2) to node [pos=.4] {$s_1$} (b1);
            \foreach \x in {1,2}
                \draw [fill] (\x) circle [radius=.2];
        \end{scope}
    \end{scope}

    \begin{scope}[shift={+(17,5)},scale=.5,every node/.style={scale=.5}]
        \draw (5,5) circle [radius=4.8];
        \coordinate (t1) at (0,6.5);
        \coordinate (t2) at (4,10);
        \coordinate (t3) at (10,10);
        \coordinate (b1) at (0,0);

        \begin{scope}[ultra thick]
            \clip (5,5) circle [radius=4.8];
            \draw (t1) to [out=-55,in=-45] node [swap] {$s_3$} (t2);
            \draw (t3) to node {$s_1$} (b1);
        \end{scope}
    \end{scope}

    \draw[gray,->] (12.8,4.5) to [out=-90,in=180] (16.8,3);
    \draw[gray,<-] (14.2,5.5) to [out=-35,in=-135](17.8,5.6);
    \node at (12,17) {$\surg$};
    \node[rotate=-55] at (8,11) {$\implies$};
    \node[rotate=55] at (16,11) {$\implies$};
\end{tikzpicture}
    \caption{Surgery for an $S_4 \times \Z_2$-picture with boundary relation $s_1 s_2 s_3
             s_3 s_1 = s_2 s_1 s_2$.}
    \label{F:surgery}
\end{figure}

\begin{example}
    Consider the Coxeter group $S_4 \times \Z_2$ over $\Z_2$. This group has
    presentation
    \begin{equation*}
        G = \Inv\langle s_1,s_2,s_3 :\ s_1 s_3 = s_3 s_1, s_1 s_2 s_1 = s_2 s_1
            s_2, s_2 s_3 s_2 = s_3 s_2 s_3\rangle, 
    \end{equation*}
    and in particular is presented by involutions. An example of a surgery for this
    group is shown in Figure \ref{F:surgery}.
\end{example}

\section{Pictures over solution groups and hypergraphs}\label{S:hyperpics}

Let $\mcH$ be a hypergraph with vertex labelling function $b : V(\mcH) \arr
\Z_2$. The solution group $\Gamma(\mcH,b)$ is finitely presented by involutions
over $\Z_2$; to talk about $\Gamma$-pictures, we just need to pick a
presentation of $\Gamma(\mcH,b)$. One candidate is the presentation from
Definition \ref{D:solutiongroup}. This presentation contains two types of relations:
\emph{linear relations} of the form $\prod x_e = J^a$, and \emph{commuting
relations} of the form $x_e x_{e'} = x_{e'} x_e$. However, it will be more
convenient to use a presentation without commuting relations:
\begin{defn}\label{D:solutionpres}
    As a group presented by involutions over $\Z_2$, we let $\Gamma(\mcH,b) =
    \Inv\langle S, R\rangle$, where $S = \{x_e : e \in E(\mcH)\}$ and
    \begin{multline*}
        R = \{ J^{b_v} x_{e_1} \cdots x_{e_n} : \text{ all } v \in V  \text{ and all
            orderings } e_1,\ldots,e_n \\  \text{ of the edges incident to } v
            \text{ listed with multiplicity} \}
    \end{multline*}
\end{defn}
\begin{example}\label{Ex:commuting}
    Consider the solution group $\Gamma(\mcH,b)$ for the simple hypergraph
    $\mcH$ with a single vertex incident to four edges. 
    In the presentation in Definition~\ref{D:solutiongroup}, $\Gamma(A,b)$ is
    generated by $x_1,x_2,x_3,x_4,J$ subject to relations $J^2=1$, $x_i^2 =
    [x_i,J]=1$ for $1 \leq i \leq 4$, commuting relations $[x_i,x_j]=1$ for $1
    \leq i < j \leq 4$, and the single linear relation $x_1 x_2 x_3 x_4 = J^b$.
    In the presentation in Definition~\ref{D:solutionpres}, the commuting relations
    are omitted, and the single linear relation is replaced by the $12$ relations
    $x_{\sigma(1)} x_{\sigma(2)} x_{\sigma(3)} x_{\sigma(4)} = J^b$, where
    $\sigma$ is a permutation of $\{1,2,3,4\}$ (the other relations remain the same). 
\end{example}
In general, all the linear relations in Definition \ref{D:solutionpres} can be
recovered from a single linear relation and the commuting relations, while a
commuting relation can be recovered from two linear relations, as shown in Figure
\ref{F:commuting} (for the group in Example \ref{Ex:commuting}). Consequently,
the presentations of $\Gamma(\mcH,b)$ in \ref{D:solutiongroup} and
\ref{D:solutionpres} are equivalent. We will always use the presentation in
Definition \ref{D:solutionpres} when working with $\Gamma$-pictures. 

\begin{figure}
    \begin{tikzpicture}[auto, scale=.4]

    \begin{scope}
        \draw [clip] (5,5) circle [radius=4.8];
        \coordinate (t1) at (2,10); \coordinate (t2) at (8,10);
        \coordinate (b1) at (2,0); \coordinate (b2) at (8,0);
        \coordinate (p) at (5,5);

        \begin{scope}[ultra thick]
            \draw (t1) to node [pos=.4,swap] {$x_1$} (p);
            \draw (t2) to node [pos=.4] {$x_2$} (p);
            \draw (b1) to node [pos=.4] {$x_2$} (p);
            \draw (b2) to node [pos=.4,swap] {$x_1$} (p);
        \end{scope}
        \draw[fill] (p) circle [radius=.2];
    \end{scope}
    \node at (12,5) {$\Rightarrow$};

    \begin{scope}[shift={+(14,0)}]
        \draw [clip] (5,5) circle [radius=4.8];
        \coordinate (t1) at (2,10); \coordinate (t2) at (8,10);
        \coordinate (b1) at (2,0); \coordinate (b2) at (8,0);
        \coordinate (p1) at (5,7);
        \coordinate (p2) at (5,3);

        \begin{scope}[ultra thick]
            \draw (t1) to node [swap] {$x_1$} (p1);
            \draw (t2) to node {$x_2$} (p1);
            \draw (p1) to [out=-145,in=145] node [swap] {$x_3$} (p2);
            \draw (p1) to [out=-35,in=35] node {$x_4$} (p2);
            \draw (b1) to node {$x_2$} (p2);
            \draw (b2) to node [swap] {$x_1$} (p2);
        \end{scope}
        \draw[fill] (p1) circle [radius=.2];
        \draw[fill] (p2) circle [radius=.2];
    \end{scope}
\end{tikzpicture}
    \caption{In a solution group with linear relation $x_1 x_2 x_3 x_4 = J^a$,
             the commuting relation $x_1 x_2 = x_2 x_1$ can be represented
             pictorially in two different ways, depending on whether we use the
             presentation from Definition \ref{D:solutiongroup} (on the left),
             or the presentation from Definition \ref{D:solutionpres}
             (on the right).}
    \label{F:commuting}
\end{figure}
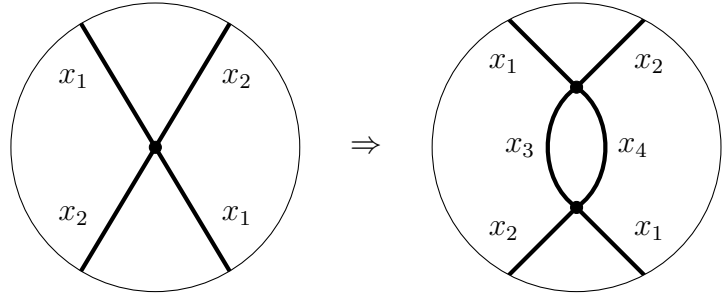

\begin{defn}\label{D:Hpicture}
    Let $\mcH$ be a hypergraph. An \emph{$\mcH$-picture} is a triple $(\mcP,
    h_V, h_E)$, where $\mcP$ is a picture, and $h_V$ and $h_E$ are
    \emph{labelling functions} $V(\mcP) \arr V(\mcH)$ and
    $E(\mcP) \arr E(\mcH)$ respectively, such that for all $v \in V(\mcP)$ and $e' \in
    E(\mcH)$, if we list the edges $e_1, \ldots, e_n$ of $\mcP$ incident to $v$
    with multiplicity then $A_{h_V(v)e'} = |\{1 \leq i
    \leq n : h_E(e_i) = e'\}|$. For convenience, we usually write $\mcP$
    for the triple $(\mcP,h_V,h_E)$, and $h$ for the labelling functions $h_V$ and $h_E$. 

    The \emph{boundary} of an $\mcH$-picture $\mcP$ is the cyclic word
    $\bd(\mcP) = h(e_1) \cdots h(e_n)$ over $E(\mcH)$,
    where as before $e_1,\ldots,e_n$ is the sequence of edges incident with the
    boundary, read counter-clockwise with multiplicity.  The \emph{character}
    of $\mcP$ is the vector $\ch(\mcP) \in \Z_2^{V(\mcH)}$ with 
    $\ch(\mcP)_{v} = |h^{-1}_V(v)| \mod 2$.
\end{defn}
It is easy to see that there is a one-to-one correspondence between
$\Gamma(\mcH,b)$-pictures and $\mcH$-pictures. If $\mcP$ is an $\mcH$-picture,
then the sign of the corresponding $\Gamma$-picture is the standard dot product
$\ch(\mcP) \cdot b$. The leads to the following restatement of the van Kampen
lemma for $\mcH$-pictures.
\begin{prop}[van Kampen lemma]\label{P:vankampen2}
    Let $\Gamma(\mcH,b)$ be a solution group. Then $x_{e_1} \cdots x_{e_n} =
    J^a$ in $\Gamma(\mcH,b)$ if and only if there is an $\mcH$-picture $\mcP$ with
    $\bd(\mcP) = e_1 \cdots e_n$ and $\ch(\mcP) \cdot\, b = a$. 
\end{prop}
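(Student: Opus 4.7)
My plan is to deduce Proposition \ref{P:vankampen2} from Proposition \ref{P:vankampen1} by exhibiting the bijection between $\mcH$-pictures and $\Gamma(\mcH,b)$-pictures alluded to in the paragraph preceding the statement, and then checking that this bijection respects boundary and sign.

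First I would set up the bijection precisely, using the presentation $\Gamma(\mcH,b) = \Inv\langle S, R\rangle$ of Definition \ref{D:solutionpres}. Given an $\mcH$-picture $\mcP$ with labels $h_V, h_E$, I would promote it to a $\Gamma$-picture by setting $s(e) := x_{h_E(e)}$ for each edge $e$ of $\mcP$, and by labelling each vertex $v$ of $\mcP$ with the relation
\begin{equation*}
    r(v) := J^{b_{h_V(v)}} x_{h_E(e_1)} \cdots x_{h_E(e_n)} \in R,
\end{equation*}
where $e_1,\ldots,e_n$ is the counter-clockwise sequence of edges incident to $v$ with multiplicity. The hypergraph incidence condition $A_{h_V(v), e'} = |\{i : h_E(e_i) = e'\}|$ guarantees that this word really does lie in $R$, because Definition \ref{D:solutionpres} puts \emph{every} ordering of the incident edges of $h_V(v)$ into $R$. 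Conversely, given a $\Gamma$-picture, the edge label $s(e) = x_{h_E(e)}$ determines $h_E(e) \in E(\mcH)$ uniquely, and then the relation at each vertex together with the requirement $s(e_1)\cdots s(e_n) \in \{r(v)^+\}^{sym}$ forces $r(v)$ to be a relation attached to a unique $v' \in V(\mcH)$, and we set $h_V(v) := v'$. One has to observe that if two distinct vertices of $\mcH$ happened to produce the same linear relation (same multiset of incident edges and same $b$-label), this would cause no ambiguity since the choice simply records which $V(\mcH)$-vertex we used; the bijection can be fixed by a convention, and this does not affect the rest of the argument.

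Next I would check that boundary and sign transport correctly. The boundary of the $\mcH$-picture is the cyclic word $h_E(e_1)\cdots h_E(e_n)$ over $E(\mcH)$, while the boundary of the associated $\Gamma$-picture is $s(e_1)\cdots s(e_n) = x_{h_E(e_1)}\cdots x_{h_E(e_n)}$; these are the same word under the identification $e \leftrightarrow x_e$ between $E(\mcH)$ and $S$. For the sign, the relation $r(v)$ is odd precisely when $b_{h_V(v)} = 1$, so
\begin{equation*}
    \sign(\mcP) = \sum_{v \in V(\mcP)} b_{h_V(v)} \;=\; \sum_{v' \in V(\mcH)} |h_V^{-1}(v')|\, b_{v'} \;=\; \ch(\mcP)\cdot b \pmod{2},
\end{equation*}
where the middle equality just reorganises the sum by the fibres of $h_V$.

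With these two identifications in hand, the proposition is immediate from Proposition \ref{P:vankampen1} applied to the presentation $\Gamma(\mcH,b) = \Inv\langle S, R\rangle$: the identity $x_{e_1}\cdots x_{e_n} = J^a$ in $\Gamma(\mcH,b)$ is equivalent, by \ref{P:vankampen1}, to the existence of a $\Gamma$-picture with boundary $x_{e_1}\cdots x_{e_n}$ and sign $a$, which by the bijection is equivalent to the existence of an $\mcH$-picture $\mcP$ with $\bd(\mcP) = e_1\cdots e_n$ and $\ch(\mcP)\cdot b = a$. The only mild subtlety — and hence the main thing to be careful about — is verifying that the condition defining an $\mcH$-picture (an incidence-matching condition for each vertex) lines up exactly with the condition defining a $\Gamma$-picture (the counter-clockwise edge label sequence lies in $\{r(v)^+\}^{sym}$) under the generous presentation chosen in Definition \ref{D:solutionpres}; this is where including all orderings of the incident edges in $R$ pays off.
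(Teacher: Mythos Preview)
Your proof is correct and follows exactly the approach the paper indicates: the paper simply asserts the one-to-one correspondence between $\Gamma(\mcH,b)$-pictures and $\mcH$-pictures, notes that the sign of the corresponding $\Gamma$-picture equals $\ch(\mcP)\cdot b$, and then presents Proposition~\ref{P:vankampen2} as a restatement of Proposition~\ref{P:vankampen1}. Your writeup supplies the details the paper leaves implicit, including the minor subtlety about non-uniqueness when two hypergraph vertices yield the same relation, which you correctly observe does not affect the existence statement.
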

\begin{example}
    Consider the solution group for the linear system
    \begin{align*}
        (1) \qquad x + y + z & = 1\\
        (2) \qquad x + y + z & = 0
    \end{align*}
    The underlying hypergraph $\mcH$ of this system is shown below.
    \begin{equation*}
        \begin{tikzpicture}[auto,ultra thick,vertex/.style={circle,draw,thin,inner sep=2.5},every node/.style={scale=.8}]
    \node[vertex] (r1) at (0,0) {$1$};
    \node[vertex] (r2) at (3,0) {$2$}
        edge [bend right=45] node [swap] {$x$} (r1)
        edge node {$y$} (r1)
        edge [bend left=45] node {$z$} (r1);
\end{tikzpicture}

    \end{equation*}
    This drawing of the hypergraph is also a closed $\mcH$-picture $\mcP$ with
    character $\ch(\mcP) = (1,1)$. Since $b = (1,0)$ and $\ch(\mcP) \cdot\, b
    = 1$, van Kampen's lemma tells us that $J = 1$.
\end{example}
\begin{rmk}\label{R:columnspace}
    Given a hypergraph $\mcH$ with incidence matrix $A$, let $X \subset \Z_2^V$
    be the set of vertex labellings $b$ such that $J \neq 1$ in
    $\Gamma(\mcH,b)$, and let $Y \subset \Z_2^V$ be the set of characters
    $\ch(\mcP)$ of $\mcH$-pictures $\mcP$. It is not hard to see that $X$ and
    $Y$ are subspaces of $\Z_2^V$. Proposition \ref{P:vankampen2} states that
    $Y$ is the orthogonal subspace to $X$ with respect to the standard
    bilinear product. By Theorem \ref{T:games}, $X$ can be regarded as
    a quantum-information-theoretic analogue of the columnspace of $A$. From
    this point of view, $Y$ is then an analogue of the left nullspace of $A$. 
\end{rmk}
If edge $e$ is incident to vertex $v$ in an $\mcH$-picture $\mcP$, then 
$h(e)$ will be incident to $h(v)$ in $\mcH$. Consequently, the labelling
function $h$ in Definition \ref{D:Hpicture} can be seen as a type of
weak hypergraph homomorphism. If $\mcP$ is closed, and $\mcH$ and $\mcP$ are
simple loopless graphs, then $h$ will be an actual graph homomorphism if and
only if $h(v) \neq h(v')$ for all adjacent vertices $v$ and $v'$ in $\mcP$.
Furthermore, if this happens then $h$ must be a planar graph cover. This
motivates the following definition:
\begin{defn}\label{D:cover}
    Let $\mcH$ be a hypergraph. A closed $\mcH$-picture $\mcP$ is a
    \emph{cover} of $\mcH$ if every edge of $\mcP$ is incident with two
    distinct vertices $v$ and $v'$ such that $h(v) \neq h(v')$. 
\end{defn}
Size, equivalence, restriction to a region, surgery, and null surgery are all
defined for $\mcH$-pictures via the correspondence with $\Gamma$-pictures.  For
instance, the size of an $\mcH$-picture is simply the number of vertices in the
picture. 
\begin{defn}\label{D:minimal}
    Let $\mcH$ be a hypergraph, and let $b : V(\mcH) \arr \Z_2$ be a
    vertex-labelling function. Two $\mcH$-pictures $\mcP_0$ and $\mcP_1$ are
    \emph{$b$-equivalent} (resp. \emph{character-equivalent}) if $\bd(\mcP_0) =
    \bd(\mcP_1)$ and $\ch(\mcP_0) \cdot b = \ch(\mcP_1) \cdot b$ (resp.
    $\ch(\mcP_0) = \ch(\mcP_1)$).

    An $\mcH$-picture $\mcP$ is \emph{$b$-minimal} (resp.
    \emph{character-minimal}) if $\mcP$ has minimum size among all
    $b$-equivalent (resp. character-equivalent) pictures. 
\end{defn}
Two pictures are character-equivalent if and only if they are $b$-equivalent for
all vertex-labelling functions $b$. Thus a $b$-minimal picture is also
character-minimal. 

If $\mcP_0 \surg \mcP_1$ is a surgery in which a region $\mcP'$ is replaced by
an equivalent region $\mcP''$, then $\bd(\mcP_0) = \bd(\mcP_1)$, and
$\ch(\mcP_1) = \ch(\mcP_0) - \ch(\mcP') + \ch(\mcP'')$. So if $\mcP'$ and
$\mcP''$ are $b$-equivalent (resp. character equivalent), then $\mcP_0$ and
$\mcP_1$ will be $b$-equivalent (resp. character-equivalent). 
We conclude that if $\mcP$ is $b$-minimal (resp. character-minimal) then
$\res(\mcP,\mcD)$ and $\germ(\mcP,\mcD)$ will be $b$-minimal (resp.
character-minimal) for all simple regions $\mcD$ (as otherwise we could
replace $\res(\mcP,\mcD)$ or $\germ(\mcP,\mcD)$ with smaller pictures).

We can also remove closed loops without changing the character-equivalence
class:
\begin{lemma}\label{L:closedloops}
    Suppose $\mcP$ is a picture, and let $\mcP'$ be the same picture but with
    all closed loops deleted. Then $\mcP$ and $\mcP'$ are character-equivalent
    and have the same size.
\end{lemma}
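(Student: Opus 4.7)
The plan is to unpack the definitions and observe that deletion of closed loops is essentially trivial on all the relevant invariants. Recall that a closed loop is an edge which is not incident to any vertex of $\mcP$ and not incident to the boundary; thus the set $V(\mcP')$ of vertices equals $V(\mcP)$, and the set of edges incident to any vertex is unchanged when we pass from $\mcP$ to $\mcP'$.

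First I would check that $\mcP'$ is still a well-formed $\mcH$-picture: the axioms of Definition \ref{D:picture} are preserved under edge deletion (no endpoint or incidence conditions are created by removing a curve disjoint from all vertices and from the boundary), and the hypergraph labelling condition of Definition \ref{D:Hpicture} depends only on the list of edges incident to each vertex with multiplicity, which is unaffected. So $h_V$ and $h_E|_{E(\mcP')}$ still define a valid labelling.

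Next I would compute each invariant directly. Since $V(\mcP') = V(\mcP)$ and $h_V$ is unchanged, the size $|V(\mcP')|=|V(\mcP)|$ and the character $\ch(\mcP')_v = |h_V^{-1}(v)| \mod 2 = \ch(\mcP)_v$ for every $v \in V(\mcH)$. For the boundary, the sequence $e_1,\ldots,e_n$ used to define $\bd(\mcP)$ consists of edges incident with the boundary of $\mcD$, and by definition closed loops are not among these, so the boundary sequence is identical for $\mcP$ and $\mcP'$. Hence $\bd(\mcP) = \bd(\mcP')$. Together with $\ch(\mcP) = \ch(\mcP')$ this gives character-equivalence.

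There is no real obstacle here; the lemma is essentially a bookkeeping statement, and the only thing to take care of is to make sure that the picture axioms and the $\mcH$-labelling condition continue to hold after deletion, which follows from the fact that closed loops are disjoint from the vertex set, the boundary, and all other edges.
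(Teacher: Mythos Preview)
Your proposal is correct; it is exactly the straightforward unpacking of definitions one would expect. The paper in fact states this lemma without proof, treating it as immediate, so your argument fills in precisely the routine verification the paper leaves to the reader.
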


\section{A category of hypergraphs}\label{S:category}

\begin{defn}
    Let $\mcH = (V,E,A)$ be a hypergraph. A \emph{subhypergraph} of $\mcH$ is a
    hypergraph $\mcH' = (V',E',A')$ with $V' \subset V$, $E' \subset E$, and
    $A'_{ve} = A_{ve}$ for all $v \in V'$ and $e \in E'$.
\end{defn}
In other words, a subhypergraph is simply a subset of $V(\mcH) \cup E(\mcH)$.
Although this definition is substantially less restrictive than other notions
of subhypergraphs in the literature, it is natural in the context of
hypergraphs with isolated edges. 

\begin{defn}
    If $\mcH'$ is a subhypergraph of $\mcH$, then the \emph{neighbourhood}
    $\mcN(\mcH')$ of $\mcH'$ is the subhypergraph with $V(\mcN(\mcH')) = V(\mcH')$,
    and 
    \begin{equation*}
        E(\mcN(\mcH')) = E(\mcH') \cup \{e \in E(\mcH) : e \text{ is incident
            in } \mcH \text{ to some vertex } v \in V(\mcH') \}.
    \end{equation*}
    We say that $\mcH'$ is \emph{open} if $\mcH' = \mcN(\mcH')$.  
\end{defn}
The proof of the following proposition is elementary, and we omit it. 
\begin{prop}\label{P:topology}
    Let $\mcH$ be a hypergraph. The collection of open subhypergraphs of $\mcH$
    forms a topology on $V(\mcH) \cup E(\mcH)$. A subhypergraph $\mcH'$ is closed
    in this topology if and only if, for all $v \in V(\mcH)$, if $v$ is incident
    to $e \in E(\mcH')$ then $v \in V(\mcH')$. 
\end{prop}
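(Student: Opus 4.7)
The plan is to identify subhypergraphs with subsets of the disjoint union $V(\mcH) \sqcup E(\mcH)$, since a subhypergraph is determined by the pair $(V', E')$ with $A'$ forced to be the restriction of $A$. Under this identification, the operations of union, intersection, and complement of subhypergraphs are literally the set-theoretic operations on $V(\mcH) \sqcup E(\mcH)$. I would then unpack the openness condition $\mcH' = \mcN(\mcH')$, which says: for every $v \in V(\mcH')$, every edge of $\mcH$ incident to $v$ lies in $E(\mcH')$.

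The three topology axioms follow by direct verification. First, $\emptyset$ is open (vacuously) and $\mcH$ itself is open (every edge of $\mcH$ is by definition in $E(\mcH)$). Second, if $\{\mcH_i'\}_{i \in I}$ is a family of open subhypergraphs and $\mcH' = \bigcup_i \mcH_i'$ denotes the subhypergraph with $V(\mcH') = \bigcup_i V(\mcH_i')$ and $E(\mcH') = \bigcup_i E(\mcH_i')$, then for any $v \in V(\mcH')$ we have $v \in V(\mcH_{i_0}')$ for some $i_0$, so every edge $e$ incident to $v$ in $\mcH$ lies in $E(\mcH_{i_0}') \subseteq E(\mcH')$; hence $\mcH'$ is open. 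Third, the same argument applied coordinate-wise shows that $\bigcap_i \mcH_i'$ is open: if $v$ lies in $V(\mcH_i')$ for all $i$, then so do all edges incident to $v$. (In fact arbitrary intersections are open, so this is an Alexandrov topology, though only finite intersections are required.)

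For the characterization of closed subhypergraphs, let $\mcH'$ be a subhypergraph and let $\mcH^c$ be the subhypergraph with $V(\mcH^c) = V(\mcH) \setminus V(\mcH')$ and $E(\mcH^c) = E(\mcH) \setminus E(\mcH')$. Then $\mcH'$ is closed iff $\mcH^c$ is open, i.e.\ iff for every $v \notin V(\mcH')$, every edge of $\mcH$ incident to $v$ lies outside $E(\mcH')$. Contrapositively, this is the statement that whenever $v$ is incident to some $e \in E(\mcH')$, we must have $v \in V(\mcH')$, which is exactly the asserted characterization.

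There is no substantial obstacle here; the only thing to be careful about is the asymmetry between vertices and edges in the definition of $\mcN$. The closedness condition only constrains vertices incident to edges in $E(\mcH')$, not edges incident to vertices in $V(\mcH')$, and this asymmetry is precisely the dual of the definition of openness.
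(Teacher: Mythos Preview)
Your proof is correct and complete. The paper itself omits the proof, stating only that it is elementary, so there is nothing to compare against; your direct verification of the topology axioms and the contrapositive argument for the closedness characterization are exactly what is needed.
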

We use this topology when referring to closed subhypergraphs and
the closure of a subhypergraph.

\begin{defn}\label{D:generalizedmorphism}
    Let $\mcH_1$ and $\mcH_2$ be hypergraphs. A \emph{generalized morphism} 
    $\phi : \mcH_1 \arr \mcH_2$ consists of a pair of functions
    \begin{equation*}
        \phi_V : V(\mcH_1) \arr V(\mcH_2) \cup \{\vareps\} \text{ and }
        \phi_E : E(\mcH_1) \arr E(\mcH_2) \cup \{\vareps\},
    \end{equation*}
    where $\vareps$ is a special symbol not occurring in $V(\mcH_i) \cup E(\mcH_i)$, $i=1,2$,
    such that for all $v \in V(\mcH_1)$, 
    \begin{enumerate}[(1)]
        \item if $\phi_V(v) \neq \vareps$, then
            \begin{equation*}
                \sum_{e \in \phi_E^{-1}(e')} A(\mcH_1)_{ve} = A(\mcH_2)_{\phi_V(v)e'}, 
            \end{equation*}
            for all $e' \in E(\mcH_2)$, and
        \item if $\phi_V(v) = \vareps$, then 
            \begin{equation*}
                \sum_{e \in E(\mcH_1) \setminus \phi_E^{-1}(\vareps)} A(\mcH_1)_{ve}
            \end{equation*}
            is even, and $\phi_E(e_1) = \phi_E(e_2)$ for all edges $e_1,e_2 \in
            E(\mcH_1) \setminus \phi_E^{-1}(\vareps)$ incident to $v$.
    \end{enumerate}
    When there is no confusion, we write $\phi$ for both $\phi_V$ and $\phi_E$. 
    The composition $\phi_2 \circ \phi_1$ of two morphisms $\phi_1 : \mcH_1
    \arr \mcH_2$ and $\phi_2 : \mcH_2 \arr \mcH_3$ is defined by setting
    $\phi_2(\vareps)=\vareps$.
\end{defn}
\begin{prop}\label{P:basicgen}\ 
    \begin{enumerate}[(a)]
        \item If $v$ and $e$ are incident in $\mcH_1$, and $\phi : \mcH_1 \arr \mcH_2$ is a
            generalized morphism with $\phi(v) \in V(\mcH_2)$, $\phi(e) \in E(\mcH_2)$,
            then $\phi(v)$ and $\phi(e)$ are incident in $\mcH_2$. 
        \item If $\phi_1 : \mcH_1 \arr \mcH_2$ and $\phi_2 : \mcH_2 \arr
            \mcH_3$ are generalized morphisms, then $\phi_2 \circ \phi_1$
            is a generalized morphism.
    \end{enumerate}
\end{prop}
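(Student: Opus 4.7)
The plan is to verify both parts by direct unpacking of Definition \ref{D:generalizedmorphism}, treating the $\vareps$-bookkeeping carefully. The key observation powering part (b) is that the preimages $\phi_1^{-1}(e')$, for $e' \in E(\mcH_2) \cup \{\vareps\}$, are pairwise disjoint, so for any $e'' \in E(\mcH_3) \cup \{\vareps\}$ we have a disjoint decomposition
\begin{equation*}
    (\phi_2 \circ \phi_1)_E^{-1}(e'') = \bigsqcup_{e' \in \phi_2^{-1}(e'')} \phi_1^{-1}(e'),
\end{equation*}
which lets us rewrite any sum over $(\phi_2 \circ \phi_1)_E^{-1}(e'')$ as an iterated sum over $\phi_2^{-1}(e'')$ and $\phi_1^{-1}(e')$.

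For part (a), since $v$ and $e$ are incident in $\mcH_1$ we have $A(\mcH_1)_{ve} \geq 1$, and $\phi(v) \neq \vareps$ lets us apply condition (1) at $v$ with $e'' = \phi(e) \in E(\mcH_2)$. This gives $A(\mcH_2)_{\phi(v)\phi(e)} = \sum_{f \in \phi_E^{-1}(\phi(e))} A(\mcH_1)_{vf} \geq A(\mcH_1)_{ve} \geq 1$, proving incidence.

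For part (b), I would fix $v \in V(\mcH_1)$ and split into cases on the images. If $\phi_1(v)$ and $\phi_2(\phi_1(v))$ are both non-$\vareps$, then applying condition (1) for $\phi_1$ at $v$, followed by condition (1) for $\phi_2$ at $\phi_1(v)$, collapses the double sum via the disjoint decomposition above to $A(\mcH_3)_{(\phi_2 \circ \phi_1)(v) e''}$, establishing condition (1) for $\phi_2 \circ \phi_1$. If $(\phi_2 \circ \phi_1)(v) = \vareps$, I verify condition (2) in two subcases. When $\phi_1(v) = \vareps$, condition (2) for $\phi_1$ already supplies evenness of $\sum_{e \notin \phi_1^{-1}(\vareps)} A(\mcH_1)_{ve}$ and a common $\phi_1$-image $e' \in E(\mcH_2)$ for all such edges incident to $v$; the corresponding sum for $\phi_2 \circ \phi_1$ equals this even sum if $\phi_2(e') \neq \vareps$ and is $0$ otherwise, and the common-image condition is automatically inherited. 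When $\phi_1(v) \neq \vareps$ but $\phi_2(\phi_1(v)) = \vareps$, the disjoint decomposition combined with condition (1) for $\phi_1$ rewrites
\begin{equation*}
    \sum_{e \in E(\mcH_1) \setminus (\phi_2 \circ \phi_1)_E^{-1}(\vareps)} A(\mcH_1)_{ve} = \sum_{e' \in E(\mcH_2) \setminus \phi_2^{-1}(\vareps)} A(\mcH_2)_{\phi_1(v) e'},
\end{equation*}
which is even by condition (2) for $\phi_2$; the common-image property follows from part (a) applied to $\phi_1$ (edges $e$ incident to $v$ with $\phi_1(e) \neq \vareps$ land on edges incident to $\phi_1(v)$) together with condition (2) for $\phi_2$ at $\phi_1(v)$.

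The only real obstacle is the case analysis in part (b): there are two distinct ways a vertex can be killed by the composition, and in case 2a one must be careful that the edges surviving $\phi_1$ but possibly killed by $\phi_2$ still obey the common-image condition, while in case 2b one must invoke part (a) to ensure the relevant edges of $\mcH_2$ are actually incident to $\phi_1(v)$ before applying condition (2) for $\phi_2$. Everything else is a bookkeeping exercise with the disjoint-preimage identity.
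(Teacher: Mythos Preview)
Your proposal is correct and follows essentially the same approach as the paper: the same disjoint-preimage identity drives the verification of condition (1), and the same three-way case split on $(\phi_1(v), \phi_2(\phi_1(v)))$ handles condition (2), with part (a) invoked in exactly the place you indicate. The paper is marginally more explicit about the vacuous subcase in your case 2a (when no edge incident to $v$ survives $\phi_1$, so no common $e'$ exists), but your treatment covers it implicitly since the relevant sum is then zero.
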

\begin{proof}
    For part (a), let $e' = \phi(e) \in E(\mcH_2)$. If $v$ and $e$ are incident
    in $\mcH_1$, and $\phi(v) \neq \vareps$, then $A(\mcH_2)_{\phi(v)e'} \geq
    A(\mcH_1)_{ve} > 0$. So $e'$ and $\phi(v)$ are incident.
    For part (b), let $\phi = \phi_2 \circ \phi_1$. If $\phi_1(v) = \vareps$,
    then $\phi(v) = \phi_2(\vareps) = \vareps$ by definition. 
    So if $\phi(v) \neq \vareps$, then $\phi_1(v) \neq \vareps$, and
    \begin{align*}
        \sum_{e \in \phi^{-1}(e')} A(\mcH_1)_{ve} & = \sum_{e'' \in \phi_2^{-1}(e')} \;
            \sum_{e \in \phi_1^{-1}(e'')} A(\mcH_1)_{ve} \\
            & = \sum_{e'' \in \phi_2^{-1}(e')} A(\mcH_2)_{\phi_1(v)e''} = A(\mcH_3)_{\phi(v)e'}
    \end{align*}
    for all $e' \in E(\mcH_3)$. 

    Next, suppose $\phi(v) = \vareps$ and $\phi_1(v) \neq \vareps$. If $e_1,
    e_2 \in E(\mcH_1) \setminus \phi^{-1}(\vareps)$ are both incident to $v$,
    then $\phi(e_1)$ and $\phi(e_2)$ belong to $E(\mcH_2) \setminus
    \phi_2^{-1}(\vareps)$ and are incident to $\phi_1(v)$ by part (a). Thus
    $\phi(e_1) = \phi(e_2)$. Similarly, 
    \begin{align*}
        \sum_{e \in E(\mcH_1) \setminus \phi^{-1}(\vareps)} A(\mcH_1)_{ve}
      & = \sum_{e' \in E(\mcH_2) \setminus \phi_2^{-1}(\vareps)}\; \sum_{e \in \phi_1^{-1}(e')} A(\mcH_1)_{ve}\\
        & = \sum_{e' \in E(\mcH_2) \setminus \phi_2^{-1}(\vareps)} A(\mcH_2)_{\phi_1(v)e'}
    \end{align*}
    is even, since $\phi_2(\phi_1(v)) = \vareps$. 

    Finally, suppose $\phi(v) = \phi_1(v) = \vareps$. If $E(\mcH_1) \setminus
    \phi^{-1}(\vareps)$ does not contain any edges incident with $v$, then part
    (2) of Definition \ref{D:generalizedmorphism} is trivially satisfied.
    Suppose on the other hand that $E(\mcH_1) \setminus \phi^{-1}(\vareps)$
    contains edges incident with $v$. All such edges belong to $E(\mcH_1)
    \setminus \phi_1^{-1}(\vareps)$, and hence are sent by $\phi_1$ to some
    common edge $e' \in E(\mcH_2)$, where $\phi_2(e') \neq \vareps$. Conversely,
    any edge of $E(\mcH_1) \setminus \phi_1^{-1}(\vareps)$ incident with $v$
    is sent by $\phi_1$ to $e'$, and hence belongs to $E(\mcH_1) \setminus
    \phi^{-1}(\vareps)$. We conclude that all edges of $E(\mcH_1) \setminus
    \phi^{-1}(\vareps)$ incident with $v$ are sent to $\phi_2(e')$, and that
    \begin{equation*}
        \sum_{e \in E(\mcH_1) \setminus \phi^{-1}(\vareps)} A(\mcH_1)_{ve}
            = \sum_{e \in E(\mcH_1) \setminus \phi_1^{-1}(\vareps)} A(\mcH_1)_{ve}
    \end{equation*}
    is even. Consequently $\phi$ is a generalized morphism.
\end{proof}

\begin{figure}
    \begin{tikzpicture}[auto,ultra thick,vertex/.style={circle,draw,thin,inner sep=2.5},every node/.style={scale=.8},
                    hyperedge/.style={thin,pattern=north west lines}]

    \node[vertex] (1) {1};
    \node[vertex] (2) [right=of 1] {2}
        edge node [swap] {$a$} (1) ;
    \node[vertex] (3) [right=of 2] {3}
        edge node [swap] {$b$} (2) ;
    \node[vertex] (4) [below=of 2] {4}
        edge node {$c$} (1)
        edge node [swap] {$d$} (3) ;
    \node[vertex] (6) [below=of 4] {6};
    \node[vertex] (5) [left=of 6] {5}
        edge node {$e$} (1) 
        edge node {$f$} (4) ;
    \node[vertex] (7) [right=of 6] {7}
        edge node [swap] {$g$} (3) 
        edge node [swap] {$h$} (4) ;
    \node[vertex] (8) [below=of 6] {8};

    \node (dummy) [above=of 2] {};
    \draw [hyperedge] (2.100) to [out=135, in=180] (dummy) to [out=0,in=45] node {$i$} (2.80) to [bend right] cycle;
    \draw [hyperedge] (5.0) to [out=0,in=180] (6.180) to [bend right] (6.200) 
            to [out=-135,in=135] (8.160) to [bend right] (8.180)
            to [out=145,in=-60] node {$j$} (5.340);
    \draw [hyperedge] (7.180) to [out=180,in=0] (6.0) to [bend left] (6.340) 
            to [out=-45,in=45] (8.20) to [bend left] (8.0)
            to [out=35,in=-120] node [swap] {$k$} (7.200);

    \node[vertex] (1') at ($(3)+(2.7,0)$) {$1'$};
    \node[vertex] (3') at ($(1')+(3)-(1)$) {$3'$}
            edge node [swap] {$a'$} (1') ;
    \node[vertex] (5') at ($(7)+(2.7,0)$) {$5'$}
            edge node {$e'$} (1');
    \node[vertex] (6') at ($(5')+(6)-(5)$) {$6'$}
            edge node {$j'$} (5');
    \node[vertex] (7') at ($(5')+(7)-(5)$) {$7'$}
            edge node {$k'$} (6')
            edge node [swap] {$g'$} (3');

    \draw[hyperedge] (1'.325) to [out=-35,in=-145] (3'.215) to [bend right] (3'.235)
            to [out=-125,in=125] (7'.125) to [bend right] (7'.145) 
            to [out=145,in=35] (5'.35) to [bend right] (5'.55)
            to [out=55,in=-55] (1'.305);

    \node[fill=white] at ($(1')+(4)-(1)$) {$c'$};
    \node at ($(4)+(2.7,0)$) {$\implies$};
\end{tikzpicture}
    \caption{A generalized morphism $\phi$ between two hypergraphs. The
        morphism sends an edge or vertex $x$ to $x'$, with the following
        exceptions: $\phi(i) = \phi(2) = \phi(4) = \vareps$, $\phi(b) = a'$,
        $\phi(d) = \phi(f) = \phi(h) = c'$, and $\phi(8) = 6'$.}
    \label{F:genmorphism}
\end{figure}
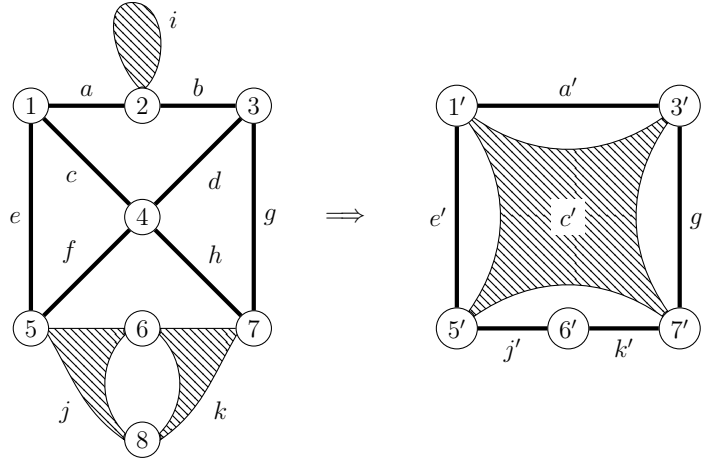
\begin{example}
    If $e$ is an edge in a hypergraph $\mcH$, we can delete $e$ to get
    a new hypergraph $\mcH \setminus e$. There is a generalized morphism
    $\mcH \arr \mcH \setminus e$ which sends $e \mapsto \vareps$. Similarly,
    we construct other generalized morphisms by identifying edges, deleting 
    isolated vertices, and collapsing vertices incident to an even number of
    edges (deleting a vertex and identifying all incident edges). An example of
    a series of these operations is shown in Figure \ref{F:genmorphism}.
\end{example}
Another way we can construct generalized morphisms is through subhypergraphs.
\begin{prop}\label{P:submorphism}
    Let $\mcH$ be a hypergraph.
    \begin{enumerate}[(a)]
        \item If $\mcH'$ is closed subhypergraph, then the function 
            \begin{equation*}
                r : V(\mcH) \cup E(\mcH) \arr V(\mcH') \cup E(\mcH') \cup \{\vareps\} : 
                    x \mapsto \begin{cases} x & x \in V(\mcH') \cup E(\mcH') \\
                                           \vareps & \text{ otherwise}
                              \end{cases}
            \end{equation*}
            is a generalized morphism $\mcH \arr \mcH'$.
        \item If $\mcH'$ is an open subhypergraph, then the inclusion map 
            \begin{equation*}
                \iota : V(\mcH') \cup E(\mcH') \arr V(\mcH) \cup E(\mcH) : x \mapsto x
            \end{equation*}
            is a generalized morphism $\mcH' \arr \mcH$.
    \end{enumerate}
\end{prop}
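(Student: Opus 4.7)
The plan is to verify Definition \ref{D:generalizedmorphism} directly in each case, relying on the characterization of closed subhypergraphs from Proposition \ref{P:topology} and the explicit description of $\mcN(\mcH')$. No new machinery is needed; both claims amount to unwinding the definition and invoking the right topological property at exactly one point.

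For part (a), I would split on whether a vertex $v \in V(\mcH)$ lies in $V(\mcH')$. If $v \in V(\mcH')$, then $r(v) = v \neq \vareps$, and since $r$ is the identity on $E(\mcH')$ and sends all other edges to $\vareps$, we have $r^{-1}(e') = \{e'\}$ for every $e' \in E(\mcH')$. Condition (1) then reduces to $A(\mcH)_{ve'} = A(\mcH')_{ve'}$, which holds because $\mcH'$ is a subhypergraph. If instead $v \notin V(\mcH')$, then $r(v) = \vareps$ and $E(\mcH) \setminus r^{-1}(\vareps) = E(\mcH')$. Here I would invoke the closedness characterization: since $\mcH'$ is closed and $v \notin V(\mcH')$, no edge of $E(\mcH')$ is incident to $v$. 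Therefore $A(\mcH)_{ve} = 0$ for every $e \in E(\mcH')$, so the sum in condition (2) is $0$ and hence even, and the requirement that all such edges be identified by $r$ is vacuous.

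For part (b), the inclusion $\iota$ is injective with $\iota(v) = v \neq \vareps$ for every $v \in V(\mcH')$, so only condition (1) needs checking. Since $\iota^{-1}(e')$ equals $\{e'\}$ when $e' \in E(\mcH')$ and $\emptyset$ otherwise, I would split on whether $e' \in E(\mcH')$. When $e' \in E(\mcH')$, the sum is $A(\mcH')_{ve'}$, which equals $A(\mcH)_{ve'}$ by the subhypergraph condition. When $e' \notin E(\mcH')$, the sum is $0$, and I must show $A(\mcH)_{ve'} = 0$. This is exactly where openness enters: the identity $\mcH' = \mcN(\mcH')$ forces every edge of $\mcH$ incident in $\mcH$ to $v$ to lie in $E(\mcH')$, so $e'$ cannot be incident to $v$ in $\mcH$, and $A(\mcH)_{ve'} = 0$ as required.

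There is no genuine obstacle — each direction is a case-check that collapses to a single application of Proposition \ref{P:topology} (for the closed case) or the definition of $\mcN$ (for the open case). The content of the proposition is really conceptual: it confirms that closed subhypergraphs give rise to retraction-like generalized morphisms and open subhypergraphs to inclusion-like ones, which justifies the open/closed terminology introduced above.
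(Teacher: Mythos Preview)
Your proof is correct and follows essentially the same approach as the paper's: verify condition~(1) of Definition~\ref{D:generalizedmorphism} for $r$ and $\iota$ using the subhypergraph identity $A(\mcH')_{ve'}=A(\mcH)_{ve'}$, and verify condition~(2) for $r$ using closedness (while noting it is vacuous for $\iota$), with openness entering exactly where you say, to handle edges $e'\notin E(\mcH')$ in part~(b). The paper's argument is just a terser version of the same case analysis.
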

\begin{proof}
    If $\mcH'$ is any subhypergraph, then part (1) of Definition
    \ref{D:generalizedmorphism} holds for $r$ and part (2) holds vacuously for
    $\iota$.  If $\mcH'$ is closed and $r(v) = \eps$, then $r(e) = \eps$ for
    all edges $e$ incident to $v$, so part (2) of Definition
    \ref{D:generalizedmorphism} holds for $r$. If $\mcH'$ is open, $v \in 
    V(\mcH')$, and $e' \in E(\mcH)$ is incident to $v$, then $e'$ belongs to
    $\mcH'$, and hence part (1) holds for $\iota$. 
\end{proof}
\begin{defn}\label{D:retract}
    A subhypergraph $\mcH'$ of $\mcH$ is a \emph{retract} if there is a
    generalized morphism $r : \mcH \arr \mcH'$ such that $r|_{\mcH'}$ is the
    identity. 
\end{defn}
Part (a) of Proposition \ref{P:submorphism} shows that every closed
subhypergraph is a retract. 

Morphisms can also be constructed by gluing together morphisms over open
subhypergraphs.
\begin{prop}\label{P:gluing}
    Let $\{\mcH_i\}_{i \in I}$ be a family of open subhypergraphs of $\mcH$
    such that $\bigcup \mcH_i = \mcH$, and let $\{\phi_i\}_{i \in I}$ be a
    family of generalized morphisms $\phi_i : \mcH_i \arr \mcH'$ such that
    $\phi_i|_{\mcH_i \cap \mcH_j} = \phi_j|_{\mcH_i \cap \mcH_j}$.
    Then there is a unique generalized morphism $\phi : \mcH \arr \mcH'$
    such that $\phi|_{\mcH_i} = \phi_i$. 
\end{prop}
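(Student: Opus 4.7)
The plan is to define $\phi$ in the only way allowed by the requirement $\phi|_{\mcH_i} = \phi_i$, and then verify the two clauses of Definition \ref{D:generalizedmorphism} by reducing everything to a single $\mcH_i$. Concretely, for each $x \in V(\mcH) \cup E(\mcH)$, pick any $i$ with $x \in \mcH_i$ (which exists because $\bigcup_i \mcH_i = \mcH$) and set $\phi(x) := \phi_i(x)$. The compatibility assumption $\phi_i|_{\mcH_i \cap \mcH_j} = \phi_j|_{\mcH_i \cap \mcH_j}$ makes this independent of the choice of $i$, so $\phi$ is well-defined, and uniqueness is forced by the same definition.

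The key observation that drives the verification is a locality property: if $v \in V(\mcH_i)$ and $e \in E(\mcH)$ is incident to $v$ in $\mcH$, then $e \in E(\mcH_i)$. This is exactly the characterization of openness from Proposition \ref{P:topology} (applied dually, using that $\mcH_i$ is open). Using this, I would verify condition (1) of Definition \ref{D:generalizedmorphism} as follows: given $v$ with $\phi(v) \neq \vareps$, pick $i$ with $v \in V(\mcH_i)$. For any $e' \in E(\mcH')$, the edges contributing to $\sum_{e \in \phi_E^{-1}(e')} A(\mcH)_{ve}$ are only those incident to $v$, all of which lie in $E(\mcH_i)$; so the sum equals $\sum_{e \in (\phi_i)_E^{-1}(e')} A(\mcH_i)_{ve} = A(\mcH')_{\phi_i(v) e'} = A(\mcH')_{\phi(v)e'}$, where the middle equality uses that $\phi_i$ is already a generalized morphism.

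For condition (2), suppose $\phi(v) = \vareps$, and again pick $i$ with $v \in V(\mcH_i)$; then $\phi_i(v) = \vareps$. By the same openness argument, the edges in $E(\mcH) \setminus \phi_E^{-1}(\vareps)$ incident to $v$ are precisely the edges in $E(\mcH_i) \setminus (\phi_i)_E^{-1}(\vareps)$ incident to $v$ (with the same incidence multiplicities, since $A(\mcH)_{ve} = A(\mcH_i)_{ve}$ on the subhypergraph). Thus the parity sum equals the corresponding sum for $\phi_i$ and is even, and any two such edges are sent by $\phi_i$, hence by $\phi$, to the same element of $E(\mcH')$.

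There is no real obstacle here beyond a careful bookkeeping: the only subtlety is that the sums in Definition \ref{D:generalizedmorphism} are indexed over all of $E(\mcH)$, whereas each $\phi_i$ only controls $E(\mcH_i)$. The openness hypothesis is precisely what forces the non-trivial summands (those incident to $v$) into a single $\mcH_i$, so the conditions for $\phi$ reduce without loss to the already-known conditions for $\phi_i$.
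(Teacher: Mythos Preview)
Your proposal is correct and follows essentially the same approach as the paper: define $\phi$ by gluing, then at each vertex $v$ use openness of $\mcH_i$ to conclude that every edge incident to $v$ already lies in $\mcH_i$, so the sums and conditions in Definition~\ref{D:generalizedmorphism} for $\phi$ reduce verbatim to those for $\phi_i$.
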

\begin{proof}
    Clearly $\phi$ is uniquely defined as a function. Given $v \in V(\mcH)$,
    find $\mcH_i$ with $v \in V(\mcH_i)$, so $\phi(v) = \phi_i(v)$. Since
    $\mcH_i$ is open, if $A_{ve}(\mcH) > 0$ then $e \in \mcH_i$ and
    $A(\mcH_i)_{ve} = A(\mcH)_{ve}$. Consequently, if $\phi(v) \neq \vareps$ then
    \begin{equation*}
        \sum_{e \in \phi^{-1}(e')} A(\mcH)_{ve} = \sum_{\substack{e \in \phi^{-1}(e') \\
            e \in E(\mcH_i)}} A(\mcH_i)_{ve} = \sum_{e \in \phi_i^{-1}(e')}
            A(\mcH_i)_{ve} = A(\mcH')_{\phi(v)e'}
    \end{equation*}
    since $\phi_i$ is a morphism. Similarly, if $\phi(v) = \vareps$, then
    \begin{equation*}
        \sum_{e \in E(\mcH) \setminus \phi^{-1}(\vareps)} A(\mcH)_{ve}
        = \sum_{e \in E(\mcH_i) \setminus \phi^{-1}(\vareps)} A(\mcH_i)_{ve}
        = \sum_{e \in E(\mcH_i) \setminus \phi_i^{-1}(\vareps)} A(\mcH_i)_{ve}
    \end{equation*}
    is even, and $\phi(e_1) = \phi_i(e_1) = \phi_i(e_2) = \phi(e_2)$ for all
    edges $e_1,e_2 \in E(\mcH) \setminus \phi^{-1}(\vareps)$ incident to $v$.
\end{proof}

By Proposition \ref{P:basicgen}, part (b), hypergraphs with generalized
morphisms form a category. The next proposition shows that the map $\mcH
\mapsto \Gamma(\mcH)$ extends to a functor from the category of 
hypergraphs to the category of groups over $\Z_2$.
\begin{prop}\label{P:morphism}
    Let $\phi : \mcH_1 \arr \mcH_2$ be a generalized morphism. Then there is a 
    morphism $\phi : \Gamma(\mcH_1) \arr \Gamma(\mcH_2)$ over $\Z_2$ defined by 
    \begin{equation*}
        \phi(x_e) = \begin{cases} 1 & \phi_E(e) = \vareps \\
                                        x_{\phi(e)} & \text{ otherwise}
                          \end{cases}.
    \end{equation*}
\end{prop}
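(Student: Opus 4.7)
The plan is to use the presentation of $\Gamma(\mcH_1)$ from Definition \ref{D:solutionpres} and verify that the map sending generators $x_e \mapsto x_{\phi(e)}$ (with $x_\vareps$ interpreted as the identity) and $J \mapsto J$ sends every defining relation to an identity in $\Gamma(\mcH_2)$. First I would dispatch the easy relations: the involution relations $x_e^2 = 1$ map to $x_{\phi(e)}^2 = 1$ or to $1 = 1$, the relation $J^2 = 1$ is preserved, and $J$ remains a central involution in $\Gamma(\mcH_2)$. The substantive content is to verify the image of each linear relation at $v \in V(\mcH_1)$, namely $x_{e_1} \cdots x_{e_n} = 1$, where $e_1, \ldots, e_n$ is some ordering of the edges incident to $v$ listed with multiplicity.

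I would then split into the two cases that mirror the two clauses of Definition \ref{D:generalizedmorphism}. In the first case, $\phi(v) = v' \in V(\mcH_2)$, and Proposition \ref{P:basicgen}(a) gives that every $\phi(e_i) \neq \vareps$ is incident to $v'$ in $\mcH_2$. In particular, any two such generators $x_{\phi(e_i)}$ and $x_{\phi(e_j)}$ commute in $\Gamma(\mcH_2)$: the presentation in Definition \ref{D:solutionpres} includes the linear relation at $v'$ for \emph{every} ordering of its incident edges, and combining two such orderings (as illustrated in Figure \ref{F:commuting}) yields the commuting relation. Using this commutativity I would discard the $\vareps$-factors and regroup to rewrite the image as $\prod_{e' \in E(\mcH_2)} x_{e'}^{n_{e'}}$, where $n_{e'} = \sum_{e \in \phi_E^{-1}(e')} A(\mcH_1)_{ve}$; by condition (1) of Definition \ref{D:generalizedmorphism}, this exponent coincides with $A(\mcH_2)_{v'e'}$, so the product is precisely (an ordering of) the linear relation at $v'$ in $\Gamma(\mcH_2)$, and hence equals $1$.

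In the second case, $\phi(v) = \vareps$. Condition (2) of Definition \ref{D:generalizedmorphism} asserts that all edges of $E(\mcH_1) \setminus \phi_E^{-1}(\vareps)$ incident to $v$ are sent by $\phi$ to a single common edge $e_0 \in E(\mcH_2)$, with even total incidence. After setting $x_{\phi(e_i)} = 1$ whenever $\phi(e_i) = \vareps$, the image of the linear relation collapses to $x_{e_0}^{2k} = 1$, which follows from the involution relation in $\Gamma(\mcH_2)$. I do not foresee a serious obstacle: the proposition is essentially a compatibility check between two presentations, and the two clauses of Definition \ref{D:generalizedmorphism} are tailored so that Case 1 absorbs the incident-image data via commutativity (condition (1)) and Case 2 absorbs the non-incident image data via parity (condition (2)). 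The only mildly delicate step is justifying that the reordering in Case 1 is legal inside $\Gamma(\mcH_2)$, which is why I would explicitly invoke both Proposition \ref{P:basicgen}(a) and the any-ordering form of the presentation from Definition \ref{D:solutionpres}.
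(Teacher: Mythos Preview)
Your proposal is correct and follows essentially the same approach as the paper: split on whether $\phi(v)=\vareps$, use condition~(2) and the involution relation in the $\vareps$ case, and in the other case use incidence of the images at $\phi(v)$ (your Proposition~\ref{P:basicgen}(a)) to commute and regroup into the linear relation at $\phi(v)$. The only cosmetic difference is that the paper explicitly separates out the sub-case where \emph{all} $\phi(e_i)=\vareps$ in the $\phi(v)=\vareps$ branch, whereas you fold it into the $2k=0$ instance of $x_{e_0}^{2k}=1$.
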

\begin{proof}
    The morphism, if it exists, will be uniquely determined by the values
    $\phi(x_e)$, $e \in E(\mcH_1)$. To show that the morphism is well-defined,
    we need to show that
    \begin{equation}\label{E:welldefined}
        \prod_{i=1}^n \phi(x_{e_i}) = 1
    \end{equation}
    for every vertex $v$ of $\mcH_1$ and ordering $e_1,\ldots,e_n$ of the edges
    incident to $v$, listed with multiplicity. Suppose that $\phi(v) =
    \vareps$. If $\phi(e_i) = \vareps$ for all $1 \leq i \leq n$, then
    $\phi(x_{e_i}) = 1$ for all $i$, so Equation \eqref{E:welldefined} holds.
    If $\phi(e_i) \neq \vareps$ for some $i$, then
    \begin{equation*}
        \prod_{j=1}^n \phi(x_{e_j}) = \prod_{\phi(e_j) \neq \vareps} x_{\phi(e_j)}
            = x_{\phi(e_i)}^K = 1,
    \end{equation*}
    where the last equality holds because $K = \sum_{e \in E(\mcH_1) \setminus
    \phi_E^{-1}(\vareps)} A_{ve}$ is even.  Hence Equation
    \eqref{E:welldefined} holds in this case as well.

    If $\phi(v) \neq \vareps$, then for all $1 \leq i \leq n$, either
    $\phi(e_i) = \vareps$ or $\phi(e_i)$ is incident to $\phi(v)$. We conclude
    that $\phi(x_{e_1}),\ldots,\phi(x_{e_n})$ commute in $\Gamma(\mcH_2)$.
    Consequently 
    \begin{equation*}
        \prod_{i=1}^n \phi(x_{e_i}) = \prod_{e' \in E(\mcH_2)} \prod_{e \in \phi_E^{-1}(e')}
            \phi(x_{e})^{A_{ve}} = \prod_{e' \in E(\mcH_2)} x_{e'}^{A_{\phi(v)e'}} = 1.
    \end{equation*}
    We conclude that the morphism $\phi$ is well-defined. 
\end{proof}
As a consequence, open retracts are special:
\begin{cor}\label{C:retract}
    If $\mcH'$ is an open subhypergraph of $\mcH$ and a retract of $\mcH$ then
    $\Gamma(\mcH')$ is a (semidirect factor) subgroup of $\Gamma(\mcH)$. 
\end{cor}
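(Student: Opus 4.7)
The plan is to exhibit $\Gamma(\mcH')$ as a retract of $\Gamma(\mcH)$ in the category of groups, by lifting the hypergraph retraction to a group-theoretic retraction via the functorial construction of Proposition \ref{P:morphism}. This will immediately give both the embedding and the semidirect decomposition.

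First I would use the two parts of Proposition \ref{P:submorphism} together with the retraction hypothesis. Since $\mcH'$ is open in $\mcH$, part (b) gives a generalized morphism $\iota : \mcH' \arr \mcH$, namely the inclusion. Since $\mcH'$ is a retract, there is by assumption a generalized morphism $r : \mcH \arr \mcH'$ with $r \circ \iota = \mathrm{id}_{\mcH'}$. Applying Proposition \ref{P:morphism} to each of these gives group morphisms $\iota_* : \Gamma(\mcH') \arr \Gamma(\mcH)$ and $r_* : \Gamma(\mcH) \arr \Gamma(\mcH')$.

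The key step is to verify functoriality of the assignment $\phi \mapsto \phi_*$, i.e. that $(\phi_2 \circ \phi_1)_* = (\phi_2)_* \circ (\phi_1)_*$. Because both sides are homomorphisms, it suffices to check agreement on a generator $x_e$ of $\Gamma(\mcH_1)$, and this is a short case analysis on whether $\phi_1(e) = \vareps$ or $\phi_2(\phi_1(e)) = \vareps$; in every case both sides equal $x_{\phi_2(\phi_1(e))}$ (interpreted as $1$ when the index is $\vareps$), directly from the definition in Proposition \ref{P:morphism}. Applied to $r \circ \iota = \mathrm{id}_{\mcH'}$, this yields $r_* \circ \iota_* = \mathrm{id}_{\Gamma(\mcH')}$.

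From here the conclusions are formal. The identity $r_* \circ \iota_* = \mathrm{id}_{\Gamma(\mcH')}$ forces $\iota_*$ to be injective, so $\Gamma(\mcH') \hookrightarrow \Gamma(\mcH)$, giving the subgroup assertion. Moreover, $\iota_* \circ r_*$ is an idempotent endomorphism of $\Gamma(\mcH)$ with image $\iota_*(\Gamma(\mcH'))$ and kernel $\ker(r_*)$, so the standard splitting yields the internal semidirect product decomposition
\begin{equation*}
    \Gamma(\mcH) = \ker(r_*) \rtimes \iota_*(\Gamma(\mcH')),
\end{equation*}
which is the ``semidirect factor'' claim. I do not expect any serious obstacle: the whole argument is formal once functoriality of $\phi \mapsto \phi_*$ is verified, and that verification is immediate from the defining formula. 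The only mild subtlety is keeping track of the $\vareps$-cases when composing, but Proposition \ref{P:basicgen}(b) already handled the analogous bookkeeping on the hypergraph side.
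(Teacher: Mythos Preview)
Your proposal is correct and follows essentially the same argument as the paper: both use the inclusion $\iota$ (from openness) and the retraction $r$, apply Proposition \ref{P:morphism} to pass to group homomorphisms, and conclude from $r_* \circ \iota_* = \mathrm{id}$ that $\iota_*$ is injective with the semidirect decomposition $\Gamma(\mcH) = \ker(r_*) \rtimes \iota_*(\Gamma(\mcH'))$. The only difference is that you spell out the functoriality check $(\phi_2 \circ \phi_1)_* = (\phi_2)_* \circ (\phi_1)_*$ on generators, whereas the paper leaves this implicit.
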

\begin{proof}
    Let $r$ be the retraction morphism $\mcH \arr \mcH'$, and let $\iota$
    be the inclusion $\mcH' \arr \mcH$. Then $r \circ \iota$ is the identity
    on $\mcH'$, so the composition $r \circ \iota : \Gamma(\mcH') \arr
    \Gamma(\mcH) \arr \Gamma(\mcH')$ is the identity morphism, and consequently
    $\iota : \Gamma(\mcH') \arr \Gamma(\mcH)$ must be injective. It also
    follows immediately that $\Gamma(\mcH) = N \rtimes \Gamma(\mcH')$, where
    $N$ is the kernel of $r : \Gamma(\mcH) \arr \Gamma(\mcH')$. 
\end{proof}
\begin{example}\label{Ex:cube}
    Consider the graph $\mcG$ of a cube, shown on the left in Figure
    \ref{F:cube}, with vertices numbered for reference. 
    \begin{figure}
        \begin{tikzpicture}[auto,ultra thick,scale=.5,vertex/.style={circle,draw,thin,inner sep=2.5},every node/.style={scale=.8},
                    hyperedge/.style={thin,pattern=north west lines},empty/.style={inner sep=0}]

    \begin{scope}
        \node[empty] (1) at (0.000000,0.000000) {};
        \node[empty] (2) at (4.095760,-1.212019) {};
        \node[empty] (3) at (1.227878,-2.942962) {};
        \node[empty] (4) at (-2.867882,-1.730943) {};
        \node[empty] (5) at (0.000000,4.531539) {};
        \node[empty] (6) at (4.095760,3.319520) {};
        \node[empty] (7) at (1.227878,1.588576) {};
        \node[empty] (8) at (-2.867882,2.800596) {};

        \draw[dashed] (1) -- (2); \draw[dashed] (1) -- (4); \draw[dashed] (1) -- (5); 
        \draw (2) -- (3); \draw (2) -- (6);
        \draw (3) -- (4); \draw (3) -- (7);
        \draw (4) -- (8);
        \draw (5) -- (6); \draw (5) -- (8); 
        \draw (6) -- (7); \draw (7) -- (8);

        \node[above left] at (1) {$1$};
        \node[below right] at (2) {$2$};
        \node[below right] at (3) {$3$};
        \node[below left] at (4) {$4$};
        \node[above left] at (5) {$5$};
        \node[above right] at (6) {$6$};
        \node[below right] at (7) {$7$};
        \node[above left] at (8) {$8$};
        
        \foreach \x in {1,...,8}
            \draw[fill] (\x) circle [radius=.15];
    \end{scope}

    \begin{scope}[shift={++(13,0)}]
        \node[empty] (1) at (0.000000,0.000000) {};
        \node[empty] (2) at (4.095760,-1.212019) {};
        \node[empty] (3) at (1.227878,-2.942962) {};
        \node[empty] (4) at (-2.867882,-1.730943) {};
        \node[empty] (5) at (0.000000,4.531539) {};
        \node[empty] (6) at (4.095760,3.319520) {};
        \node[empty] (7) at (1.227878,1.588576) {};
        \node[empty] (8) at (-2.867882,2.800596) {};
        \node[empty] (9) at (2.662,-2.077) {};

        \draw[dashed] (1) -- (2); \draw[dashed] (1) -- (4); \draw[dashed] (1) -- (5);
        \draw (2) -- (3); \draw (2) -- (6);
        \draw (3) -- (4); \draw (3) -- (7);
        \draw (4) -- (8);
        \draw (5) -- (6); \draw (5) -- (8);
        \draw (6) -- (7); \draw (7) -- (8);


        \node[above left] at (1) {$1$};
        \node[below right] at (2) {$2$};
        \node[below right] at (3) {$3$};
        \node[below left] at (4) {$4$};
        \node[above left] at (5) {$5$};
        \node[above right] at (6) {$6$};
        \node[below right] at (7) {$7$};
        \node[above left] at (8) {$8$};
        \node[below right] at (9) {$9$};
        
        \foreach \x in {1,...,9}
            \draw[fill] (\x) circle [radius=.15];
    \end{scope}

\end{tikzpicture}
        \caption{In the hypergraph of the cube (left), the open neighbourhood
            of the bottom face is a retract. In the hypergraph on the right,
            this is no longer the case.}
        \label{F:cube}
    \end{figure}
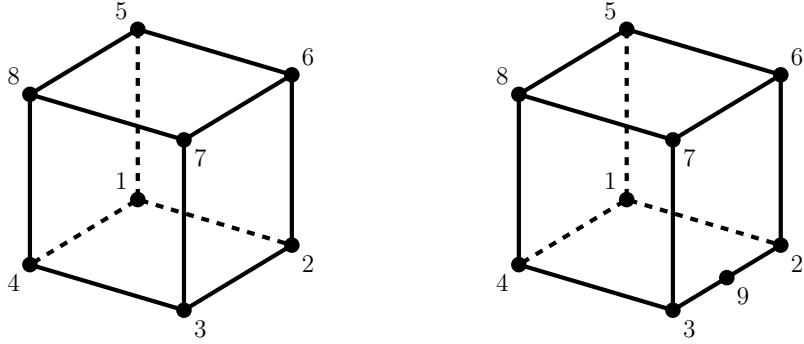
    The open neighbourhood of $\{1,2,3,4\}$ is the subhypergraph $\mcH$ with
    vertex set $\{1,2,3,4\}$ and edge set $\{12,23,34,14,15,26,37,48\}$. The
    morphism $\mcG \arr \mcH$ which is the identity on $\mcH$, and sends
    \begin{align*}
        & 5 \mapsto 1, \quad 6 \mapsto 2, \quad 7 \mapsto 3, \quad 8 \mapsto 4, \text{ and} \\
        & 56 \mapsto 12,\quad  67 \mapsto 23,\quad  78 \mapsto 34,\quad  58 \mapsto 14,
    \end{align*}
    is a retract of $\mcG$ onto $\mcH$. 

    On the other hand, if we subdivide the edge $12$ with a vertex as shown
    on the right of Figure \ref{F:cube}, then the open neighbourhood $\mcH$ of
    $\{1,2,3,4,9\}$ is no longer a retract of $\mcG$, since any retract must
    send $6 \mapsto 2$ and $7 \mapsto 3$, but $23$ is not an edge of $\mcH$.
    However, the open neighbourhood of $\{5,6,7,8\}$ is still a retract, since
    we can define the morphism as above but with edges $29$ and $39$ sent to
    $67$, and vertex $9$ sent to $\vareps$. 
\end{example}

\begin{figure}
    \begin{tikzpicture}[auto,vertex/.style={circle,draw,thin,inner sep=2.5},every node/.style={scale=.8},
                    hyperedge/.style={thin,pattern=north west lines},scale=.6]

    \begin{scope}
        \draw (5,5) circle [radius=4.8];
        \node[vertex] (2) at (1.75,6) {2};
        \node[vertex] (3) at (3,7.5) {3};
        \node[vertex] (1) at (3,4.5) {1};
        \node[vertex] (4) at (4.25,6) {4};
        \node[vertex] (7) at (6.5,8.5) {7};
        \node[vertex] (5) at (6,5) {5};
        \node[vertex] (6) at (8.5,4) {6};
        \node[vertex] (8) at (6,3) {8};
        \node[vertex] (5') at (4.25,2) {5};

        \begin{scope}[ultra thick]
            \clip (5,5) circle [radius=4.8];
            \draw (0,7) -- node {$i$} (2) 
                        -- node [swap] {$a$} (1) 
                        -- node {$c$} (4)
                        -- node [swap] {$d$} (3)
                        -- node [swap] {$b$} (2);
            \draw (4) -- node [swap,pos=.6] {$h$} (7) -- node [swap] {$g$} (3);
            \draw (7) -- node [swap,pos=.2] {$k$} (10,10);
            \draw (1) to [bend right] node [pos=.4,swap] {$e$} (5);
            \draw (5) -- node [swap,pos=.6] {$f$} (4);
            \draw (5) -- node {$j$} (6)
                      -- node {$k$} (8)
                      -- node {$j$} (5')
                      -- node [swap,pos=.35] {$e$} (0,0);
            \draw (5') -- node [pos=.55] {$f$} (4.25,0);
        \end{scope}
    \end{scope}

    \begin{scope}[shift={++(12,0)}]
        \draw (5,5) circle [radius=4.8];
        \coordinate (2) at (1.75,6);
        \node[vertex] (3) at (3,7.5) {$3'$};
        \node[vertex] (1) at (3,4.5) {$1'$};
        \node (4) at (4.25,6) {};
        \node[vertex] (7) at (6.5,8.5) {$7'$};
        \node[vertex] (5) at (6,5) {$5'$};
        \node[vertex] (6) at (8.5,4) {$6'$};
        \node[vertex] (8) at (6,3) {$6'$};
        \node[vertex] (5') at (4.25,2) {$5'$};

        \begin{scope}[ultra thick]
            \clip (5,5) circle [radius=4.8];
            \draw (1) to [out=20,in=-20] node {$c'$} (3);
            \draw (3) to [out=-170,in=170] node {$a'$} (1);
            \draw (5) to [out=170,in=-135] node [swap,pos=.65] {$c'$} (7);
            \draw (7) -- node [swap] {$g'$} (3);
            \draw (7) -- node [swap,pos=.12] {$k'$} (10,10);
            \draw (1) to [bend right] node [swap,pos=.4] {$e'$} (5);
            \draw (5) -- node {$j'$} (6)
                      -- node {$k'$} (8)
                      -- node {$j'$} (5')
                      -- node [swap,pos=.3] {$e'$} (0,0);
            \draw (5') -- node [pos=.55] {$c'$} (4.25,0);
        \end{scope}
    \end{scope}
    \node at (11,5) {$\implies$};
\end{tikzpicture}
    \caption{The generalized morphism from Figure \ref{F:genmorphism} applied
             to a picture. A choice must be made in regards to the edges formerly
             incident to vertex $4$. We have chosen to connect $1' - 3'$ and
             $5' - 7'$ instead of $3' - 7'$ and $1' - 5'$.}
    \label{F:morphismpic}
\end{figure}
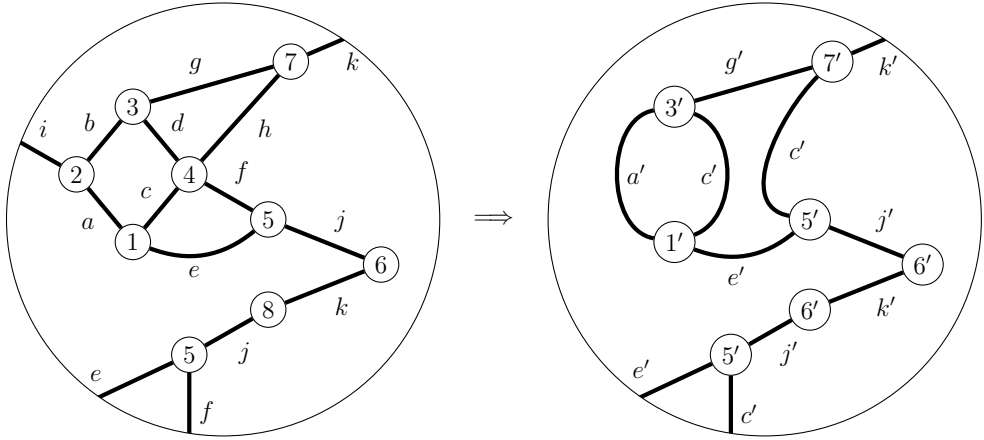
Finally, we can also apply generalized morphisms to pictures.
\begin{prop}\label{P:morphismpic}
    Let $\phi : \mcH_1 \arr \mcH_2$ be a generalized morphism, and let $\mcP$
    be an $\mcH_1$-picture with $\bd(\mcP) = e_1 \cdots e_n$. Construct a new
    picture $\mcP'$ as follows:
    \begin{enumerate}[(1)]
        \item If $e$ is an edge of $\mcP$ such that $\phi(h(e)) = \vareps$, then 
            delete $e$ from $\mcP$. 
        \item For all remaining edges $e$ of $\mcP$, change the label from
            $h(e) \in E(\mcH_1)$ to $\phi(h(e)) \in E(\mcH_2)$. 
        \item If $v$ is a vertex of $\mcP$ such that $\phi(h(v)) = \vareps$, then
            after applying steps (1) and (2), there is an even number of edges
            incident with $v$, and all have the same label in $E(\mcH_2)$. Delete $v$, and
            connect up the remaining incident edges so that no pair of edges
            cross. 
        \item For all remaining vertices $v$ of $\mcP$, change the label from
            $h(v) \in E(\mcH_1)$ to $\phi(h(v)) \in E(\mcH_2)$. 
    \end{enumerate}
    Then $\mcP'$ is an $\mcH_2$-picture of size less than or equal to the size
    of $\mcP$, with $\bd(\mcP') = \phi(e_1) \cdots \phi(e_n)$, where $\vareps$
    is regarded as the empty word.
\end{prop}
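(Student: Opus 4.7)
The plan is to verify each of the four construction steps in turn, with essentially all the substance concentrated in step (3).

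First, I would check that step (3) is well-defined. At a vertex $v$ with $\phi(h(v)) = \vareps$, Definition \ref{D:Hpicture} says the edges of $\mcP$ incident to $v$ with label $e \in E(\mcH_1)$ occur with multiplicity exactly $A(\mcH_1)_{h(v),e}$. After step (1) only those $e$ with $\phi_E(e) \neq \vareps$ survive, so the total number of surviving incident edges is
$$\sum_{e \in E(\mcH_1) \setminus \phi_E^{-1}(\vareps)} A(\mcH_1)_{h(v),e},$$
which is even by clause (2) of Definition \ref{D:generalizedmorphism}. That same clause says any two such edges share a common image $e' \in E(\mcH_2)$, so after the relabeling in step (2) all remaining edges at $v$ carry the same label. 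Finally, it is an elementary planar fact that $2k$ simple arcs meeting at a single interior point of $\mcD$ can always be paired into $k$ pairwise non-crossing simple arcs inside any small disk around that point (for instance, by joining cyclically consecutive pairs), so the surgery is realizable.

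Next, I would verify that the output $\mcP'$ satisfies Definition \ref{D:Hpicture}. Pick a surviving vertex $v$ of $\mcP'$, so $\phi(h(v)) \neq \vareps$. For any $e' \in E(\mcH_2)$, the number of $\mcP'$-edges incident to $v$ with label $e'$ is
$$\sum_{e \in \phi_E^{-1}(e')} A(\mcH_1)_{h(v),e} = A(\mcH_2)_{\phi(h(v)),e'},$$
by clause (1) of Definition \ref{D:generalizedmorphism}; this is precisely the required incidence condition. The reconnection arcs added in step (3) have no endpoints at surviving vertices, so they do not affect these counts.

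For the boundary, each label $e_i$ corresponds to a boundary endpoint of some picture-edge on $\partial \mcD$. Edges with $\phi(e_i) = \vareps$ vanish in step (1); the remaining boundary-incident edges are relabeled to $\phi(e_i)$ in step (2). In step (3), two boundary-incident arcs meeting at a deleted vertex may be merged into a single arc with both endpoints on $\partial \mcD$, but such an arc contributes its (common) label twice to the cyclic boundary reading, in the same cyclic positions as before. Hence $\bd(\mcP') = \phi(e_1) \cdots \phi(e_n)$ as cyclic words over $E(\mcH_2)$, with $\vareps$ read as the empty word.

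Finally, the size bound is immediate: no vertex is ever added, and some are deleted in step (3), so $|\mcP'| \leq |\mcP|$. The main obstacle I expect is the planar reconnection in step (3), and even this is straightforward once the parity and common-label conditions are extracted from Definition \ref{D:generalizedmorphism}(2); the rest of the proof is bookkeeping against Definitions \ref{D:Hpicture} and \ref{D:generalizedmorphism}.
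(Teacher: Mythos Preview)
Your proposal is correct and follows essentially the same approach as the paper: the core verification uses clause~(1) of Definition~\ref{D:generalizedmorphism} to check the incidence counts at surviving vertices, and the size bound is immediate since no vertices are added. The paper's proof is considerably terser---it states only these two points and leaves the well-definedness of step~(3) and the boundary claim implicit---whereas you spell out the use of clause~(2) for the parity and common-label conditions at deleted vertices and track the boundary word explicitly; this extra detail is sound and does not diverge from the paper's argument.
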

\begin{proof}
    To show that this picture is an $\mcH_2$-picture, we need to show that
    there are $A(\mcH_2)_{v'e'}$ edges labelled by $e'$ incident to any vertex labelled
    by $v'$ in $\mcP'$. But this follows immediately from the construction and
    part (1) of Definition \ref{D:generalizedmorphism}. Since the construction
    does not add any vertices, the size of $\mcP'$ must be at most the size of
    $\mcP$.
\end{proof}
An example is given in Figure \ref{F:morphismpic}. Since step (3) in
Proposition \ref{P:morphismpic} requires choosing a matching on edges, the
graph $\mcP'$ we end up with is not unique. One exception is when $\phi$
is one of the morphisms $r$ or $\iota$ from Proposition \ref{P:submorphism};
in this case, there is always a unique choice, and hence $\mcP'$ is uniquely
determined. However, in general the different choices can differ by a sequence of
null-surgeries.
\begin{defn}
    If $\phi : \mcH_1 \arr \mcH_2$ is a generalized morphism and $\mcP$ is an
    $\mcH_1$-picture, we use $\phi(\mcP)$ to denote either the null-surgery
    equivalence class of pictures constructed in Proposition
    \ref{P:morphismpic}, or some arbitrarily chosen representative of this
    class. 

    If $r : \mcH \arr \mcH'$ is the retraction morphism onto a closed
    subhypergraph, and $\mcP$ is a $\mcH$-picture, then we also denote
    $r(\mcP)$ by $\mcP[\mcH']$. 
\end{defn}
If $\iota$ is the inclusion of an open subhypergraph $\mcH'$ in $\mcH$, and
$\mcP$ is an $\mcH'$-picture, then $\mcP$ and $\iota(\mcP)$ are essentially
identical. In this case, Proposition \ref{P:morphismpic} states the obvious
fact that every picture over $\mcH'$ can be regarded as a picture over $\mcH$.
Note that if such a picture is character (or $b$)-minimal as an $\mcH$-picture,
then it is also minimal as an $\mcH'$-picture; however, the converse is not
true. 

\section{Cycles and outer faces}\label{S:cycle} 
In this section, we lay the foundation for the proof of Theorem
\ref{T:invembedding} by looking at the interaction between pictures and cycles
in hypergraphs. 

\begin{defn}\label{D:cycle}
    A \emph{cycle} in a hypergraph $\mcH$ is a closed subhypergraph $\mcC$
    which is a simple connected $2$-regular graph. A cycle $\mcC$ is
    \emph{cubic} if every vertex $v \in V(\mcC)$ has degree three in $\mcH$.

    A $\mcC$-cycle in an $\mcH$-picture is a simple cycle $C$ such that every
    edge of $C$ is labelled by an edge of $\mcC$. 
\end{defn}

\begin{lemma}\label{L:retractcycle}
    Let $\mcC$ be a cycle in a hypergraph $\mcH$, and suppose $\mcP$ is an
    $\mcH$-picture such that the edges of $\mcC$ do not appear in $\bd(\mcP)$.
    Then every connected component of $\mcP[\mcC]$ is a $\mcC$-cycle.
\end{lemma}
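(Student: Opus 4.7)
The plan is to analyze which vertices and edges of $\mcP$ survive in the retract $\mcP[\mcC] = r(\mcP)$, where $r : \mcH \arr \mcC$ is the retraction morphism supplied by Proposition \ref{P:submorphism}(a). By construction $r$ sends every vertex and every edge of $\mcH$ outside $\mcC$ to $\vareps$, so applying Proposition \ref{P:morphismpic} shows that the surviving vertices of $\mcP[\mcC]$ are exactly the $v \in V(\mcP)$ with $h(v) \in V(\mcC)$, and the surviving edges are exactly the $e \in E(\mcP)$ with $h(e) \in E(\mcC)$.

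First I would rule out ``dangling'' surviving edges. If a surviving edge $e$ were incident to the boundary of $\mcD$, then $h(e) \in E(\mcC)$ would appear in $\bd(\mcP)$, contradicting the hypothesis. If $e$ had an endpoint at some $v \in V(\mcP)$, then $h(v)$ would be incident to $h(e) \in E(\mcC)$ in $\mcH$, so closedness of $\mcC$ (Proposition \ref{P:topology}) forces $h(v) \in V(\mcC)$, i.e.\ $v$ survives. Hence every edge of $\mcP[\mcC]$ is either a closed loop or has both endpoints among the surviving vertices of $\mcP[\mcC]$.

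Next I would compute the local degrees. For any surviving vertex $v$, the hypergraph $\mcC$ is simple and $2$-regular, so $h(v)$ is incident in $\mcC$ to exactly two distinct edges $e'_1, e'_2$ with $A(\mcH)_{h(v), e'_i} = 1$ (while $A(\mcH)_{h(v), e'} = 0$ for any other edge $e' \in E(\mcC)$). By the labelling axiom in Definition \ref{D:Hpicture}, the edges of $\mcP$ incident to $v$ with label in $E(\mcC)$ consist of exactly one edge labelled $e'_1$ and exactly one edge labelled $e'_2$. Therefore $v$ has degree exactly $2$ in $\mcP[\mcC]$, and its two incident edges carry distinct $\mcC$-labels.

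The conclusion is now topological. The retract $\mcP[\mcC]$ is a planar embedded multigraph in which every vertex has degree $2$ and every non-loop edge has both endpoints at vertices; moreover, by Definition \ref{D:picture}(iv), distinct edges meet only at shared vertex endpoints. A planar $2$-regular graph with these properties is a disjoint union of simple closed curves, and each such closed curve is a simple cycle in the sense of Section \ref{S:pictures} with every edge labelled in $E(\mcC)$, i.e.\ a $\mcC$-cycle. I do not anticipate any serious obstacle: the only delicate point is making sure that closedness of $\mcC$ in $\mcH$ is what prevents a surviving edge from being attached to a deleted vertex or to the boundary, which is precisely the characterisation of closed subhypergraphs given by Proposition \ref{P:topology}.
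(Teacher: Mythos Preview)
Your proposal is correct and follows essentially the same approach as the paper: the paper's proof is simply the two-sentence observation that $\mcP[\mcC]$ is a closed $\mcC$-picture in which every vertex has degree two, so every connected component is a simple closed curve. Your argument is a more explicit unpacking of exactly these points, invoking closedness of $\mcC$ and the labelling axiom to justify the degree-two claim and the absence of boundary edges.
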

\begin{proof}
    $\mcP[\mcC]$ is a closed $\mcC$-picture in which every vertex has
    degree two. As a result, every connected component will be a 
    simple closed curve.
\end{proof}
In several upcoming proofs, we will use the following measure of the complexity 
of $\mcP$. 
\begin{defn}
    If $\mcC$ is a cycle in $\mcH$, and $\mcP$ is an $\mcH$-picture, then we let
    $\Cycle(\mcP,\mcC)$ denote the number of $\mcC$-cycles in $\mcP$.  If
    $\Phi$ is a collection of cycles, we let
    \begin{equation*}
        \Cycle(\mcP,\Phi) = \sum_{\mcC \in \Phi} \Cycle(\mcP,\mcC).
    \end{equation*}
\end{defn}
If we include incident vertices, then a $\mcC$-cycle $C$ is a closed
$\mcC$-picture, and thus the definition of cover from Definition \ref{D:cover}
applies. If $C$ is both facial and a cover of $\mcC$, then we say that $C$ is a
\emph{facial cover}. Note that closed loops are not covers. 
\begin{prop}\label{P:facialcover}
    If $\mcC$ is a cubic cycle in $\mcH$, and $\mcP$ is a character-minimal
    $\mcH$-picture with no closed loops, then every facial $\mcC$-cycle in
    $\mcP$ is a facial cover.
\end{prop}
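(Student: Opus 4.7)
My plan is to argue by contradiction. Suppose $C$ is a facial $\mcC$-cycle in the character-minimal picture $\mcP$, bounding a face $f$, but $C$ is not a cover of $\mcC$. Then some edge $e$ of $C$ must violate the cover condition, either by being a loop (both endpoints at a single vertex $v$) or by having distinct endpoints $v_1 \neq v_2$ with $h(v_1) = h(v_2)$. I would rule out loops first: if $e$ is a loop at $v$ with label $e' \in E(\mcC)$, then $e$ contributes $2$ to the multiplicity of $e'$ at $v$, forcing $A(\mcH)_{h(v),e'} \geq 2$; but $h(v) \in V(\mcC)$ and $\mcC$ is a simple graph, so $A(\mcC)_{h(v),e'} \leq 1$, and since $\mcC$ is a subhypergraph of $\mcH$, this entry equals $A(\mcH)_{h(v),e'}$, a contradiction.

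Hence we are in the case $v_1 \neq v_2$ with $h(v_1) = h(v_2) = w \in V(\mcC)$. Because $\mcC$ is cubic, $w$ has degree $3$ in $\mcH$, namely two edges $\alpha,\beta$ in $\mcC$ and one edge $\gamma \notin E(\mcC)$. Each $v_j$ has degree $3$ in $\mcP$; two of its incident edges lie on $C$ (hence are labeled by $\mcC$-edges at $w$) and a third ``external'' edge $g_j$ goes off $C$. Since $A(\mcH)_{w,\alpha}=A(\mcH)_{w,\beta}=A(\mcH)_{w,\gamma}=1$, the two $C$-edges at $v_j$ must carry the labels $\alpha,\beta$ (one each) and the external edge $g_j$ must be labeled $\gamma$. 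In particular the external edges at both $v_1$ and $v_2$ carry the label $\gamma$, and the two $C$-edges adjacent to $e$ on either side both carry the same label (say $\beta$) if $e$ is labeled $\alpha$. Because $C$ is facial with $f$ the bounded (empty) face on one side, both $g_1$ and $g_2$ must exit on the non-$f$ side of $C$.

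Now I would perform a surgery. Let $\mcD$ be a small closed disk containing $v_1$, $v_2$, the edge $e$, and short initial segments of $g_1,g_2$ and of the two neighbouring $C$-edges; make $\mcD$ transverse to $\mcP$. Reading $\partial\mcD$ counter-clockwise, we encounter the four boundary edges in the cyclic order $\beta,\gamma,\gamma,\beta$, with the two $\beta$-endpoints separated from each other by the pair of $\gamma$-endpoints (the $f$-side of $\partial\mcD$ contains no endpoints). Build a replacement picture $\mcP''$ inside $\mcD$ with zero vertices by drawing two non-crossing chords: a $\beta$-chord routed through the $f$-side of $\mcD$ (possible because $f$ is empty) joining the two $\beta$-endpoints, and a $\gamma$-chord on the other side joining the two $\gamma$-endpoints. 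Then $\mcP''$ has the same boundary word as $\germ(\mcP,\mcD)$, so we obtain a surgery $\mcP \surg \mcP_1$. The germ contributes character $2 e_w \equiv 0 \pmod 2$ while $\mcP''$ contributes $0$, so $\ch(\mcP_1) = \ch(\mcP)$, yet $\mcP_1$ has two fewer vertices, contradicting character-minimality of $\mcP$.

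The main obstacle is really the bookkeeping in the middle paragraph: one must use the cubicity and simplicity of $\mcC$ together with the $\mcH$-picture axioms to pin down the labels at $v_1$ and $v_2$ exactly, so that the resulting boundary word around $\mcD$ admits a non-crossing pairing by like labels; the facial hypothesis on $C$ is what provides the empty region in which to route one of those chords.
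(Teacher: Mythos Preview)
Your proof is correct and follows essentially the same approach as the paper's. Both arguments locate an edge of $C$ whose two endpoints carry the same $\mcH$-label, use cubicity of $\mcC$ to pin down the labels on the three edges at each endpoint (two $\mcC$-edges and one external edge, all with multiplicity one), observe that faciality forces the two external edges to lie on the same side of $C$, and then perform a two-vertex--deleting surgery that reconnects the matching $\beta$- and $\gamma$-strands without crossings, contradicting character-minimality. Your treatment of the loop case and of the cyclic order on $\partial\mcD$ is slightly more explicit than the paper's, but the content is the same; the parenthetical about routing through the $f$-side is unnecessary (everything inside $\mcD$ has been excised), though it does no harm.
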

\begin{proof}
    Suppose that $C$ is a facial $\mcC$-cycle. Since $\mcC$ is not a loop at a
    vertex, and $\mcP$ has no closed loops, every edge of $C$ must be incident
    to two distinct vertices of $\mcP$. Suppose that $C$ has an edge $e$ connecting two
    vertices $u_0$ and $u_1$ with $h(u_0) = v = h(u_1)$. By hypothesis, $v$
    has degree $3$ in $\mcH$, and thus is incident with $e'=h(e)$, another edge
    $f'$ of $\mcC$, and an edge $g'$ which is not in $\mcC$. Hence each vertex
    $u_i$ is incident with an edge $f_i$ belonging to $C$ with $h(f_i) = f'$, and
    an edge $g_i$ not in $C$ with $h(g_i) = g'$. Note that the edges $f_0$ and
    $f_1$ could be equal, as could $g_0$ and $g_1$. Since $C$ is facial, $g_0$
    and $g_1$ must lie on the same side of $C$. Thus there is a surgery which
    removes $u_0$ and $u_1$, connecting $g_0$ with $g_1$ and $f_0$ with $f_1$,
    as shown below:
    \begin{equation*}
        \begin{tikzpicture}[auto,vertex/.style={circle,draw,thin,inner sep=2.5},every node/.style={scale=.8}, ultra thick]

    \node [vertex] (a) {$v$};
    \node [vertex] (b) [right=of a] {$v$} 
        edge node {$e'$} (a);

    \node (l) [left=of a] {}
        edge node [swap] {$f'$} (a);
    \node (t1) [above=of a] {}
        edge node {$g'$} (a);
    \node (r) [right=of b] {}
        edge node {$f'$} (b);
    \node (t2) [above=of b] {}
        edge node {$g'$} (b);

    \node (m) at ($(r)+(.6,0)+.5*(t1)-.5*(a)$) {$\surg$};

    \node (l') at ($(r)+(1.2,0)$) {};
    \node (r') at ($(l')+(r)-(l)$) {}
        edge node {$f'$} (l');
    \node (t1') at ($(l')+(t1)-(l)$) {};
    \node (t2') at ($(l')+(t2)-(l)$) {}
        edge [out=-90,in=-90] node {$g'$} (t1');
    
\end{tikzpicture}
    \end{equation*}
    But this means that $\mcP$ is not character-minimal.        
\end{proof}

A connected closed cover $C$ of a cycle $\mcC$ is determined up to isotopy by an
orientation of $\mcC$, and the \emph{ply}, that is, the size of $h^{-1}(v)$ for
any $v \in V(\mcC)$, where $h : C \arr \mcC$ is the labelling function.
\begin{defn} 
    A $\mcC$-cycle $C$ is a \emph{copy} of $\mcC$ if the labelling function $h
    : C \arr \mcC$ is a graph isomorphism, or equivalently if $|h^{-1}(v)|=1$
    for all $v \in V(\mcC)$. 
\end{defn}

We are also interested in how different cycles interact. 
\begin{lemma}\label{L:boundary1}
    Let $\mcC$ be a cycle in $\mcH$, and let $\mcD$ be a simple region in an
    $\mcH$-picture $\mcP$. If $C$ is a facial $\mcC$-cycle in $\mcP$, then either $C$
    is contained in $\mcD$, or the edges of $\germ(C,\mcD)$ form the boundaries
    of simple outer faces of $\germ(\mcP,\mcD)$. 
\end{lemma}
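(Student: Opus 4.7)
The plan is to analyze the intersection of $C$ with the boundary of $\mcD$, exploiting the fact that a facial cycle bounds a face of $\mcP$ containing no vertices or edges of $\mcP$.

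First, I would choose $\eps > 0$ small enough that $\mcD^{\eps}$ is transverse to $\mcP$ (and in particular to $C$), so that $\germ(\mcP,\mcD) = \res(\mcP,\mcD^{\eps})$ and $\germ(C,\mcD) = \res(C,\mcD^{\eps})$. If $C \not\subset \mcD$, then $\partial \mcD^{\eps}$ meets $C$ transversely in a nonempty finite set of points, cutting $C$ into arcs that alternate between $\mcD^{\eps}$ and its complement; the ``inside'' arcs account for all edges of $\germ(C,\mcD)$, up to further subdivision at vertices of $\germ(\mcP,\mcD)$ lying on $C$.

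Second, I would consider the face $F$ of $\mcP$ bounded by $C$. Because $C$ is a simple closed curve, $\overline{F}$ is a topological closed disk, and by definition $F$ contains no vertices or edges of $\mcP$. Thus $\partial \mcD^{\eps} \cap \overline{F}$ is a disjoint union of properly embedded arcs in $\overline{F}$, each with both endpoints on $\partial F = C$. I would then invoke the standard planar-topology fact that such a system of disjoint arcs subdivides $\overline{F}$ into simply connected closed subregions, each bounded by a simple closed curve whose edges alternate between arcs of $\partial \mcD^{\eps}$ and arcs of $C$. This is the only nontrivial topological input; it follows by induction on the number of arcs using the Jordan arc theorem.

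Third, I would observe that the pieces of this subdivision lying inside $\mcD^{\eps}$ are exactly the connected components of $F \cap \mcD^{\eps}$. Each such component $U$ is a simple region whose boundary in $\germ(\mcP,\mcD)$ consists of edges of $\germ(C,\mcD)$ together with arcs of $\partial \mcD^{\eps}$; its interior contains no vertices or edges of $\germ(\mcP,\mcD)$ (since $F$ contained none of $\mcP$), and it meets $\partial \mcD^{\eps}$ along at least one arc, so $U$ is a simple outer face of $\germ(\mcP,\mcD)$. Every edge of $\germ(C,\mcD)$ is a subarc of some inside arc of $C$, hence lies on the boundary of one of these simple outer faces, which is exactly the conclusion of the lemma. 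The main obstacle is conceptual rather than technical: one must be careful that the arcs of $C \cap \mcD^{\eps}$ and of $\partial \mcD^{\eps} \cap \overline{F}$ fit together into simple closed boundaries even when $C$ crosses $\partial \mcD^{\eps}$ many times, and the disk-subdivision lemma is what guarantees this.
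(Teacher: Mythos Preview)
Your proposal is correct and follows essentially the same approach as the paper's proof: choose $\eps$ so that $\mcD^{\eps}$ is transverse to $\mcP$, observe that the arcs of $C\cap\mcD^{\eps}$ (equivalently, the arcs of $\partial\mcD^{\eps}\cap\overline{F}$) cut the region into simple pieces, and note that the pieces lying in the face $F$ bounded by $C$ are outer faces of $\germ(\mcP,\mcD)$ because $F$ contains no other vertices or edges of $\mcP$. The only difference is framing---the paper phrases the subdivision as ``faces of $\germ(C,\mcD)$'' rather than invoking an explicit disk-subdivision lemma---but the underlying argument is the same.
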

\begin{proof}
    Suppose $C$ is not contained in $\mcD$. If $\germ(C,\mcD)$ is empty, then
    the lemma is vacuously true. Otherwise, choose $\eps > 0$ such that
    $\germ(\mcP,\mcD) = \res(\mcP,\mcD^{\eps})$, and in particular such that
    $C$ intersects the boundary of $\mcD^{\eps}$ transversally. Then $C \cap
    \mcD^{\eps}$ is divided into a number of segments which start and end on
    the boundary of $\mcD^{\eps}$. Thus every face of $\germ(C,\mcD)$ will be
    a simple region whose boundary contains some (possibly disconnected) portion of the
    boundary of $\mcD^{\eps}$.  Each edge of $\germ(C,\mcD)$ will be incident
    to exactly two such faces, and one of these two faces will be contained in
    $\mcF \cap \mcD^{\eps}$, where $\mcF$ is a face in $\mcP$ bounded by $C$. Since
    $\mcF$ is a face, any face of $\germ(C,\mcD)$ contained in $\mcF \cap
    \mcD^{\eps}$ will be a face of $\germ(\mcP,\mcD)$.
\end{proof}
\begin{lemma}\label{L:boundary2}
    Let $C$ be a simple cycle in a picture $\mcP$, let $\mcD$ be a simple
    region bounded by $C$, and let $\mcF$ be a simple outer face of
    $\germ(\mcP,\mcD)$. Then the edges of $\germ(\mcP,\mcD)$ in the boundary
    of $\mcF$ form a single simple path $P$ with at least two edges.

    Furthermore, if we write the edges of $P$ in order as $e_1,\ldots,e_{n}$,
    then $e_1$ and $e_n$ are incident with the boundary of $\mcD$, and
    $e_{2},\ldots, e_{n-1}$ belong to $C$.
\end{lemma}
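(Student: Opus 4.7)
The plan is to work with the germ via its definition: for $\eps > 0$ small enough that $\mcD^{\eps}$ is transverse to $\mcP$, identify $\germ(\mcP,\mcD)$ with $\res(\mcP,\mcD^{\eps})$ drawn inside $\mcD^{\eps}$, and analyse this picture in the closed annular region $A := \mcD^{\eps}\setminus\mcD^{o}$, whose inner boundary is the simple closed curve $C = \partial\mcD$ and whose outer boundary is $\partial\mcD^{\eps}$.

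First I would classify the portion of the germ that lies inside $A$. Because edges of $\mcP$ can meet only at shared vertices (Definition \ref{D:picture}), any edge of $\mcP$ that meets $C$ is either itself an edge of $C$ or is incident to a vertex of $C$; so, shrinking $\eps$ if necessary, the contributions to the germ inside $A$ are precisely the edges of $C$, which lie along $\partial\mcD$, together with \emph{stubs}, where a stub is an initial arc of an edge of $\mcP$ incident to some vertex $v \in V(C)$ and leaving $\mcD$, crossing $A$ transversally from $v$ to a single puncture on $\partial\mcD^{\eps}$. Every other edge of $\mcP$ either lies entirely inside $\mcD \subset \mcD^{\eps}$, or stays at positive distance from $\mcD$ and so is disjoint from $\mcD^{\eps}$.

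Next, applying the Jordan curve theorem to $C$ inside the disk $\mcD^{\eps}$ shows that no face of the germ crosses $C$, so every outer face lies entirely in $A^{o}$. Cutting the annulus $A$ along its $k$ stubs then gives: a single annular face when $k = 0$, whose boundary is disconnected; a single disk face when $k = 1$, whose boundary traverses the unique stub twice as a bridge and so fails to be a simple closed curve; or, when $k \geq 2$, exactly $k$ sectors, each a topological rectangle bounded by two consecutive stubs, an arc of $C$ joining their base vertices (possibly trivial, if the two stubs share a base vertex), and an arc of $\partial\mcD^{\eps}$. Since $\mcF$ is simple by hypothesis, we must be in the $k \geq 2$ case and $\mcF$ is one such rectangular sector.

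Finally, $\partial\mcF$ is a simple closed curve consisting of an arc of $\partial\mcD^{\eps}$ together with the germ-side path joining its endpoints; removing the arc of $\partial\mcD^{\eps}$ leaves a simple path $e_{1},e_{2},\ldots,e_{n}$ in which $e_{1}$ and $e_{n}$ are the two bounding stubs, each incident with $\partial\mcD$, and $e_{2},\ldots,e_{n-1}$ are the edges of $C$ along the connecting arc, with $n \geq 2$ because $k \geq 2$. The main obstacle I expect is making the Jordan-separation step rigorous: one needs to verify that the concrete planar curve $C$ really does separate $\mcD^{\eps}$ into $\mcD^{o}$ and $A^{o}$ compatibly with the germ, so that faces cannot straddle $C$, and in particular that the $k = 1$ bridging case produces a non-simple face rather than an exceptional counterexample.
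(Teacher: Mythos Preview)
Your argument is essentially correct and follows the same approach as the paper, which also counts the boundary edges of $\germ(\mcP,\mcD)$ (your ``stubs'') incident to vertices of $C$ and reads off the face structure from that count. The paper's proof is considerably terser, essentially asserting the geometric picture you develop carefully via the annulus $A$.

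There is one small imprecision in your $k \geq 2$ case. You assert that every sector is a topological rectangle, with the arc of $C$ between consecutive stubs being ``possibly trivial, if the two stubs share a base vertex.'' But when two consecutive stubs share a base vertex $v$, the arc of $C$ between them could also be \emph{all of $C$} rather than the trivial arc. If all $k$ stubs lie at a single vertex $v$, then $k-1$ of the sectors have trivial $C$-arc and are genuine rectangles, but the remaining sector wraps once around $C$: its boundary walk visits $v$ twice (once between the last stub and the start of $C$, once between the end of $C$ and the first stub), so it is not a simple closed curve and this sector is not simple. The paper flags exactly this: when all boundary edges are incident with a single vertex of $C$, all but one of the outer faces is simple. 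Since $\mcF$ is simple by hypothesis, it can never be this exceptional sector, so your conclusion is unaffected; you only need to add a sentence acknowledging that at most one sector may fail to be a rectangle, and that such a sector is never simple.
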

\begin{proof}
    Every edge of $\bd(\germ(\mcP,\mcD))$ is incident to a vertex in $C$.
    If $\mcF$ is simple, then $\bd(\germ(\mcP,\mcD))$ must contain at least two
    edges. If these edges are incident with two or more vertices of $C$, then
    it follows from the definition of $\germ(\mcP,\mcD)$ that every outer
    face is a simple region with boundary of the required form. If all the
    edges of $\bd(\germ(\mcP,\mcD))$ are incident with a single vertex in $C$,
    then all but one of the outer faces is simple, and the simple outer
    faces are bounded by a path formed by two edges of $\bd(\germ(\mcP,\mcD))$.
\end{proof}

\begin{defn}
    Let $\mcP$ be a picture in $\mcD_0$, where $\mcD_0$ is not the whole
    sphere. An \emph{outer quadrilateral} of $\mcP$ is the closure of a simple
    outer face $\mcF$ of $\mcP$, whose boundary contains exactly three edges of
    $\mcP$, together forming a path of length three. 
\end{defn}
An outer quadrilateral must have a fourth side consisting of points in the
boundary of $\mcD_0$, hence the name. A picture with highlighted outer
quadrilaterals is shown in Figure \ref{F:boundaryquad}.

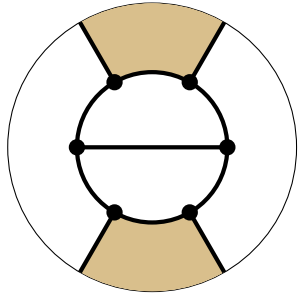
\begin{figure}
    \begin{tikzpicture}[scale=.4]

    \draw (5,5) circle [radius=4.8];
    \clip (5,5) circle [radius=4.8];

    \begin{scope}[ultra thick]
        \path (7.5,5) coordinate (1) 
                    arc [start angle=0, end angle=60, radius=2.5] coordinate (2)
                    arc [start angle=60, end angle=120,radius=2.5] coordinate (3)
                    arc [start angle=120, end angle=180,radius=2.5] coordinate (4)
                    arc [start angle=180, end angle=240,radius=2.5] coordinate (5)
                    arc [start angle=240, end angle=300,radius=2.5] coordinate (6)
                    arc [start angle=300, end angle=360,radius=2.5];

        \path [fill=fillyellow] (2) -- ++(60:2.3) arc [start angle=60, end angle=120, radius=4.8]
                -- ++(-60:2.3) arc [start angle=120, end angle=60, radius=2.5];
        \path [fill=fillyellow] (5) -- ++(240:2.3) arc [start angle=240, end angle=300, radius=4.8]
                -- ++(120:2.3) arc [start angle=300, end angle=240, radius=2.5];

        \draw (5,5) circle [radius=2.5];

        \draw (1) -- (4);
        \draw (2) -- +(60:2.5);
        \draw (3) -- +(120:2.5);
        \draw (5) -- +(240:2.5);
        \draw (6) -- +(300:2.5);

        \foreach \x in {1,...,6}
            \draw[fill] (\x) circle [radius=.2];
    \end{scope}

\end{tikzpicture}
    \caption{A picture in the disk, with four outer faces, two of which (shaded)
            form outer quadrilaterals.}
    \label{F:boundaryquad}
\end{figure}

\begin{prop}\label{P:boundary}
    Let $\mcC$ be a cycle in $\mcH$, and let $\mcC'$ be a cubic cycle such
    that $|E(\mcC) \cap E(\mcC')| \leq 1$. If $\mcD$ is a region bounded by a
    $\mcC'$-cycle $C'$, and $C$ is a facial $\mcC$-cycle, then either $C$ is
    contained in $\mcD$, or the edges of $\germ(C,\mcD)$ form outer
    quadrilaterals of $\germ(\mcP,\mcD)$. 
\end{prop}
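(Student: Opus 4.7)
My plan is to use Lemmas \ref{L:boundary1} and \ref{L:boundary2} to reduce the question to a labelling argument, and then rule out the degenerate $n=2$ case by a local analysis near a shared vertex.

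First I assume $C$ is not contained in $\mcD$ and apply Lemma \ref{L:boundary1} to obtain, for each edge of $\germ(C,\mcD)$, a simple outer face $\mcF'$ of $\germ(\mcP,\mcD)$ with that edge in its boundary. The proof of that lemma places $\mcF'$ inside $\mcF \cap \mcD^\eps$, where $\mcF$ is the face of $\mcP$ bounded by $C$. Applying Lemma \ref{L:boundary2} to the boundary cycle $C'$, the picture edges in $\bd \mcF'$ form a simple path $e_1,\ldots,e_n$ with $n \geq 2$, $e_1$ and $e_n$ incident with $\bd \mcD$, and $e_2,\ldots,e_{n-1} \in C'$. The goal is to show $n = 3$.

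Next I would establish $n \leq 3$ by a labelling argument. Since $\mcF' \subseteq \mcF \cap \mcD^\eps$ and $\bd \mcF = C$, every picture edge in $\bd \mcF'$ lies in $C \cap \mcD^\eps$ and hence belongs to $\germ(C,\mcD)$, so its label is in $E(\mcC)$. For each middle edge $e_i$ with $2 \leq i \leq n-1$ we additionally have $e_i \in C'$, so $h(e_i) \in E(\mcC) \cap E(\mcC')$. By hypothesis this intersection has at most one element, so all middle edges share a single label. On the other hand, since $\mcC'$ is a simple graph, the two edges of $\mcC'$ at any vertex of $\mcC'$ are distinct, so consecutive edges of $C'$ carry distinct labels. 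This forces at most one middle edge in the path, and hence $n \leq 3$.

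To rule out $n = 2$, I would look at the shared endpoint $v$ of $e_1$ and $e_2$. Because Definition \ref{D:picture} forbids two edges from sharing a boundary endpoint, $v$ must be a vertex of $\mcP$, which lies in $\mcD$ once $\eps$ is chosen small enough that the annulus $\mcD^\eps \setminus \mcD$ contains no vertices. If $v \in C'$, then cubicity of $\mcC'$ gives $v$ exactly three incident edges of $\mcP$: two in $C'$ and one ``extra''. Since $C'$-edges lie in $\mcD$ and so are not incident with $\bd \mcD$, neither $e_1$ nor $e_2$ can lie in $C'$, forcing both to equal the single extra edge, contradicting their distinctness. If instead $v$ lies in the interior of $\mcD$, then the $\mcP$-edge $f$ containing $e_1$ has $v$ as one endpoint and a vertex outside $\mcD^\eps$ as the other, neither of which lies on $C'$; but then the simple curve $f$, joining a point interior to $\mcD$ to a point exterior to $\mcD$, must cross $C' = \bd \mcD$, violating the fact that distinct edges of $\mcP$ meet only at shared vertices.

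The main obstacle will be the crossing argument in the interior case, which depends on the fact that edges of $\mcP$ are disjoint simple curves except at shared vertices, so that an edge cannot traverse $\bd \mcD$ except at a vertex of $C'$. The remainder is essentially bookkeeping about labels and incidences on the cubic cycle $\mcC'$.
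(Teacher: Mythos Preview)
Your proof is correct and follows essentially the same route as the paper's: apply Lemma~\ref{L:boundary1} to get simple outer faces, apply Lemma~\ref{L:boundary2} to get the path structure, use $|E(\mcC)\cap E(\mcC')|\leq 1$ and simplicity of $\mcC'$ to force at most one middle edge, and use cubicity of $\mcC'$ to force at least one middle edge. The only difference is that your $n=2$ case is split into two subcases, whereas the paper observes in one line that the shared vertex must lie on $C'$ (this is the first sentence of the proof of Lemma~\ref{L:boundary2}: every boundary edge of $\germ(\mcP,\mcD)$ is incident to a vertex of the bounding cycle), making your interior Case~2 redundant. Your crossing argument in Case~2 is correct but unnecessary once you note that an edge of $\mcP$ with neither endpoint on $C'$ cannot meet $C'=\partial\mcD$ at all, hence lies entirely in $\mcD^o$ or entirely outside, and so cannot contribute a boundary edge to the germ.
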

\begin{proof}
    By Lemma \ref{L:boundary1}, we can assume that $\mcF$ is a simple outer
    face of $\germ(\mcP,\mcD)$ formed by the edges of $\germ(C,\mcD)$. By Lemma
    \ref{L:boundary2}, the edges of $\germ(C,\mcD)$ in the boundary of $\mcF$ form a single path
    $e_0,e_1,\ldots,e_{n+1}$, where $e_0$ and $e_{n+1}$ are incident with the
    boundary and $e_1,\ldots,e_n$ belong to $C'$, as shown below: 
    \begin{equation*}
        \begin{tikzpicture}[auto,ultra thick,scale=.5,emptynode/.style={inner sep=0},every node/.style={scale=.8}]

    \node[emptynode] (1) {};
    \node[emptynode] (2) [right=of 1] {};
    \node[emptynode] (2') [right=of 2] {};
    \node[emptynode] (3) [right=of 2'] {};
    \node[emptynode] (4) [right=of 3] {};
    \coordinate (0) at ($(1)+(120:2)$) {};
    \coordinate (5) at ($(4)+(60:2)$) {};

    \path [fill=fillyellow] (1.0) -- (0) -- (5) -- (4.0) -- (1.0);

    \draw (1) -- node[swap] {$e_1$} (2);
    \draw (2) -- node[swap] {$e_2$} (2');
    \draw [dotted] (2') -- (3);
    \draw (3) -- node[swap] {$e_n$} (4);
    \draw (1) -- node {$e_0$} (0);
    \draw (4) -- node [swap] {$e_{n+1}$} (5);
    \draw [thin] ($(0)-(1.5,0)$) -- ($(5)+(1.5,0)$);

    \draw (2) -- +(-90:1.5);
    \draw (2') -- +(-90:1.5);
    \draw (3) -- +(-90:1.5);

    \draw (1) -- +(-150:2);
    \draw (4) -- +(-30:2);

    \foreach \x in {1,2,2',3,4}
        \draw [fill] (\x) circle [radius=.2];
    
\end{tikzpicture}
    \end{equation*}
    Since $C'$ is cubic, it follows as well that $n \geq 1$.  Since $|E(\mcC)
    \cap E(\mcC')| \leq 1$ and the edges $e_1,\ldots,e_n$ belong to both $C$
    and $C'$, we must have $h(e_1) = \ldots = h(e_n) = e'$, the unique element
    of $E(\mcC) \cap E(\mcC')$. Since $\mcC$ (and $\mcC'$) are simple graphs,
    the edge $e'$ is incident with exactly two vertices $v$ and $v'$, and both
    $A_{ve'} = A_{v'e'} = 1$. Thus $n$ must be one, and we conclude that the
    closure of $\mcF$ is an outer quadrilateral.
\end{proof}

\section{Pictures over suns}\label{S:sun}

In this section, we look at pictures over a specific family of hypergraphs:
\begin{defn}\label{D:sun}
    The \emph{sun of size $n$}, where $n \geq 3$, is the hypergraph with vertex
    set $\{1,\ldots,n\}$, edge set $\{e_i,f_i : 1 \leq i \leq n\}$, and
    incidence relation
    \begin{equation*}
        A_{if_j} = \begin{cases} 1 & i = j \\
                                 0 & i \neq j 
                    \end{cases} 
        \quad \text{ and } \quad 
        A_{ie_j} = \begin{cases} 1 & i \cong j \text{ or } i \cong j+1 \mod n \\
                                 0 & \text{ otherwise} 
                   \end{cases}.
    \end{equation*}
\end{defn}
Note that a sun has a unique cycle.  The sun of size $n=6$ is shown in Figure
\ref{F:sun}.
\begin{figure}
    \begin{tikzpicture}[auto,ultra thick,vertex/.style={circle,draw,thin,inner sep=2.5},every node/.style={scale=.8},
                    hyperedge/.style={thin,pattern=north west lines},scale=.6]

    \foreach \x in {1,...,6} {
        \node[vertex] (\x) at (60*\x:3.5) {$\x$};
        \draw[hyperedge] (\x.60*\x+10) to [out=60*\x+45,in=60*\x+90] node {$f_\x$} ($1.7*(\x)$) 
                                       to [out=60*\x-90,in=60*\x-45] (\x.60*\x-10) 
                                       to [bend right] (\x.60*\x+10); }

    \draw (1) -- node [swap] {$e_1$} (2);
    \draw (2) -- node [swap] {$e_2$} (3);
    \draw (3) -- node [swap] {$e_3$} (4);
    \draw (4) -- node [swap] {$e_4$} (5);
    \draw (5) -- node [swap] {$e_5$} (6);
    \draw (6) -- node [swap] {$e_6$} (1);

\end{tikzpicture}
    \caption{The sun of size $6$.}
    \label{F:sun}
\end{figure}
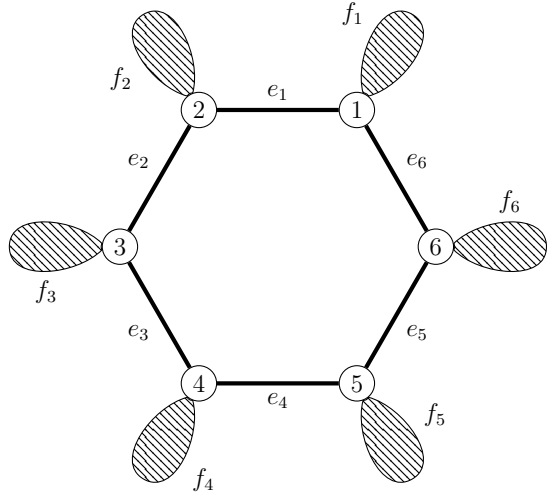

\begin{prop}\label{P:suncycle}
    Let $\mcH$ be a sun, and let $\mcP$ be a character-minimal $\mcH$-picture
    such that $\bd(\mcP)$ does not contain any edges from the cycle $\mcC$ of
    $\mcH$.  Then $\mcP$ is character-equivalent to a character-minimal picture
    $\mcP'$ with no closed loops, in which every $\mcC$-cycle is a facial
    cover. Furthermore, $\mcP'$ can be chosen so that every outer
    quadrilateral of $\mcP$ is an outer quadrilateral of $\mcP'$. 
\end{prop}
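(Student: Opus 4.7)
The plan is a minimization argument. First, apply Lemma~\ref{L:closedloops} to delete all closed loops from $\mcP$; this preserves size, character-equivalence class, and outer quadrilaterals. Let $\mcS$ denote the nonempty set of $\mcH$-pictures character-equivalent to $\mcP$, of the same minimum size, with no closed loops, and in which every outer quadrilateral of $\mcP$ remains an outer quadrilateral. Pick $\mcP' \in \mcS$ minimizing $\Cycle(\mcP', \mcC)$. It suffices to show every $\mcC$-cycle of $\mcP'$ is facial; because $\mcC$ is cubic in the sun (each vertex $i$ of $\mcC$ has the extra incidence $f_i$), Proposition~\ref{P:facialcover} will then promote facial to facial cover.

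Suppose toward contradiction some $\mcC$-cycle $C$ of $\mcP'$ is not facial; pick an innermost such $C$, bounding a disk $\mcD$. Since all incidence multiplicities in the sun equal $1$, each vertex of $\mcP'$ labeled $i$ is incident to exactly one $e_{i-1}$-, one $e_i$-, and one $f_i$-edge, hence exactly two $\mcC$-labeled edges. The $\mcC$-subgraph restricted to the interior of $\mcD$ is thus $2$-regular on interior vertices and decomposes into simple closed curves, each a $\mcC$-cycle strictly inside $\mcD$; by innermostness there are none, so $\mcD$ has no interior vertices. The multiplicity constraint forbids loops at vertices, and closed loops were removed, so the interior of $\mcD$ contains only $f_i$-chords between vertices of $C$ labeled $i$. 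Since $C$ is not facial, at least one such chord exists, forcing $C$ to have ply at least $2$.

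The final step is a local surgery producing $\mcP'' \in \mcS$ with $\Cycle(\mcP'', \mcC) < \Cycle(\mcP', \mcC)$, contradicting minimality. Pick an outermost chord $\gamma$ of $\mcD$, cobounding with an arc $A \subset C$ a sub-disk $\mcD' \subset \mcD$ containing no further chord (hence a face of $\mcP'$). A local rewiring near $\gamma$---combining an exchange between $\gamma$ and some $f_i$-edge of $\mcP'$ lying outside $C$ (when such an exchange is available) with a reorganization of the $\mcC$-edges incident to the endpoints $u, w$ of $\gamma$ (both labeled $i$)---produces a picture in $\mcS$ with one fewer $\mcC$-chord inside $\mcD$, and so fewer non-facial $\mcC$-cycles. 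Any $\mcC$-cycle created by the rewiring is facial (by the innermost position of $C$ and the structural analysis above) and therefore a facial cover by Proposition~\ref{P:facialcover}, so the total $\Cycle$ count strictly drops.

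The main obstacle is the precise specification of this surgery and the verification that a suitable move is always available and planar: one must analyze the cyclic order of the four $\mcC$-labeled edges of $C$ at $u$ and $w$ (determined by the symmetrized vertex relation $e_{i-1} e_i f_i$ together with the orientation of $C$) and check label-consistent non-crossing reconnections. The confinement to the interior of $\mcP'$ ensures that the outer quadrilaterals of $\mcP$ are untouched and the boundary is preserved. Once the surgery is constructed, the minimality of $\mcP'$ is contradicted, so every $\mcC$-cycle of $\mcP'$ is facial and Proposition~\ref{P:facialcover} concludes the argument.
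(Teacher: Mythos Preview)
There are two real gaps. First, the innermost step is wrong: you choose $C$ innermost among \emph{non-facial} $\mcC$-cycles, then conclude there are no $\mcC$-cycles inside $\mcD$ and hence no interior vertices. But facial $\mcC$-cycles can sit strictly inside $\mcD$ without contradicting your choice of $C$, and they carry interior vertices; the reduction to ``only $f_i$-chords inside $\mcD$'' does not follow. Second, and more fundamentally, minimizing $\Cycle(\cdot,\mcC)$ is the wrong complexity measure, and the surgery you need does not exist. Once size-minimality rules out deleting the two endpoints of an interior $f_i$-chord $\gamma$ of $C$, the only label-consistent local move at those endpoints is the reconfiguration that \emph{pinches} $C$ into two $\mcC$-cycles joined by $\gamma$---this \emph{increases} $\Cycle$. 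Conversely, given two facial copies of $\mcC$ joined by an internal $f_i$-edge, the same move run in reverse merges them into a single ply-two $\mcC$-cycle with that edge as an interior chord, strictly decreasing $\Cycle$ while destroying faciality (and preserving size, character, and outer quadrilaterals). So a $\Cycle$-minimizer in your set $\mcS$ can perfectly well contain a non-facial $\mcC$-cycle, and no local surgery will push it off while further decreasing $\Cycle$. The ``exchange with some $f_i$-edge outside $C$'' you allude to has no reason to be available.

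The paper's proof replaces $\Cycle$ by $NE(\mcP) := \sum_C NE(C,\mcP)$, where $NE(C,\mcP)$ counts the $f$-labeled edges enclosed by the $\mcC$-cycle $C$. For any such edge $e$ incident to $C$, size-minimality forces the labels at its two endpoints to be twisted, and the resulting surgery either pinches $C$ (both endpoints on $C$) or merges $C$ with an interior $\mcC$-cycle; in \emph{both} cases $NE$ strictly drops, and outer quadrilaterals are untouched because the endpoints of $e$ carry an interior $f$-edge and so cannot lie on one. Once $NE = 0$, each region bounded by a $\mcC$-cycle carries a closed sun-picture, hence has character zero by Lemma~\ref{L:suncycle1}; character-minimality then forces that region to be empty, so every $\mcC$-cycle is facial, and Proposition~\ref{P:facialcover} upgrades facial to facial cover.
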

The last statement in this proposition implies in particular that if an edge
$e$ is part of an outer quadrilateral in $\mcP$, then $e$ is also an edge of
$\mcP'$, with the same label. The proof of Proposition \ref{P:suncycle} uses
the following lemma. 
\begin{lemma}\label{L:suncycle1}
    If $\mcP$ is a closed picture over a sun $\mcH$, then $\ch(\mcP)=0$.
\end{lemma}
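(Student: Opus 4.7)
The plan is to prove $\ch(\mcP)_i = 0$ for every vertex $i \in V(\mcH)$ by showing that $V_i := h_V^{-1}(i)$ has even cardinality. The key structural feature I would exploit is that, in a sun, the edge $f_i$ is a pendant attached only to vertex $i$, with $A_{if_i} = 1$ and $A_{jf_i} = 0$ for $j \neq i$. By the definition of an $\mcH$-picture, this has two consequences: every vertex $v \in V_i$ is incident in $\mcP$ to exactly one $f_i$-labelled edge (counted with multiplicity), and every vertex-endpoint of an $f_i$-labelled edge of $\mcP$ must itself lie in $V_i$.

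Next I would analyze the possible shapes of an $f_i$-labelled edge of $\mcP$. Since $\mcP$ is closed, no edge meets the boundary of the underlying region, so by Definition \ref{D:picture} each edge is either a closed curve or an open simple arc with two necessarily distinct endpoints, each of which must be a vertex. A closed curve based at a vertex $v$ would contribute multiplicity $2$ to the $f_i$-incidence count at $v$ (it meets $v$ at two ``ends''), contradicting $A_{if_i} = 1$; hence any closed $f_i$-edge must be a genuine closed loop, incident to no vertex, and contributes nothing to $|V_i|$. The remaining $f_i$-edges are open arcs, each with two distinct endpoints in $V_i$.

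Combining these observations, the open $f_i$-arcs define a perfect matching on $V_i$: each $v \in V_i$ is the endpoint of exactly one such arc, and each arc joins two distinct vertices of $V_i$. Consequently $|V_i|$ is even, so $\ch(\mcP)_i = 0$, and running the argument for every $i \in V(\mcH)$ gives $\ch(\mcP) = 0$. I do not anticipate any genuine obstacle; the proof is essentially a parity count enabled by the pendant structure of the $f_i$, and the only subtlety to double-check is the multiplicity interpretation that rules out a loop-at-vertex of label $f_i$.
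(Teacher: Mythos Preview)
Your proposal is correct and follows essentially the same approach as the paper: the paper restricts $\mcP$ to the closed subhypergraph $\mcU$ with edge set $\{f_1,\ldots,f_n\}$, observes that every vertex has degree one in $\mcP[\mcU]$ so that (since $\mcP$ is closed) $\mcP[\mcU]$ is a matching, and concludes exactly as you do that each $h_V^{-1}(i)$ has even size. Your version is slightly more explicit about ruling out an $f_i$-labelled loop at a vertex, but otherwise the arguments coincide.
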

\begin{proof}
    Let $n$ be the size of $\mcH$, and let $\mcU$ be the closure of the
    complement of the cycle $\mcC$ of $\mcH$, or in other words, the
    subhypergraph with vertex set $\{1,\ldots,n\}$ and edge set
    $\{f_1,\ldots,f_n\}$.  Then every vertex in $\mcP[\mcU]$ has degree one, so
    if $\mcP$ is closed then $\mcP[\mcU]$ is a matching. Furthermore, if $e$ is
    an edge of $\mcP[\mcU]$ with $h(e) = f_i$, and endpoints at vertices $v$ and $v'$, 
    then $h(v) = h(v') = i$. We conclude that $\mcP$ has an even number of
    vertices labelled by $i$, so $\ch(\mcP)_i = 0$ for every $i$. 
\end{proof}

\begin{proof}[Proof of Proposition \ref{P:suncycle}]
    Without loss of generality, we can assume that every picture lies 
    in a bounded region, rather than a sphere. As a consequence, every simple
    closed curve will bound a unique simple region. Suppose that $C$ is a
    $\mcC$-cycle in some $\mcH$-picture $\mcP$, and let $\mcD$ be the simple
    region bounded by $C$. For the purpose of this proof, we let $NE(C,\mcP)$
    be the number of edges $e$ in $\res(\mcP,\mcD)$ with $h(e) \in \mcU$, where
    $\mcU$ is the closure of the complement of $\mcC$, as in the proof of Lemma
    \ref{L:suncycle1}. We then set
    \begin{equation*}
        NE(\mcP) := \sum_{C \text{ a } \mcC\text{-cycle}} NE(C,\mcP).
    \end{equation*}
    Now suppose we start with some character-minimal picture $\mcP = \mcP_0$
    with no edges from $\mcC$ in $\bd(\mcP_0)$. Our strategy will be to reduce
    $NE(\mcP_0)$ via a sequence of surgeries. Let $C$ be a $\mcC$-cycle in 
    $\mcP_0$, and let $\mcD$ be the simple region bounded by $C$. Suppose $e$
    is an edge contained in $\mcD$ and incident with $C$, such that $h(e) =
    f_i$ for some $i$. Clearly $e$ cannot be incident with the boundary of
    $\mcP_0$, and since $e$ is not a closed loop, we conclude that $e$ is
    incident with two distinct vertices $v_0$ and $v_1$ with $h(v_0) = h(v_1) =
    i$. Let $e$, $a_0$, $a_1$ be the sequence of edges incident with $v_0$, as
    they appear in counter-clockwise order, and let $e$, $b_0$, $b_1$ be the
    edges incident with $v_1$ as they appear in clockwise order.  Then $a_0$,
    $e$, and $b_0$ all lie in the boundary of a common face, as do $a_1$, $e$,
    and $b_1$ (all these edges may lie in a common non-simple face, so we
    have to be careful with wording here). If $h(a_0)
    = h(b_0)$ and $h(a_1) = h(b_1)$, then we could delete $e$, $v_0$, and
    $v_1$, and connect $a_i$ with $b_i$ (shown in the diagram below, where we
    assume without loss of generality that $h(a_0) = e_i$)
    \begin{equation*}
        \begin{tikzpicture}[auto,vertex/.style={circle,draw,thin,inner sep=2.5},every node/.style={scale=.8}, 
                    empty/.style={inner sep=0},ultra thick]

    \node[empty] (lt) {};
    \node [vertex] (v0) [right=of lt] {$i$}
        edge node [swap] {$e_{i-1}$} (lt);
    \node[empty] (rt) [right=of v0] {}
        edge node [swap] {$e_i$} (v0);

    \node [vertex] (v1) [below=of v0] {$i$}
        edge node {$f_i$} (v0);
    \node[empty] (lb) [left=of v1] {}
        edge node [swap] {$e_{i-1}$} (v1);
    \node[empty] (rb) [right=of v1] {}
        edge node {$e_{i}$} (v1);

    \node (m) at ($(rt)+(1,0)+.5*(v1) - .5*(v0)$) {$\surg$};
    
    \node[empty] (lt') at ($(rt)+(2,0)$) {};
    \node[empty] (lb') at ($(lt')+(lb)-(lt)$) {}
        edge [out=0,in=0] node [swap] {$e_{i-1}$} (lt');

    \node[empty] (rt') at ($(lt') + (rt) - (lt)$) {};
    \node[empty] (rb') at ($(rt') + (rb) - (rt)$) {}
        edge [out=180,in=180] node {$e_i$} (rt');
    
\end{tikzpicture}
    \end{equation*}
    to get a character-equivalent picture of smaller size, contradicting
    minimality. We conclude that we must have $\{h(e),h(a_j),h(b_j)\} =
    \{e_{i-1}, e_{i}, f_i\}$ for $j=0,1$. Thus we can perform the surgery where
    we make $v_j$ incident to $e$, $a_j$, and $b_j$, for $j=0,1$, (shown below,
    again with the assumption that $h(a_0) = e_i$) 
    \begin{equation*}
        \begin{tikzpicture}[auto,vertex/.style={circle,draw,thin,inner sep=2.5},every node/.style={scale=.8}, 
                    empty/.style={inner sep=0},ultra thick]

    \node[empty] (lt) {};
    \node [vertex] (v0) [right=of lt] {$i$}
        edge node [swap] {$e_{i-1}$} (lt);
    \node[empty] (rt) [right=of v0] {}
        edge node [swap] {$e_i$} (v0);

    \node [vertex] (v1) [below=of v0] {$i$}
        edge node {$f_i$} (v0);
    \node[empty] (lb) [left=of v1] {}
        edge node [swap] {$e_{i}$} (v1);
    \node[empty] (rb) [right=of v1] {}
        edge node {$e_{i-1}$} (v1);

    \node (m) at ($(rt)+(1,0)+.5*(v1) - .5*(v0)$) {$\surg$};
    
    \node[empty] (lt') at ($(rt)+(2,0)$) {};
    \node[empty] (lb') at ($(lt')+(lb)-(lt)$) {};
    \node[vertex] (v0') at ($(lt')+.5*(v1)-.5*(v0)+(v0)-(lt)$) {$i$}
        edge [bend right] node [swap] {$e_{i-1}$} (lt')
        edge [bend left] node {$e_i$} (lb');

    \node[vertex] (v1') [right=of v0'] {$i$}
        edge node {$f_i$} (v0');

    \node[empty] (rt') at ($(v1')+(v0')-(lb')$) {}
        edge [bend right] node [swap] {$e_i$} (v1');

    \node[empty] (rb') at ($(v1')+(v0')-(lt')$) {}
        edge [bend left] node {$e_{i-1}$} (v1');

\end{tikzpicture}
    \end{equation*}
    to get a new picture $\mcP_1$. Since the surgery did not change the number
    or labels of vertices, $\mcP_1$ will be character-minimal and
    character-equivalent to $\mcP_0$. Since $\bd(\mcP_0)$ does not contain
    any edges from $\mcC$, the boundary of an outer quadrilateral of $\mcP_0$
    must contain edges $e_1,e_2,e_3$, where $h(e_2) \in \mcC$, and $e_1$ and
    $e_3$ are the two unique edges of $h^{-1}(\mcU)$ which are incident with
    the endpoints of $e_2$. Since $e$ is not incident to the boundary, the
    edges $a_0$, $a_1$, $b_0$, and $b_1$ and the vertices $v_0$ and $v_1$ do
    not belong to an outer quadrilateral of $\mcP_0$, and hence every
    outer quadrilateral of $\mcP_0$ will be an outer quadrilateral of
    $\mcP_1$. 

    Recall that $e$ lies in $\mcD$. Without loss of generality, we can assume
    that $v_0$ belongs to $C$. By Lemma \ref{L:retractcycle}, $v_1$ must also
    belong to a $\mcC$-cycle. This creates two possible outcomes for
    $NE(\mcP_1)$.  First, suppose that both $v_0$ and $v_1$ lie on $C$. Then
    the surgery will ``pinch'' the region bounded by $C$, creating two
    $\mcC$-cycles $C_0$ and $C_1$ connected by $e$, as shown below. 
    \begin{equation*}
        \begin{tikzpicture}[auto,ultra thick,scale=.5,empty/.style={inner sep=0},every node/.style={scale=.8}]

    \node[empty] (0) {};
    \node[empty] (1) [right=of 0] {}
        edge node [swap] {$a_1$} (0);
    \node[empty] (2) [right=of 1] {}
        edge node [swap] {$a_0$} (1);

    \node[empty] (3) [below=of 0] {};
    \node[empty] (4) [right=of 3] {}
        edge node {$b_1$} (3)
        edge node {$e$} (1);
    \node[empty] (5) [right=of 4] {}
        edge node {$b_0$} (4);

    \draw [dotted] (0) to [out=180,in=90] node[swap] {$C$} ($(0)+(-2.5,0)+.5*(4)-.5*(1)$) to [out=-90,in=180] (3);
    \draw [dotted] (2) to [out=0,in=90] ($(2)+(2.5,0)+.5*(4)-.5*(1)$) coordinate (r) to [out=-90,in=0] (5);

    \node (m) at ($(r)+(1,0)$) {$\surg$};

    \coordinate (l) at ($(m)+(1,0)$);
    
    \node[empty] (6) at ($(l)+(r)-(5)$) {};
    \node[empty] (7) at ($(6)+(3)-(0)$) {};
    \draw [dotted] (6) to [out=180,in=90] node [swap] {$C_1$} (l) to [out=-90,in=180] (7);
    \node[empty] (8) at ($.5*(6)+.5*(7)+.25*(2)-.25*(0)$) {}
        edge node [swap] {$a_1$} (6)
        edge node {$b_1$} (7);
    \node[empty] (9) at ($(8)+.5*(2)-.5*(0)$) {}
        edge node {$e$} (8);
    \node[empty] (10) at ($(6)+(2)-(0)$) {}
        edge node [swap] {$a_0$} (9);
    \node[empty] (11) at ($(7)+(2)-(0)$) {}
        edge node {$b_0$} (9);

    \draw [dotted] (10) to [out=0,in=90] node {$C_0$} ($(9)-(l)+(8)$) to [out=-90,in=0] (11);

    \foreach \x in {0,...,11}
        \draw [fill] (\x) circle [radius=.2];
    
\end{tikzpicture}
    \end{equation*}
    Any edge $e' \neq e$ in $\mcD$ will end up in the region bounded by 
    $C_0$ or the region bounded by $C_1$, so $NE(C,\mcP_0) = NE(C_1,\mcP_1) +
    NE(C_2,\mcP_1) + 1$. 

    The other possibility is that $v_1$ lies on a different $\mcC$-cycle $C'$,
    where (since $\mcC$-cycles cannot cross) $C'$ lies in the interior of the
    region bounded by $C$. In this case, the surgery will connect $C$ and $C'$
    to form a new cycle $C''$, as shown below.
    \begin{equation*}
        \begin{tikzpicture}[auto,ultra thick,scale=.5,empty/.style={inner sep=0},every node/.style={scale=.8}]

    \coordinate (O);

    \node[empty] (1) at ($(O)+(0,2)$) {};
    \draw (1) arc [start angle=90, end angle=150, radius=2] node [empty] (0) {};
    \draw (1) arc [start angle=90, end angle=30, radius=2] node [empty] (2) {};
    \draw [dotted] (0) arc [start angle=150, end angle=390, radius=2] node[pos=.7,swap] {$C'$} ;

    \path (1) arc [start angle=90, end angle=150, radius=2] node [pos=.7] {$b_1$};
    \path (1) arc [start angle=90, end angle=30, radius=2] node [swap,pos=.7] {$b_0$};

    \node[empty] (4) at ($(1)+(0,2)$) {}
        edge node [swap] {$e$} (1);
    \draw (4) arc [start angle=90, end angle=120, radius=4] node [empty] (3) {};
    \draw (4) arc [start angle=90, end angle=60, radius=4] node [empty] (5) {};
    \draw [dotted] (3) arc [start angle=120, end angle=420, radius=4] node[pos=.1,swap] {$C$} ;

    \path (4) arc [start angle=90, end angle=120, radius=4] node [swap,pos=.3] {$a_1$};
    \path (4) arc [start angle=90, end angle=60, radius=4] node [pos=.3] {$a_0$};

    \node (m) at ($(O)+(5.5,0)$) {$\surg$};

    \coordinate (O2) at ($(O)+(11,0)$);

    \coordinate (7') at ($(O2)+(0,2)$) {};
    \path (7') arc [start angle=90, end angle=150, radius=2] node [empty] (6) {};
    \path (7') arc [start angle=90, end angle=30, radius=2] node [empty] (8) {};

    \coordinate (10') at ($(7')+(0,2)$) {};
    \path (10') arc [start angle=90, end angle=120, radius=4] node [empty] (9) {};
    \path (10') arc [start angle=90, end angle=60, radius=4] node [empty] (11) {};

    \draw [dotted] (9) arc [start angle=120, end angle=420, radius=4] node[pos=.7] {$C''$};
    \draw [dotted] (6) arc [start angle=150, end angle=390, radius=2];

    \node[empty] (7) at ($.5*(6)+.5*(9)+.25*(11)-.25*(9)$) {}
        edge node {$b_1$} (6)
        edge node [swap] {$a_1$} (9);
    \node[empty] (10) at ($.5*(8)+.5*(11)+.25*(9)-.25*(11)$) {}
        edge node {$e$} (7)
        edge node {$a_0$} (11)
        edge node [swap] {$b_0$} (8);

    \foreach \x in {0,...,11}
        \draw[fill] (\x) circle [radius=.2];
 
\end{tikzpicture}
    \end{equation*}
    The edge $e$, along with all the edges inside the cycle $C'$, will end up
    on the outside of $C''$. The only edges remaining in the region bounded by
    $C''$ are edges that belonged to $\mcD$, so $NE(\mcC'',\mcP_1) =
    NE(C,\mcP_0) - NE(C',\mcP_0) -1$.  In both cases, no other cycles are
    changed by the surgery, so we conclude that $NE(\mcP_0) > NE(\mcP_1)$. 

    Iterating this procedure, we get a sequence $\mcP_0,\mcP_1,\mcP_2,\ldots$
    of character-minimal pictures, all character-equivalent, such that
    the outer quadrilaterals of $\mcP_i$ are outer quadrilaterals of
    $\mcP_{i+1}$, and $NE(\mcP_i) > NE(\mcP_{i+1})$. Since $NE(\mcP_i)$ cannot
    decrease indefinitely, this process must terminate at a picture $\mcP_n$
    with the property that if $e \in h^{-1}(\mcU)$ is incident with a
    $\mcC$-cycle $C$, then $e$ is not contained in the region bounded by $C$.
    Equivalently, we can say that $\res(\mcP_n,\mcD)$ is closed for every
    simple region $\mcD$ bounded by a $\mcC$-cycle. 

    Let $\mcP'$ be the picture $\mcP_n$ with all closed loops deleted. By Lemma
    \ref{L:closedloops}, $\mcP'$ is character-minimal and character-equivalent
    to $\mcP_n$ (and hence $\mcP_0$). In addition, it is easy to see that all
    outer quadrilaterals of $\mcP_n$ will be outer quadrilaterals of
    $\mcP'$, and that $\res(\mcP',\mcD)$ will be closed for every simple region
    $\mcD$ bounded by a $\mcC$-cycle in $\mcP'$. Suppose $\mcD$ is a simple
    region bounded by a $\mcC$-cycle such that $\res(\mcP',\mcD)$ is non-empty.
    Since $\mcP'$ does not contain any closed loops, $\res(\mcP',\mcD)$ must
    contain a vertex. But since $\res(\mcP',\mcD)$ is closed, we must have
    $\ch(\res(\mcP',\mcD)) = 0$ by Lemma \ref{L:suncycle1}, and hence we can
    delete $\res(\mcP',\mcD)$ from $\mcP'$ to get a character-equivalent
    picture of smaller size, contradicting the minimality of $\mcP'$. We
    conclude that $\res(\mcP',\mcD)$ must be empty for every simple region
    bounded by a $\mcC$-cycle, and hence every $\mcC$-cycle in $\mcP'$ is
    facial. By Proposition \ref{P:facialcover}, every $\mcC$-cycle in $\mcP'$
    is a facial cover, as required. 
\end{proof}

The following collorary is not needed for the proof of Theorem
\ref{T:invembedding}, but it does serve as a good example of how we will apply
Proposition \ref{P:suncycle} in the following section.
\begin{cor}\label{C:suncycle}
    Let $\mcH$ be the sun of size $n$, let $b : V(\mcH) \arr \Z_2$ be a vertex
    labelling function, and set $b_0 := \sum_{i=1}^n b_i$. Then the subgroup of
    $\Gamma = \Gamma(\mcH,b)$ generated by $S = \{x_{f_1},\ldots,x_{f_n}\}$ is
    isomorphic to $K = \Inv\langle f_1,\ldots,f_n: f_1 \cdots f_n =
    J^{b_0}\rangle$. 
\end{cor}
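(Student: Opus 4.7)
The plan is to establish the isomorphism via the natural map $\phi:K\to\langle x_{f_1},\ldots,x_{f_n}\rangle\subseteq\Gamma$ sending $f_i\mapsto x_{f_i}$ and $J\mapsto J$, proving well-definedness and then bijectivity by applying van Kampen's lemma in both directions. For well-definedness I need $x_{f_1}\cdots x_{f_n}=J^{b_0}$ in $\Gamma$: at vertex $v_i$ the three incident generators $x_{f_i},x_{e_{i-1}},x_{e_i}$ pairwise commute, so the vertex relation can be rewritten as $x_{f_i}=J^{b_i}x_{e_{i-1}}x_{e_i}$, and then $\prod_i x_{f_i}=J^{b_0}(x_{e_n}x_{e_1})(x_{e_1}x_{e_2})\cdots(x_{e_{n-1}}x_{e_n})$ telescopes to $J^{b_0}x_{e_n}^2=J^{b_0}$ by repeated left-to-right use of $x_{e_j}^2=1$. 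Surjectivity onto $\langle x_{f_i}\rangle$ is then immediate.

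For injectivity, suppose $w=J^a f_{i_1}\cdots f_{i_m}\in K$ lies in $\ker\phi$. Then $x_{f_{i_1}}\cdots x_{f_{i_m}}=J^a$ in $\Gamma$, so by Proposition~\ref{P:vankampen2} there is a character-minimal $\mcH$-picture $\mcP$ with $\bd(\mcP)=f_{i_1}\cdots f_{i_m}$ and $\ch(\mcP)\cdot b=a$. Because $\bd(\mcP)$ contains no edges from the unique cycle $\mcC$ of $\mcH$, Proposition~\ref{P:suncycle} lets me replace $\mcP$ by a character-equivalent character-minimal picture $\mcP'$ having no closed loops in which every $\mcC$-cycle is a facial cover. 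Each picture vertex labeled $v_i$ is incident to exactly one $e_{i-1}$-edge and one $e_i$-edge (in addition to one $f_i$-edge), which forces the labels along any $\mcC$-cycle to advance consistently by $\pm 1$; so each facial cover $C$ wraps around $\mcC$ a consistent $k_C\geq 1$ times, with its outward $f$-spokes appearing in cyclic order $f_1,f_2,\ldots,f_n,f_1,\ldots$ on its non-face side.

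Next I transform $\mcP'$ into a picture $\mcQ$ over the single-vertex hypergraph $\mcH_K$ that presents $K$ (one vertex labeled $b_0$, incident to each $f_i$ with multiplicity one). For each facial cover $C$ of ply $k_C$, I delete $C$ together with its vertices and place $k_C$ new $\mcH_K$-vertices inside a thin annular collar on the non-face side of $C$; each new vertex absorbs one block of $n$ cyclically consecutive spokes, so the spokes meet it in the order $f_1 f_2\cdots f_n$ and satisfy the unique relation of $K$. Since every original $f$-edge keeps its other endpoint unchanged and $\bd(\mcP')$ lies outside all the collars, the construction preserves both $\bd(\mcP')$ and the signed character, giving $\ch(\mcQ)\cdot b_0=\bigl(\sum_C k_C\bigr)b_0=\ch(\mcP')\cdot b=a$. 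Applying van Kampen's lemma to $K$ then produces $f_{i_1}\cdots f_{i_m}=J^a$ in $K$, so $w=1$.

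The main obstacle I anticipate is the planarity of the regrouping step when multiple facial covers are nested inside one another or joined to one another (or to $\bd(\mcP')$) by $f$-edges, since the choice of how to group the spokes at each cover must be globally consistent. This is resolved by the fact that each facial cover $C$ genuinely bounds a face of $\mcP'$: on the non-face side one can take a tubular collar disjoint from every other facial cover and from the boundary, the $k_C$ replacement vertices sit inside this collar, and the regrouping inside each collar is strictly local and uncoupled from the analogous choices at any other cover.
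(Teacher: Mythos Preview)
Your proof is correct and follows the same overall strategy as the paper: use Proposition~\ref{P:suncycle} to arrange that all $\mcC$-cycles are facial covers, then convert the resulting $\mcH$-picture into a $K$-picture and invoke van Kampen in both directions. The one difference is that the paper inserts an extra normalization step---it maximizes $\Cycle(\mcP',\mcC)$ over character-equivalent pictures of the same size, using a surgery that splits any higher-ply facial cover into two smaller ones, so that in the end every $\mcC$-cycle is a facial \emph{copy} and can simply be contracted to a single $K$-vertex---whereas you handle covers of arbitrary ply $k_C$ directly by placing $k_C$ new vertices in a collar on the non-face side; both procedures yield the same $K$-picture, so your shortcut is sound.
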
 
\begin{proof}
    Clearly there is an $\mcH$-picture $\mcP$ with $\bd(\mcP) = f_1 \cdots f_n$
    and $\ch(\mcP) = (1,\ldots,1)$, so $x_{f_1} \cdots x_{f_n} = J^{b_0}$ in
    $\Gamma$ by Proposition \ref{P:vankampen2}. 

    Conversely, suppose that $x_{f_{i_1}} \cdots x_{f_{i_k}} = J^c$ holds in
    $\Gamma$. By Proposition \ref{P:vankampen2} again, there is an
    $\mcH$-picture $\mcP$ with $\bd(\mcP) = f_{i_1} \cdots f_{i_k}$ and
    $\ch(\mcP) \cdot b = c$. Let $\mfP$ be the set of
    character-minimal pictures which are character-equivalent to $\mcP$, and in
    which every $\mcC$-cycle is a facial cover. By Proposition
    \ref{P:suncycle}, $\mfP$ is non-empty, and since all elements
    of $\mfP$ have the same number of vertices,
    $\Cycle(\mcP',\mcC)$ is bounded for $\mcP' \in \mfP$.  Let $\mcP'$ be an
    element of $\mfP$ which maximizes $\Cycle(\mcP',\mcC)$.  If $C$ is a
    $\mcC$-cycle in $\mcP'$ which is not a copy of $\mcC$, then it
    is possible to cut $C$ into two $\mcC$-cycles as shown below, where
    the interior of $C$ is a face of $\mcP$.
    \begin{equation*}
        \begin{tikzpicture}[auto,ultra thick,scale=.5,empty/.style={inner sep=0},every node/.style={scale=.8},vertex/.style={circle,draw,thin,inner sep=2.5}]

    \node[vertex] (1) {$1$};
    \node[vertex] (2) [right=of 1] {$2$}
        edge (1);

    \node[vertex] (3) [below=of 1] {$2$};
    \node[vertex] (4) [right=of 3] {$1$}
        edge (3);

    \draw [dotted] (1) to [out=180,in=90] node[swap] {$C$} ($(1)+(-2.5,0)+.5*(3)-.5*(1)$) to [out=-90,in=180] (3);
    \draw [dotted] (2) to [out=0,in=90] ($(2)+(2.5,0)+.5*(3)-.5*(1)$) coordinate (r) to [out=-90,in=0] (4);

    \node (m) at ($(r)+(1,0)$) {$\surg$};

    \coordinate (l) at ($(m)+(1,0)$);
    
    \node[vertex] (5) at ($(l)+(r)-(4)$) {$1$};
    \node[vertex] (6) [right=of 5] {$2$};
    \node[vertex] (7) [below=of 5] {$2$}
        edge (5);
    \node[vertex] (8) [right=of 7] {$1$}
        edge (6);

    \draw [dotted] (5) to [out=180,in=90] (l) to [out=-90,in=180] (7);
    \draw [dotted] (6) to [out=0,in=90] ($(6)-(l)+(7)$) to [out=-90,in=0] (8);

\end{tikzpicture}
    \end{equation*}
    Since this surgery does not change the character or modify any other
    $\mcC$-cycle in $\mcP$, we get an element of $\mfP$ with more
    $\mcC$-cycles than $\mcP'$, a contradiction. We conclude that every
    $\mcC$-cycle in $\mcP'$ is a facial copy. 

    Since $\mcC$ is cubic, each $\mcC$-cycle in $\mcP'$ must bound a unique
    face, even if $\mcP$ is a picture in a sphere. Let $\mcP''$ be the picture
    constructed by contracting each one of these faces to a vertex, as shown
    below (for $n=6$). 
    \begin{equation*}
        \begin{tikzpicture}[auto,ultra thick,vertex/.style={circle,draw,thin,inner sep=2.5},every node/.style={scale=.8},
                    hyperedge/.style={thin,pattern=north west lines},scale=.6]

    \begin{scope}
        \foreach \x in {1,...,6} {
            \node[vertex] (\x) at (60*\x:2) {$\x$};
            \node (\x') at (60*\x:3.5) {$f_\x$};
            \draw (\x) to (\x');
        }

        \draw (1) -- node [swap] {$e_1$} (2);
        \draw (2) -- node [swap] {$e_2$} (3);
        \draw (3) -- node [swap] {$e_3$} (4);
        \draw (4) -- node [swap] {$e_4$} (5);
        \draw (5) -- node [swap] {$e_5$} (6);
        \draw (6) -- node [swap] {$e_6$} (1);
    \end{scope}

    \node at (4.75,0) {$\surg$};

    \begin{scope}[shift={+(9,0)}]
        \foreach \x in {1,...,6} {
            \node (\x') at (60*\x:3) {$f_\x$};
            \draw (0,0) to (\x');
        }
        \draw[fill] (0,0) circle [radius=.2];
    \end{scope}
\end{tikzpicture}
    \end{equation*}
    Since every edge in $\mcP'$ labelled by an $e_i$ must belong to a
    $\mcC$-cycle by Lemma \ref{L:retractcycle}, every edge remaining in
    $\mcP''$ is labelled by an $f_i$ for some $1 \leq i \leq n$. If we label
    each vertex by the single relation $J^{b_0} f_1 \cdots f_n$, then $\mcP''$
    is a $K$-picture with $\bd(\mcP'') = f_{i_1}\cdots f_{i_k}$ and
    $\sign(\mcP'') = c$. By Proposition \ref{P:vankampen1}, the relation
    $f_{i_1} \cdots f_{i_k} = J^c$ holds in $K$.
\end{proof}
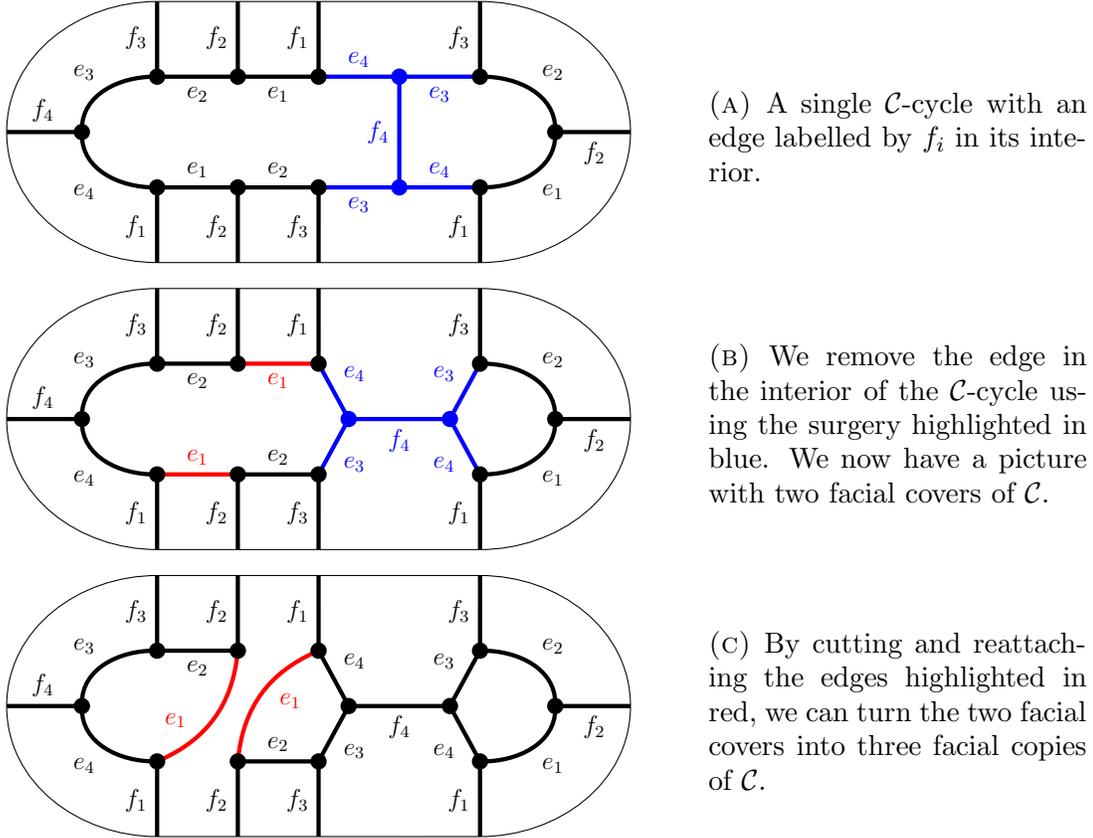
\begin{figure}
    \begin{subfigure}{1.2\textwidth}
        \begin{minipage}{.55\textwidth}
            \begin{tikzpicture}[auto,ultra thick,scale=.4,empty/.style={inner sep=0},every node/.style={scale=.8}]

    \node[empty] (0) {};
    \node[empty] (1) [right=of 0] {}
        edge node {$e_2$} (0);
    \node[empty] (2) [right=of 1] {}
        edge node {$e_1$} (1);
    \node[empty] (3) [right=of 2] {}
        edge[blue] node[swap] {$e_4$} (2);
    \node[empty] (4) [right=of 3] {}
        edge[blue] node {$e_3$} (3);

    \node[empty] (5) [below=1.4cm of 0] {};
    \node[empty] (6) [right=of 5] {}
        edge node[swap] {$e_1$} (5);
    \node[empty] (7) [right=of 6] {}
        edge node[swap] {$e_2$} (6);
    \node[empty] (8) [right=of 7] {}
        edge[blue] node {$e_3$} (7)
        edge[blue] node {$f_4$} (3);
    \node[empty] (9) [right=of 8] {}
        edge[blue] node[swap] {$e_4$} (8);
    
    \node[empty] (10) at ($.5*(0) + .5*(5) + (-2.5,0)$) {}
        edge [out=90,in=180] node {$e_3$} (0) 
        edge [out=-90,in=180] node [swap] {$e_4$} (5); 
    \node[empty] (11) at ($.5*(4) + .5*(9) + (2.5,0)$) {} 
        edge [out=90,in=0] node [swap] {$e_2$} (4) 
        edge [out=-90,in=0] node {$e_1$} (9); 
    
    \foreach \x in {0,1,2,4,5,6,7,9,10,11}
        \draw [fill] (\x) circle [radius=.2];

    \foreach \x in {3,8}
        \draw[fill,blue] (\x) circle [radius=.2];

    \coordinate (0') at ($(0)+(90:2.5)$)
        edge node [swap] {$f_3$} (0); 
    \coordinate (1') at ($(1)+(90:2.5)$)
        edge node [swap] {$f_2$} (1); 
    \coordinate (2') at ($(2)+(90:2.5)$)
        edge node [swap] {$f_1$} (2); 
    \coordinate (4') at ($(4)+(90:2.5)$)
        edge node [swap] {$f_3$} (4); 

    \coordinate (5') at ($(5)+(-90:2.5)$)
        edge node {$f_1$} (5); 
    \coordinate (6') at ($(6)+(-90:2.5)$)
        edge node {$f_2$} (6); 
    \coordinate (7') at ($(7)+(-90:2.5)$)
        edge node {$f_3$} (7); 
    \coordinate (9') at ($(9)+(-90:2.5)$)
        edge node {$f_1$} (9); 

    \coordinate (10') at ($(10)+(180:2.5)$)
        edge node {$f_4$} (10); 
    \coordinate (11') at ($(11)+(0:2.5)$)
        edge node {$f_2$} (11); 

    \draw[thin] (0') to (1') to (2') to (4') 
        to [out=0,in=90] (11') to [out=-90,in=0] (9')
        to (7') to (6') to (5') 
        to [out=180,in=-90] (10') to [out=90,in=180] (0');
\end{tikzpicture}
        \end{minipage}
        \begin{minipage}{.3\textwidth}
            \centering
            \caption{A single $\mcC$-cycle with an edge labelled by $f_i$ in its interior.}
            \label{F:sunexample1}
        \end{minipage}
        \vspace{.1in}
    \end{subfigure}

    \begin{subfigure}{1.2\textwidth}
        \begin{minipage}{.55\textwidth}
            \begin{tikzpicture}[auto,ultra thick,scale=.4,empty/.style={inner sep=0},every node/.style={scale=.8}]

    \node[empty] (0) {};
    \node[empty] (1) [right=of 0] {}
        edge node {$e_2$} (0);
    \node[empty] (2) [right=of 1] {}
        edge[red] node {$e_1$} (1);
    \node[empty] (p) [right=of 2] {};
    \node[empty] (4) [right=of p] {};

    \node[empty] (5) [below=1.4cm of 0] {};
    \node[empty] (6) [right=of 5] {}
        edge[red] node[swap] {$e_1$} (5);
    \node[empty] (7) [right=of 6] {}
        edge node[swap] {$e_2$} (6);
    \node[empty] (p) [right=of 7] {};
    \node[empty] (9) [right=of p] {};
    
    \node[empty] (10) at ($.5*(0) + .5*(5) + (-2.5,0)$) {}
        edge [out=90,in=180] node {$e_3$} (0) 
        edge [out=-90,in=180] node [swap] {$e_4$} (5); 
    \node[empty] (11) at ($.5*(4) + .5*(9) + (2.5,0)$) {} 
        edge [out=90,in=0] node [swap] {$e_2$} (4) 
        edge [out=-90,in=0] node {$e_1$} (9); 

    \node[empty] (3) at ($.5*(2) + .5*(7) + (1,0)$) {}
        edge[blue] node[swap] {$e_4$} (2)
        edge[blue] node {$e_3$} (7);
    \node[empty] (8) at ($.5*(4) + .5*(9) + (-1,0)$) {}
        edge[blue] node {$e_3$} (4)
        edge[blue] node[swap] {$e_4$} (9)
        edge[blue] node {$f_4$} (3);
    
    \foreach \x in {0,1,2,4,5,6,7,9,10,11}
        \draw [fill] (\x) circle [radius=.2];

    \foreach \x in {3,8}
        \draw[fill,blue] (\x) circle [radius=.2];

    \coordinate (0') at ($(0)+(90:2.5)$)
        edge node [swap] {$f_3$} (0); 
    \coordinate (1') at ($(1)+(90:2.5)$)
        edge node [swap] {$f_2$} (1); 
    \coordinate (2') at ($(2)+(90:2.5)$)
        edge node [swap] {$f_1$} (2); 
    \coordinate (4') at ($(4)+(90:2.5)$)
        edge node [swap] {$f_3$} (4); 

    \coordinate (5') at ($(5)+(-90:2.5)$)
        edge node {$f_1$} (5); 
    \coordinate (6') at ($(6)+(-90:2.5)$)
        edge node {$f_2$} (6); 
    \coordinate (7') at ($(7)+(-90:2.5)$)
        edge node {$f_3$} (7); 
    \coordinate (9') at ($(9)+(-90:2.5)$)
        edge node {$f_1$} (9); 

    \coordinate (10') at ($(10)+(180:2.5)$)
        edge node {$f_4$} (10); 
    \coordinate (11') at ($(11)+(0:2.5)$)
        edge node {$f_2$} (11); 

    \draw[thin] (0') to (1') to (2') to (4') 
        to [out=0,in=90] (11') to [out=-90,in=0] (9')
        to (7') to (6') to (5') 
        to [out=180,in=-90] (10') to [out=90,in=180] (0');
\end{tikzpicture}
        \end{minipage}
        \begin{minipage}{.3\textwidth}
            \centering
            \caption{We remove the edge in the interior of the $\mcC$-cycle using
                    the surgery highlighted in blue. We now have a picture with two
                    facial covers of $\mcC$.}
            \label{F:sunexample2}
        \end{minipage}
        \vspace{.1in}
    \end{subfigure}

    \begin{subfigure}{1.2\textwidth}
        \begin{minipage}{.55\textwidth}
            \begin{tikzpicture}[auto,ultra thick,scale=.4,empty/.style={inner sep=0},every node/.style={scale=.8}]

    \node[empty] (0) {};
    \node[empty] (1) [right=of 0] {}
        edge node {$e_2$} (0);
    \node[empty] (2) [right=of 1] {};
    \node[empty] (p) [right=of 2] {};
    \node[empty] (4) [right=of p] {};

    \node[empty] (5) [below=1.4cm of 0] {}
        edge [red,bend right] node[pos=.3] {$e_1$} (1);
    \node[empty] (6) [right=of 5] {}
        edge[red,bend left] node[swap,pos=.6] {$e_1$} (2);
    \node[empty] (7) [right=of 6] {}
        edge node[swap] {$e_2$} (6);
    \node[empty] (p) [right=of 7] {};
    \node[empty] (9) [right=of p] {};
    
    \node[empty] (10) at ($.5*(0) + .5*(5) + (-2.5,0)$) {}
        edge [out=90,in=180] node {$e_3$} (0) 
        edge [out=-90,in=180] node [swap] {$e_4$} (5); 
    \node[empty] (11) at ($.5*(4) + .5*(9) + (2.5,0)$) {} 
        edge [out=90,in=0] node [swap] {$e_2$} (4) 
        edge [out=-90,in=0] node {$e_1$} (9); 

    \node[empty] (3) at ($.5*(2) + .5*(7) + (1,0)$) {}
        edge node[swap] {$e_4$} (2)
        edge node {$e_3$} (7);
    \node[empty] (8) at ($.5*(4) + .5*(9) + (-1,0)$) {}
        edge node {$e_3$} (4)
        edge node[swap] {$e_4$} (9)
        edge node {$f_4$} (3);
    
    \foreach \x in {0,...,11}
        \draw [fill] (\x) circle [radius=.2];

    \coordinate (0') at ($(0)+(90:2.5)$)
        edge node [swap] {$f_3$} (0); 
    \coordinate (1') at ($(1)+(90:2.5)$)
        edge node [swap] {$f_2$} (1); 
    \coordinate (2') at ($(2)+(90:2.5)$)
        edge node [swap] {$f_1$} (2); 
    \coordinate (4') at ($(4)+(90:2.5)$)
        edge node [swap] {$f_3$} (4); 

    \coordinate (5') at ($(5)+(-90:2.5)$)
        edge node {$f_1$} (5); 
    \coordinate (6') at ($(6)+(-90:2.5)$)
        edge node {$f_2$} (6); 
    \coordinate (7') at ($(7)+(-90:2.5)$)
        edge node {$f_3$} (7); 
    \coordinate (9') at ($(9)+(-90:2.5)$)
        edge node {$f_1$} (9); 

    \coordinate (10') at ($(10)+(180:2.5)$)
        edge node {$f_4$} (10); 
    \coordinate (11') at ($(11)+(0:2.5)$)
        edge node {$f_2$} (11); 

    \draw[thin] (0') to (1') to (2') to (4') 
        to [out=0,in=90] (11') to [out=-90,in=0] (9')
        to (7') to (6') to (5') 
        to [out=180,in=-90] (10') to [out=90,in=180] (0');
\end{tikzpicture}
        \end{minipage}
        \begin{minipage}{.3\textwidth}
            \centering
            \caption{By cutting and reattaching the edges highlighted in red, we can
                     turn the two facial covers into three facial copies of $\mcC$.} 
            \label{F:sunexample3}
        \end{minipage}
    \end{subfigure}
    \caption{Using surgery, we turn a picture over the sun of size $4$ into an
        equivalent picture where all $\mcC$-cycles are facial copies.}
    \label{F:sunexample}
\end{figure}
The proof techniques of Proposition \ref{P:suncycle} and Corollary
\ref{C:suncycle} are illustrated by the following example.
\begin{example}
    Let $\mcH$ be the sun of size $4$, and let $K = \Inv\langle f_1,\ldots,f_4 :
    f_1 f_2 f_3 f_4 = 1 \rangle$. It is not hard to see that the relation $w :=
    f_1 f_2 f_3 f_4 (f_1 f_2 f_3)^2 = 1$ holds in $K$. Given a $K$-picture with
    boundary word $w$, we can replace every vertex with a $\mcC$-cycle to get
    an equivalent $\mcH$-picture, for instance as shown in Figure
    \ref{F:sunexample3}. However, there are $\mcH$-pictures with boundary word
    $w$, as shown in Figure \ref{F:sunexample1}, which do not come from a
    $K$-picture in this way. Nonetheless, if $\mcP$ is a minimal $\mcH$-picture
    whose boundary contains only $f_i$'s, then the proofs of Proposition
    \ref{P:suncycle} and Corollary \ref{C:suncycle} provide a method to
    transform $\mcP$ to an equivalent picture which does come from a $K$-picture.
    The transformation process is shown in Figure \ref{F:sunexample}.
\end{example}

To finish the section, we prove one more technical lemma:
\begin{lemma}\label{L:nosillycycles}
    Let $\mcP$ be a character-minimal picture with no closed loops over the sun
    $\mcH$ of size $n$.  Then there is no cycle in $\mcP$ with edge labels
    contained in $\{e_i,f_i,f_{i+1}\}$ for some fixed $1 \leq i \leq n$ (where
    $f_{n+1} := f_1$). 
\end{lemma}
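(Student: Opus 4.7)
I would argue by contradiction, assuming $\mcP$ contains such a cycle $C$ and then producing a character-equivalent picture of strictly smaller size, contradicting character-minimality. I begin with a structural analysis: in $\mcH$ the edges $e_i, f_i, f_{i+1}$ are incident only to the vertices $i$ and $i+1$, so every vertex of $C$ is labeled $i$ or $i+1$. The incidence multiplicities $A_{i,e_i} = A_{i,f_i} = 1$ and $A_{i,f_{i+1}} = 0$ (with symmetric values at $i+1$) force the two cycle-edges meeting any vertex $v$ of label $j \in \{i,i+1\}$ to be exactly one $e_i$ and one $f_j$. Hence $C$ has even length $2k$, its edge labels alternate $e_i$ and $f_?$, and each $f$-edge joins two vertices of the same label. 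The vertices of $C$ therefore fall into $k$ same-label \emph{pairs}, each pair joined by an $f$-edge.

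Next, since every vertex of $\mcH$ has degree $3$, each vertex $v$ on $C$ has exactly one off-cycle incident edge in $\mcP$: labeled $e_{i-1}$ if $h(v) = i$ and $e_{i+1}$ if $h(v) = i+1$. Let $\mcD$ be a disc bounded by $C$. For each pair I would classify its two off-cycle edges by the side of $C$ on which they emerge; call the pair \emph{aligned} if both emerge on the same side (both into $\mcD$ or both outside $\mcD$), and \emph{twisted} otherwise.

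If some pair is aligned, I would perform a local pair-collapse surgery: delete the two pair vertices and the $f$-edge joining them, fuse the two adjacent $e_i$ cycle-edges into a single $e_i$-edge, and fuse the two off-cycle edges into a single edge of their common label $e_{i \pm 1}$. In the aligned case the four edge-stubs around the pair appear, in cyclic order along the boundary of a small disc enclosing the pair, as $e_i, \mathrm{off}, \mathrm{off}, e_i$, so the pairing $(e_i, e_i), (\mathrm{off}, \mathrm{off})$ is a non-crossing nested matching and the surgery is planar. The resulting picture removes exactly two vertices of identical label, preserving $\ch(\mcP)$ modulo $2$, while strictly decreasing the size; this contradicts character-minimality of $\mcP$.

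The main obstacle is the case where every pair of $C$ is twisted, since in the cyclic order $e_i, \mathrm{off}, e_i, \mathrm{off}$ the only label-respecting matching $(e_i, e_i), (\mathrm{off}, \mathrm{off})$ is crossing, and the local surgery fails. To handle it I would choose $C$ innermost among cycles in $\mcP$ satisfying the hypothesis, so that no such cycle lies strictly inside $\mcD$, and examine $\res(\mcP, \mcD)$, which inherits character-minimality and absence of closed loops from $\mcP$. In the all-twisted case each pair contributes exactly one $e_{i \pm 1}$-labeled edge into $\mcD$, and I would aim either (i) to locate a second cycle with labels in $\{e_i, f_i, f_{i+1}\}$ inside $\mcD$, formed by these inner off-cycle edges together with pieces of $\res(\mcP, \mcD)$, contradicting the innermost choice of $C$; or (ii) to perform a global surgery that simultaneously collapses all pairs of $C$ by rerouting the inner and outer off-cycle edges in complementary non-crossing matchings along each side of $C$. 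The technical crux is the parity of the inner $e_{i-1}$- and $e_{i+1}$-edge counts, which I expect to control using the retraction $\mcP[\mcH']$ of $\mcP$ onto the closed subhypergraph $\mcH' = \{i, i+1, e_i, f_i, f_{i+1}\}$: this retract is everywhere of degree two and hence a disjoint union of cycles and paths, and the existence of the cycle component $C$ in it must be shown to clash with the minimality of $\mcP$.
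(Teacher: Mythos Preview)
Your structural analysis is correct and matches the paper's: the cycle $C$ has even length $2k$, its edges alternate between $e_i$-labels and $f$-labels, each $f$-edge joins two vertices of the same label $j\in\{i,i+1\}$, and the unique off-cycle edge at such a vertex is labelled $e_{i-1}$ (if $j=i$) or $e_{i+1}$ (if $j=i+1$). Your local surgery for an aligned pair is also valid.

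The gap is in the twisted case. Neither of your two options closes the argument. Option (i) cannot work as stated: the inner off-cycle edges are labelled $e_{i-1}$ or $e_{i+1}$, not by elements of $\{e_i,f_i,f_{i+1}\}$, so they do not produce another cycle of the forbidden type inside $\mcD$. Option (ii) asks for non-crossing same-label matchings of the inner (respectively outer) off-cycle edges, but the natural pairing you need---namely $a_{2j-1}$ with $a_{2j}$ for each pair---places the two edges on opposite sides of $C$ in the all-twisted situation, so it is not a matching on a single side; and if you instead try to match inner edges with inner edges from \emph{different} pairs, the resulting picture is no longer character-equivalent to $\mcP$ (you change the parity of vertex labels), and there is no parity argument available from $\mcP[\mcH']$ that forces the inner $e_{i-1}$- and $e_{i+1}$-counts to be even.

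The paper sidesteps the aligned/twisted dichotomy entirely. Its surgery is global: delete \emph{all} vertices and edges of $C$ simultaneously, and for each pair $(v_{2j-1},v_{2j})$ joined by the $f$-edge $b_j$, connect the two off-cycle edges $a_{2j-1}$ and $a_{2j}$ by routing along the path that $b_j$ used to occupy. Since $C$ has been removed there is no longer an inside/outside obstruction, and since the $b_j$ are pairwise disjoint segments of $C$ the new connections do not cross one another or any remaining edge. The two off-cycle edges in each pair carry the same label, so the fused edge is legitimate, and the resulting picture is character-equivalent with $2k$ fewer vertices. The point you are missing is that routing \emph{along} the deleted $f$-edge, rather than on one side of $C$, makes the side on which each $a_j$ lies irrelevant.
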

\begin{proof}
    Suppose $C$ is a cycle of this form, and let $\mcU_i$ be the closure of the
    subhypergraph $\{f_i,f_{i+1}\}$. Since $\mcP$ has no closed loops, $C$ must
    have at least one vertex. Since $\mcH$ is simple, $C$ must have at least two
    vertices, and there cannot be two consecutive edges of $C$ with the same
    label. Thus $C[\mcU_i]$ is a (non-empty) matching between the vertices of
    $C$, where paired vertices must have the same label (either $i$ or $i+1$,
    depending on whether the edge connecting them is labelled by an $f_i$ or
    $f_{i+1}$).  Since every vertex of $C$ occurs in $C[\mcU_i]$, $C$ must
    consist of a sequence of vertices $v_{1},\ldots,v_{2n},v_{2n+1}=v_1$, such
    that $v_{2i-1}$ and $v_{2i}$ are connected by an edge $b_i$ with $h(b_i)
    \in \{f_i,f_{i+1}\}$, and $v_{2i}$ and $v_{2i+1}$ are connected by an edge
    labelled by $e_i$, $i=1,\ldots,n$. Let $a_i$ be the edge incident to
    $v_{i}$ which is not in $C$.  Since every vertex $v_i$ is incident to an
    edge of $C$ labelled by $e_i$, we conclude that
    \begin{equation*}
        h(a_i) = \begin{cases} e_{i-1} & h(v_i) = i \\
                               e_{i+1} & h(v_i) = i+1
                \end{cases}, 
    \end{equation*}
    where $e_{0} := e_n$ and $e_{n+1} := e_1$. But since
    $h(v_{2i-1})=h(v_{2i})$, this means that $h(a_{2i-1}) = h(a_{2i})$ for all
    $i=1,\ldots,n$. Thus if we delete all the vertices and edges of $C$, we can
    connect $a_{2i-1}$ and $a_{2i}$ along the path previously taken by $b_i$
    to get a new $\mcH$-picture $\mcP'$ which is character-equivalent to
    $\mcP$. But the size of $\mcP'$ would then be strictly smaller than the
    size of $\mcP$, contradicting minimality.
\end{proof}

\section{Stellar cycles and constellations}\label{S:constellation}

In this section, we extend the results for suns from the previous section to
more complicated hypergraphs through the notion of a \emph{constellation}.
The result is the ``constellation theorem'', which will be the key to the
proof of Theorem \ref{T:invembedding}.
\begin{defn}\label{D:stellar}
    Let $\mcH$ be a hypergraph with vertex labelling function $b : V(\mcH)
    \arr \Z_2$. A cycle $\mcC$ in $\mcH$ is \emph{$b$-stellar} if
    \begin{enumerate}[(a)]
        \item the neighbourhood $\mcN(\mcC)$ is isomorphic to a sun,
        \item $\mcN(\mcC)$ is a retract of $\mcH$, and
        \item $b_v = 0$ for all $v \in V(\mcC)$. 
    \end{enumerate}
\end{defn}

\begin{defn}\label{D:constellation}
    Let $\mcH$ be a hypergraph with vertex labelling $b : V(\mcH) \arr \Z_2$.
    A \emph{$b$-constellation} is a collection $\Phi$ of subhypergraphs of
    $\mcH$ satisfying the following properties:
    \begin{enumerate}[(a)]
        \item If $\mcC \in \Phi$, then $\mcC$ is a cycle, the neighbourhood
            $\mcN(\mcC)$ is isomorphic to a sun, and $\mcC$ is either:
            \begin{enumerate}[(i)]
                \item $b$-stellar, or
                \item a sequence of edges $e_1 e_2 \cdots e_n$ (in order), $n
                    \geq 3$, such that $e_k$ belongs to a $b$-stellar cycle $\mcC'
                    \in \Phi$ for all $3\leq k \leq n$.
            \end{enumerate}
        \item For every element $\mcC \in \Phi$, either:
            \begin{enumerate}[(i)]
                \item there is an edge $e$ in $\mcC$ which does not belong to
                    any cycle in $\Phi \setminus \{\mcC\}$, or
                \item there is another cycle $\mcC' \in \Phi$ such that
                    $E(\mcC) \cap E(\mcC') \neq \emptyset$, and $\mcC'$ contains
                    an edge $e$ which does not belong to any cycle in 
                    $\Phi \setminus \{\mcC'\}$.
            \end{enumerate}
        \item If $\mcC_0, \mcC_1 \in \Phi$, where $\mcC_0 \neq \mcC_1$, then
            $|E(\mcC_0) \cap E(\mcC_1)| \leq 1$, and if neither $\mcC_0$ or
            $\mcC_1$ is $b$-stellar, then $E(\mcC_0) \cap E(\mcC_1) = \emptyset$. 
    \end{enumerate}
    If $\Phi$ is a $b$-constellation, then a cycle $C$ in a picture $\mcP$ is a
    \emph{$\Phi$-cycle} if $C$ is a $\mcC$-cycle for some $\mcC \in \Phi$. 
\end{defn}
Roughly speaking, property (a) in Definition \ref{D:constellation} says that
every cycle $\mcC$ in $\Phi$ is either stellar or (mostly) covered by other
stellar cycles, while properties (b) and (c) state that the cycles in $\Phi$ do not
overlap too much. 
\begin{example}\label{Ex:constellation}
    Consider the hypergraph $\mcH$ shown in Figure \ref{F:cubewithtail}, and
    let $b$ be the vertex labelling function with $b_9=1$ and $b_v=0$ for $v
    \neq 9$.
    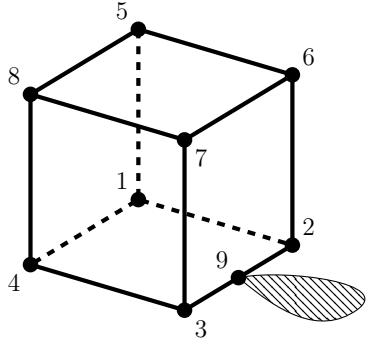
\begin{figure}
        \begin{tikzpicture}[auto,ultra thick,scale=.5,vertex/.style={circle,draw,thin,inner sep=2.5},every node/.style={scale=.8},
                    hyperedge/.style={thin,pattern=north west lines},empty/.style={inner sep=0}]

    \node[empty] (1) at (0.000000,0.000000) {};
    \node[empty] (2) at (4.095760,-1.212019) {};
    \node[empty] (3) at (1.227878,-2.942962) {};
    \node[empty] (4) at (-2.867882,-1.730943) {};
    \node[empty] (5) at (0.000000,4.531539) {};
    \node[empty] (6) at (4.095760,3.319520) {};
    \node[empty] (7) at (1.227878,1.588576) {};
    \node[empty] (8) at (-2.867882,2.800596) {};
    \node[empty] (9) at (2.662,-2.077) {};
    \node[empty] (10) at (5.734,-2.99) {};

    \draw[dashed] (1) -- (2); \draw[dashed] (1) -- (4); \draw[dashed] (1) -- (5);
    \draw (2) -- (3); \draw (2) -- (6);
    \draw (3) -- (4); \draw (3) -- (7);
    \draw (4) -- (8);
    \draw (5) -- (6); \draw (5) -- (8);
    \draw (6) -- (7); \draw (7) -- (8);

    \draw[hyperedge] (9.10) to [out=10,in=35] (10) to [out=-145,in=-40] (9.-40);

    \node[above left] at (1) {$1$};
    \node[above right] at (2) {$2$};
    \node[below right] at (3) {$3$};
    \node[below left] at (4) {$4$};
    \node[above left] at (5) {$5$};
    \node[above right] at (6) {$6$};
    \node[below right] at (7) {$7$};
    \node[above left] at (8) {$8$};
    \node[above left] at (9) {$9$};
    
    \foreach \x in {1,...,9}
        \draw[fill] (\x) circle [radius=.15];

\end{tikzpicture}
        \caption{A hypergraph based on Figure \ref{F:cube}, but where the open
            neighbourhood of each face is a sun.}
        \label{F:cubewithtail}
    \end{figure}
    Let $C_1$, $C_2$, and $C_3$ be the cycles with vertex sets $\{1,2,5,6\}$,
    $\{1,4,5,8\}$, and $\{3,4,7,8\}$ respectively, and let $C_4$ be the cycle
    with vertex set $\{1,2,3,4,9\}$. As in Example \ref{Ex:cube}, the
    neighbourhoods $\mcN(C_i)$,$i=1,2,3$, are retracts of $\mcH$, and hence
    $C_1$,$C_2$, and $C_3$ are $b$-stellar. The cycle $C_4$ is not a retract,
    nor is $b_v = 0$ for all $v \in V(C_4)$, so $C_4$ is not $b$-stellar. But
    $\mcN(C_4)$ is a sun, and the edges $12$, $14$, and $34$ belong to $C_1$,
    $C_2$, and $C_3$ respectively.  Thus $\Phi = \{C_1,\ldots,C_4\}$ is a
    $b$-constellation. 

    The cycle with vertex set $\{5,6,7,8\}$ is also $b$-stellar, and could be
    added to this constellation, but the cycle with vertex set $\{2,3,6,7,9\}$
    cannot be added since it is not $b$-stellar and shares edges with $C_4$. 
\end{example}
We can now state the main theorem of this section:
\begin{thm}[Constellation theorem]\label{T:constellation}
    Let $\mcH$ be a hypergraph with vertex labelling $b$, and let $\Phi$ be a
    $b$-constellation. Let $\mcP$ be an $\mcH$-picture such that:
    \begin{enumerate}[label={(p.\arabic*)}]
        \item\label{P1}  $\bd(\mcP)$ does not contain any edges from any cycle $\mcC \in \Phi$, and 
        \item\label{P2}  either $b=0$ or $\mcP$ is closed. 
    \end{enumerate}
    Then $\mcP$ is $b$-equivalent to a picture $\mcP'$ such that all
    $\Phi$-cycles in $\mcP'$ are facial copies. 
\end{thm}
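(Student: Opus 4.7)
The plan is to pick $\mcP'$ that is $b$-minimal in the $b$-equivalence class of $\mcP$ and, among such pictures, maximizes $\Cycle(\mcP',\Phi)$; by Lemma \ref{L:closedloops} we may further assume $\mcP'$ has no closed loops. The argument proceeds in three stages: first, force every $\mcC$-cycle for $b$-stellar $\mcC \in \Phi$ to be a facial cover; second, handle the non-stellar cycles of $\Phi$; third, promote facial covers to facial copies.

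The crucial observation driving the first stage is that since $\mcN(\mcC)$ is the full neighbourhood of $\mcC$ in $\mcH$, every edge of $\mcH$ incident to a vertex of $V(\mcC)$ already lies in $\mcN(\mcC)$. Consequently any vertex of $\mcP'$ whose label is in $V(\mcC)$ has all its incident edges labelled by $\mcN(\mcC)$, and the local pinching/unpinching surgeries used in Proposition \ref{P:suncycle} are entirely supported inside the sun $\mcN(\mcC)$. Since $b_v = 0$ for $v \in V(\mcC)$ by Definition \ref{D:stellar}(c), any surgery whose character change is supported on $V(\mcC)$ preserves $\ch \cdot b$ and hence $b$-equivalence; deletion of closed sub-pictures supported on $\mcN(\mcC)$ is justified by Lemma \ref{L:suncycle1} applied after retracting to $\mcN(\mcC)$, which is possible by Definition \ref{D:stellar}(b). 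Reading the proof of Proposition \ref{P:suncycle} with \emph{character-equivalence} replaced by \emph{$b$-equivalence} throughout then shows that in $\mcP'$ every $\mcC$-cycle for $b$-stellar $\mcC$ must be a facial cover.

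For a non-$b$-stellar $\mcC = e_1 \cdots e_n \in \Phi$, condition (a.ii) of Definition \ref{D:constellation} provides, for each $3 \leq k \leq n$, a $b$-stellar $\mcC_k \in \Phi$ containing $e_k$, and condition (c) forces $|E(\mcC) \cap E(\mcC_k)| \leq 1$. The $\mcC_k$-cycles in $\mcP'$ are facial covers (hence cubic) by stage one, so Proposition \ref{P:boundary} applies to each: a facial $\mcC$-cycle $C$ either lies inside the face enclosed by a given $\mcC_k$-cover or meets it along an outer quadrilateral. Chaining these constraints around $C$, and using condition (b) of Definition \ref{D:constellation} to supply an anchor edge, one builds a local sun structure around $C$ and invokes Proposition \ref{P:suncycle} once more to place $C$ in facial position without disturbing the facial covers of stage one or decreasing $\Cycle(\mcP',\Phi)$.

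Finally, upgrading facial covers to facial copies mirrors the endgame of Corollary \ref{C:suncycle}: if a facial $\mcC$-cycle has ply $\geq 2$, cut it along an internal edge into two nested facial $\mcC$-cycles; this surgery adds no vertices (preserving $b$-minimality) but strictly increases $\Cycle(\mcP',\Phi)$, contradicting maximality. The principal difficulty lies in stage two: although condition (c) severely restricts how cycles in $\Phi$ overlap, one must verify that the surgeries needed to force non-stellar $\mcC$-cycles into facial position are simultaneously compatible with the facial-cover structure already established for every $b$-stellar cycle, and that the distinguished edges of condition (b) provide the anchor needed to avoid a cascade of conflicting surgeries between overlapping members of $\Phi$.
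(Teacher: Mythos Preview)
Your proposal has the right overall architecture---make stellar cycles facial first, then non-stellar, then upgrade covers to copies---but there are two genuine gaps.

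The first and most serious is in stage one: you never invoke hypothesis \ref{P2}. When you retract the interior of a $\mcC$-cycle $C$ to $\mcN(\mcC)$ and appeal to Lemma \ref{L:suncycle1}, you correctly observe that the \emph{retracted} picture has character supported on $V(\mcC)$, where $b$ vanishes. But the region bounded by $C$ in the original $\mcH$-picture may contain vertices labelled by elements of $V(\mcH)\setminus V(\mcC)$, where $b$ need not be zero; replacing that region by the retracted (or simplified) version therefore alters $\ch(\mcP)\cdot b$ in general, destroying $b$-equivalence. The paper closes this gap with Lemma \ref{L:hypp2}: when $\mcP$ is closed, one of the two regions bounded by $C$ satisfies $\ch(\germ(\mcP,\mcD))\cdot b=0$, so the replacement is $b$-equivalent on \emph{that} side. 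Without \ref{P2} you cannot pick the correct side, and your stage one fails for $b\neq 0$.

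The second gap is in stage two. A non-stellar $\mcC\in\Phi$ has, by Definition \ref{D:constellation}(a)(ii), edges $e_3,\ldots,e_n$ lying in stellar members of $\Phi$, but $e_1,e_2$ need not---these are the ``independent'' edges in the paper's terminology. You cannot ``invoke Proposition \ref{P:suncycle} once more'' for $\mcC$, because $\mcN(\mcC)$ is not assumed to be a retract of $\mcH$ (that is precisely what non-stellar allows). The paper instead uses Lemma \ref{L:covering1} to show that the portion of any $\mcC$-cycle $C$ labelled by $e_3,\ldots,e_n$ is already a cover with all outward spokes on one side, and then performs explicit surgeries (Figures \ref{F:oneindependent} and \ref{F:twoindependent}) along the independent edges to peel off facial copies of $\mcC$. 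These surgeries may \emph{increase} the size of the picture, so they cannot be done inside a $b$-minimal representative; a separate minimization within a restricted class (the set $\mfP$ in the final subsection) is needed afterwards to kill residual cycles labelled only by independent edges. Your sketch---which fixes a single $b$-minimal, $\Cycle$-maximizing $\mcP'$ at the outset and tries to do everything there---cannot accommodate this, and the closing paragraph of your proposal essentially concedes that stage two is not actually carried out.
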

The rest of this section is concerned with the proof of Theorem
\ref{T:constellation}. To aid the reader, the proof is split into a number of
lemmas, which are in turn grouped into subsections. In all the lemmas, $\mcH$
will be a hypergraph and $b$ will be a vertex labelling. We refer to the
hypotheses \ref{P1} and \ref{P2} of Theorem \ref{T:constellation} as necessary.

\subsection{Structure of constellations} We start by proving two lemmas about
constellations.
\begin{lemma}\label{L:struct1}
    If $\Phi$ is a $b$-constellation in $\mcH$, and $e$ is an edge of $\mcH$,
    then there are at most two cycles in $\Phi$ containing $e$.
\end{lemma}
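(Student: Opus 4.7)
The plan is to argue by contradiction: suppose three distinct cycles $\mcC_1, \mcC_2, \mcC_3 \in \Phi$ all contain a common edge $e$, and derive a contradiction with property (c) of Definition \ref{D:constellation}, which forces $|E(\mcC_i) \cap E(\mcC_j)| \leq 1$ for distinct $i,j$.

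The key structural observation I would exploit is that for each $\mcC \in \Phi$, the neighbourhood $\mcN(\mcC)$ is isomorphic to a sun. In a sun of size $n$, every vertex of the cycle has degree exactly three (two cycle edges plus one ``ray''). Moreover, since $\mcN(\mcC)$ is defined to include every edge of $\mcH$ incident to a vertex of $\mcC$, this means any $v \in V(\mcC)$ has degree precisely three in the full hypergraph $\mcH$ itself. So as a first step I would pick an endpoint $u$ of the edge $e$. Because $e$ lies in $\mcC_i$, we have $u \in V(\mcC_i)$ for each $i=1,2,3$, and hence from the sun hypothesis applied to any one of the $\mcC_i$, the vertex $u$ has degree three in $\mcH$. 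Let $e, x, y$ denote the three edges of $\mcH$ incident to $u$.

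Next I would examine which pair of edges at $u$ each cycle uses. Since $\mcC_i$ is a simple $2$-regular graph passing through $u$, exactly two of the edges at $u$ lie in $\mcC_i$; one is $e$, and the other must be either $x$ or $y$. This gives a function $\{1,2,3\} \to \{x, y\}$ sending $i$ to the non-$e$ cycle edge of $\mcC_i$ at $u$. By pigeonhole, two of the $\mcC_i$'s, say $\mcC_1$ and $\mcC_2$, share the same second edge at $u$. Then $\{e, x\} \subseteq E(\mcC_1) \cap E(\mcC_2)$ (taking $x$ without loss of generality), so $|E(\mcC_1) \cap E(\mcC_2)| \geq 2$, contradicting property (c) of Definition \ref{D:constellation}. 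The conclusion is that at most two cycles of $\Phi$ can contain $e$.

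I do not expect any serious obstacle here: once one recognizes that the sun structure pins down the local degree at every vertex of a cycle in $\Phi$ to be exactly three, the pigeonhole is immediate. The only mild care needed is in verifying that $\mcN(\mcC_i)$ being a sun genuinely restricts the global degree in $\mcH$ (not just degree within $\mcN(\mcC_i)$), and this follows straight from the definition of $\mcN(\mcC_i)$, which by design swallows every edge of $\mcH$ incident to $V(\mcC_i)$.
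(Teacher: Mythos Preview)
Your proof is correct and follows essentially the same approach as the paper: both arguments use that $\mcN(\mcC)$ being a sun forces every vertex of $\mcC$ to have degree three in $\mcH$, and then invoke property (c) of Definition \ref{D:constellation} to rule out two distinct cycles of $\Phi$ sharing two edges. The paper phrases it by fixing one cycle $\mcC$ and showing every other cycle through $e$ must contain the unique third edge $f$ at an endpoint of $e$, while you apply pigeonhole directly to three cycles; these are the same argument in slightly different clothing.
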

\begin{proof}
    Suppose $\mcC$ is an element of $\Phi$ containing $e$. Since $\mcN(\mcC)$ is
    a sun, $\mcC$ is cubic. Thus if $v$ is an endpoint of $e$, then there is a
    unique edge $f$ incident to $v$ and not contained in $\mcC$. If $\mcC' \in
    \Phi \setminus \{\mcC\}$ contains $e$, then $\mcC'$ must also contain $f$,
    since $|E(\mcC) \cap E(\mcC')| \leq 1$. Hence if $\mcC'$ and $\mcC'' \in
    \Phi \setminus \{\mcC\}$ both contain $e$, then $\{e,f\} \subset E(\mcC')
    \cap E(\mcC'')$, and consequently $\mcC' = \mcC''$.
\end{proof}

\begin{lemma}\label{L:struct2}
    Let $\Phi$ be a $b$-constellation in $\mcH$, and let $\Phi' \subseteq \Phi$. 
    If all elements of $\Phi'$ are $b$-stellar or $\Phi'$ contains all $b$-stellar
    elements of $\Phi$, then $\Phi'$ is a $b$-constellation.
\end{lemma}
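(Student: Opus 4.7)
The plan is to verify the three axioms in Definition \ref{D:constellation} for $\Phi'$. Axiom (c) is immediate since $\Phi' \subseteq \Phi$, so any pair of distinct cycles in $\Phi'$ already inherits the edge-overlap constraint from $\Phi$. Axiom (a) splits along the two hypothesized cases. In the first case, every $\mcC \in \Phi'$ is $b$-stellar, so (a)(i) applies trivially. In the second case, suppose $\mcC \in \Phi'$ is not $b$-stellar; then (a)(ii) applied inside $\Phi$ supplies a decomposition $\mcC = e_1 \cdots e_n$ in which each $e_k$ for $k \geq 3$ lies on a $b$-stellar cycle $\mcC_k \in \Phi$. Because $\Phi'$ is assumed to contain every $b$-stellar cycle of $\Phi$, each $\mcC_k$ lies in $\Phi'$, and the same decomposition witnesses (a)(ii) in $\Phi'$.

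The remaining axiom (b) is the heart of the argument. Fix $\mcC \in \Phi'$ and apply (b) for $\Phi$: there is some $\mcC^* \in \Phi$ (equal to $\mcC$ in case (b)(i), distinct from $\mcC$ in case (b)(ii)) that shares an edge with $\mcC$ and contains a private edge $p \in E(\mcC^*)$ not lying on any other cycle of $\Phi$. If $\mcC^* \in \Phi'$, then $p$ is automatically private in $\Phi'$ as well, since every $\Phi'$-cycle is a $\Phi$-cycle; so the same witness works verbatim. The only subtle case is $\mcC^* \notin \Phi'$, which forces $\mcC^* \neq \mcC$. Pick any edge $e \in E(\mcC) \cap E(\mcC^*)$. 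Here I invoke Lemma \ref{L:struct1}: at most two cycles of $\Phi$ contain $e$, so those two must be exactly $\mcC$ and $\mcC^*$. Removing $\mcC^*$ from $\Phi'$ therefore leaves $\mcC$ as the only $\Phi'$-cycle through $e$, making $e$ a private edge of $\mcC$ in $\Phi'$ and verifying (b)(i).

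The step I would have expected to be the main obstacle is precisely this last one, namely, salvaging a witness for axiom (b) when the natural $\Phi$-witness is no longer available. Lemma \ref{L:struct1} dispatches it cleanly without any case analysis on whether $\mcC^*$ is $b$-stellar or not, and indeed this part of the argument does not use the dichotomy in the hypothesis at all; only axiom (a) genuinely needs it. Thus the whole lemma reduces to the observation that an edge shared between $\mcC$ and its removed companion $\mcC^*$ was already shared \emph{only} between those two cycles inside $\Phi$.
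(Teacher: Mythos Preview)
Your proof is correct and follows essentially the same approach as the paper: axioms (a) and (c) are handled by the obvious case split and inheritance, while for axiom (b) you invoke Lemma~\ref{L:struct1} to show that when the $\Phi$-witness $\mcC^*$ falls outside $\Phi'$, the shared edge $e \in E(\mcC) \cap E(\mcC^*)$ becomes private to $\mcC$ in $\Phi'$. The only cosmetic difference is that you unify cases (b)(i) and (b)(ii) via the single symbol $\mcC^*$, whereas the paper treats them separately.
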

\begin{proof}
    Every subset of $\Phi$ satisfies part (c) of Definition
    \ref{D:constellation}.  If all elements of $\Phi'$ are $b$-stellar, or
    $\Phi'$ contains all $b$-stellar cycles in $\Phi$, then $\Phi'$ also
    satisfies part (a). If $e$ is an edge of $\mcC \in \Phi'$ such that $e$
    does not belong to any element of $\Phi \setminus \{\mcC\}$, then clearly
    $e$ does not belong to any element of $\Phi' \setminus \{\mcC\}$.  Suppose
    every edge of $\mcC \in \Phi'$ belongs to some element of $\Phi \setminus
    \{\mcC\}$. Then by definition there is a cycle $\mcC' \in \Phi$ such that
    $E(\mcC) \cap E(\mcC') \neq \emptyset$ and $\mcC'$ contains an edge $e$
    which does not belong to any cycle in $\Phi \setminus \{\mcC'\}$.  If
    $\mcC'$ belongs to $\Phi'$, then we are done. If $\mcC'$ does not belong to
    $\Phi'$, then let $e'$ be the unique edge of $E(\mcC) \cap E(\mcC')$.  By
    Lemma \ref{L:struct1}, the only cycles of $\Phi$ which contain $e'$ are
    $\mcC$ and $\mcC'$, and hence no element of $\Phi' \setminus
    \{\mcC\}$ contains $e'$. 
\end{proof}

\subsection{Stellar cycles}
Now we turn to the core of the argument: showing that $\mcC$-cycles can be
turned into facial covers if $\mcC$ is stellar. The proof relies on hypothesis
\ref{P2} in the following way: 
\begin{lemma}\label{L:hypp2}
    Suppose $\mcC$ is a $b$-stellar cycle in $\mcH$, and $\mcP$ is a
    $b$-minimal $\mcH$-picture satisfying hypothesis \ref{P2}. Then every
    $\mcC$-cycle in $\mcP$ bounds a simple region $\mcD$ such that
    $\ch(\germ(\mcP,\mcD)) \cdot b = 0$.
\end{lemma}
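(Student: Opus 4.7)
The plan is to split on the two cases of hypothesis \ref{P2}. When $b = 0$ there is nothing to prove: every $\mcC$-cycle $C$ in $\mcP$ bounds a simple region (the interior of $C$ if $\mcP$ lies in a disk; either of the two complementary regions if $\mcP$ lies on the sphere), and the pairing $\ch(\germ(\mcP, \mcD)) \cdot 0$ is automatically $0$. The substantive case is therefore when $\mcP$ is closed; I will view such a $\mcP$ as drawn on the sphere, so that a $\mcC$-cycle $C$ bounds two simple regions $\mcD_1, \mcD_2$ meeting precisely along $C$, and the goal becomes to show that one of them satisfies $\ch(\germ(\mcP, \mcD_i)) \cdot b = 0$.

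The first step is to establish the additivity identity
\begin{equation*}
    \ch(\germ(\mcP, \mcD_1)) \cdot b + \ch(\germ(\mcP, \mcD_2)) \cdot b \equiv \ch(\mcP) \cdot b \pmod{2}.
\end{equation*}
This will follow by counting, for each $v \in V(\mcH)$, the vertices of $\mcP$ with label $v$: those in the interior of some $\mcD_i$ contribute to exactly one germ, while vertices on $C$ lie in both $\mcD_1^\eps$ and $\mcD_2^\eps$ and so are counted in both germs, cancelling mod $2$. Alternatively, the $b$-stellar hypothesis $b_{h(v)} = 0$ for every vertex $v$ on $C$ makes these boundary vertices irrelevant to $\ch(\cdot) \cdot b$ regardless of multiplicity, which trivialises the bookkeeping.

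The second step uses $b$-minimality to force $\ch(\mcP) \cdot b = 1$ whenever $\mcP$ has positive size. Since $\mcP$ is closed, $\bd(\mcP) = 1$, while the empty picture has $\bd = 1$ and $\ch \cdot b = 0$. Hence if $\ch(\mcP) \cdot b$ were $0$, then $\mcP$ would be $b$-equivalent to the empty picture, so $b$-minimality would force $\mcP$ to have size $0$; in that degenerate case every $\mcC$-cycle is a closed loop, both germs have character $0$, and either $\mcD_i$ works. Otherwise $\ch(\mcP) \cdot b = 1$, and the identity above forces exactly one of $\ch(\germ(\mcP, \mcD_i)) \cdot b$ to vanish; that region is the desired $\mcD$.

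I expect the argument to be short. The only possible slip is the bookkeeping of vertices lying on $C$ when passing from $\mcP$ to its two germs, but as noted this is handled either by the mod $2$ double-counting or directly by the $b$-stellar hypothesis.
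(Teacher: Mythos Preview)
Your proposal is correct and follows essentially the same argument as the paper: split on hypothesis \ref{P2}, handle the closed case by viewing $\mcP$ on the sphere so that $C$ bounds two simple regions, use the additivity $\ch(\germ(\mcP,\mcD_1))\cdot b + \ch(\germ(\mcP,\mcD_2))\cdot b \equiv \ch(\mcP)\cdot b \pmod 2$ (with the $b$-stellar hypothesis disposing of the vertices on $C$), and use $b$-minimality to reduce to $\ch(\mcP)\cdot b = 1$. The paper's proof is essentially identical, writing the additivity as $\ch(\mcP) = \ch(\germ(\mcP,\mcD_1)) + \ch(\germ(\mcP,\mcD_2)) - \ch(C)$ before dotting with $b$.
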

\begin{proof}
    If $b=0$, then the lemma is vacuously true. If $\mcP$ is closed and
    $\ch(\mcP) \cdot b = 0$, then $\mcP$ is $b$-equivalent to the
    empty picture. But since $\mcP$ is $b$-minimal, in this case $\mcP$
    must have size zero, and again the lemma is vacuously true. 

    Suppose that $\mcP$ is closed and $\ch(\mcP) \cdot b = 1$. Since $\mcP$ is
    closed, we can think of $\mcP$ as a picture in the sphere, in which case
    every $\mcC$-cycle bounds two simple regions $\mcD_1$ and $\mcD_2$. 
    Every vertex of $\mcP$ appears in one of $\germ(\mcP,\mcD_1)$ or
    $\germ(\mcP,\mcD_2)$, with only the vertices of $C$ appearing in both.
    Hence 
    \begin{equation*}
        \ch(\mcP) = \ch(\germ(\mcP,\mcD_1)) + \ch(\germ(\mcP,\mcD_2)) - \ch(C).
    \end{equation*}
    Since $\mcC$ is $b$-stellar, $b_v = 0$ for all $v \in V(\mcC)$, so
    $\ch(C) \cdot b = 0$. Consequently,
    \begin{equation*}
        \ch(\germ(\mcP,\mcD_1)) \cdot b + \ch(\germ(\mcP,\mcD_2)) \cdot b =
            \ch(\mcP) \cdot b = 1
    \end{equation*}
    and we conclude that one of $\ch(\germ(\mcP,\mcD_i)) \cdot b$, $i=1,2$
    must be $0$. 
\end{proof}
\begin{lemma}\label{L:stellarfacial}
    Let $\Phi$ be a $b$-constellation in which every cycle $\mcC \in \Phi$ is
    $b$-stellar. Let $\mcP$ be an $\mcH$ picture satisfying \ref{P1} and
    \ref{P2}. Then $\mcP$ is $b$-equivalent to a $b$-minimal picture $\mcP'$
    with no closed loops, such that every $\Phi$-cycle in $\mcP'$ is facial. 
\end{lemma}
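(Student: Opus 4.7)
The plan is to start from a $b$-minimal picture $\mcP_0$ that is $b$-equivalent to $\mcP$ and has no closed loops (such a picture exists by applying Lemma \ref{L:closedloops} to any $b$-minimal representative), and to show that every $\Phi$-cycle in $\mcP_0$ must already be facial; otherwise I will exhibit a $b$-equivalent picture of strictly smaller size, contradicting $b$-minimality. Suppose for contradiction that $C$ is a non-facial $\mcC$-cycle for some $\mcC \in \Phi$. Since $\mcC$ is $b$-stellar, Lemma \ref{L:hypp2} provides a simple region $\mcD$ bounded by $C$ with $\ch(\germ(\mcP_0,\mcD)) \cdot b = 0$. The sun structure at $\mcN(\mcC)$ forces every vertex $v \in V(\mcC)$ to have degree three in $\mcH$---two edges on $\mcC$ and one ``off-$C$'' edge labelled by the corresponding $f$-edge of the sun---so the boundary word of $\res(\mcP_0,\mcD)$ is a cyclic word in $f$-edge labels and contains no $\mcC$-edges.

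Since $\mcC$ is stellar, there is a retraction morphism $r : \mcH \arr \mcN(\mcC)$, and the open inclusion $\iota : \mcN(\mcC) \incl \mcH$ exists by Proposition \ref{P:submorphism}. Apply $r$ to obtain a picture $r(\res(\mcP_0,\mcD))$ over the sun $\mcN(\mcC)$ with unchanged boundary (since $f$-edges lie in $\mcN(\mcC)$). By Proposition \ref{P:suncycle}, there is a character-equivalent (in the sun) picture $\mcQ$ with every $\mcC$-cycle a facial cover, satisfying $|\mcQ| \leq |r(\res(\mcP_0,\mcD))| \leq |\res(\mcP_0,\mcD)|$. Lift $\mcQ$ via $\iota$ and replace $\res(\mcP_0,\mcD)$ in $\mcP_0$ by $\iota(\mcQ)$ to obtain an $\mcH$-picture $\tilde{\mcP}_0$ with the same boundary as $\mcP_0$. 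The character difference $\ch(\tilde{\mcP}_0) - \ch(\mcP_0)$ vanishes on $V(\mcC)$ (since $r$ and $\iota$ restrict to the identity there) and equals $-\ch(\res(\mcP_0,\mcD))$ off $V(\mcC)$. Because $b_v = 0$ on $V(\mcC)$, dotting with $b$ yields $-\ch(\res(\mcP_0,\mcD)) \cdot b = -\ch(\germ(\mcP_0,\mcD)) \cdot b = 0$, so $\tilde{\mcP}_0$ is $b$-equivalent to $\mcP_0$ with $|\tilde{\mcP}_0| \leq |\mcP_0|$.

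The main obstacle is the equality case $|\tilde{\mcP}_0| = |\mcP_0|$: here $r$ preserves every interior vertex of $\res(\mcP_0,\mcD)$ and $r(\res(\mcP_0,\mcD))$ is already character-minimal in the sun, so the straightforward replacement does not on its own force $C$ to become facial. I would handle this by choosing $C$ to be innermost among non-facial $\Phi$-cycles and introducing a secondary complexity measure---for example, the total number of edges of $\mcP_0$ lying inside non-facial $\Phi$-cycle regions---and then recapitulating the $f$-edge pushing surgery from the proof of Proposition \ref{P:suncycle}: at a vertex of $C$ whose $f$-edge enters $\mcD$, a swap surgery at the two same-labelled endpoints produces a $b$-equivalent picture that strictly reduces this secondary measure without increasing size. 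Iterating until no $f$-edge enters any innermost non-facial region empties $\bd(\res(\mcP_0,\mcD))$; Lemma \ref{L:suncycle1} then forces the corresponding $\mcQ$ to be empty, which makes $C$ facial under the replacement and terminates the induction with every $\Phi$-cycle facial.
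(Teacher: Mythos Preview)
Your retraction-and-replace idea is the right starting move and matches the paper's core manoeuvre, but the argument has a genuine gap: you never address the interaction between \emph{different} cycles in $\Phi$. The whole difficulty of this lemma, as opposed to Proposition~\ref{P:suncycle} for a single sun, is that surgery performed to straighten out a $\mcC$-cycle can destroy the facial property of a $\mcC'$-cycle for some other $\mcC' \in \Phi$ that enters the surgical region. Both your replacement in paragraph two and your ``$f$-edge pushing'' swaps in paragraph three may cut through or reroute edges belonging to previously-facial $\mcC'$-cycles, and nothing in your proposal controls this. (The swap surgery itself also requires both endpoints of the $f$-edge to carry the same $\mcH$-label, which is guaranteed in a sun but not in $\mcH$, since the off-$\mcC$ edge may be incident to vertices outside $V(\mcC)$.)

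The paper handles this by induction on $|\Phi|$: remove one cycle $\mcC$, assume inductively that all $\Phi' := \Phi \setminus \{\mcC\}$-cycles are already facial, then fix the $\mcC$-cycles one at a time while \emph{proving} the $\Phi'$-cycles remain facial. The mechanism for that preservation---entirely absent from your proposal---is the outer-quadrilateral clause of Proposition~\ref{P:suncycle}: by Proposition~\ref{P:boundary} each facial $\mcC'$-cycle meets the surgical region only in outer quadrilaterals, and since Proposition~\ref{P:suncycle} leaves outer quadrilaterals untouched, these $\mcC'$-cycles survive unchanged; Lemma~\ref{L:nosillycycles} then rules out new $\mcC'$-cycles appearing wholly inside the replaced region. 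A second, related difference: the paper replaces $\germ(\mcP,\mcD)$, which contains $C$ itself, so after one replacement every $\mcC$-cycle in that region (including $C$) is facial and the non-facial count strictly drops. Your use of $\res(\mcP,\mcD)$ leaves $C$ in place, which is precisely why you hit the ``equality case'' obstacle; working with the germ avoids it outright.
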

\begin{proof}
    The proof is by induction on the size of $\Phi$. If $\Phi$ is empty, then
    the lemma is true by Definition \ref{D:minimal} and Lemma
    \ref{L:closedloops}. Suppose the lemma is true for all $b$-constellations
    of size $m$, where $m \geq 0$, and let $\Phi$ be a $b$-constellation of
    size $m+1$ in which every cycle is $b$-stellar. If we pick an element $\mcC
    \in \Phi$, then $\Phi' := \Phi \setminus \{\mcC\}$ will be a
    $b$-constellation of size $m$ by Lemma \ref{L:struct2}, so that every
    $\mcH$-picture satisfying \ref{P1} and \ref{P2} (with respect to $\Phi$)
    will be $b$-equivalent to some $b$-minimal picture $\mcP$, also satisfying
    \ref{P1} and \ref{P2}, such that every $\Phi'$-cycle in $\mcP$ is facial.
    Thus, to show that the lemma holds for $\Phi$, we can assume that we start
    with a picture $\mcP$ of this form.
    
    Let $NF(\mcP)$ denote the number of non-facial $\mcC$-cycles in
    $\mcP$, where $\mcC \in \Phi$ is the cycle chosen above. Similarly to the
    proof of Proposition \ref{P:suncycle}, our strategy will be to perform a
    sequence of surgeries, starting from $\mcP_0 := \mcP$, each of which
    decreases $NF(\mcP)$. Suppose that $\mcP$ has a non-facial
    $\mcC$-cycle $C$.  By Lemma \ref{L:hypp2}, there is a simple
    region $\mcD$ bounded by $C$ such that $\ch(\germ(\mcP,\mcD)) \cdot b = 0$.
    Let $f : \mcH \arr \mcN(\mcC)$ be a retract onto $\mcN(\mcC)$, so that
    $\widehat{\mcP} := f(\germ(\mcP,\mcD))$ is an $\mcN(\mcC)$-picture. Recall
    that, since $\mcN(\mcC)$ is open, an $\mcN(\mcC)$-picture like $\widehat{\mcP}$ can be
    regarded as an $\mcH$-picture.  By the definition of $\germ$, the labels
    of edges and vertices in the boundary of the outer faces of
    $\germ(\mcP,\mcD)$ must belong to $\mcN(\mcC)$. Since $f$ is a retract, we
    conclude that the closures of the outer faces of $\germ(\mcP,\mcD)$ will be
    identical to the closures of the outer faces of $\widehat{\mcP}$, and in
    particular, $\bd(\widehat{\mcP}) = \bd(\germ(\mcP,\mcD))$. The construction
    in Proposition \ref{P:morphismpic} relabels or deletes vertices not in $\mcN(\mcC)$,
    so $\widehat{\mcP}$ might not be character-equivalent to
    $\germ(\mcP,\mcD)$. But since $\mcC$ is $b$-stellar, $b_v=0$ for all $v \in
    V(\mcC)$, and consequently,
    $\ch(\widehat{\mcP}) \cdot b = 0$. We conclude that $\widehat{\mcP}$ is
    $b$-equivalent to $\germ(\mcP,\mcD)$.  Since the size of $\widehat{\mcP}$
    is less than or equal to the size of $\germ(\mcP,\mcD)$, and
    $\germ(\mcP,\mcD)$ is $b$-minimal, we also conclude that $\widehat{\mcP}$
    is $b$-minimal as an $\mcH$-picture.  It follows that $\widehat{\mcP}$ is
    $b$-minimal (and hence character-minimal) as an $\mcN(\mcC)$-picture. From,
    again, the definition of $\germ$, we know that $\bd(\germ(\mcP,\mcD)) =
    \bd(\widehat{\mcP})$ does not contain any edges from $\mcC$.  Finally,
    $\mcN(\mcC)$ is a sun, so we can apply Proposition \ref{P:suncycle} to
    $\widehat{\mcP}$ to get an $\mcN(\mcC)$-picture $\widehat{\mcP}'$ such that
    $\widehat{\mcP}'$ is character-equivalent to
    and of the same size as $\widehat{\mcP}$, every $\mcC$-cycle in
    $\widehat{\mcP}'$ is facial, every outer quadrilateral of
    $\widehat{\mcP}$ is an outer quadrilateral of $\widehat{\mcP}'$, and
    $\widehat{\mcP}'$ has no closed loops. In particular, $\widehat{\mcP}'$ is
    $b$-minimal and $b$-equivalent to $\germ(\mcP,\mcD)$, and every outer
    quadrilateral of $\germ(\mcP,\mcD)$ is an outer quadrilateral of
    $\widehat{\mcP}'$.

    Let $\mcP_1$ be the result of replacing $\germ(\mcP,\mcD)$ with
    $\widehat{\mcP}'$.  Clearly $\mcP_1$ is $b$-minimal and $b$-equivalent to
    $\mcP$.  Suppose $C'$ is a $\mcC'$-cycle in $\mcP_1$, where $\mcC' \in
    \Phi'$. Now $\bd(\mcP) = \bd(\mcP_1)$ does not contain any edges from the
    cycles of $\Phi$, so by Lemma \ref{L:retractcycle}, every edge of $\mcP$
    which is labelled by an edge of $\mcC'$ belongs to a unique $\mcC'$-cycle,
    and the same holds for $\mcP_1$. If $C'$ contains a boundary edge $e$ of
    $\widehat{\mcP}'$, then there is a $\mcC'$-cycle $C''$ in $\mcP$ which
    also contains $e$. By hypothesis, $C''$ is facial, and hence by Proposition
    \ref{P:boundary}, the edges of $\germ(C'',\mcD)$ form outer
    quadrilaterals in $\germ(\mcP,\mcD)$. But since the outer quadrilaterals
    of $\germ(\mcP,\mcD)$ are also outer quadrilaterals of
    $\widehat{\mcP}'$, the edges of $C''$ are unchanged in $\mcP_1$. Since $e$
    belongs to a unique $\mcC'$-cycle in $\mcP_1$, we must have $C' = C''$, and hence $C'$
    is facial in $\mcP_1$. On the other hand, if $C'$ does not contain a
    boundary edge of $\widehat{\mcP}'$, then $C'$ either does not intersect
    $\mcD$, or lies entirely inside of $\mcD$. In the former case, $C'$ will
    also be a facial $\mcC'$-cycle in $\mcP$, and will remain facial in $\mcP_1$.
    In the latter case, $C'$ would have to consist only of edges labelled by
    $E' = E(\mcC') \cap E(\mcN(\mcC))$.  Since $\mcN(\mcC)$ is a sun and
    $|E(\mcC') \cap E(\mcC)| \leq 1$, the intersection $E'$ is either empty, or
    is equal to (in the notation of Definition \ref{D:sun})
    $\{e_i,f_i,f_{i+1}\}$ for some $i$. But since $\widehat{\mcP}'$ is
    character-minimal and has no closed loops, Lemma \ref{L:nosillycycles} implies that
    $\widehat{\mcP}'$ does not contain any cycles of this form. We conclude
    that every $\mcC'$-cycle in $\mcP_1$ remains facial. Finally, every
    $\mcC$-cycle in $\mcP_1$ belongs either to $\widehat{\mcP}'$ or is
    inherited unchanged from $\mcP$, with the consequence that $NF(\mcP_1) <
    NF(\mcP)$.

    Iterating this procedure, we get a sequence $\mcP_0 = \mcP, \mcP_1,\ldots,
    \mcP_k$ of $b$-minimal pictures, all $b$-equivalent, such that all
    $\Phi'$-cycles in $\mcP_i$ are facial for $1 \leq i \leq k$, and such
    that all $\Phi$-cycles in $\mcP_k$ are facial. Deleting all closed loops
    from $\mcP_k$ finishes the proof.
\end{proof} 

\subsection{Covers versus copies}
For the next lemma, we show that if all $\Phi$-cycles are facial covers, then
we can turn $\Phi$-cycles into facial copies. 
\begin{lemma}\label{L:copycycle}
    Let $\Phi$ be a $b$-constellation. Suppose that $\mcP$ is an $\mcH$-picture
    satisfying \ref{P1}, and such that all $\Phi$-cycles in $\mcP$ are facial
    covers. Then $\mcP$ is character-equivalent to a picture $\mcP'$ in which
    all $\Phi$-cycles are facial copies. 
\end{lemma}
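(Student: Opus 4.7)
My approach is to iteratively split each facial cover of ply $\geq 2$ into smaller-ply covers by a local surgery, until only facial copies remain. Given a $\mcC$-cycle $C$ of ply $p \geq 2$ bounding a face $F$, I pick an edge $e \in E(\mcC)$ and two consecutive $e$-labelled edges $e_1 = (u_i, u_{i+1})$ and $e_2 = (u_j, u_{j+1})$ of $C$ (where $j = i + |E(\mcC)|$ in the cyclic ordering around $C$), and perform the cutting surgery inside $F$: delete $e_1, e_2$ and add new $e$-labelled edges $(u_i, u_{j+1})$ and $(u_{i+1}, u_j)$ drawn through the interior of $F$. This splits $C$ into a ply-$1$ cycle (a copy of $\mcC$) and a ply-$(p-1)$ cover. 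It is precisely the surgery used in the proof of Corollary \ref{C:suncycle}, and it preserves the character because no vertex is added or removed.

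To iterate, I use the complexity measure $E(\mcP) := \sum_{\mcC \in \Phi}\sum_{C} (\mathrm{ply}(C) - 1)$, summed over all $\mcC$-cycles $C$ in $\mcP$. Each splitting surgery decreases $E$ by exactly $1$, so provided each surgery preserves the hypothesis ``every $\Phi$-cycle is a facial cover'', after finitely many steps $E = 0$ and all $\Phi$-cycles are facial copies. Equivalently, one could set up a maximum argument as in Corollary \ref{C:suncycle}, maximizing the total number of $\Phi$-cycles among pictures character-equivalent to $\mcP$ in which all $\Phi$-cycles are facial covers (this maximum exists because the size of $\mcP$ is preserved by the splitting surgery, and a bound on size gives a bound on the number of $\Phi$-cycles).

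The main obstacle is proving the preservation claim. Since the surgery is supported inside $F$ and leaves every vertex intact, any $\mcC'' \in \Phi$ with $e \notin E(\mcC'')$ is completely unaffected. For those $\mcC''$ with $e \in E(\mcC'')$ (necessarily $|E(\mcC) \cap E(\mcC'')| = 1$ by Definition \ref{D:constellation}(c)), the surgery removes two $e$-labelled edges and adds two with the same endpoint multiset, so every vertex of $\mcP[\mcC'']$ retains degree two; hence by Lemma \ref{L:retractcycle} the connected components of $\mcP[\mcC'']$ remain $\mcC''$-cycles. The delicate step is showing that the new $\mcC''$-cycles are still facial covers: I plan to apply Proposition \ref{P:boundary} with the roles of $\mcC, \mcC'$ swapped (legitimate since $\mcN(\mcC)$ is a sun, making $\mcC$ cubic), to constrain any facial $\mcC''$-cycle to meet $F$ only in outer quadrilaterals, and then analyze how the two new chords inside $F$ locally rewire each such $\mcC''$-cycle. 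An attractive alternative, exploiting Definition \ref{D:constellation}(b), is to always pick the surgered label $e$ to be ``private'' to $\mcC$ (belonging to no other cycle in $\Phi$); private edges are guaranteed to exist somewhere in $\Phi$, and by ordering the surgeries appropriately one may sidestep the shared-edge analysis entirely.
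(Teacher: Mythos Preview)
Your overall strategy---split high-ply covers by surgery inside the bounded face and track progress via a complexity measure---matches the paper exactly, and your identification of the shared-edge case as the crux is correct. However, there is a genuine gap in both of your proposed resolutions. The ``private edge'' alternative does not cover every $\mcC\in\Phi$: condition~(b) of Definition~\ref{D:constellation} allows $\mcC$ to have \emph{no} edge private to it, guaranteeing only that some neighbouring $\mcC'$ does. For such $\mcC$ you are forced to cut at a label $e$ shared with $\mcC'$, and here your simple two-edge surgery runs into trouble. If the two $e$-labelled edges $e_1,e_2$ of $C$ lie in the \emph{same} $\mcC'$-cycle $D$, the surgery can produce $\mcC'$-cycles that are no longer facial (a ply-$2$ example with $|\mcC|=|\mcC'|=3$ already exhibits this). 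If instead you first reduce all $\mcC'$-cycles to copies (so $e_1,e_2$ lie in distinct $\mcC'$-copies $C_1,C_2$), your surgery now preserves the facial-cover property, but it \emph{merges} $C_1,C_2$ into a single ply-$2$ $\mcC'$-cover: your measure $E$ is unchanged, and the cycle count does not increase. So the maximality argument stalls.

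The paper closes this gap with a more elaborate surgery for the no-private-edge case (Figure~\ref{F:copycycle}): having already reduced $\mcC'$-cycles to copies, one cuts simultaneously at the two $e$-labelled edges of $C$ \emph{and} at the unique $e'$-labelled edge in each of $C_1,C_2$ (where $e'$ is private to $\mcC'$), reconnecting so that three $\Phi$-cycles become four facial covers. Because $e'$ is private and $e$ lies only in $\mcC,\mcC'$ (Lemma~\ref{L:struct1}), no other $\Phi$-cycle is disturbed, and $\Cycle(\cdot,\Phi)$ strictly increases, giving the contradiction. Your ``ordering'' intuition points in the right direction, but the missing ingredient is precisely this four-edge surgery.
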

\begin{proof}
    The proof is similar to the proof of Corollary \ref{C:suncycle}. Let
    $\mfP$ be the set of pictures which are character-equivalent to $\mcP$,
    have the same size as $\mcP$, and in which all $\Phi$-cycles are facial
    covers. Since all elements of $\mfP$ have the same number of vertices,
    $\Cycle(\mcP',\Phi)$ is bounded across $\mcP' \in \mfP$. Let $\mcP'$ be an
    element of $\mfP$ which maximizes $\Cycle(\mcP',\Phi)$.

    Suppose that $C$ is a $\mcC$-cycle in $\mcP'$ which is not a copy of
    $\mcC$, where $\mcC \in \Phi$ has an edge $e$ which does not belong to any
    other cycle in $\Phi$. Since $C$ is a cover, there are two distinct edges
    $e_1$ and $e_2$ in $C$ with $h(e_1) = h(e_2) = e$. As in the proof of
    Corollary \ref{C:suncycle}, we can cut $C$ at $e_1$ and $e_2$ to get a new
    picture $\mcP''$, character-equivalent to $\mcP'$, in which $C$ has been
    replaced by two facial covers. By Lemma \ref{L:retractcycle} and the
    hypothesis on $e$, the edges $e_1$ and $e_2$ are not contained in any other
    $\Phi$-cycle.  Thus all other $\Phi$-cycles in $\mcP'$ are unchanged in $\mcP''$,
    and $\mcP''$ will be an element of $\mfP$ with $\Cycle(\mcP',\Phi) <
    \Cycle(\mcP'',\Phi)$, a contradiction. We conclude that if $\mcC \in \Phi$ has an
    edge which does not belong to any other cycle in $\Phi$, then all $\mcC$-cycles
    in $\mcP'$ are facial copies. 

\begin{figure}
    \begin{tikzpicture}[auto,ultra thick,scale=.5,empty/.style={inner sep=0},every node/.style={scale=.8},vertex/.style={circle,draw,thin,inner sep=2.5}]

    \node[vertex] (1) {$a$};
    \node[vertex] (2) [right=of 1] {$b$}
        edge node {$e$} (1);

    \node[vertex] (3) [below=of 1] {$b$};
    \node[vertex] (4) [right=of 3] {$a$}
        edge node {$e$} (3);

    \draw [dashed] (1) to [out=180,in=90] node[swap] {$C$} ($(1)+(-2.5,0)+.5*(3)-.5*(1)$) to [out=-90,in=180] (3);
    \draw [dashed] (2) to [out=0,in=90] ($(2)+(2.5,0)+.5*(3)-.5*(1)$) coordinate (r) to [out=-90,in=0] (4);

    \node (m) at ($(r)+(1,0)$) {$\surg$};

    \coordinate (l) at ($(m)+(1,0)$);
    
    \node[vertex] (5) at ($(l)+(r)-(4)$) {$a$};
    \node[vertex] (6) [right=of 5] {$b$};
    \node[vertex] (7) [below=of 5] {$b$}
        edge node {$e$} (5);
    \node[vertex] (8) [right=of 7] {$a$}
        edge node[swap] {$e$} (6);

    \draw [dashed] (5) to [out=180,in=90] (l) to [out=-90,in=180] (7);
    \draw [dashed] (6) to [out=0,in=90] ($(6)-(l)+(7)$) to [out=-90,in=0] (8);

    \node[empty] (9) [above=.6cm of 1] {}
        edge[dotted] (1);
    \node[empty] (10) [above=.6cm of 2] {}
        edge[dotted] (2);
    \node[empty] (11) at ($1.5*(9)+.5*(10)-(1)$)  {}
        edge[dotted,out=0,in=90] (10)
        edge[out=180,in=90] node {$f$} (9);

    \node[empty] (12) [below=.6cm of 3] {}
        edge[dotted] (3);
    \node[empty] (13) [below=.6cm of 4] {}
        edge[dotted] (4);
    \node[empty] (14) at ($1.5*(13)+.5*(12)-(4)$)  {}
        edge[dotted,out=180,in=-90] (12)
        edge[out=0,in=-90] node {$f$} (13);

    \node[empty] (15) [above=.6cm of 5] {}
        edge[dotted] (5);
    \node[empty] (16) [above=.6cm of 6] {}
        edge[dotted] (6);
    \node[empty] (17) at ($1.5*(15)+.5*(16)-(5)$)  {}
        edge[dotted,out=0,in=90] (16);

    \node[empty] (18) [below=.6cm of 7] {}
        edge[dotted] (7);
    \node[empty] (19) [below=.6cm of 8] {}
        edge[dotted] (8)
        edge[out=-90,in=-90] node[swap,pos=.8] {$f$} (17);
    \node[empty] (20) at ($1.5*(19)+.5*(18)-(8)$) {}
        edge[dotted,out=180,in=-90] (18)
        edge[out=90,in=90] node[pos=.2] {$f$} (15);
\end{tikzpicture}
    \caption{Surgery in the proof of Lemma \ref{L:copycycle}. Edges
        labelled by $e$ and $f$ are cut and reconnected to turn three
        $\Phi$-cycles into four $\Phi$-cycles. The edges of $C$ are dashed,
        while the edges of $\mcC_1$ and $\mcC_2$ are dotted. Interiors of
        cycles are faces in $\mcP'$ and $\mcP''$.}
    \label{F:copycycle}
\end{figure}
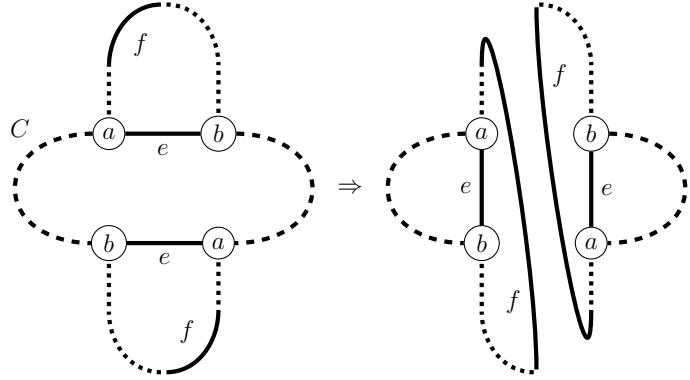
    Now suppose that $C$ is a $\mcC$-cycle in $\mcP'$ which is not a copy of
    $\mcC$, where every edge of $\mcC \in \Phi$ belongs to some other cycle of
    $\Phi$. By Definition \ref{D:constellation}, part (b), there is another
    cycle $\mcC' \in \Phi$ such that $E(\mcC) \cap E(\mcC') = \{e\}$, and
    $\mcC'$ contains an edge $e'$ which does not belong to any cycle in $\Phi
    \setminus \{\mcC'\}$. As above, there are distinct edges $e_1$ and $e_2$
    in $C$ such that $h(e_1) = h(e_2) = e$. By Lemma \ref{L:retractcycle}, each
    edge $e_i$ belongs to a unique $\mcC'$-cycle $C_i$, $i=1,2$. Since $C_1$
    and $C_2$ are copies of $\mcC'$, we must have $C_1 \neq C_2$. Let $f_i$ be
    the unique edge in $C_i$ with $h(f_i) = e'$. Thus we can construct a 
    new picture $\mcP''$ by cutting and reconnecting $e_i$ and $f_i$, $i=1,2$,
    as shown in Figure \ref{F:copycycle}.

    By Lemmas \ref{L:retractcycle} and \ref{L:struct1}, $C$ and $C_i$
    are the only $\Phi$-cycles containing $e_i$, while $C_i$ is the only
    $\Phi$-cycle containing $f_i$. Thus all $\Phi$-cycles in $\mcP''$, aside
    from those shown in Figure \ref{F:copycycle}, are $\Phi$-cycles in $\mcP'$,
    and hence will be facial covers. Because $C$ is a cover and $C_1$ and $C_2$
    are copies, the new cycles created in Figure \ref{F:copycycle} will be
    facial covers. We conclude again that $\mcP''$ will be an element of $\mfP$
    with $\Cycle(\mcP',\Phi) < \Cycle(\mcP'',\Phi)$, a contradiction. Thus
    every $\Phi$-cycle in $\mcP'$ is a facial copy.
\end{proof}

\subsection{Cycles covered by other cycles}
Finally, we prove two lemmas that will allow us to handle non-stellar cycles.
\begin{lemma}\label{L:covering1}
    Let $\Phi' \subseteq \Phi$ be a pair of $b$-constellations. Suppose that
    $\mcP$ is an $\mcH$-picture satisfying \ref{P1} with respect to $\Phi$,
    such that all $\Phi'$-cycles in $\mcP$ are facial covers. Let $\mcC_0$ be a
    connected closed subhypergraph of $\mcC \in \Phi \setminus
    \Phi'$,\footnote{In other words, $\mcC_0$ is either equal to $\mcC$, or a
    path in $\mcC$.} such that every edge of $\mcC_0$ is contained in an
    element of $\Phi'$. If $C$
    is a $\mcC$-cycle in $\mcP$, then:
    \begin{enumerate}[(a)]
        \item $C[\mcC_0]$ is a cover of $\mcC_0$, and
        \item if $C_0$ is a connected component of $C[\mcC_0]$, then
            all edges not contained in $C$ and incident with a vertex of $C_0$
            lie on the same side of $C$. 
    \end{enumerate}
\end{lemma}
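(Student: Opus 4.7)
The plan is to prove both parts by exploiting the hypothesis that every $\Phi'$-cycle in $\mcP$ is a facial cover, together with the fact that $\mcC$ is cubic (since $\mcN(\mcC)$ is a sun). The guiding intuition is that every edge of $C$ whose label lies in $\mcC_0$ also lies on some facial cover $C'$ of a cycle $\mcC' \in \Phi'$; the face bounded by $C'$ then lets us transport ``side'' information along $C_0$.

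For part (a), I would take any edge $e$ of $C[\mcC_0]$, observe that $h(e) \in E(\mcC_0)$ is contained in some $\mcC' \in \Phi'$ by hypothesis, and invoke Lemma~\ref{L:retractcycle} (applicable because of \ref{P1}) to place $e$ on some $\mcC'$-cycle $C'$ in $\mcP$. Since $C'$ is a facial cover, the two endpoints of $e$ are distinct vertices whose labels are the two distinct endpoints of $h(e)$ in $\mcC'$; by closedness of $\mcC_0$, these labels lie in $V(\mcC_0)$, so $e$ connects two distinct vertices of $C[\mcC_0]$ with distinct labels. This is exactly the covering condition.

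For part (b), I would first note that since $\mcC$ is cubic, every vertex $v$ of $C_0$ has degree exactly three in $\mcP$, so there is a unique non-$C$ edge at $v$. The goal thus reduces to proving that these unique non-$C$ edges lie on a single side of $C$ as $v$ ranges over $C_0$. The key step is the transport across a single edge $e$ of $C_0$ joining adjacent vertices $v, v'$ in $C_0$: let $\mcC' \in \Phi'$ contain $h(e)$, and let $C'$ be the facial $\mcC'$-cycle of $\mcP$ through $e$. Using constellation property (c) (that $|E(\mcC) \cap E(\mcC')| \le 1$) together with the three incident edges at $h(v)$ coming from the sun structure of $\mcN(\mcC)$, I would identify the two $\mcC'$-edges at $h(v)$ as $h(e)$ and the unique non-$\mcC$ edge of $\mcN(\mcC)$ at $h(v)$. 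Translated into $\mcP$, this says the $C'$-edges at $v$ are exactly $e$ and the non-$C$ edge at $v$, and likewise at $v'$. The face $\mcF'$ bounded by $C'$ occupies, at each of $v$ and $v'$, the sector between $e$ and the non-$C$ edge. Since $\mcF'$ is a connected region lying on one consistent side of $e$, the non-$C$ edges at $v$ and at $v'$ lie on the same side of $C$ locally along $e$. Iterating this along successive edges of $C_0$ yields the claim.

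The main obstacle I anticipate is the local combinatorial identification of which edges of $\mcH$ at the vertex $h(v)$ belong to $\mcC'$; this step depends delicately on the triple interplay of the cubicity of $\mcC$, the sun structure of $\mcN(\mcC)$ (which forces each vertex of $\mcC$ to have exactly one ``$f$-edge'' outside $\mcC$), and the intersection bound $|E(\mcC) \cap E(\mcC')| \le 1$ from property (c). Once that local picture is nailed down, the propagation-of-sides argument via the single face $\mcF'$ is a clean planarity argument and the iteration along $C_0$ is routine, including the edge case where $C_0$ is all of $C$ (where consistency around the loop holds automatically) or a single vertex (where the statement is vacuous).
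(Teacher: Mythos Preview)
Your proposal is correct and follows essentially the same approach as the paper. The only notable difference is in part (b): where you argue directly from the face $\mcF'$ bounded by $C'$ that the non-$C$ edges at the two endpoints of $e$ lie on the same side of $C$, the paper instead invokes Proposition~\ref{P:boundary} (applied to the region $\mcD$ bounded by $C$) to conclude that $x$, $e$, and $y$ form an outer quadrilateral of $\germ(\mcP,\mcD)$, which immediately places $x$ and $y$ on the same side; your local planarity argument is essentially a re-derivation of that proposition's content in this specific case.
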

\begin{proof}
    Let $e$ be an edge of $C$ with $h(e) \in E(\mcC_0)$. By hypothesis and Lemma
    \ref{L:retractcycle}, $e$ belongs to a $\mcC'$-cycle
    $C'$, where $\mcC' \in \Phi'$. Since $C'$ is a facial cover, $e$ has two
    distinct endpoints $a$ and $b$ with $h(a) \neq h(b)$. We conclude that
    $C[\mcC_0]$ is a cover of $\mcC_0$.
    
    Now let $x$ and $y$ be the edges of $\mcP$ incident to $a$ and $b$
    respectively, but not contained in $C$. Since $|E(\mcC) \cap E(\mcC')| \leq
    1$ and $\mcC$ is simple, we conclude that $x$ and $y$ belong to $C'$. Let $\mcD$ be a simple
    region bounded by $C$. By Proposition \ref{P:boundary}, since $C'$ is facial the
    edges $x$,$y$, and $e$ either belong to $\mcD$, or form an outer
    quadrilateral in $\germ(\mcP,\mcD)$. It follows that $x$ and $y$ lie on the
    same side of $C$. Since $C_0$ is connected, all edges not contained in $C$
    and incident to $C_0$ lie on the same side of $C$. 
\end{proof}

\begin{lemma}\label{L:covering2}
    Let $\Phi' \subseteq \Phi$ be a pair of $b$-constellations. Suppose that
    $\mcP$ is a $b$-minimal picture with no closed loops satisfying \ref{P1}
    and \ref{P2}, such that every $\Phi'$-cycle is a facial cover. If every
    edge of $\mcC \in \Phi$ is contained in some element of $\Phi'$, then every
    $\mcC$-cycle in $\mcP$ is a facial cover. 
\end{lemma}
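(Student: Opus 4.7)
Let $C$ be a $\mcC$-cycle in $\mcP$, and apply Lemma \ref{L:covering1} with $\mcC_0 = \mcC$: part (a) says $C$ is a cover of $\mcC$, and part (b) says all edges of $\mcP$ not in $C$ but incident with a vertex of $C$ lie on the same side of $C$. Call this side $\mcD_1$ and the opposite side $\mcD_2$. By the definition of a face, showing that $C$ is facial is equivalent to showing that $\mcP_2 := \res(\mcP, \mcD_2)$ is empty.

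Because no non-$C$ edge of $\mcP$ at $C$ reaches into $\mcD_2$, I can choose $\eps>0$ so small that the $\eps$-neighborhood of $C$ inside $\mcD_2$ meets no vertex or edge of $\mcP$. This yields a clean disjoint-union decomposition of pictures $\mcP = \germ_1 \sqcup \mcP_2$, where $\germ_1 := \germ(\mcP, \mcD_1) \supseteq C$; in particular $V(\mcP) = V(\germ_1) \sqcup V(\mcP_2)$ and $E(\mcP) = E(\germ_1) \sqcup E(\mcP_2)$. I suppose for contradiction that $\mcP_2 \neq \emptyset$ and aim to produce a strictly smaller $b$-equivalent picture, contradicting the $b$-minimality of $\mcP$.

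Split on hypothesis \ref{P2}. When $b = 0$ any closed subpicture is $b$-equivalent to the empty picture; exactly one of $\germ_1, \mcP_2$ contains $\partial \mcD_0$ and hence carries all of $\bd(\mcP)$, while the other is closed, and deleting the closed one gives a picture with the same boundary. Both $\germ_1$ (containing $C$) and $\mcP_2$ (by assumption) are nonempty, so this deletion is nontrivial and contradicts minimality. When $\mcP$ is closed, both $\germ_1$ and $\mcP_2$ are closed subpictures; because $\mcP$ is nonempty and $b$-minimal, $\ch(\mcP) \cdot b = 1$, since otherwise $\mcP$ would be $b$-equivalent to, hence equal to, the empty picture. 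Using $\ch(\mcP) = \ch(\germ_1) + \ch(\mcP_2)$ in $\Z_2^{V(\mcH)}$, exactly one of $\ch(\germ_1) \cdot b$, $\ch(\mcP_2) \cdot b$ equals $0$, and deleting that nonempty component yields a smaller closed picture $b$-equivalent to $\mcP$, the required contradiction.

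The main thing to verify carefully is the disjoint-union decomposition, specifically that the ``closed'' side of $C$ really has empty boundary so that deletion leaves a valid picture whose boundary coincides with $\bd(\mcP)$. This is exactly the content of Lemma \ref{L:covering1}(b): without it, $\germ_1$ could have stubs on the $\mcD_2$-side or $\mcP_2$ could be linked to $\germ_1$ by a non-$C$ edge, and the disjoint union would fail. Once the decomposition is in hand, the parity argument in the closed case, combined with the triviality of $\ch \cdot b$ in the $b = 0$ case, makes the $b$-minimality contradiction routine.
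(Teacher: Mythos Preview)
Your proposal is correct and follows essentially the same approach as the paper: apply Lemma~\ref{L:covering1} with $\mcC_0=\mcC$ to see that $C$ is a cover with all external edges on one side, then use $b$-minimality (together with the no-closed-loops hypothesis) to force the other side to be empty, splitting on the two cases of \ref{P2}. The only difference is organizational: the paper works with a single simple region $\mcD$ bounded by $C$ and treats ``edges inside'' versus ``edges outside'' as separate sub-cases, whereas you package both into a symmetric disjoint-union decomposition $\mcP=\germ_1\sqcup\mcP_2$ and delete whichever piece is closed with $\ch\cdot b=0$; the underlying surgery and parity reasoning are identical.
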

\begin{proof}
    Let $C$ be a $\mcC$-cycle. Applying Lemma \ref{L:covering1} with $\mcC_0 =
    \mcC$, we get immediately that $C$ is a cover, and that the edges incident
    to $C$ all lie on the same side of $C$. 

    Suppose $b=0$, and let $\mcD$ be a simple region bounded by $C$. If all the
    edges incident to $C$ lie in $\mcD$, then $\germ(\mcP,\mcD)$ is closed, and
    $\ch(\germ(\mcP,\mcD)) \cdot b = 0$. Since $\mcP$ is $b$-minimal,
    $\germ(\mcP,\mcD)$ must be $b$-minimal, and this implies that
    $\germ(\mcP,\mcD)$ must have size zero, in contradiction of the fact that
    $\germ(\mcP,\mcD)$ contains $C$. Thus all the edges incident to $C$ lie
    outside the interior of $C$, and we conclude that $\res(\mcP,\mcD)$ is
    closed. But once again, this implies that $\res(\mcP,\mcD)$ must have size
    zero, and since $\mcP$ has no closed loops, this implies that
    $\res(\mcP,\mcD)$ is empty. We conclude that $C$ is facial.

    Now suppose that $\mcP$ is closed. Similarly to the proof of Lemma
    \ref{L:hypp2}, the fact that $\mcP$ is $b$-minimal with size greater than
    zero implies that $\ch(\mcP) \cdot b = 1$. Now $C$ bounds two simple
    regions $\mcD_1$ and $\mcD_2$ in the sphere, where we assume that all edges
    incident to $C$ are contained in $\mcD_1$, so $\res(\mcP,\mcD_2)$ is
    closed. If $\ch(\res(\mcP,\mcD_2)) \cdot b = 1$, then $\res(\mcP,\mcD_2)$
    is $b$-equivalent to $\mcP$, contradicting the $b$-minimality of $\mcP$.
    Thus $\ch(\res(\mcP,\mcD_2)) \cdot b = 0$, and as above,
    $\res(\mcP,\mcD_2)$ must be empty. We conclude again that $C$ is facial. 
\end{proof}

\subsection{Proof of Theorem \ref{T:constellation}}
    Given a $b$-constellation $\Phi$, let $\Phi_0$ be the set of $b$-stellar
    cycles in $\Phi$, and let $\Phi_1$ be the set of cycles $\mcC \in \Phi$
    such that every edge of $\mcC$ belongs to a cycle in $\Phi_0$. Then $\Phi_0
    \subseteq \Phi_1$, and by Lemma \ref{L:struct2}, $\Phi_0$ and $\Phi_1$
    are $b$-constellations. 

    Given a picture $\mcP_0$ satisfying \ref{P1} and \ref{P2}, Lemma
    \ref{L:stellarfacial} states that we can find a $b$-equivalent picture
    $\mcP_1$ which is $b$-minimal and has no closed loops, such that every
    $\Phi_0$-cycle in $\mcP_1$ is facial. By Proposition \ref{P:facialcover}, every
    $\Phi_0$-cycle in $\mcP_1$ is a facial cover. By Lemma \ref{L:covering2},
    every $\Phi_1$-cycle in $\mcP_1$ is also a facial cover. 

    This leaves the cycles in $\Phi \setminus \Phi_1$. By definition, any
    element of $\Phi \setminus \Phi_1$ is non-stellar. If a non-stellar
    cycle $\mcC \in \Phi$ shares an edge with another cycle $\mcC' \in \Phi$,
    then $\mcC'$ must be $b$-stellar. Hence $\Phi \setminus \Phi_1$ consists
    of the non-stellar cycles $\mcC \in \Phi$ which have an edge $e$ not
    contained in any element of $\Phi \setminus \{\mcC\}$. For the purpose of
    this proof, we say that such an edge $e \in \mcC$ is \emph{independent}. By part (a)
    of Definition \ref{D:constellation}, every element $\mcC \in \Phi \setminus
    \Phi_1$ has either one or two independent edges. In the latter case, the
    two edges will be incident with a common vertex of $\mcC$. 

    Suppose that $C$ is a $\mcC$-cycle in $\mcP_1$, where $\mcC \in \Phi
    \setminus \Phi'$. Let $\mcC_0$ be the path containing the
    non-independent edges of $\mcC$, regarded as a closed subhypergraph.  By
    Lemma \ref{L:covering1}, $C[\mcC_0]$ is a cover of $\mcC_0$, and since
    $\mcC_0$ is a path rather than a cycle, all connected components of
    $C[\mcC_0]$ are copies of $\mcC_0$. If $C[\mcC_0]$ is non-empty, we can
    write $C$ as a sequence $C_1 C_2 \cdots C_{2k-1} C_{2k}$ of paths $C_{i}$,
    where $k \geq 1$, the path $C_{2i-1}$ is a connected component of
    $C[\mcC_0]$ for all $i=1,\ldots,k$, and the edges of $C_{2i}$ are labelled
    by independent edges of $\mcC$ for all $i=1,\ldots,k$.

    \begin{figure}
        \begin{subfigure}{\textwidth}
            \centering
            \begin{tikzpicture}[auto,ultra thick,scale=.5,emptynode/.style={inner sep=0},every node/.style={scale=.8},vertex/.style={circle,draw,thin,inner sep=2.5}]

    \begin{scope} 
        \coordinate (0) at (0,0);
        \node[vertex] (1) at ($(0)+(-60:2)$) {$a_1$};
        \node[emptynode] (2) [right=of 1] {};
        \node[vertex] (3) [right=2cm of 2] {$a_2$};
        \coordinate (4) at ($(3)+(60:2)$) {};

        \draw (1) -- (2);
        \draw[dotted] (2) -- (3);
        \draw (1) to [bend left] node {$e$} (0);
        \draw (3) to [bend right] node [swap] {$e$} (4);
        \draw [decorate,decoration={brace,amplitude=5},thin] ($(1)+(90:.75)$) -- node [yshift=7] {$C_1$} ($(3)+(90:.75)$);

        \draw (1) -- +(-90:1.5);
        \draw (2) -- +(-90:1.5);
        \draw (3) -- +(-90:1.5);

        \draw[dotted] (0) to [out=80,in=180] ($.5*(0)+.5*(4)+(0,2.2)$) coordinate (a)
            to [out=0,in=100] (4);
        \node[above] at (a) {$C$};

        \foreach \x in {0,2,4}
            \draw [fill] (\x) circle [radius=.2];
    \end{scope}
    
    \node at ($(4)+(2,0)$) {$\surg$};
    
    \begin{scope}[shift={++($(4)+(4,0)$)}] 
        \coordinate (0) at (0,1.2);
        \node[vertex] (1) at ($(0)+(-60:3.5)$) {$a_1$};
        \node[emptynode] (2) [right=of 1] {};
        \node[vertex] (3) [right=2cm of 2] {$a_2$};
        \coordinate (4) at ($(3)+(60:3.5)$) {};

        \draw (1) -- (2);
        \draw[dotted] (2) -- (3);
        \draw (1) to [out=45,in=135] node [pos=.2] {$C^{(1)}$} node[swap] {$e$} (3);
        \draw (0) to node {$e$} (4);

        \draw (1) -- +(-90:1.5);
        \draw (2) -- +(-90:1.5);
        \draw (3) -- +(-90:1.5);

        \draw[dotted] (0) to [out=80,in=180] ($.5*(0)+.5*(4)+(0,2.2)$) coordinate (a)
            to [out=0,in=100] (4);
        \node[above] at (a) {$C^{(2)}$};

        \foreach \x in {0,2,4}
            \draw [fill] (\x) circle [radius=.2];
    \end{scope}
    
\end{tikzpicture}
            \caption{one independent edge.}
            \label{F:oneindependent}
        \end{subfigure}

        \begin{subfigure}{\textwidth}
            \centering
            \begin{tikzpicture}[auto,ultra thick,scale=.5,emptynode/.style={inner sep=0},every node/.style={scale=.8},vertex/.style={circle,draw,thin,inner sep=2.5}]

    \begin{scope} 
        \coordinate (0) at (0,0);
        \node[vertex] (1) at ($(0)+(-60:2)$) {$a_1$};
        \node[emptynode] (2) [right=of 1] {};
        \node[vertex] (3) [right=2cm of 2] {$a_2$};
        \coordinate (4) at ($(3)+(60:2)$) {};

        \draw (1) -- (2);
        \draw[dotted] (2) -- (3);
        \draw (1) to [bend left] node {$e_1$} (0);
        \draw (3) to [bend right] node [swap] {$e_2$} (4);
        \draw [decorate,decoration={brace,amplitude=5},thin] ($(1)+(90:.75)$) -- node [yshift=7] {$C_1$} ($(3)+(90:.75)$);

        \draw (1) -- +(-90:1.5);
        \draw (2) -- +(-90:1.5);
        \draw (3) -- +(-90:1.5);

        \draw[dotted] (0) to [out=80,in=180] ($.5*(0)+.5*(4)+(0,2.2)$) coordinate (a)
            to [out=0,in=100] (4);
        \node[above] at (a) {$C$};

        \foreach \x in {0,2,4}
            \draw [fill] (\x) circle [radius=.2];
    \end{scope}
    
    \node at ($(4)+(2,0)$) {$\surg$};
    
    \begin{scope}[shift={++($(4)+(4,0)$)}] 
        \coordinate (0) at (0,1.2);
        \node[vertex] (1) at ($(0)+(-60:4.3)$) {$a_1$};
        \node[emptynode] (2) [right=of 1] {};
        \node[vertex] (3) [right=2cm of 2] {$a_2$};
        \coordinate (4) at ($(3)+(60:4.3)$) {};
        \node[vertex] (5) at ($.5*(0)+.5*(4)$) {$c$};
        \node[vertex] (6) at ($(5)+(0,-2)$) {$c$};

        \draw (1) -- (2);
        \draw[dotted] (2) -- (3);
        \draw (0) to node {$e_1$} (5) to node {$e_2$} (4);
        \draw (1) to [out=45,in=190] node [pos=.4] {$C^{(1)}$} node[swap] {$e_1$} (6);
        \draw (6) to [out=-10,in=135] node[swap] {$e_2$} (3);

        \draw (5) to node {$g$} (6);

        \draw (1) -- +(-90:1.5);
        \draw (2) -- +(-90:1.5);
        \draw (3) -- +(-90:1.5);

        \draw[dotted] (0) to [out=80,in=180] ($.5*(0)+.5*(4)+(0,2.2)$) coordinate (a)
            to [out=0,in=100] (4);
        \node[above] at (a) {$C^{(2)}$};

        \foreach \x in {0,2,4}
            \draw [fill] (\x) circle [radius=.2];
    \end{scope}
    
\end{tikzpicture}
            \caption{two independent edges.}
            \label{F:twoindependent}
        \end{subfigure}
        \caption{Surgery to cut apart the cycle $C$ in the proof of Theorem \ref{T:constellation}.}
    \end{figure}
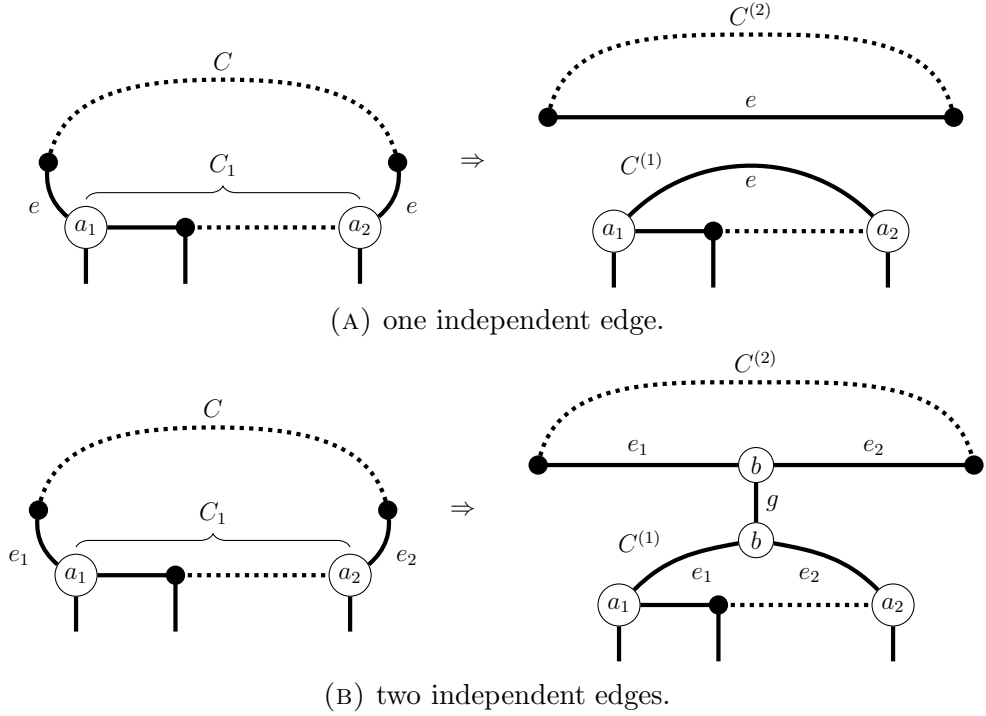
    Let $v_1$ (resp. $v_2$) be the first (resp. last) vertex of $C_1$, and let
    $f_1$ (resp. $f_2$) be the edge of $C_{2k}$ (resp. $C_2$) which is incident
    to $v_1$ (resp. $v_2$). The labels $a_i = h(v_i)$ are the endpoints of the
    path $\mcC_0$ in $\mcC$. If $\mcC$ has one independent edge $e$, then $e$
    will join $a_1$ and $a_2$, and $h(f_1) = h(f_2) = e$. By Lemma
    \ref{L:covering1}, every edge incident to $C_1$ lies on the same side of
    $C$, so we can cut and reconnect $f_1$ and $f_2$ as shown in Figure
    \ref{F:oneindependent}.
    If $\mcC$ has two independent edges, then we can do something similar.  In
    this case, there will be an independent edge $e_i = h(f_i)$ incident to $a_i$,
    $i=1,2$, and both edges will be incident with a third vertex $c$. Let $g$
    be the edge incident to $c$ not in $\mcC$. Since $\mcC$ is cubic, any cycle
    containing $g$ must also contain either $e_1$ or $e_2$. Since these edges
    are independent, $g$ is not contained in any element of $\Phi$.  We can
    then cut $f_1$ and $f_2$ and reconnect them by adding two vertices labelled
    by $c$, and a edge labelled by $g$, as shown in Figure \ref{F:twoindependent}.

    These surgeries are not as well behaved as those considered previously:
    When $\mcC$ has two independent edges, we end up increasing the size, so
    the result will no longer be $b$-minimal. When $\mcC$ has a single
    independent edge, it is possible that $f_1 = f_2$, in which case we create a
    closed loop. However, in both cases we split $C$ into two cycles $C^{(1)}$ and
    $C^{(2)}$, where $C^{(1)}$ is a facial copy of $\mcC$, and
    $C^{(2)}[\mcC_0]$ has fewer connected components than $C[\mcC_0]$.
    Furthermore, we do not change any other $\mcC$-cycle. And since we only
    change or add edges whose labels are not contained in any element of
    $\Phi \setminus \{\mcC\}$, we conclude that $C$ is the only $\Phi$-cycle
    changed by this surgery. As a result we may repeat this type of surgery
    to get a picture $\mcP_2$ (not necessarily
    $b$-minimal, and possibly containing closed loops) which is character-equivalent
    to $\mcP_1$, and in which every $\Phi$-cycle is either a facial cover, or
    labelled only by independent edges. 

    Let $\mfP$ be the collection of pictures which are character-equivalent
    to $\mcP_2$, and in which every $\Phi$-cycle is either a facial cover
    or labelled only by independent edges. Let $\mcP_3$ be an element of $\mfP$
    of minimum size, and let $\mcP_4$ be the picture $\mcP_3$ with all closed
    loops deleted. Clearly $\mcP_4$ is also an element of $\mfP$ of minimum size.
    Suppose $\mcP_4$ has a $\mcC$-cycle $C$ which is not a facial cover for
    some $\mcC \in \Phi$. By definition, $C$ is labelled by independent edges
    of $\mcC$. Since $\mcP_4$ has no closed loops and $\mcC$ is simple, $C$ has
    at least two edges. Consequently $\mcC$ must have two independent edges,
    say $e_1$ and $e_2$. As above, let $c$ be the vertex incident to both $e_1$
    and $e_2$, and let $g$ be the edge incident to $c$ and not in $\mcC$. Since
    every edge of $C$ is labelled by $e_1$ or $e_2$, every vertex of $C$ must
    be labelled by $c$. We can now argue similarly to Lemma \ref{L:nosillycycles}:
    $C$ must consist of a sequence of edges $f_1, \ldots, f_{2k}$, where $h(f_{2i-1})
    = e_1$ and $h(f_{2i}) = e_2$ for all $1 \leq i \leq k$. Let $v_1,\ldots,v_{2k}$
    be the vertices of $C$ in order, so $f_i$ has endpoints $v_i$ and
    $v_{i+1}$, where $v_{2k+1} := v_1$, and let $g_i$ be the edge of $\mcP_4$
    incident to $v_i$ with $h(g_i) = g$. Let $\mcP_5$ be the picture $\mcP_4$ with
    all the edges and vertices of $C$ deleted, and $g_{2i-1}$ and $g_{2i}$ joined
    into a single edge along the path taken by $f_{2i-1}$.  Since the edges
    $e_1$, $e_2$, and $g$ do not belong to any element of $\Phi \setminus
    \{\mcC\}$, this process does not create or change any other $\Phi$-cycle.
    Thus $\mcP_5 \in \mfP$, in contradiction of the minimality of $\mcP_4$. 
    We conclude that every $\Phi$-cycle in $\mcP_4$ is a facial cover, and the
    theorem follows from Lemma \ref{L:copycycle}.

\section{Proof of the embedding theorem}\label{S:proof}

In this section we finish the proof of Theorem \ref{T:invembedding} (and thus
complete the proof of Theorem \ref{T:embedding}). We continue
with the notation from Section \ref{S:wagonwheel}, so $\mcI := \Inv\langle
S:R\rangle$ is a presentation by involutions over $\Z_2$, the set of 
relations is written as $R = \{r_1,\ldots, r_m\}$ where $r_i = J^{p_i} s_{i1}
\cdots s_{in_i}$, $G$ is the group with presentation $\mcI$, and $\mcW :=
\mcW(\mcI)$ is the corresponding wagon wheel hypergraph. Although we do not yet
assume that $\mcI$ is collegial, for convenience we assume that the length
$n_i$ of the relation $r_i$ is at least $4$ for all $1 \leq i \leq m$ (by
Remark \ref{R:atleastthree}, this assumption holds if $\mcI$ is collegial). We
also use the same notation for the vertices and edges of $\mcW$, so
\begin{equation*}
    V = \bigcup_{i=1}^m V_i, \text{ where }
    V_i := \{(i,j,k) : j \in \Z_{n_i}, 1 \leq k \leq 3 \}, \text{ and }
\end{equation*}
\begin{equation*}
    E = S \sqcup \bigcup_{i=1}^m E_i, \text{ where } E_i := \{a_{ij},b_{ij},c_{ij},d_{ij} : j \in \Z_{n_i}\}.
\end{equation*}
In addition, we make the following definitions:
\begin{itemize}
    \item Let $\mcW_i$ be the closed subhypergraph of $\mcW$ containing
        vertices $V_i$ and edges $E_i$. (The open neighbourhood $\mcN(\mcW_i)$
        is shown in Figure \ref{F:wagonwheel}.) 
    \item Let $\mcA_i$, $1 \leq i \leq m$ be the cycle containing edges
        $a_{i1}$, $b_{i1}$, $a_{i2}$, $b_{i2},\ldots,a_{in_i}$, $b_{in_i}$. 
    \item Let $\mcB_i$, $1 \leq i \leq m$, be the cycle containing edges $d_{i1},\ldots,d_{in_i}$. 
    \item Let $\mcC_{ij}$, $1 \leq i \leq m$, $j \in \Z_{n_i}$, be the cycle
        containing edges $a_{ij}$, $b_{ij}$, $c_{ij}$, $d_{ij}$, and $c_{i,j-1}$. 
    \item Let $\Phi = \{\mcC_{ij} : 1 \leq i \leq m, j \in \Z_{n_i}\} \cup \{\mcB_i : 1 \leq i \leq m\}$.
\end{itemize}

Before we can prove Theorem \ref{T:invembedding}, we need some preliminary 
lemmas.
\begin{lemma}\label{L:morphism}
    If $b$ is an $\mcI$-labelling of $\mcW$, then there is a well-defined
    morphism $G \arr \Gamma(\mcW,b)$ over $\Z_2$ sending $s \mapsto x_s$ for all
    $s \in S$. 
\end{lemma}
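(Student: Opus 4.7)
The plan is to define the morphism on generators by $s \mapsto x_s$ for $s \in S$ and $J_G \mapsto J$, and then verify that every defining relation of $\mcI$ holds in $\Gamma(\mcW, b)$. The involution relations $s^2 = 1$ are immediate because every generator $x_e$ of a solution group squares to $1$; the centrality relations $[s, J_G] = 1$ and $J_G^2 = 1$ are immediate because $J$ is central of order dividing $2$ in $\Gamma(\mcW,b)$ by Definition \ref{D:solutiongroup}. The only non-trivial step is to verify, for each relation $r_i = J^{a_i} s_{i1} \cdots s_{i n_i} \in R$, that
\begin{equation*}
    x_{s_{i1}} x_{s_{i2}} \cdots x_{s_{in_i}} = J^{a_i}
\end{equation*}
holds in $\Gamma(\mcW,b)$.

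To establish this identity, I will construct an explicit $\mcW$-picture $\mcP_i$ by directly reading off Figure \ref{F:wagonwheel}. Embed the disc so that the diagram of $\mcN(\mcW_i)$ becomes a planar picture: place one vertex of $\mcP_i$ at each point of $V_i = \{(i,j,k) : j \in \Z_{n_i}, 1\leq k\leq 3\}$, labelled by the corresponding vertex of $\mcW$, and draw one edge for each of the incidences $a_{ij}, b_{ij}, c_{ij}, d_{ij}$ as well as one outward ``spoke'' from each $(i,j,1)$ labelled $s_{ij}$ that meets the boundary of the disc. The incidence list in Section \ref{S:wagonwheel} ensures that the cyclic sequence of edges around each vertex $v \in V_i$, read counter-clockwise, consists exactly of the edges of $\mcW$ incident to $v$, each occurring with the required multiplicity; hence $\mcP_i$ is a valid $\mcW$-picture.

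By construction, the boundary word is $\bd(\mcP_i) = s_{i1} s_{i2} \cdots s_{in_i}$, read counter-clockwise along the outer edge of the disc. The character $\ch(\mcP_i) \in \Z_2^{V(\mcW)}$ is the indicator of $V_i$, so
\begin{equation*}
    \ch(\mcP_i) \cdot b = \sum_{v \in V_i} b_v = a_i \pmod 2,
\end{equation*}
where the last equality is precisely the defining property of an $\mcI$-labelling. By the van Kampen lemma (Proposition \ref{P:vankampen2}), this gives $x_{s_{i1}} \cdots x_{s_{in_i}} = J^{a_i}$, as needed. Since the map respects all defining relations of $\Inv\langle S : R\rangle$ and sends $J_G \mapsto J$, it extends to a well-defined morphism $G \arr \Gamma(\mcW,b)$ over $\Z_2$.

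The only point requiring care is the planar layout ensuring that the cyclic order of edges incident to each internal vertex matches the cyclic order dictated by Definition \ref{D:invpictures}; this is routine because each vertex of $V_i$ has degree exactly three and the three edges coming out of it can always be drawn in any prescribed cyclic order in a planar neighbourhood. I do not expect any further obstacle.
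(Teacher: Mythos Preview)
Your proposal is correct and follows essentially the same approach as the paper: construct the picture from Figure~\ref{F:wagonwheel}, compute its boundary and character, and apply Proposition~\ref{P:vankampen2}. One small remark: your final caveat about cyclic order is unnecessary, since for $\mcH$-pictures (Definition~\ref{D:Hpicture}) only the multiset of edge labels at each vertex matters, not their cyclic arrangement; the presentation in Definition~\ref{D:solutionpres} already includes all orderings.
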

\begin{proof}
    There is a well-defined morphism $\mcF(S) \times \Z_2 \arr \Gamma(\mcW,b)$
    over $\Z_2$ sending $s \mapsto x_s$.  As can be seen from Figure
    \ref{F:wagonwheel}, there is a $\mcN(\mcW_i)$-picture $\mcP$ with
    $\bd(\mcP) = s_{i1} \cdots s_{in_i}$ and $\ch(\mcP) \cdot b = \sum_{v \in
    V_i} b_v = p_i$.  By Proposition \ref{P:vankampen2}, the relation $r_i$ holds in
    $\Gamma(\mcW,b)$ for all $1 \leq i \leq n$. 
\end{proof}

\begin{lemma}\label{L:retract1}
    $\mcN(\mcB_i)$ is a retract of $\mcW$ for all $1 \leq i \leq n$.
\end{lemma}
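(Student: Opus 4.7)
The plan is to exhibit an explicit generalized morphism $r : \mcW \arr \mcN(\mcB_i)$ that restricts to the identity on $\mcN(\mcB_i)$; by Definition \ref{D:retract} this establishes the lemma. First I unpack the target. The only edges of $\mcW$ incident to the vertex $(i,j,3)$ are $d_{ij}$, $d_{i,j+1}$, and the radial spoke $c_{ij}$, so $\mcN(\mcB_i)$ has vertex set $\{(i,j,3) : j \in \Z_{n_i}\}$ and edge set $\{d_{ij}, c_{ij} : j \in \Z_{n_i}\}$, with each $c_{ij}$ becoming a degree-one pendant edge in $\mcN(\mcB_i)$.

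I will define $r$ by collapsing the two outer rings of $V_i$ radially onto $\mcB_i$ and sending everything outside of $V_i$ to $\vareps$. On vertices, set
\begin{equation*}
(i,j,3) \mapsto (i,j,3), \quad (i,j,2) \mapsto (i,j,3), \quad (i,j,1) \mapsto \vareps, \quad (i',j,k) \mapsto \vareps \text{ for } i' \neq i,
\end{equation*}
and on edges,
\begin{equation*}
c_{ij} \mapsto c_{ij}, \quad d_{ij} \mapsto d_{ij}, \quad a_{ij} \mapsto d_{ij}, \quad b_{ij} \mapsto d_{ij},
\end{equation*}
while every edge in $E_{i'}$ for $i' \neq i$ and every $s \in S$ is sent to $\vareps$. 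By construction $r$ restricts to the identity on $\mcN(\mcB_i)$.

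The remaining work is to verify the two conditions of Definition \ref{D:generalizedmorphism} at each vertex of $\mcW$. At $(i,j,3) \mapsto (i,j,3)$ the three incident edges $c_{ij}, d_{ij}, d_{i,j+1}$ are fixed and produce the correct multiplicities at the target. At $(i,j,2) \mapsto (i,j,3)$ the incident edges $c_{ij}, b_{ij}, a_{i,j+1}$ are sent respectively to $c_{ij}, d_{ij}, d_{i,j+1}$, exactly matching the incidences of $(i,j,3)$ in $\mcN(\mcB_i)$. At $(i,j,1) \mapsto \vareps$ the incident edges $a_{ij}, b_{ij}, s_{ij}$ have $s_{ij} \mapsto \vareps$ while $a_{ij}$ and $b_{ij}$ both map to $d_{ij}$, giving an even sum of $2$ non-$\vareps$ incidences with common target, as required. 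At $(i',j,k) \mapsto \vareps$ for $i' \neq i$, every incident edge maps to $\vareps$, so condition (2) holds trivially.

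No step presents a genuine obstacle; the argument is a direct verification. The one piece of bookkeeping is that the assignment $a_{ij} \mapsto d_{ij}$ must be compatible both with the pairing condition at $(i,j,1) \mapsto \vareps$ (where it pairs with $b_{ij}$) and with the multiplicity count at $(i,j-1,2) \mapsto (i,j-1,3)$ (where it contributes to the incidence of $d_{ij}$ at the target $(i,j-1,3)$), and these two constraints happily force the same choice. Similarly, sending every $s \in S$ to $\vareps$ is consistent even though a single $s \in S$ may be incident to multiple $(i',j',1)$'s across different values of $i'$.
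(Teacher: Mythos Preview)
Your proof is correct and is essentially identical to the paper's own argument: you define the same retraction $r$ (fixing $\mcN(\mcB_i)$, sending $(i,j,2) \mapsto (i,j,3)$, $(i,j,1) \mapsto \vareps$, everything outside $V_i$ to $\vareps$, and $a_{ij},b_{ij} \mapsto d_{ij}$) and verify the conditions of Definition~\ref{D:generalizedmorphism} vertex by vertex in the same way. Your additional remark on the compatibility at $(i,j-1,2)$ is a correct and helpful elaboration of a point the paper leaves implicit.
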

\begin{figure}
    \begin{tikzpicture}[auto,ultra thick,scale=.4,emptynode/.style={inner sep=0},every node/.style={scale=.8}]

    \begin{scope}
        \foreach \x in {1,...,4}
            \node[emptynode] (\x) at (90*\x-45:1.8) {};
        \foreach \x in {5,...,12}
            \node[emptynode] (\x) at (45*\x-225:4) {};
        \draw circle [radius=1.8];
        \draw circle [radius=4];
        \node[below left] at (1) {$\mcB_i$};
        \draw (1) -- (6);
        \draw (2) -- (8);
        \draw (3) -- (10);
        \draw (4) -- (12);

        \foreach \x in {5,7,9,11} {
            \draw[thin] (\x.45*\x-215) to ++(45*\x-195:1.8) coordinate (e1) {};
            \draw[thin] (\x.45*\x-235) to ++(45*\x-255:1.8) coordinate (e2) {};
            \path[fill,pattern=north west lines] (\x.45*\x-215) to (e1) to (e2)
                to (\x.45*\x-235) to [bend right] (\x.45*\x-215);
        }

        \foreach \x in {1,...,12}
            \draw[fill] (\x) circle [radius=.2];
    \end{scope}

    \coordinate (r1) at ($(5)+(0:1.8)$);
    \node at ($(r1)+(1.75,0)$) {$\surg$};

    \begin{scope}[shift={++($2*(r1)+(1.7,0)$)}]
        \foreach \x in {1,...,4}
            \node[emptynode] (\x) at (90*\x-45:1.8) {};
        \foreach \x in {5,...,8}
            \node[emptynode] (\x) at (90*\x-45:4) {};
        \draw circle [radius=1.8];
        \draw circle [radius=4];
        \node[below left] at (1) {$\mcB_i$};
        \draw[blue] (1) -- (5);
        \draw[blue] (2) -- (6);
        \draw[blue] (3) -- (7);
        \draw[blue] (4) -- (8);

        \foreach \x in {1,...,8}
            \draw[fill] (\x) circle [radius=.2];
    \end{scope}

    \coordinate (r2) at ($2*(r1)+(5.7,0)$);
    \node at ($(r2)+(1.75,0)$) {$\surg$};

    \begin{scope}[shift={++($(r2)+(7.5,0)$)}]
        \path (3,-1) arc(0:45:3 and 2) coordinate (1) 
                arc (45:135:3 and 2) coordinate (2)
                arc (135:225:3 and 2) coordinate (3)
                arc (225:315:3 and 2) node[pos=.5,swap] {$\mcB_i$} coordinate (4)
                arc (315:360:3 and 2);
        \path (3,1) arc(0:45:3 and 2) coordinate (5) 
                arc (45:135:3 and 2) coordinate (6)
                arc (135:225:3 and 2) coordinate (7)
                arc (225:315:3 and 2) coordinate (8)
                arc (315:360:3 and 2);
        
        \draw [help lines,<-] ($(1)+(0,.4)$) -- ($(5)+(0,-.4)$);
        \draw [help lines,<-] ($(2)+(0,.4)$) -- ($(6)+(0,-.4)$);
        \draw [help lines,<-] ($(3)+(0,.4)$) -- ($(7)+(0,-.4)$);
        \draw [help lines,<-] ($(4)+(0,.4)$) -- ($(8)+(0,-.4)$);
        
        \draw[blue] (1) to [out=35,in=-90] ($.5*(1)+.5*(5)+(1.5,0.5)$) to [out=100,in=20] (5);
        \draw[blue] (2) to [out=145,in=-90] ($.5*(2)+.5*(6)+(-1.5,0.5)$) to [out=80,in=160] (6);

        \draw (0,-1) ellipse (3 and 2);
        \draw (0,1)  ellipse (3 and 2);

        \draw[blue] (3) to [out=-125,in=-90] ($.5*(3)+.5*(7)+(-1.5,-0.5)$) to [out=80,in=-160] (7);
        \draw[blue] (4) to [out=-35,in=-90] ($.5*(4)+.5*(8)+(1.5,-0.5)$) to [out=100,in=-20] (8);

        \foreach \x in {1,...,8}
            \draw[fill] (\x) circle [radius=.2];

    \end{scope}
\end{tikzpicture}
    \caption{To retract $\mcN(\mcW_i)$ onto $\mcN(\mcB_i)$, we remove
            intermediary vertices to get a simplified wagon wheel shape,
            and then fold this wagon wheel onto the central cycle. In
            the example shown above, $n_i=4$.}
    \label{F:retractcenter}
\end{figure}
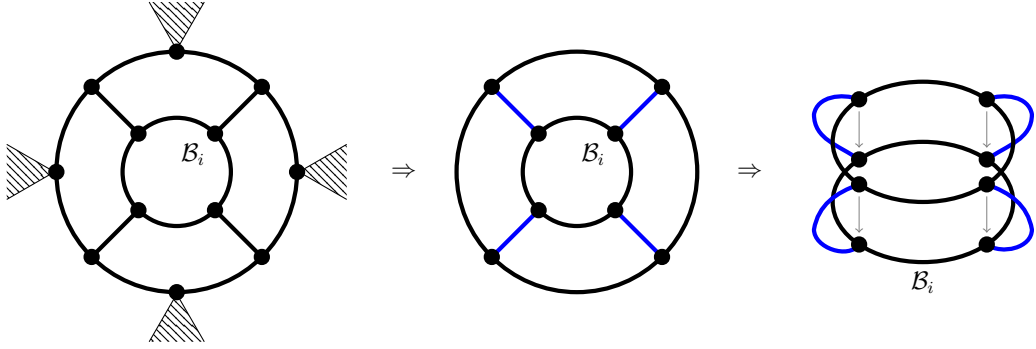
\begin{proof}
    Define $r : \mcW \arr \mcN(\mcB_i)$ by 
    \begin{align*}
        r((i',j,k)) & = \begin{cases} \vareps & i' \neq i \\
                                    \vareps & i=i' \text{ and } k=1 \\
                                    (i,j,3) & i=i' \text{ and } k=2,3
                      \end{cases}, \text{ and} \\
        r(e) & =  \begin{cases} \vareps & e \in S \\
                              \vareps & e \in E_{i'} \text{ with } i \neq i' \\
                              c_{ij} & e = c_{ij} \\
                              d_{ij} & e \in \{a_{ij},b_{ij},d_{ij}\}
                \end{cases}.
    \end{align*}
    It is clear that $r$ is the identity on $\mcN(\mcB_i)$. The only 
    vertices mapped to $\vareps$ which are incident with edges in
    $E(\mcW) \setminus r^{-1}(\vareps)$ are the vertices $(i,j,1)$
    for $j \in \Z_{n_i}$. For these vertices, the incident edges
    $a_{ij}$ and $b_{ij}$ are identified as required by condition (2) of
    Definition \ref{D:generalizedmorphism}. 

    Since $r$ is the identity on $\mcN(\mcB_i)$, condition (1) of Definition
    \ref{D:generalizedmorphism} holds for the vertices $(i,j,3)$, $1 \leq j
    \leq n_i$. The vertices $(i,j,2)$ are incident with three edges of
    $E(\mcW) \setminus r^{-1}(\vareps)$, namely $a_{i,j+1}$, $c_{ij}$, and
    $b_{i,j}$, and these edges are mapped to the three edges $d_{i,j+1}$,
    $c_{ij}$, and $d_{i,j}$ incident to $(i,j,3)$. We conclude that
    condition (1) of Definition \ref{D:generalizedmorphism} also holds for
    the vertices $(i,j,2)$, and hence $r$ is a generalized morphism.

    The map $r$ can be visualized as deleting everything outside of $\mcW_i$ to
    get a simplified wagon wheel shape, and then folding this wagon wheel onto
    $\mcN(\mcB_i)$. This is depicted in Figure \ref{F:retractcenter}.
\end{proof}

\begin{lemma}\label{L:retract2}
    Let $s = s_{i{j}}$ for some $1 \leq i \leq m$ and $j \in \Z_{n_i}$.  If
    $R$ is cyclically reduced, and $\mult(s; r_{i'})$ is even for all $1 \leq
    i' \leq m$, then $\mcN(\mcC_{i{j}})$ is a retract of $\mcW$.
\end{lemma}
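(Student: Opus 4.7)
The plan is to explicitly construct the retract $r : \mcW \to \mcN(\mcC_{ij})$ as a generalized morphism, in the sense of Definition~\ref{D:generalizedmorphism}, with $r|_{\mcN(\mcC_{ij})}$ the identity. The construction handles each wagon wheel separately.

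For each wagon wheel $\mcW_{i'}$ with $i' \neq i$, I send every vertex to $\vareps$. Since the edge $s \in \mcN(\mcC_{ij})$ is fixed by $r$, it contributes odd parity at each vertex $(i', k, 1)$ with $s_{i'k} = s$, which must be corrected via condition~(2) of Definition~\ref{D:generalizedmorphism}. The hypothesis $\mult(s; r_{i'}) \equiv 0 \pmod 2$ allows me to partition these vertices into pairs; for each pair I route an \emph{$s$-wave}, a path of alternating $a$- and $b$-edges along the 1-2 cycle of $\mcW_{i'}$ all mapped to $s$, and map every other edge of $\mcW_{i'}$ to $\vareps$. A non-crossing pairing (e.g.\ pair cyclically adjacent $s$-occurrences) makes the wave paths disjoint, so the single-image constraint of condition~(2) is preserved at every vertex.

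For $\mcW_i$: the pendant edges $b_{i,j-1}, a_{i,j+1}, d_{i,j-1}, d_{i,j+1}$ of $\mcN(\mcC_{ij})$ (each fixed by $r$) propagate through condition~(1) and force a \emph{fold} of part of $\mcW_i$ onto $\mcN(\mcC_{ij})$ --- specifically the adjacent vertices $(i, j-1, 1), (i, j+1, 1), \ldots$ must map into $V(\mcC_{ij})$ with their remaining incident edges matched to edges of $\mcN(\mcC_{ij})$. A naive $s$-wave routing within $\mcW_i$ is impossible because $K_i := \{k : s_{ik} = s\}$ has $|K_i| - 1$ elements other than $j$, which is odd. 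Instead I pick $k' \in K_i \setminus \{j\}$ to be the cyclically-next $s$-occurrence after $j$, and fold the short arc of $\mcW_i$ between $(i, j, 1)$ and $(i, k', 1)$ onto $\mcN(\mcC_{ij})$ by sending $(i, k', 1) \mapsto (i, j, 1)$, thereby absorbing its $s$-incidence; on this arc, vertices $(i, k, 2)$ and $(i, k, 3)$ are identified with the appropriate side of $\mcC_{ij}$ so as to match incidence data. The remaining $|K_i| - 2$ (now even) $s$-occurrences lie in the long arc and are paired via $s$-waves as above. The hypothesis $\mult(s; r_i) \equiv 0 \pmod 2$ ensures $|K_i| \geq 2$ (so $k'$ exists) and $|K_i| - 2$ is even.

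The main obstacle is the compatibility of the fold and the $s$-waves: condition~(2) requires all non-$\vareps$ edges incident to a $\vareps$-vertex to share a common image, so the fold (carrying images in $\mcN(\mcC_{ij}) \setminus \{s\}$) and the waves (carrying image $s$) must never meet at a common vertex. Placing waves only in the interior of the long arc --- away from $(i, j, 1)$, $(i, k', 1)$, and the fold's seam --- achieves this, and cyclic reduction of $R$ is used precisely here: it forces $s_{i, j \pm 1} \neq s$, so neither neighbor of $(i, j, 1)$ along the 1-2 cycle of $\mcW_i$ creates an $s$-parity obstruction near the fold. Once the construction is in place, verifying conditions~(1) and~(2) at each vertex --- at fixed vertices of $V(\mcC_{ij})$, at their folded preimages, and at the $\vareps$-vertices --- reduces to a routine local check from the sun-structure of $\mcN(\mcC_{ij})$.
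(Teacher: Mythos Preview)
Your treatment of the wheels $\mcW_{i'}$ with $i' \neq i$ is correct and genuinely different from the paper's: the paper retracts $\mcN(\mcW_{i'})$ onto some sun $\mcN(\mcC_{i'j'})$ via the same napkin fold and then composes with a sun isomorphism $\mcN(\mcC_{i'j'}) \cong \mcN(\mcC_{ij})$, whereas you send every vertex of $\mcW_{i'}$ to $\vareps$ and absorb the $s$-parities by routing $s$-waves along the outer cycle between adjacent $s$-occurrences. Your version is more direct.

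The gap is in $\mcW_i$. Once you send $(i,k',1) \mapsto (i,j,1)$, condition~(1) forces $r(b_{i,k'}) \in \{a_{ij}, b_{ij}\}$ (it cannot be $s$, which is already used), and propagating into the long arc then forces every outer-cycle edge $a_{i,l}, b_{i,l}$ there to carry a fixed non-$s$ label in $\mcN(\mcC_{ij})$. The same happens from the $(i,j-1,1)$ end via the pendant $b_{i,j-1}$. When this non-$s$ wave reaches one of the remaining $|K_i|-2$ $s$-occurrences $(i,l,1)$ on the long arc, that vertex has two non-$\vareps$ incident edges with \emph{distinct} images (the non-$s$ wave label and $s$ itself), violating condition~(2); routing the wave down a spoke $c_{i,\bullet}$ instead collides with the $d_{i,j\pm1}$-wave on the inner cycle for the same reason. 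The only escape is to send $(i,l,1) \mapsto (i,j,1)$, but then condition~(1) forces the neighbouring level-$2$ and level-$3$ vertices into $V(\mcC_{ij})$ as well, and iterating this you recover precisely the paper's napkin fold at \emph{all} $2k$ $s$-occurrences, not just at $j$ and $k'$. (Separately, your assertion that $(i,j\pm1,1)$ ``must map into $V(\mcC_{ij})$'' is false and would be disastrous: it would force $r(s_{i,j\pm1}) \neq \vareps$, creating parity obstructions at every other occurrence of $s_{i,j\pm1}$, whose multiplicities you do not control.) So the $s$-wave idea, while elegant for the other wheels, cannot replace the full fold inside $\mcW_i$.
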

\begin{proof}
    We start by showing that $\mcN(\mcC_{i{j}})$ is a retract of $\mcN(\mcW_i)$.
    Since $\mcW_i$ depends only on the cyclic order of $s_{i1} \cdots
    s_{in_i}$, we can assume without loss of generality that $j=1$.
    Suppose that $\mult(s;r_{i}) = 2k$, and let $1=j_1 < j_2 < \cdots < j_{2k}
    \leq n_i$ be a list of the indices $1 \leq l \leq n_i$ such that
    $s_{il}=s$. Since $R$ is cyclically reduced, $j_{i+1} > j_i+1$ for all
    $i=1,\ldots,2k-1$, and $j_{2k} < n_i$. For convenience, let 
    \begin{align*}
        \mcJ_r = \{1,j_2,\ldots,j_{2k}\}, & \quad \mcJ_l = \{0,j_2-1,\ldots,j_{2k}-1\},\\
        \mcJ_{r}^{odd} = \{1,j_3,j_5\ldots,j_{2k-1}\}, & \quad
        \mcJ_{l}^{odd} = \{0,j_3-1,j_5-1\ldots,j_{2k-1}-1\}, \\
        \mcJ_{r}^{even} = \{j_2,j_4,\ldots,j_{2k}\}, & \quad\text{ and } 
            \mcJ_{l}^{even} = \{j_2-1,j_4-1,\ldots,j_{2k}-1\}. \\
    \end{align*}
    These sets represent the indices of vertices and edges on the right and
    left of the cycles $\mcC_{ij_{p}}$. To talk about edges which do not belong
    to these cycles, we also define
    \begin{align*}
        \overline{\mcJ}_r & = \{j_1+1,j_1+2,\ldots,j_2-1,j_3+1,\ldots,j_4-1,\ldots,j_{2k-1}+1,\ldots,j_{2k}-1\} \text{ and } \\
        \overline{\mcJ}_l & = \{j_2+1,j_2+2,\ldots,j_3-1,j_4+1,\ldots,j_5-1,\ldots,j_{2k}+1,\ldots,n_i\}.
    \end{align*}
    Define $q_i : \mcN(\mcW_i) \arr \mcN(\mcC_{i1})$ by 
    \begin{align*}
        q_i((i,j,k)) & = \begin{cases} \vareps & k=1 \text{ and } j \not\in \mcJ_r \\
                                     (i,1,1) & k=1 \text{ and } j \in \mcJ_r \\
                                     \vareps & k=2,3 \text{ and } j \not\in \mcJ_r \cup \mcJ_l \\
                                     (i,0,k) & k=2,3 \text{ and } j \in \mcJ_{l}^{odd} \cup \mcJ_{r}^{even} \\
                                     (i,1,k) & k=2,3 \text{ and } j \in \mcJ_{r}^{odd} \cup \mcJ_{l}^{even}.
                        \end{cases},
    \end{align*}
    \begin{align*}
        q_i(a_{ij}) & = \begin{cases} a_{i1} & j \in \mcJ_r^{odd} \\
                                    b_{i1} & j \in \mcJ_r^{even} \\
                                    b_{i0} & j \in \overline{\mcJ}_l \\
                                    a_{i2} & j \in \overline{\mcJ}_r \\
                      \end{cases},
        & q_i(b_{ij}) & = \begin{cases}
                        a_{i1} & j \in \mcJ_{r}^{even} \\
                        b_{i1} & j \in \mcJ_{r}^{odd} \\
                        b_{i0} & j \in \overline{\mcJ}_l \\
                        a_{i2} & j \in \overline{\mcJ}_r \\
                     \end{cases}, \\
        q_i(c_{ij}) & = \begin{cases}
                        c_{i0} & j \in \mcJ_l^{odd} \cup \mcJ_r^{even} \\
                        c_{i1} & j \in \mcJ_r^{odd} \cup \mcJ_l^{even} \\
                        \vareps & j \not\in \mcJ_r \cup \mcJ_l \\
                      \end{cases}, 
        & q_i(d_{ij}) & = \begin{cases}
                        d_{i1} & j \in \mcJ_r \\
                        d_{i0} & j \in \overline{\mcJ}_l \\
                        d_{i2} & j \in \overline{\mcJ}_r \\
                     \end{cases}, \\
        q_i(s') & = \begin{cases} s & s' = s \\
                           \vareps & s' \in S \setminus \{s\} 
                  \end{cases}. & \\ 
    \end{align*}
    The map $q_i$ can be visualized as deleting vertices and edges $c_{ij}$ (the 
    spokes of the wagon wheel) not in the cycles $\mcC_{ij_p}$, and then
    folding up the cycles $\mcC_{ij_p}$ onto $\mcC_{i1}$, alternating the
    directions of the folds after each $\mcC_{ij_p}$ like a napkin. The smallest
    example, when $k=1$, is depicted in Figure \ref{F:retractside}. As in the
    proof of Lemma \ref{L:retract1}, it follows that $q_i$ is a retract. 
\begin{figure}
    \begin{tikzpicture}[auto,ultra thick,scale=.4,emptynode/.style={inner sep=0},every node/.style={scale=.8}]

    \begin{scope}
        \foreach \x in {1,...,4}
            \node[emptynode] (\x) at (90*\x-45:1.8) {};
        \foreach \x in {5,...,12}
            \node[emptynode] (\x) at (45*\x-225:4) {};
        \draw circle [radius=1.8];
        \draw circle [radius=4];
        \draw (1) -- (6);
        \draw (2) -- (8);
        \draw (3) -- (10);
        \draw (4) -- (12);

        \foreach \x in {5,7,9,11} {
            \draw[thin] (\x.45*\x-215) to ++(45*\x-195:1.8) coordinate (e1) {};
            \draw[thin] (\x.45*\x-235) to ++(45*\x-255:1.8) coordinate (e2) {};
            \path[fill,pattern=north west lines] (\x.45*\x-215) to (e1) to (e2)
                to (\x.45*\x-235) to [bend right] (\x.45*\x-215);
        }
        \node[above,fill=white,yshift=12,inner sep=2] at (7) {$s$};
        \node[below,fill=white,yshift=-12,inner sep=2] at (11) {$s$};

        \foreach \x in {1,...,12}
            \draw[fill] (\x) circle [radius=.2];
        
        \path (6) arc(45:90:4) node [pos=.5,swap] {$C_{i1}$};
    \end{scope}

    \coordinate (r1) at ($(5)+(0:1.8)$);
    \node at ($(r1)+(1.75,0)$) {$\surg$};

    \begin{scope}[shift={++($2*(r1)+(1.7,0)$)}]
        \foreach \x in {1,...,4}
            \node[emptynode] (\x) at (90*\x-45:1.8) {};
        \foreach \x in {5,...,12}
            \node[emptynode] (\x) at (45*\x-225:4) {};
        \draw (1) arc (45:135:1.8);
        \draw (3) arc (225:315:1.8);
        \draw (6) arc (45:135:4);
        \draw (10) arc (225:315:4);
        \draw[blue] (2) arc (135:225:1.8);
        \draw[blue] (4) arc (315:405:1.8);
        \draw[blue] (8) arc (135:225:4);
        \draw[blue] (12) arc (315:405:4);
        \draw (1) -- (6);
        \draw (2) -- (8);
        \draw (3) -- (10);
        \draw (4) -- (12);

        \foreach \x in {7,11} {
            \draw[thin] (\x.45*\x-215) to ++(45*\x-195:1.8) coordinate (e1) {};
            \draw[thin] (\x.45*\x-235) to ++(45*\x-255:1.8) coordinate (e2) {};
            \path[fill,pattern=north west lines] (\x.45*\x-215) to (e1) to (e2)
                to (\x.45*\x-235) to [bend right] (\x.45*\x-215);
        }
        \node[above,fill=white,yshift=12,inner sep=2] at (7) {$s$};
        \node[below,fill=white,yshift=-12,inner sep=2] at (11) {$s$};

        \foreach \x in {1,2,3,4,6,7,8,10,11,12}
            \draw[fill] (\x) circle [radius=.2];
        \path (6) arc(45:90:4) node [pos=.5,swap] {$C_{i1}$};
    \end{scope}

    \coordinate (r2) at ($2*(r1)+(5.7,0)$);
    \node at ($(r2)+(1.75,0)$) {$\surg$};

    \begin{scope}[shift={++($(r2)+(7.5,-1)$)}]
        \foreach \x in {1,2}
            \node[emptynode] (\x) at (90*\x-45:1.8) {};
        \foreach \x in {5,...,8}
            \node[emptynode] (\x) at (45*\x-225:4) {};
        \draw (1) -- (6);
        \draw (2) -- (8);
        \draw (1) arc (45:135:1.8);
        \draw (6) arc (45:135:4);

        \begin{scope}[rotate around x=-97]
            \foreach \x in {3,4}
                \node[emptynode] (\x) at (90*\x-45:1.8) {};
            \foreach \x in {9,...,12}
                \node[emptynode] (\x) at (45*\x-225:4) {};
            \draw (3) -- (10);
            \draw (4) -- (12);
            \draw (4) arc(-45:-135:1.8);
            \draw (12) arc(-45:-135:4);
            \foreach \x in {11} {
                \draw[thin] (\x.45*\x-215) to ++(45*\x-195:1.8) coordinate (e1) {};
                \draw[thin] (\x.45*\x-235) to ++(45*\x-255:1.8) coordinate (e2) {};
                \path[fill,pattern=north west lines] (\x.45*\x-215) to (e1) to (e2)
                    to (\x.45*\x-235) to [bend right] (\x.45*\x-215);
            }
        \end{scope}

        \foreach \x in {7} {
            \draw[thin] (\x.45*\x-215) to ++(45*\x-195:1.8) coordinate (e1) {};
            \draw[thin] (\x.45*\x-235) to ++(45*\x-255:1.8) coordinate (e2) {};
            \path[fill,pattern=north west lines] (\x.45*\x-215) to (e1) to (e2)
                to (\x.45*\x-235) to [bend right] (\x.45*\x-215);
        }
        \node[above,fill=white,yshift=12,inner sep=2] at (7) {$s$};

        \draw[blue] (1) -- (4);
        \draw[blue] (2) -- (3);
        \draw[blue] (12) -- (6);
        \draw[blue] (10) -- (8);
        \draw[help lines, ->] ($(12)+(.65,.2)$) to [bend right] ($(6)+(.39,-.4)$);

        \foreach \x in {1,2,3,4,6,7,8,10,11,12}
            \draw[fill] (\x) circle [radius=.2];
        \path (6) arc(45:90:4) node [pos=.5,swap] {$C_{i1}$};
    \end{scope}
\end{tikzpicture}
    \caption{To retract $\mcN(\mcW_i)$ onto $\mcN(\mcC_{i1})$, we remove
            intermediary vertices and edges, and then fold up the remaining
            cycles like a napkin. In the example above, $n_i=4$ and $\mult(s;r_i)=2$.
            In general, if $\mult(s;r_i)=2k$ then we make $k$ folds.}
    \label{F:retractside}
\end{figure}
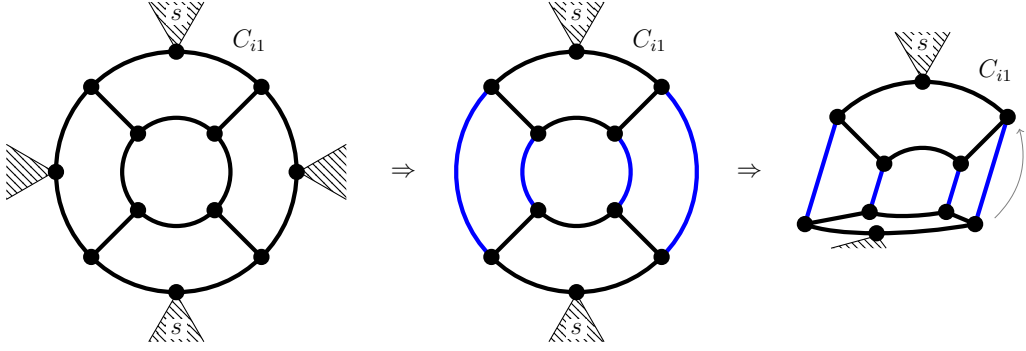

    Now we look at $\mcW_{i'}$ for $i' \neq i$. If $\mult(s;r_{i'}) = 0$, then
    define 
    \begin{equation*}
        q_{i'} : \mcN(\mcW_{i'}) \arr \mcN(\mcC_{ij}) : x \mapsto \vareps.
    \end{equation*}
    If $\mult(s;r_{i'}) > 0$, then find $j'$ such that $s_{i'j'} = s$, and let
    $f : \mcN(\mcW_{i'}) \arr \mcN(\mcC_{i'j'})$ be the retract defined above onto
    $\mcN(\mcC_{i'j'})$. Now $\mcN(\mcC_{i'j'})$ and $\mcN(\mcC_{ij})$ are both suns,
    so there is an isomorphism $g : \mcN(\mcC_{i'j'}) \arr \mcN(\mcC_{ij})$ with
    $g(s) = s$, and we let $q_{i'} = g \circ f$.

    The morphisms $q_{i'}$, $1 \leq i' \leq m$ all send $s' \in S$ to either
    $\vareps$ if $s' \neq s$, or to $s$ if $s' = s$. If $i' \neq i''$, then
    the open subhypergraphs $\mcN(\mcW_{i'})$ and $\mcN(\mcW_{i''})$ have
    no vertices in common. All common edges of $\mcN(\mcW_{i'})$ and
    $\mcN(\mcW_{i''})$ belong to $S$, so $q_{i'}$ and $q_{i''}$ agree on the
    intersection. Every vertex of $\mcW$ belongs to some $\mcW_i$. There may be
    elements $s'$ of $S$ which do not appear in any relation $r_{i'}$, and
    hence do not belong to any $\mcN(\mcW_i)$; for these edges, we can add
    additional morphisms which send $s' \mapsto \vareps$. By Proposition
    \ref{P:gluing}, there is a morphism $q : \mcW \arr \mcN(\mcC_{ij})$ which
    agrees with $q_{i'}$ on $\mcN(\mcW_{i'})$, and in particular is a retract.
\end{proof}

\begin{lemma}\label{L:constellation}
    Suppose $\mcI$ is collegial. Then there is an $\mcI$-labelling $b$ of
    $\mcW$ such that $\Phi$ is a $b$-constellation. 
\end{lemma}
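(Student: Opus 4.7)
The plan is to construct an explicit labelling $b$, identify which cycles of $\Phi$ are $b$-stellar, and then verify the three axioms of Definition \ref{D:constellation}. Call a generator $s \in S$ \emph{good} if $\mult(s;r')$ is even for every $r' \in R$, and \emph{bad} otherwise; let $B_i := \{j \in \Z_{n_i} : s_{ij} \text{ is bad}\}$. The key input from the collegiality condition (c) of Definition \ref{D:collegial} is that no two cyclically adjacent indices in $\Z_{n_i}$ both lie in $B_i$. Structurally, one checks from the incidence relations of $\mcW$ that every vertex of $\mcC_{ij}$ has degree three in $\mcW$, so $\mcN(\mcC_{ij})$ is a sun of size five (with rays $s_{ij}, b_{i,j-1}, a_{i,j+1}, d_{i,j+1}, d_{i,j-1}$), and similarly $\mcN(\mcB_i)$ is a sun of size $n_i$ with rays $c_{i1}, \ldots, c_{i,n_i}$. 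By Lemma \ref{L:retract1}, $\mcN(\mcB_i)$ is always a retract of $\mcW$, and by Lemma \ref{L:retract2} (using that $\mcI$ is cyclically reduced), $\mcN(\mcC_{ij})$ is a retract whenever $s_{ij}$ is good.

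I define $b$ as follows. Set $b_{(i,j,k)} = 0$ whenever $k \in \{2,3\}$. Within each $V_i$, if $a_i = 0$ set $b = 0$ on all of $V_i$; if $a_i = 1$, choose a single index $j_i^* \in B_i$ when $B_i \neq \emptyset$ and $j_i^* := 1$ otherwise, and place $b_{(i,j_i^*,1)} = 1$ with $b_{(i,j,1)} = 0$ for $j \neq j_i^*$. Then $|b^{-1}(1) \cap V_i| \equiv a_i \pmod{2}$, so $b$ is an $\mcI$-labelling. Under this choice, $\mcC_{ij}$ is $b$-stellar precisely when $s_{ij}$ is good and $V(\mcC_{ij}) = \{(i,j-1,2),(i,j,1),(i,j,2),(i,j,3),(i,j-1,3)\}$ contains no $1$-labelled vertex; since the unique possible $1$-label in $V_i$ is at $(i,j_i^*,1)$ and this vertex lies in $V(\mcC_{i,j'})$ only for $j' = j_i^*$, the non-stellar members of $\Phi$ are exactly the $\mcC_{ij}$ with $j \in B_i$, together with $\mcC_{i,j_i^*}$ whenever $a_i = 1$; every $\mcB_i$ is stellar.

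Verifying the axioms is then largely bookkeeping. Axiom (c): distinct cycles of $\Phi$ share at most one edge by direct inspection, and the only way two non-stellar cycles could share an edge is a pair $\mcC_{i,j}, \mcC_{i,j+1}$ sharing $c_{ij}$; but since $V_i$ carries at most one $1$-label, at least one of $s_{i,j}, s_{i,j+1}$ must be bad, and collegiality together with the rule $j_i^* \in B_i$ (when $B_i \neq \emptyset$) then forces the other index to be both good and not equal to $j_i^* \pm 1$, a contradiction. Axiom (b): $a_{ij}$ and $b_{ij}$ each lie in only one cycle of $\Phi$ (namely $\mcC_{ij}$), and $\mcB_i$ shares $d_{ij}$ with $\mcC_{ij}$ which has the independent edge $a_{ij}$. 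For axiom (a), for each non-stellar $\mcC_{ij}$ I order its edges cyclically as $a_{ij}, b_{ij}, c_{ij}, d_{ij}, c_{i,j-1}$: the last three lie in $\mcC_{i,j+1}$, $\mcB_i$, and $\mcC_{i,j-1}$ respectively, and these three cycles are stellar because $s_{i,j \pm 1}$ are good (by collegiality when $j \in B_i$, or because all $s_{i,\cdot}$ are good in the alternative branch with $j_i^* = 1$) and because $j_i^* \pm 1 \notin B_i$ whenever $j_i^* \in B_i$, so the labelled vertex never sits in $V(\mcC_{i,j\pm 1})$.

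The main obstacle is axiom (a): for a non-stellar $\mcC_{ij}$ the only viable run of three consecutive edges lying in other elements of $\Phi$ is $c_{ij}, d_{ij}, c_{i,j-1}$ (because $a_{ij}$ and $b_{ij}$ are never shared), which forces both spoke neighbours $\mcC_{i,j\pm 1}$ to be simultaneously stellar. The collegial condition (c) of Definition \ref{D:collegial} is tailored exactly to make this possible: it guarantees that bad indices in any relation are cyclically non-adjacent, so a single $1$-label per $V_i$ simultaneously realizes the parity $a_i$ and ensures that the at most one ``labelling-induced'' non-stellar cycle is flanked on both sides by cycles that remain stellar.
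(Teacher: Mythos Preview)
Your proof is correct and follows essentially the same approach as the paper: you place at most one $1$-label in each $V_i$, on a vertex $(i,j_i^*,1)$ with $s_{i,j_i^*}$ bad whenever possible, and then verify axioms (a)--(c) using Lemmas~\ref{L:retract1} and~\ref{L:retract2} together with the collegiality condition to ensure that the neighbours $\mcC_{i,j\pm 1}$ of any potentially non-stellar $\mcC_{ij}$ are stellar. Two small points: your claim that the non-stellar cycles are \emph{exactly} those with $j\in B_i$ or $j=j_i^*$ is a slight overclaim (Lemma~\ref{L:retract2} only gives a sufficient condition for the retract), but this is harmless since you then verify the axioms for this possibly-larger set; and the phrasing of your contradiction in axiom~(c) (``not equal to $j_i^*\pm 1$'') is garbled---the actual contradiction is simply that the ``other'' index would have to equal $j_i^*\in B_i$ while also being good.
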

\begin{proof}
    Let $b$ be any $\mcI$-labelling such that
    \begin{itemize}
        \item $|b^{-1}(1) \cap V(\mcW_i)| \leq 1$ for all $1 \leq i \leq m$, 
        \item $b((i,j,2)) = b((i,j,3)) = 0$ for all $1 \leq i \leq m$ and $j \in \Z_{n_i}$, and
        \item if $b((i,j,1)) = 1$, then either $\mult(s_{ij},r_{i'})$ is odd for some
            $1 \leq i' \leq m$, or $\mult(s_{ij'},r_{i'})$ is even for all $j' \in \Z_{n_i}$
            and $1 \leq i' \leq m$.
    \end{itemize}
    We will show that $\Phi$ is a $b$-constellation. First, we observe that
    $\mcB_i$ is $b$-stellar for all $1 \leq i \leq n$. Indeed, $\mcN(\mcB_i)$
    is a sun, and by Lemma \ref{L:retract1}, $\mcN(\mcB_i)$ is a retract
    of $\mcW$. Finally, $b|_{\mcB_i} = 0$, so $\mcB_i$ is $b$-stellar. 
    Similarly, Lemma \ref{L:retract2} implies that $\mcC_{ij}$ will be
    $b$-stellar for $1 \leq i \leq m$ and $j \in \Z_{n_i}$ as long as
    $b((i,j,1)) = 0$ and $\mult(s_{ij}; r_{i'})$ is even for all $1 \leq i'
    \leq m$. 

    Suppose $\mcC_{ij}$ is not $b$-stellar. If $\mult(s_{ij};r_{i'})$ is
    odd for some $1 \leq i' \leq m$, then by Definition \ref{D:collegial},
    $\mult(s_{i,j+1};r_{i''})$ and $\mult(s_{i,j-1};r_{i''})$ are even for all
    $1 \leq i'' \leq m$. By the definition of $b$, we must have $b((i,j+1,1)) =
    b((i,j-1,1)) = 0$. Thus $\mcC_{i,j+1}$ and $\mcC_{i,j-1}$ are $b$-stellar,
    so $\mcC_{ij}$ (which consists of edges $a_{ij}, b_{ij}, c_{ij}, d_{ij},
    c_{i,j-1}$) shares edges $c_{ij}$, $d_{ij}$, and $c_{i,j-1}$ with
    $b$-stellar cycles. If $\mult(s_{ij};r_{i'})$ is even for all $1 \leq i'
    \leq m$, then we must have $b((i,j,1)) = 1$. By the definition of $b$,
    this means that $\mult(s_{ij'},r_{i'})$ is even for all $1\leq i' \leq m$
    and $j' \in \Z_{n_i}$,
    and $b((i,j+1,1)) = b((i,j-1,1)) = 0$. It follows that $\mcC_{i,j+1}$ and
    $\mcC_{i,j-1}$ are $b$-stellar, and once again $\mcC_{ij}$ will share
    edges $c_{ij}$, $d_{ij}$, and $c_{i,j-1}$ with $b$-stellar cycles. Thus
    $\Phi$ satisfies condition (a) of Definition \ref{D:constellation}.

    The cycle $\mcC_{ij}$ is the only cycle in $\Phi$ containing edges $a_{ij}$
    and $b_{ij}$. Every edge $d_{ij}$ of $\mcB_{i}$ is also contained in
    $\mcC_{ij}$, so $\Phi$ satisfies condition (b) of Definition \ref{D:constellation}.

    Finally, it is easy to see that $|E(\mcC_{ij}) \cap E(\mcD_{i'})| \leq 1$
    for all $1 \leq i,i' \leq m$ and $1 \leq i' \leq m$, and (since $n_i \geq 4$
    for all $1 \leq i \leq m$) that $|E(\mcC_{ij}) \cap E(\mcC_{i'j'})| \leq 1$ for all
    distinct $(i,j)$ and $(i',j')$. We showed above that if $\mcC_{ij}$ is
    not $b$-stellar, then $\mcC_{i,j+1}$ and $\mcC_{i,j-1}$ are $b$-stellar, so
    there is no pair of non-$b$-stellar cycles in $\Phi$ with a common edge. 
    Thus $\Phi$ satisfies condition (c) of Definition \ref{D:constellation}.
\end{proof}

\begin{lemma}\label{L:region}
    Let $b$ be an $\mcI$-labelling, and let $\mcP$ be a $\mcW$-picture in which
    all $\Phi$-cycles are facial copies, and such that all edges in $\bd(\mcP)$
    belong to $S$.  Then there is a $G$-picture $\mcP'$ with $\bd(\mcP') =
    \bd(\mcP)$ and $\sign(\mcP') = \ch(\mcP) \cdot b$.
\end{lemma}
\begin{proof}
    Let $\mcW \setminus S$ denote the closed subhypergraph containing all
    vertices of $\mcW$ and all edges except those in $S$. Equivalently, 
    $\mcW \setminus S$ is the subhypergraph with connected components $\mcW_i$,
    $1 \leq i \leq m$. Suppose that $\mcP_0$ is a connected component of
    $\mcP[\mcW \setminus S]$. Since $\bd(\mcP)$ does not contain any edges
    from outside $S$, $\mcP_0$ must be closed. Since $\mcP_0$ is connected, there must be some $1 \leq
    i \leq m$ such that $\mcP_0$ is a $\mcW_i$-picture. By Lemma \ref{L:retractcycle},
    $\mcP_0$ contains a $\Phi$-cycle. By hypothesis, every
    $\Phi$-cycle in $\mcP_0$ is a facial copy, so every $\mcC_{ij}$-cycle
    will contain an edge labelled by $d_{ij}$. Thus $\mcP_0$ contains an
    edge labelled by $d_{ij}$ for some $j \in \Z_{n_i}$, and by Lemma
    \ref{L:retractcycle} again, $\mcP_0$ contains a $\mcB_i$-cycle $B$. Since
    $\mcB_i \in \Phi$, $B$ is also a facial copy, and hence consists of edges
    $\hat{d}_j$, $j \in \Z_{n_i}$, such that $h(\hat{d}_j) = d_{ij}$. 
    Let $\hat{c}_j$, $j \in \Z_{n_i}$, be the third edge incident to the common
    endpoint of $\hat{d}_j$ and $\hat{d}_{j+1}$. Every edge $\hat{d}_{j}$ is
    contained in a unique $\mcC_{ij}$-cycle $C_j$, which will consist of edges
    $\hat{c}_j$, $\hat{d}_j$, $\hat{c}_{j-1}$, and two additional edges
    $\hat{a}_j$ and $\hat{b}_j$ with $h(\hat{a}_j) = a_{ij}$ and $h(\hat{b}_j)
    = b_{ij}$. Let 
    \begin{equation*}
        \mcP_0' = B \cup \bigcup_{j\in \Z_{n_i}} C_j.
    \end{equation*}
    The common endpoint of $\hat{a}_j$ and $\hat{b}_j$ has degree two
    in $\mcP[\mcW \setminus S]$, while the endpoints of $\hat{c}_j$ have
    degree three and are incident with $\hat{c}_j$, $\hat{d}_j$, $\hat{d}_{j+1}$ and
    $\hat{c}_j$, $\hat{a}_{j+1}$, $\hat{b}_j$ respectively. Hence $\mcP_0'$ contains
    every edge of $\mcP[\mcW \setminus S]$ incident to a vertex of $\mcP_0'$.
    Thus $\mcP_0'$ is a maximal connected subgraph of $\mcP[\mcW \setminus S]$,
    so $\mcP_0' = \mcP_0$. In particular, we conclude that $|h^{-1}(v) \cap
    V(\mcP_0)| = 1$ for all $v \in V(\mcW_i)$. Since $b$ is an
    $\mcI$-labelling, it follows that $\ch(\mcP_0) \cdot b = a_i$. 

    Now let $\mcD(B)$ be the closure of the face bounded by $B$, let
    $\mcD(C_{j})$ be the closure of the face bounded by $C_{j}$, $j \in
    \Z_{n_i}$, and let
    \begin{equation*}
        \mcD(\mcP_0) := \mcD(B) \cup \bigcup_{j \in \Z_{n_i}} \mcD(C_{j}).
    \end{equation*}
    Clearly $\mcP_0$ is contained in $\mcD(\mcP_0)$, and conversely every edge
    or vertex of $\mcP$ in $\mcD(\mcP_0)$ belongs to $\mcP_0$. Since $\hat{a}_j$
    and $\hat{b}_j$ belong to the boundary of $\mcD(C_{j})$ and are not contained
    in $B$ or any $C_{j'}$, $j' \neq j$, we conclude that $\hat{a}_j$ and
    $\hat{b}_j$ belong to the boundary of $\mcD(\mcP_0)$. Every other edge of $C_j$
    belongs either to $B$, to $C_{j+1}$, or to $C_{j-1}$, so the boundary 
    of $\mcD(\mcP_0)$ does not contain any other edges of $C_j$. Similarly, the
    boundary does not contain any edges of $B$, or any of the vertices
    $(i,j,3)$, $j \in \Z_{n_i}$. We conclude that $\mcD(\mcP_0)$ is bounded by
    the $\mcA_i$-cycle $A = \hat{a}_1\hat{b}_1 \cdots \hat{a}_{n_i}
    \hat{b}_{n_i}$. In particular, $\mcD(\mcP_0)$ is a simple region, and
    $\bd(\germ(\mcP, \mcD(\mcP_0))) = s_{i1} \cdots s_{in_i}$ or $s_{in_i}
    \cdots s_{i1}$ depending on the orientation of $A$. 

    Since $\mcD(\mcP_0)$ contains only edges and vertices of $\mcP_0$, if $\mcP_1$
    is another connected component of $\mcP[\mcW\setminus S]$, then $\mcD(\mcP_0)$
    and $\mcD(\mcP_1)$ are completely disjoint. Thus, as in the proof of Corollary
    \ref{C:suncycle}, we can collapse each region $\mcD(\mcP_0)$ to a single vertex
    labelled by $r_i = J^{a_i} s_{i1} \cdots s_{in_i}$ to form a $G$-picture
    $\mcP'$. Since the edges of $S$ will be unchanged, $\bd(\mcP) =
    \bd(\mcP')$.  Since every vertex of $\mcP$ belongs to a unique connected
    component of $\mcP[\mcW\setminus S]$, we conclude that $\sign(\mcP') =
    \ch(\mcP) \cdot b$. 
\end{proof}

\begin{proof}[Proof of Theorem \ref{T:invembedding}]
    Let $G$ be the group with presentation $\Inv\langle S:R \rangle$, and suppose that $\mcI
    = \Inv\langle S:R\rangle$ is collegial. Let $\mcI^+ = \Inv\langle S:R^+\rangle$, and let $G^+$ be the
    even quotient of $G$, as in Definition \ref{D:evenodd}. By Lemma
    \ref{L:constellation}, we can choose an $\mcI$-labelling $b$ of $\mcW$ such
    that $\Phi$ is a $b$-constellation. Every $b$-stellar cycle is also $0$-stellar,
    so $\Phi$ is also a $0$-constellation by Definition \ref{D:constellation}.
    By Lemma \ref{L:morphism}, there are morphisms
    $\phi : G \arr \Gamma(\mcW,b)$ and $\phi^+ : G^+ \arr \Gamma(\mcW,0)$, both
    sending $s \mapsto x_s$ (note that $0$ is an $\mcI^+$-labelling). 

    To show that $\phi$ is injective, we start with $\phi^+$. If $\phi^+(w) =
    1$ for some $w \in \mcF_2(S)$, then by Proposition \ref{P:vankampen2} there
    is a $\mcW$-picture $\mcP$ with $\bd(\mcP) = w$.  By Theorem
    \ref{T:constellation} ($b=0$ case), we can choose $\mcP$ so that all
    $\Phi$-cycles in $\mcP$ are facial copies. By Lemma \ref{L:region}, there
    is a $G^+$-picture $\mcP'$ such that $\bd(\mcP') = \bd(\mcP)$, so $w = 1$
    in $G^+$. Since $\phi^+(J^a w) = 1$ for $a \in \Z_2$ and $w \in \mcF_2(S)$ 
    if and only if $a = 0$ and $\phi^+(w)=1$, it follows that $\phi^+$ is
    injective. 

    Now there is a commutative diagram
    \begin{equation*}
        \begin{tikzpicture}
            \node (1) at (0,0) {$G$};
            \node (2) at (3,0) {$\Gamma(\mcW,b)$}
                edge [<-] node [above] {$\phi$} (1);
            \node (3) at (0,-2) {$G^+$}
                edge [<-] node [left] {$q_1$} (1);
            \node (4) at (3,-2) {$\Gamma(\mcW,0)$}
                edge [<-] node [below] {$\phi^+$} (3)
                edge [<-] node [right] {$q_2$} (2);
            \node at ($.5*(2)+.5*(4)+(1,0)$) {,};
        \end{tikzpicture}
    \end{equation*}
    where $q_1$ and $q_2$ are the quotient maps $G \arr G^+ =
    \left(G/(J_G)\right) \times \Z_2$ and $\Gamma(\mcW,b) \arr \Gamma(\mcW,0) =
    \left(\Gamma(\mcW,b) / (J_\Gamma)\right) \times \Z_2$ by $J_G$ and
    $J_\Gamma := J_{\Gamma(\mcW,b)}$ respectively.  Since $J_G$ is central of
    order $\leq$ two, we conclude that $\ker q_1 = \{1, J_G\}$.  Since
    $\phi^+$ is injective, if $\phi(w) = 1$ for $w \in G$, then $q_1(w) = 1$,
    and consequently $w \in \{1,J_G\}$. Thus it remains only to show that
    $\phi(J_G) = 1$ if and only if $J_G=1$ in $G$. By definition, $\phi(J_G) =
    J_{\Gamma}$, and if $J_{\Gamma} = 1$, then by Proposition \ref{P:vankampen2}
    there is a closed $\mcW$-picture $\mcP$ with $\ch(\mcP) \cdot b = 1$. Since
    $\mcP$ is closed, Theorem \ref{T:constellation} again implies that we can
    choose $\mcP$ so that all $\Phi$-cycles in $\mcP$ are facial copies. By
    Lemma \ref{L:region}, there is a closed $G$-picture $\mcP'$ such that
    $\sign(\mcP') = 1$, and we conclude that $J_G = 1$ in $G$. Thus $\phi$ is
    injective.  
\end{proof}

\bibliographystyle{amsalpha}
\bibliography{bcs}

\end{document}